\title{Verification of Population Protocols with Unordered Data}
\author{Steffen van Bergerem}{Humboldt-Universität zu Berlin, Germany}{steffen.van.bergerem@informatik.hu-berlin.de}{https://orcid.org/0000-0002-5212-8992}{This work was funded by the Deutsche Forschungsgemeinschaft (DFG, German Research Foundation) --- project number 431183758 (gefördert durch die Deutsche Forschungsgemeinschaft (DFG) --- Projektnummer 431183758).}
\author{Roland Guttenberg}{Technische Universität München, Germany}{guttenbe@in.tum.de}{https://orcid.org/0000-0001-6140-6707}{}
\author{Sandra Kiefer}{University of Oxford, United Kingdom}{sandra.kiefer@cs.ox.ac.uk}{https://orcid.org/0000-0003-4614-9444}{This research was supported by the Glasstone Benefaction, University of Oxford [Violette and Samuel Glasstone Research Fellowships in Science 2022] as well as Jesus College in Oxford, UK.}
\author{Corto Mascle}{LaBRI, Université de Bordeaux, France}{corto.mascle@labri.fr}{}{}
\author{Nicolas Waldburger}{IRISA, Université de Rennes, France}{nicolas.waldburger@irisa.fr}{https://orcid.org/0009-0002-7664-5828}{}
\author{Chana Weil-Kennedy}{IMDEA Software Institute, Spain}{chana.weilkennedy@imdea.org}{https://orcid.org/0000-0002-1351-8824}{This work was supported by the grant PID2022-138072OB-I00, funded by MCIN, FEDER, UE and partially supported by PRODIGY Project (TED2021-132464B-I00) funded by MCIN and the European Union NextGeneration.}
\authorrunning{S. van Bergerem, R. Guttenberg, S. Kiefer, C. Mascle, N. Waldburger, C. Weil-Kennedy} 
\keywords{Population protocols, Parameterized verification, Distributed computing, Well-specification}
\definecolor{Green}{HTML}{45A229}
\definecolor{LightGreen}{HTML}{65C249}
\definecolor{Navy}{HTML}{2943A2}
\definecolor{LightNavy}{HTML}{4963C2}
\definecolor{LightGray}{HTML}{DDDDDD}
\definecolor{IoppudBlue}{HTML}{648FFF}
\definecolor{IoppudMagenta}{HTML}{DC267F}
\definecolor{IoppudOrange}{HTML}{FE6100}
\definecolor{IoppudPurple}{HTML}{785EF0}
\definecolor{IoppudYellow}{HTML}{FFB000}
\colorlet{ConfigBackgroundFill}{black!5}
\colorlet{ConfigBackgroundBorder}{black!20}
\colorlet{topstate}{IoppudBlue!60}
\colorlet{botstate}{IoppudOrange!60}
\tikzset{AUT style/.style={>=angle 60,initial text= ,every edge/.append,every state/.style={minimum size=20,inner sep=2}}}
\tikzset{
  >={stealth'},-={stealth',ultra thick,scale=3}, square/.style={draw, regular polygon, regular polygon sides=4, thick, inner sep=1.5pt},
  triangle/.style={draw, regular polygon, regular polygon sides=3, thick, inner sep=1pt},
  ccircle/.style={draw, circle, inner sep=1.7pt, thick},
  sstar/.style={draw, star, inner sep=1pt, star point ratio=2, thick},
  crossi/.style={draw=white, cross out, line width=.6pt, inner sep=1.5pt},
  crosso/.style={draw, cross out, line width=2.0pt, inner sep=1.5pt},
  dropshadow/.style={drop shadow={opacity=.4, shadow xshift=.25ex, shadow yshift=-.25ex}},
  protocol state/.style={circle, draw, semithick, fill=white, dropshadow},
  big protocol state/.style={protocol state, minimum width=4em},
  medium protocol state/.style={protocol state, minimum width=3em},
  container/.style={rectangle split, rectangle split parts=3, inner sep=2em, draw, fill=ConfigBackgroundFill,
    text width=1em, line width=.15em, rounded corners=.3em},
  big box/.style={rectangle split, rectangle split parts=5, inner sep=.5em,
    draw, fill=white, text width=1.9em, line width=.1em, rounded corners=.2em},
  small box/.style={rectangle split, rectangle split parts=5, inner sep=.25em,
    draw=Datatype1, fill=white, text width=1.2em, line width=.1em, rounded corners=.1em},
  configuration/.style={ConfigBackgroundBorder, fill=ConfigBackgroundFill, rounded corners=1em},
}
\newenvironment{proofsketch}{\proof}{\endproof}
\newcommand{\nats}{\mathbb{N}}
\newcommand{\set}[1]{\{#1\}}
\newcommand{\pow}[1]{2^{#1}}
\newcommand{\size}[1]{\lvert#1\rvert}
\newcommand{\bigsize}[1]{\bigl\lvert#1\bigr\rvert}
\newcommand{\norm}[1]{||#1||}
\newcommand{\logarithm}[1]{\mathsf{log}(#1)}
\newcommand{\nset}[2]{[#1,#2]} \renewcommand{\phi}{\varphi}
\newcommand{\poly}{\mathtt{poly}}
\newcommand{\bigO}{\mathcal{O}}
\newcommand{\pspace}{\text{\sc{PSpace}}\xspace}
\newcommand{\nexpt}{\textsc{NExpTime}\xspace}
\newcommand{\conexpt}{\textsc{coNExpTime}\xspace}
\newcommand{\exps}{\textsc{ExpSpace}\xspace}
\newcommand{\nexps}{\textsc{NExpSpace}\xspace}
\newcommand{\prot}{\mathcal{P}} \newcommand{\trans}[2]{\xrightarrow{#2 #1}}
\newcommand{\adatum}{d} \newcommand{\datum}{\adatum}
\newcommand{\datvar}{\mathsf{d}} 
\newcommand{\Agentset}{\mathbb{A}}
\newcommand{\Dataset}{\mathbb{D}}
\knowledgenewrobustcmd{\dataof}{\cmdkl{\mathbf{dat}}}
\newcommand{\config}{\gamma}
\newcommand{\counting}[2]{#1^{\#}_{#2}}
\knowledgenewrobustcmd{\configset}{\cmdkl{\Gamma}}
\knowledgenewrobustcmd{\initconfigs}{\cmdkl{\Gamma_{\mathsf{init}}}}
\newcommand{\step}[3]{{\mathop{\xrightarrow{#1 #3}}}_{#2}}
\newcommand{\runto}{\xrightarrow{*}}
\newcommand{\run}{\rho}
\newcommand{\prefixrun}[2]{#1[\to #2]}
\newcommand{\suffixrun}[2]{#1[#2 \to]}
\newcommand{\trace}[2]{\mathtt{tr}^{#1}_{#2}}
\newcommand{\atrace}{{tr}}
\newcommand{\sometest}{\bowtie}
\newcommand{\undefsymb}{*}
\newcommand{\observedagents}[2]{\Agentset^{#2}_{#1, o}}
\newcommand{\leader}[1]{\mathsf{\ell}_{#1}}
\newcommand{\nonleader}[1]{\mathsf{f}_{#1}}
\newcommand{\dead}{\mathsf{dead}}
\newcommand{\poststarsymb}{\mathsf{Post}^*}
\newcommand{\prestarsymb}{\mathsf{Pre}^*}
\knowledgenewrobustcmd{\poststar}[1]{\cmdkl{\poststarsymb(#1)}}
\knowledgenewrobustcmd{\prestar}[1]{\cmdkl{\prestarsymb(#1)}}
\knowledgenewrobustcmd{\functionf}{\cmdkl{f}}
\knowledgenewrobustcmd{\functiong}{\cmdkl{g}}
\knowledgenewrobustcmd{\Output}[1]{\cmdkl{\mathsf{Out}}_{#1}}
\knowledgenewrobustcmd{\Stable}[1]{\cmdkl{\mathsf{Stable}}_{#1}}
\newcommand{\setcomplement}[1]{\overline{#1}}
\knowledgenewrobustcmd{\cubeapprox}[3]{\cmdkl{\lceil}{#2, #3}\cmdkl{\rceil}^{#1}}
\knowledgenewrobustcmd{\metaapprox}[3]{\cmdkl{\lceil}{#3}\cmdkl{\rceil}^{#1, #2}}
\knowledgenewrobustcmd{\cubeequiv}[1]{\cmdkl{\equiv}_{#1}}
\knowledgenewrobustcmd{\metaequiv}[2]{\cmdkl{\equiv}_{#1, #2}}
\newcommand{\abox}{\mathsf{b}}
\knowledgenewrobustcmd{\Boxes}[1]{\cmdkl{\mathbf{Boxes}}_{#1}}
\newcommand{\acontainer}{\mathsf{cont}}
\knowledgenewrobustcmd{\semanticsEP}[2]{\cmdkl{\llbracket} #1 \cmdkl{\rrbracket}_{#2}}
\newcommand{\setof}[1]{\semanticsEP{#1}{}}
\knowledgenewrobustcmd{\encodingsize}[1]{\cmdkl{|\langle} #1 \cmdkl{\rangle|}}
\knowledgenewrobustcmd{\lengthGRE}[1]{\cmdkl{|} #1 \cmdkl{|}}
\knowledgenewrobustcmd{\normGRE}[1]{\cmdkl{||} #1 \cmdkl{||}}
\newcommand{\bunchreach}{\mathcal{R}}
\newcommand{\shpair}{\mathcal{S}}
\DeclareMathOperator{\increment}{\mathsf{inc}}
\DeclareMathOperator{\decrement}{\mathsf{dec}}
\DeclareMathOperator{\zerotest}{\mathsf{Test}_0}
\newcommand{\Halt}{\mathsf{Halt}}
\DeclareMathOperator{\idle}{idle}
\DeclareMathOperator{\done}{done}
\DeclareMathOperator{\other}{other}
\DeclareMathOperator{\Prot}{\prot}
\DeclareMathOperator{\N}{\nats}
\newcommand{\Uniq}[0]{U}
\newcommand{\instrstate}[1]{q_{#1}}
\newcommand{\counterstate}[1]{\mathtt{#1}}
\newcommand{\countercontrol}[1]{\bar{\counterstate{#1}}}
\newcommand{\states}{Q}
\newcommand{\transitions}{\Delta}
\newcommand{\deff}{\coloneqq}
\newcommand{\setc}[2]{\ensuremath{\set{#1 \mid #2}}}
\newcommand{\bigset}[1]{\ensuremath{\bigl\{ #1 \bigr\}}}
\newcommand{\tcolors}{\mathcal{C}}
\newcommand{\tiles}{\mathcal{T}}
\newcommand{\atile}{t}
\newcommand{\twhite}{\mathsf{white}}
\newcommand{\leftcolor}[1]{\mathsf{left}(#1)}
\newcommand{\rightcolor}[1]{\mathsf{right}(#1)}
\newcommand{\topcolor}[1]{\mathsf{top}(#1)}
\newcommand{\bottomcolor}[1]{\mathsf{bottom}(#1)}
\newcommand{\tiling}{\tau}
\newcommand{\binarynumber}[1]{\overline{#1}^\mathtt{2}}
\newcommand{\dupstate}{q_{\mathsf{dup}}}
\newcommand{\cheatstate}{q_{\bot}}
\newcommand{\stateover}[1]{q_{\geq #1}}
\knowledgenewrobustcmd{\presenceformula}[1]{\cmdkl{\mathsf{Pres}(#1)}}
\knowledgenewrobustcmd{\absenceformula}[1]{\cmdkl{\mathsf{Abs}(#1)}}
\knowledgenewrobustcmd{\mandatoryformula}[1]{\cmdkl{\mathsf{Mandatory}(#1)}}
\knowledgenewrobustcmd{\encodingGRE}{\cmdkl{\mathcal{E}_{\mathsf{encoding}}}}
\newcommand{\sinkstate}{\mathsf{sink}}
\newcommand{\assignalgo}{\shortleftarrow}
\newcommand{\colorvar}{\mathtt{colour}}
\newcommand{\tilingofconfig}[1]{\tiling(#1)}
\newcommand{\statet}[1]{\mathtt{t}_{#1}}
\knowledgenewrobustcmd{\reductionGRE}{\cmdkl{\mathcal{E}}}
\knowledgenewrobustcmd{\GREinith}{\cmdkl{\mathcal{E}_{\mathsf{init}}^{(h)}}}
\knowledgenewrobustcmd{\GREinitv}{\cmdkl{\mathcal{E}_{\mathsf{init}}^{(v)}}}
\knowledgenewrobustcmd{\GREfinalh}{\cmdkl{\mathcal{E}_{\mathsf{final}}^{(h)}}}
\knowledgenewrobustcmd{\GREfinalv}{\cmdkl{\mathcal{E}_{\mathsf{final}}^{(v)}}}
\newcommand{\syncagent}{a_{\mathsf{syn}}}
\newcommand{\ie}{i.\,e.}
\newcommand{\eg}{e.\,g.}
\knowledgenewrobustcmd{\splittr}[2]{\cmdkl{\textsf{str}}_{#1}^{#2}}
\knowledgenewrobustcmd{\splitset}[1]{\cmdkl{\mathcal{S}}_{#1}}
\renewcommand{\epsilon}{\varepsilon}
\newcommand{\fbij}{\mathbf{f}}
\newcommand{\gbij}{\mathbf{g}}
\newcommand{\lbij}{\mathbf{l}}
\newcommand{\agentbunch}[1]{\mathsf{#1}}
 \definecolor{Blue Sapphire}{HTML}{002346} 
\definecolor{Gamboge}{HTML}{ee9b00}
\definecolor{Ruby Red}{HTML}{800000}
\begin{document}

	\maketitle

  \begin{abstract}
    Population protocols are a well-studied model of distributed computation in which a group of anonymous finite-state agents communicates via pairwise interactions.
    Together they decide whether their initial configuration,
    \ie, the initial distribution of agents in the states, satisfies a property.
    As an extension in order to express properties of multisets over an infinite data domain, Blondin and Ladouceur (ICALP'23) introduced population protocols with unordered data (PPUD). In PPUD, each agent carries a fixed data value, and the interactions between agents depend on whether their data are equal or not.
    Blondin and Ladouceur also identified the interesting subclass of immediate observation PPUD (IOPPUD), where in every transition one of the two agents remains passive and does not move, and they characterised its expressive power.

    We study the decidability and complexity of formally verifying these protocols.
    The main verification problem for population protocols is well-specification,
    that is, checking whether the given PPUD computes some function.
    We show that well-specification is undecidable in general.
    By contrast, for IOPPUD, we exhibit a large yet natural class of problems, which includes well-specification among other classic problems, and establish that these problems are in \exps.
    We also provide a lower complexity bound, namely \conexpt-hardness.
  \end{abstract}

	\section{Introduction}
\label{sec:intro}

Population protocols (PP) model distributed computation and have received a lot of attention~\cite{AlistarhAEGR17, AlistarhG18, ElsasserR18, BlondinEJM21, EsparzaJRW21, CzernerGHE24} since their introduction in 2004 \cite{AngluinADFP04}.
In a PP, a collection of indistinguishable mobile agents with constant-size memory communicate via pairwise interactions.
When two agents meet, they exchange information about their states
and update their states accordingly.
The agents collectively compute whether their input configuration, \ie, the initial distribution of agents in each state, satisfies a certain predicate.
For a PP to compute a predicate, the protocol must be \emph{well-specified}, \ie,
for every initial configuration,
all "fair" runs starting in this configuration must converge to the same
answer.
It was shown that PP compute exactly the predicates of Presburger arithmetic~\cite{AngluinAER07}. Moreover, well-specification is known to be decidable but as hard as the reachability problem for Petri nets \cite{EsparzaGLM17}.
Note that deciding well-specification is a problem that concerns \emph{parameterised verification} in the sense of \cite{EsparzaKeepingSafe, 2015Bloem}, \ie, one must decide that something holds with respect to every value of the parameter. Here the parameter is the number of agents that are present -- the PP must converge to one answer for every initial configuration, no matter the number of agents.

\emph{Population protocols with unordered data} (PPUD) were introduced by Blondin and Ladouceur
as a means to compute predicates over arbitrarily large domains~\cite{BL23}.
In this setting, each agent holds a read-only datum from an infinite set $\Dataset$.
When interacting, agents may check (dis)equality of their data.
While PP can compute properties like
``there are more than 5 agents in state $q_1$'',
PPUD can express, \eg, ``there are more than 2 data with 5 agents each in state $q_1$''. In~\cite{BL23}, the authors construct a PPUD computing the absolute majority predicate, \ie, whether a datum is held by more than half of the agents.
They also characterise the expressive power of \emph{immediate observation PPUD} (IOPPUD), a subclass of interest in which interactions are restricted to observations. That is, in every interaction, one of the two agents is passive and does not change its state.
The decidability and complexity of the main verification question for PPUD, namely well-specification, is left open in Blondin's and Ladouceur's article \cite{BL23}. It is the subject of this paper.

\subparagraph*{Contributions}
We start by showing that well-specification is \emph{undecidable} for PPUD.
This follows from a reduction from $2$-counter machines; in fact, the presence of data  allows us to encode zero-tests.
Contrasting this, we show that deciding well-specification is in \exps for IOPPUD.
To this end, we define \emph{generalised reachability expressions} (GRE)
and establish that, for IOPPUD, deciding whether the set of configurations
that satisfy a given GRE is empty is in \exps.
This decidability result is powerful; indeed, this emptiness problem subsumes  classic  verification problems like reachability and coverability, as well as parameterised verification problems such as well-specification and correctness, where the latter asks whether a given protocol computes a given predicate.
Lastly, we exhibit a \conexpt\ lower bound for deciding emptiness of GRE for IOPPUD.

\subparagraph*{Related work}
For a recent survey of the research on verification of PP (without data), see \cite{Esparza21}. In particular, the well-specification problem for PP is known to be decidable, but as hard as Petri net reachability \cite{EsparzaGLM17} and therefore Ackermann-complete \cite{LerouxS19,Leroux21,CzerwinskiO21}.
In their seminal paper on the computational power of PP, Angluin, Aspnes, Eisenstat, and Ruppert also introduced five subclasses of PP that model one-way communication \cite{AngluinAER07}.
One of these is immediate observation population protocols (IOPP), which correspond to IOPPUD without data. The complexity of well-specification for all five subclasses is determined in \cite{EsparzaJRW21}. In particular, the paper shows that well-specification for IOPP is \pspace-complete. IOPP were modelled by immediate observation Petri nets, where classic parameterised problems can be decided in polynomial space.
The notion of generalised reachability expression was first phrased in this setting,
and one of the consequences is that
the emptiness problem of GRE for IOPP is \pspace-complete \cite{Weil-Kennedy23}.
Our result shows that adding data to the model as in \cite{BlondinEJM21} (and extending GRE naturally) pushes the emptiness problem between \conexpt\ and \exps.

While the introduction of data in the PP model happened recently \cite{BL23}, a similar approach has been studied in the related model of Petri nets, under the name of \emph{data nets}. In this setting, the classic problem of coverability (or control-state reachability) is decidable but non-primitive recursive \cite{LazicNORW07} and in fact $\mathbf{F}_{\omega^\omega}$-complete \cite{Rosa-VelardoF11}. While PPUD can be encoded into data nets, our results show that the problems that we study cannot be reduced to coverability.
Another related model is formed by broadcast networks of register automata (BNRA)\cite{GMW24}, an extension of reconfigurable broadcast networks (RBN) with data.
RBN subsume IOPP \cite{BW21}, and consequently BNRA subsume IOPPUD.
However,
the complexity of coverability in BNRA is known to be $\mathbf{F}_{\omega^\omega}$-complete, hence non-primitive recursive, and more complex problems quickly become undecidable \cite{GMW24}.
These hardness results contrast with the \exps membership.

\subparagraph*{Organisation} In Section~\ref{sec:definitions}, we introduce the models of PPUD and IOPPUD, the notion of GRE, and we state our main results.
We prove undecidability of well-specification for PPUD in \cref{sec:undecidability}. The next sections are dedicated to the study of IOPPUD. In \cref{sec:bounds-observed-agents}, we establish bounds on the number of observed agents. In Section~\ref{sec:equivalence-relation}, we introduce the technical notions of boxes and containers and use the bounds from the previous section to translate GRE into containers.
We present the complexity bounds for emptiness of GRE in Section~\ref{sec:complexity}.

	\section{Population Protocols and Main Results}
\label{sec:definitions}

We use the notation \([m,n] \coloneqq \set{\ell \in \nats \mid m \leq \ell \leq n}\) for \(m, n \in \nats\) and $[m, +\infty)\coloneqq \set{\ell \in \nats \mid m \leq \ell}$ .

\subsection{Population Protocols with Unordered Data}

\AP We fix an infinite \emph{data} domain $\Dataset$, an infinite set of \emph{agents} $\Agentset$, and a function $\intro*\dataof \colon \Agentset \to \Dataset$ such that $\dataof^{-1}(d)$ is infinite for all $d \in \Dataset$. For $d \in \Dataset$, a ""$d$-agent"" is an agent $a \in \Agentset$ with $\dataof(a) = d$.

\begin{definition}
	A ""population protocol with unordered data"" ("PPUD") is a tuple $(Q, \Delta, I, O)$ where $Q$ is a finite set of \emph{states},
  $\Delta \subseteq Q^2 \times \set{=, \neq} \times Q^2$ the set of \emph{transitions},
	$I \subseteq Q$ the set of \emph{initial states}, and
	$O \colon Q \to \set{\top, \bot}$ the \emph{output function}.
\end{definition}
\AP The size of a "PPUD" $\prot$, written $\size{\prot}$, is its number of states.
We fix a "PPUD" $(Q, \Delta, I, O)$.

\AP A ""configuration"" is a function $\config \colon \Agentset \to Q \cup \set{\undefsymb}$ such that $\config(a) \in Q$ only holds for finitely many agents $a \in \Agentset$ (the agents ""appearing"" in $\config$).
We denote by $\intro*\configset$ the set of all "configurations", and by $\intro*\initconfigs \coloneqq \set{\config \in \configset \mid \forall a \in \Agentset, \, \config(a) \notin Q \setminus I}$ the set of ""initial configurations"".
Given $\config \in \configset$ and $\adatum \in \Dataset$, we let $\counting{\config}{\adatum} \colon Q \to \nats$ be the function that maps each state $q$ to the number of "$d$-agents" in $q$ in $\config$.

\AP Given $\config, \config' \in \configset$, we write $\config \to \config'$,
and call it a ""step"" from $\config$ to $\config'$,
when there are states $q_1, q_2, q_3, q_4 \in Q$ and two distinct agents $a_1, a_2 \in \Agentset$
such that $((q_1,q_2),~\sometest,~(q_3,q_4)) \in \transitions$,
$\config(a_1) = q_1$, $\config(a_2) = q_2$, $\config'(a_1) = q_3$, $\config'(a_2) = q_4$,
$\config(a) = \config'(a)$ for all $a \in \Agentset \setminus \set{a_1, a_2}$,
and, additionally, if $\sometest$ is an equality (resp.\ disequality),
then $\dataof(a_1) = \dataof(a_2)$ (resp. $\dataof(a_1) \neq \dataof(a_2)$).
\AP A ""run"" $\run$ is a (finite or infinite) sequence of consecutive "steps" $\run \colon \config_1 \to \config_2\to \config_3 \to \dots$.
We write $\run \colon \config \runto \config'$ to denote that $\run$ is a finite "run" from $\config$ to $\config'$, and simply $\config \runto \config'$ to denote the existence of such a "run".
\AP For every $\config \in \Gamma$, let $\intro*\poststar{\config} \coloneqq \set{\config' \in \configset \mid \config \runto \config'}$ and $\intro*\prestar{\config} \coloneqq \set{\config' \in \configset \mid \config' \runto \config}$.
A "run" $\run$ ""covers"" a state $q \in Q$ if there is a configuration $\config$ in $\run$ such that $\config(a)=q$ for some agent $a$.

\AP In accordance with \cite{AngluinADFP04} and \cite{BL23}, we consider a "run" $\config_1 \to \config_2 \to \dots$ ""fair"" if it is infinite\footnote{One often considers that a finite "run" $\config_f \runto \config_\ell$ is fair when there is no $\config$ such that $\config_\ell \to \config$. In the following, we rule out this possibility by implicitly assuming that, for all $q_1, q_2 \in Q$ and $\sometest{} \in \set{=, \neq}$, it holds that $((q_1, q_2), \sometest, (q_1,q_2)) \in \Delta$, and ignoring the trivial cases of "runs" with at most one agent.}
and for every "configuration" $\config$ with $\bigsize{\setc{ i \in \nats }{ \config_i \runto \config}} = \infty$, it holds that $\bigsize{\setc{ i \in \nats }{ \config_i = \config}} = \infty$. That is, every infinitely often reachable configuration also occurs infinitely often along the run.
 For $b \in \set{\top, \bot}$,
a ""$b$-consensus"" is a configuration $\config$ in which, for all agents $a \in \Agentset$ "appearing" in $\config$, it holds that \(O(\config(a)) = b\).
A "fair run" $\run \colon \config_1 \to \config_2 \to \cdots$ ""stabilises"" to $b \in \set{\top, \bot}$ if there is an $n \in \nats$ such that for every $i \geq n$, $\config_i$ is a "$b$-consensus".
A protocol is ""well-specified"" if, for every "initial configuration" $\config_0 \in \initconfigs$, there is $b \in \set{\top, \bot}$ such that all "fair runs" starting in $\config_0$ "stabilise" to $b$. The ""well-specification problem"" for "PPUD" asks, given a "PPUD" $\prot$, whether $\prot$ is "well-specified".
\AP Given a "PPUD" $\prot$ and a function $\Pi \colon \initconfigs \to \set{\top,\bot}$, $\prot$ ""computes"" $\Pi$ if, for every \(\config_0 \in \initconfigs\),
every "fair run" of $\prot$ starting in $\config_0$ "stabilises" to $\Pi(\config_0)$.

\begin{example}
\label{ppud-example}
Consider the following "PPUD" $\prot$. Its set of states is $Q = \set{\leader{0}, \leader{1}, \nonleader{0}, \nonleader{1},\dead}$, with $I = \set{\leader{1}}$, $O(\leader{0}) = O(\nonleader{0}) =\top$, $O(\leader{1}) = O(\nonleader{1}) = O(\dead) = \bot$ and its transitions are:
\begin{align*}
& \forall b,b' \in \set{0,1}, \ (\leader{b},\leader{b'}) \mapsto (\leader{b \oplus b'},\nonleader{b \oplus b'}) \qquad & \forall b,b' \in \set{0,1}, \ (\leader{b},\nonleader{b'}) \mapsto (\leader{b},\nonleader{b}) \\
& \forall q,q' \in Q, \ (q,q') \mapsto_= (\dead, \dead) & \forall q \in Q, \ (q, \dead) \mapsto (\dead, \dead)
 \end{align*}
where $\mapsto_{=}$ denotes that the data of the agents must be equal, $\mapsto$ without subscript means no condition on data (or equivalently, the transition exists both for equality and disequality), and $\oplus$ denotes the XOR operator. $\prot$ is "well-specified" and "computes" the function $\Pi$ that is equal to $\top$ whenever there is an even number of appearing data and they all have exactly one corresponding agent. To see this, if there are two agents of equal datum, then all "fair runs" eventually have all agents on $\dead$ and "stabilise" to $\bot$. Otherwise, there will eventually be a single agent in $\set{\leader{0}, \leader{1}}$, and it will be on $\leader{b}$ if and only if the number of agents has parity $b$, in which case all other agents will eventually go to $\nonleader{b}$ and the run "stabilises" to $\bot$ if $b = 1$ (odd number of agents) and to $\top$ if $b=0$ (even number of agents).

A more interesting but also more complex example is the majority protocol described in \cite[Section 3]{BL23}; it computes whether a datum has the absolute majority, \ie, strictly more agents than all other data combined.
\end{example}

"Well-specification" is the fundamental verification problem for population protocols. However, as we will see in \cref{sec:undecidability}, this problem is  undecidable for PPUD.

\begin{restatable}{theorem}{PropositionVerificationUndecidable}
	\label{undec-well-spec}
The "well-specification problem" for "PPUD" is undecidable.
\end{restatable}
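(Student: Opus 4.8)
The plan is to reduce from the halting problem for $2$-counter (Minsky) machines, which is undecidable. Given such a machine $M$, we construct a PPUD $\prot_M$ that is well-specified if and only if $M$ does not halt; this immediately gives the theorem. We may assume that $M$ empties both counters before reaching its halt instruction, so that its halting configuration has both counters equal to $0$.

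In $\prot_M$ a distinguished \emph{controller} agent has a state recording the current instruction of $M$, and the value of counter $c_i$ ($i \in \{1,2\}$) is the number of agents in a dedicated state $\counterstate{u}_i$, fed by a reservoir state holding auxiliary agents. The initial states are chosen so that a legal initial configuration has one controller in the start instruction, one token agent per counter, and arbitrarily many reservoir agents; other inputs can only lead to $\bot$-consensus configurations or to situations already handled below, and so cause no harm. An increment (resp.\ decrement) of $c_i$ is a single step turning a reservoir agent into a $\counterstate{u}_i$-agent (resp.\ turning a $\counterstate{u}_i$-agent back into a reservoir agent), exactly as counter machines are encoded into Petri nets. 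The one operation that neither Petri nets nor plain population protocols can simulate is the zero-test, and this is where the data enter: $\prot_M$ handles $c_i$ relative to an auxiliary \emph{generation datum}, the datum of the $c_i$-token agent, and maintains (by equality tests involving the $c_i$-token) the invariant that every valid $\counterstate{u}_i$-agent carries the current generation datum. A zero-test on $c_i$ is a \emph{refresh}: via a disequality test the $c_i$-token is swapped with a fresh reservoir agent, the old generation datum is retired, and any surviving $\counterstate{u}_i$-agent of the retired generation --- there is one precisely when $c_i$ was in fact nonzero --- can never again match the $c_i$-token and is stuck forever. Finally, reaching the halt instruction opens a nondeterministic choice between a $\top$-epidemic and a $\bot$-epidemic, each a gossip process that converts every agent it meets into a $\top$- (resp.\ $\bot$-) output sink, with the twist that a $\top$-epidemic agent meeting a stuck $\counterstate{u}_i$-agent flips the epidemic to $\bot$, and that $\bot$-sinks win over $\top$-sinks; every other state of $\prot_M$ has output $\bot$.

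For correctness, suppose first that $M$ halts. Take an initial configuration with enough, suitably distributed reservoir agents to simulate the halting run of $M$ faithfully. It reaches the halt instruction with no $\counterstate{u}_i$-agent present (the counters are empty), so it extends to a fair run reaching the all-$\top$ trap and also to a fair run reaching the all-$\bot$ trap: two fair runs from the same initial configuration stabilising to different outputs, so $\prot_M$ is not well-specified. Conversely, suppose $M$ does not halt and let $\rho$ be a fair run from a legal initial configuration. If $\rho$ never reaches the halt instruction, all its configurations use only $\bot$-output states (in particular, the stuck $\counterstate{u}_i$-agents sit in a simulation state), so $\rho$ stabilises to $\bot$. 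If $\rho$ reaches the halt instruction, then, since a purely faithful simulation cannot reach it ($M$ does not halt), $\rho$ cheated on some zero-test and thus carries forever a stuck $\counterstate{u}_i$-agent; whichever epidemic $\rho$ launches, it ends in the all-$\bot$ trap --- the $\top$-epidemic cannot turn the stuck agent into a $\top$-sink and, by fairness, eventually meets it and flips to $\bot$, after which $\bot$-sinks win. In all cases $\rho$ stabilises to $\bot$, so $\prot_M$ is well-specified.

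The hard part will be making the data-based zero-test robust: increments, decrements, and the epidemic argument are routine, but the whole reduction rests on the generation mechanism behaving correctly against an adversarial initial configuration and against runs that try to \emph{cover up} a false zero-test --- in particular, one must rule out that a retired generation datum is silently reused, which would resurrect the stuck leftover agents and let a cheating run slip through to an all-$\top$ configuration. This robustness is exactly what the data buy us: a zero-test cannot be simulated faithfully by data-free population protocols, which is consistent with well-specification being decidable there.
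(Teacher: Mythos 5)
Your overall strategy is the same as the paper's: reduce from halting of a $2$-counter machine, represent counter values by the number of agents in dedicated states fed from a reservoir, and implement the zero-test by tying each counter to a ``generation'' datum carried by a token agent, so that a falsely passed zero-test strands agents of a retired datum that can never again interact with the counter; fairness then forces every cheating or malformed fair run to $\bot$, while a genuine halting computation yields a fair run that does not stabilise to $\bot$. (The paper's version of your epidemic is more economical --- the halt instruction is the unique $\top$-output state and the honest halting configuration is a deadlock, so that run simply never becomes a $\bot$-consensus --- but that difference is cosmetic.)

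The genuine gap is the one you name yourself and leave open: nothing in your construction prevents a retired generation datum from being reused later. Your refresh only performs a disequality test between the outgoing token and the incoming one, so a run can cheat on a zero-test of $c$ (stranding agents of datum $d$ in $\counterstate{c}$), refresh once more, and then install a third token whose datum is again $d$; the stranded agents become indistinguishable from live ones, can be decremented away, and the run can reach the halt instruction and launch a successful $\top$-epidemic even though $M$ does not halt --- which breaks the ``well-specified'' direction of your reduction. The paper closes this with one mechanism you would need to add: new tokens are drawn only from a pool state $\Uniq$, each agent's state records whether it started in the reservoir $R$, and there is an equality-guarded violation transition sending to an attracting sink $\cheatstate$ (with output $\bot$) any two agents of equal datum that both started outside $R$. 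Since every token, past or future, is such an agent, fairness guarantees that in any run not already doomed to $\bot$ all generation data are pairwise distinct, so freshness holds. The same violation-to-$\cheatstate$ device is also what disposes of your unexamined bad inputs (several controllers, several tokens per counter, agents starting in counter states), which currently rest on the assertion that they ``cause no harm'' rather than on an argument.
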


This motivates the study of the restricted class of \emph{immediate observation} "PPUD".

\subsection{Immediate Observation Protocols}
Immediate observation protocols \cite{AngluinAER07} are a restriction of population protocols where, when two agents interact, one of the two agents does not change its state. The restriction of the model with data to immediate observation was first considered in \cite{BL23}.

\begin{definition}
An ""immediate observation population protocol with unordered data"" (or \reintro{IOPPUD}) is a "PPUD" $\prot = (Q, \Delta, I, O)$ where every transition $\delta \in \Delta$ is of the form $(q_1, q_2, \sometest, q_1, q_3)$, with $q_1, q_2, q_3 \in Q$ and $\sometest{} \in \set{=, \neq}$, \ie, the first agent does not change its state.
\end{definition}

\AP For "IOPPUD", we denote a transition $(q_1, q_2, \sometest, q_1, q_3)$ by $q_2 \trans{q_1}{\sometest} q_3$. If we have a "step" $\config \to \config'$ with transition $q_2 \trans{q_1}{\sometest} q_3$ that involves agents $a,a_o \in \Agentset$ where $a$ is the agent moving from $q_2$ to $q_3$ and $a_o$ is the agent in $q_1$, we denote it by $\config \step{\sometest}{a}{a_o} \config'$.
We say that agent $a$ ""observes"" agent $a_o$, and call $a_o$ the \emph{observed} agent. Intuitively, $a$ ``observes'' $a_o$ and reacts, whereas  $a_o$ may not even know it has been observed.

\begin{example}
\label{example-ioppud}
Consider the following "IOPPUD" $\prot = (Q,\Delta, I, O)$, with $Q \deff \set{q_0,q_1,q_2,q_3}$, $I \deff \set{q_0, q_1}$, $O(q_3)=\top$, $O(q) = \bot$ for all $q \ne q_3$, and transitions in $\Delta$ as follows:
\begin{gather*}
q_0 \trans{q_1}{=} q_2 \quad  q_1 \trans{q_0}{=} q_2 \quad q_2 \trans{q_2}{\ne} q_3 \qquad  \forall q \in \set{q_1,q_2}, \forall \sometest{} \in \set{=, \ne}, q \trans{q_3}{\sometest} q_3
\end{gather*}
 This protocol is "well-specified": from $\config_0 \in \initconfigs$, all "fair runs" stabilise to $\top$ if two data have agents on both $q_0$ and $q_1$, and all "fair runs" stabilise to $\bot$ otherwise. Indeed, if there is a datum with agents on both $q_0$ and $q_1$, by fairness eventually  an agent with this datum is sent to $q_2$; if there are two such data, then eventually some agent covers $q_3$, and then all agents are sent to $q_3$ and the run "stabilises" to $\top$. Conversely, if it is not the case, then $q_3$ cannot be covered and all "fair runs" "stabilise" to $\bot$.
\end{example}

\AP Let $\run \colon \config_{start} \runto \config_{end}$ be a "run".
Agent $a_o$ is ""internally observed"" (resp. ""externally observed"") in $\run$ if $\run$ contains a step of the form $\config_1 \step{=}{a}{a_o} \config_2$ (resp. $\config_1 \step{\neq}{a}{a_o} \config_2$); it is ""observed"" if one of the two cases holds.
Similarly, a datum $\adatum$ is \emph{observed} in $\run$ if an agent $a$ with $\dataof(a) =\adatum$ is "observed" in $\run$; we define similarly a datum being internally or externally observed. 

While the set of functions that can be "computed" by "PPUD" remains an open question, it is known that "IOPPUD" exactly "compute" "interval predicates"~\cite{BL23}, defined as follows.
 Let $S$ be a finite set.
A ""simple interval predicate"" over $S$ is a formula $\psi$ of the form $\dot\exists \datvar_1, \ldots, \datvar_m,\, \bigwedge_{q \in S} \bigwedge_{j=1}^m \#(q,\datvar_j) \in [A_{q,j},B_{q,j}]$  where, for all $q \in S$ and $j \in \nset{1}{m}$, we have $A_{q,j} \in \nats$ and $B_{q,j} \in \nats \cup \set{+\infty}$. The dotted quantifiers quantify over \emph{pairwise distinct} data. Formally, given a protocol $\prot$ with set of states $Q$ such that $S \subseteq Q$ and
given $\config \in \configset$, the predicate $\psi$ is ""satisfied@@simple"" by $\config$
if there exist pairwise distinct data $d_1, \ldots, d_m \in \Dataset$ such that for all $q\in S$ and $j \in \nset{1}{m}$, it holds that $\counting{\config}{d_j}(q) \in [A_{q,j},B_{q,j}]$ (resp.\ $\counting{\config}{d_j}(q) \in [A_{q,j},B_{q,j})$ in the case that $B_{q,j} = +\infty$). An ""interval predicate"" over $S$ is a Boolean combination $\phi$ of "simple interval predicates" over $S$; we define that $\phi$ is ""satisfied@@pred"" by a "configuration" $\config$ if the "simple interval predicates" "satisfied@@simple" by $\config$ satisfy the Boolean combination.

\begin{theorem}[\cite{BL23}, Theorem 18 and Corollary 29]
\label{ioppud-compute-predicates}
	Given a finite set $I$, the functions "computed" by "IOPPUD" with set of initial states\footnote{This does not limit the number of states of said protocols, as their set of states $Q$ may be larger than $I$.} $I$ are exactly the "interval predicates" over $I$.
\end{theorem}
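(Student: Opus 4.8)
The statement is a characterisation, so the plan is to establish the two inclusions separately, following the proof of \cite{BL23}.

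\textbf{Interval predicates over $I$ are computable by "IOPPUD".} First I would check that "IOPPUD" that "compute" fixed functions are closed under Boolean combinations: run the component protocols in parallel on a product state space, and whenever their verdicts conflict in a way that no "fair run" can reconcile, fall back to a fixed verdict using a priority ordering on states (or a sink), which preserves "well-specified"ness. It then suffices to build, for each "simple interval predicate" $\psi = \dot\exists \datvar_1, \ldots, \datvar_m,\ \bigwedge_{q} \bigwedge_{j} \#(q,\datvar_j) \in [A_{q,j}, B_{q,j}]$, an "IOPPUD" computing $\psi$. A lower bound $\#(q,\datvar_j) \ge A$ is detected by a chain of internal observations: a designated agent runs a bounded counter and advances each time it internally observes (an equality test, hence a same-datum agent) another agent still sitting in the initial state $q$; reaching $A$ certifies that $A$ distinct same-datum agents were initially in $q$, since an agent cannot observe itself and initial-state agents that have moved are no longer in $q$. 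A finite upper bound $\#(q,\datvar_j) \le B$ is the negation of $\#(q,\datvar_j) \ge B+1$, so it is handled by the Boolean closure. Distinctness of the $m$ witnessed data is enforced with disequality tests between the corresponding witness agents. Finally the verdict is broadcast by a one-way epidemic: once some agent reaches an accepting state, every other agent "observes" it and copies the verdict.

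\textbf{Every function "computed" by an "IOPPUD" with initial states $I$ is an "interval predicate" over $I$.} Here I would exploit the monotonicity of immediate observation via two copycat-style lemmas: (i) along any "run" $\run \colon \config \runto \config'$, adding further agents in states already occupied in $\config$ still permits reaching $\config'$ (plus those agents), since observed agents never move and surplus agents can additionally serve as a reservoir to fire more observations; and (ii) data playing symmetric roles are interchangeable, because a disequality test only requires the observed agent to carry \emph{some} other datum. From (i) I would extract a threshold $B$, depending only on $\size{Q}$, above which the exact count of agents of a given datum in a given state is irrelevant to the stabilisation verdict of all "fair runs"; from (ii), a threshold above which the number of data realising a given $B$-capped count-profile over $I$ is irrelevant. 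Consequently $\Pi^{-1}(\top)$ depends only on the finite summary ``for each $B$-capped count-profile over $I$, how many (again capped) data realise it'', and every set cut out by such a summary is a Boolean combination of "simple interval predicates" over $I$, i.e.\ an "interval predicate".

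\textbf{Main obstacle.} The delicate part is the second inclusion, and within it the joint pumping argument behind the two thresholds: one must show that surplus agents \emph{and} surplus data can be added or removed without changing the stabilisation verdict of \emph{every} "fair run", not merely the set of reachable configurations — the latter being the easy monotone statement. Since disequality observations couple agents of distinct data, decomposing a fair run along data classes and substituting one datum's behaviour for another requires a Ramsey-type argument that locates many data behaving identically throughout the run and shows that one of them can impersonate another at every step while preserving fairness.
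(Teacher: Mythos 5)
This statement is not proved in the paper at all: it is quoted verbatim as Theorem~18 and Corollary~29 of Blondin and Ladouceur~\cite{BL23} and used as a black box, so there is no in-paper argument to compare yours against. Judged on its own merits, your plan for the hard direction (surplus agents and surplus data can be added or removed without affecting the verdict of any fair run, so the computed function factors through a truncated summary of the initial configuration) is the right idea and is essentially the copycat/core machinery that Section~\ref{sec:bounds-observed-agents} of this paper develops for a different purpose; it is a reasonable plan, though the ``Ramsey-type argument'' sentence is where almost all of the work would actually happen.

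There is, however, a concrete gap in your easy direction. The counting gadget for $\#(q,\datvar_j)\geq A$ is unsound in the immediate-observation setting: the designated counter cannot distinguish observing the \emph{same} agent in $q$ for the $k$-th time from observing a $k$-th \emph{distinct} agent, because agents are anonymous and, by the very definition of immediate observation, an observed agent does not change state when it is observed. Your justification that ``initial-state agents that have moved are no longer in $q$'' never kicks in, since being counted does not make an agent move, and an agent cannot know it has been observed. Consequently, with a single agent sitting in $q$ forever, fairness forces the counter to observe it $A$ times and reach its accepting level, so the gadget computes $\#(q,\datvar_j)\geq 1$ instead of $\geq A$. The standard repair (from Angluin et al.\ for immediate observation without data, adapted to the data setting in \cite{BL23}) is a level construction in which \emph{every} agent starting in $q$ carries a level initialised to $1$, and an agent at level $i$ that internally observes another agent at level exactly $i$ advances to level $i+1$ (capped at $A$); one then proves that level $A$ is coverable by a $\datvar_j$-agent if and only if at least $A$ agents of that datum started in $q$, and that fairness makes all such agents converge to the correct maximal level. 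Your Boolean-closure and epidemic-broadcast steps are fine once this is fixed, but the distinctness of the $m$ witnessed data also needs a little more care than a single sentence, since the $m$ witnesses must simultaneously certify their own counts and their pairwise disequality.
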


\begin{example} The protocol described in \cref{ppud-example} and the majority protocol of \cite[Section 3]{BL23} cannot be turned into "immediate observation protocols", as they "compute" functions that cannot be expressed as "interval predicates".
The "immediate observation protocol" from \cref{example-ioppud} "computes" the following "interval predicate", which is actually a "simple interval predicate":
 \[\dot{\exists} \datvar_1, \datvar_2, \, (\#(q_0,\datvar_1) \geq 1) \land (\#(q_1,\datvar_1) \geq 1) \land (\#(q_0,\datvar_2) \geq 1) \land (\#(q_1,\datvar_2) \geq 1).\]
\end{example}

\AP Given a "simple interval predicate" $\psi = \dot\exists \datvar_1, \ldots, \datvar_m,\, \bigwedge_{q \in I} \bigwedge_{j=1}^m \#(q,\datvar_j) \in [A_{q,j},B_{q,j}]$, we define its ""width@@simple"" as $m$, its ""height@@simple"" $h$ as the maximum of all finite $A_{q,j}$ and $B_{q,j}$, and its ""size@@simple"" as $\size{I} \cdot m \cdot \logarithm{h}$.  We also define the ""width@@pred"" (resp.\ ""height@@pred"") of an "interval predicate" as the maximum of the "widths@@simple" (resp.\ "heights@@simple") of its "simple interval predicates",
and its ""size@@pred"", measuring the space taken by its encoding, as the sum of their "sizes@@simple" plus its number of Boolean operators.

\begin{remark}
\label{rem:no-input-alpabet}
In \cite{BL23}, predicates refer to an input alphabet $\Sigma$, which is converted into initial states using an input mapping. For convenience, we have not included the input alphabet in our model, which is why we arbitrarily fix a set of initial states $I$ in \cref{ioppud-compute-predicates}.
\end{remark}

\subsection{Generalised Reachability Expressions}

We define a general class of specifications, called "generalised reachability expressions", which are formulas constructed using "interval predicates" as atoms and using union, complement, $\poststarsymb$, and $\prestarsymb$ as operators. This concept is inspired by~\cite[Section 2.4]{Weil-Kennedy23}, although our choice of atoms is more general and adapted to the data setting.

\begin{definition}
\label{def:GRE}
	Let $\prot = (Q, \Delta, I, O)$ be a "protocol". \\
	""Generalised Reachability Expressions"" ("GRE") over $\prot$ are produced by the grammar
	\[E ::= \phi \mid E \cup E \mid \setcomplement{E} \mid \poststar{E} \mid \prestar{E},\]
	where $\phi$ ranges over "interval predicates" over $Q$.

	Given a "GRE" $E$, we define the set of "configurations" defined by $E$, denoted $\intro*\semanticsEP{E}{\prot}$, as the set containing all "configurations" of $\prot$ that satisfy the formula,
	where the predicates are interpreted as above and the other operators are interpreted naturally (the overline denotes set complementation).
	This set is denoted $\setof{E}$ when $\prot$ is clear from context.
\end{definition}

\AP The ""length@@exp"" $\intro*\lengthGRE{E}$ of a "GRE" $E$ is its number of operators.
Letting $\phi_1, \ldots, \phi_k$ be the "interval predicates" used as atoms in $E$, the ""norm@@exp"" $\intro*\normGRE{E}$ of $E$ is the maximum of the "heights@@pred" and "widths@@pred" of the $\phi_i$.
Its ""size@@exp"" is the sum of the "sizes@@pred" of the $\phi_i$ plus $\lengthGRE{E}$. The ""emptiness problem for GRE"" asks, given as input a protocol $\prot$ and a "GRE" $E$ over $\prot$, whether $\semanticsEP{E}{\prot} = \emptyset$.
We will show in Section \ref{sec:complexity} that, for "IOPPUD", this problem is decidable.

\begin{restatable}{theorem}{TheoremEXPSPACE}
\label{thm:main-ioppud}
	The "emptiness problem for GRE" over "IOPPUD" is in \exps.
\end{restatable}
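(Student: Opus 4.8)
The strategy is to manipulate sets of configurations through a finite symbolic representation, the \emph{containers} of Section~\ref{sec:equivalence-relation}, to show that container-representable sets are effectively closed under the four "GRE" operators with a blow-up that keeps the symbolic objects of exponential size, and then to decide emptiness by a space-bounded recursive membership test on configurations of bounded size. A container is the data-aware analogue of a ``cube'' of an immediate-observation Petri net: a \emph{box} $\abox$ records, for one datum, an interval constraint on the number of agents of that datum in each state (either an exact value below a threshold, or ``at least the threshold''), and a container then records, for each box type, an interval constraint on the number of data realising it; hence a container denotes a set of configurations. The first step is therefore to establish that every "interval predicate" over $Q$ is equivalent to a finite union of containers — with thresholds bounded by the "height@@pred" and "width@@pred" of the predicate — so that every atom of the input "GRE" $E$ has such a representation.

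The core of the argument is closure under $\poststarsymb$. The case of $\prestarsymb$ reduces to it: reversing every immediate-observation transition $q_2 \trans{q_1}{\sometest} q_3$ yields the transition $q_3 \trans{q_1}{\sometest} q_2$, which is again of "IOPPUD" form and imposes the same data constraint on the same pair of agents, so $\prestar{S}$ in $\prot$ equals $\poststar{S}$ in the reversed "IOPPUD". Given a container $\acontainer$, I would show that $\poststar{\setof{\acontainer}}$ is again a finite union of containers, computable from $\acontainer$, whose thresholds exceed those of $\acontainer$ by at most an additive amount depending only on $\size{\prot}$. The engine is the bound on "observed" agents from Section~\ref{sec:bounds-observed-agents}: along any run, only a number of agents bounded in $\size{\prot}$ — distributed over a bounded number of data — ever needs to be "observed", while every other agent can follow a ``copycat'' schedule, since whenever one $d$-agent performs a sequence of observations, arbitrarily many $d$-agents can perform the same one by reusing the same observed agents. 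This yields the monotonicity/pumping statement that, from any configuration matching $\acontainer$, the reachable configurations are captured, up to the additive threshold shift, by finitely many containers obtained by bookkeeping which box types and which data can be created, enlarged, or emptied by a bounded witness run. Proving this pumping lemma cleanly — tracking simultaneously the per-datum state distribution (the box) and the population of boxes, and correctly accounting for disequality observations between distinct data — is where I expect the main technical difficulty to lie.

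With closure in place, the complexity bookkeeping is as follows. A box is a map from $Q$ to $\set{0, \dots, \mathcal{B}} \cup \set{\omega}$, so there are at most $\mathcal{B}^{\bigO(\size{\prot})}$ box types and a container has a description of size exponential in the input, provided the threshold $\mathcal{B}$ stays exponential; this holds because each application of $\poststarsymb$ or $\prestarsymb$ raises thresholds only additively by $\poly(\size{\prot})$, so after the at most $\lengthGRE{E}$ operators of $E$ the threshold is bounded by $\normGRE{E} + \lengthGRE{E} \cdot \poly(\size{\prot})$, still exponential in the size of the instance. Consequently $\setof{E}$ has a \emph{capping property}: there is a bound $\mathcal{B}$, exponential in the size of the instance, such that a configuration lies in $\setof{E}$ if and only if its capping — obtained by truncating every per-state, per-datum count and every datum count at $\mathcal{B}$ — does, and this property propagates through union (take the larger bound), complement (same bound, using idempotence of capping), and $\poststarsymb$/$\prestarsymb$ (additive shift). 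Since every capped configuration has a "data-explicit" description of exponential size, emptiness of $\setof{E}$ reduces to: guess such a configuration $\config$ and verify $\config \in \setof{E}$ by structural recursion on $E$ — "interval predicates" are checked directly, union and complement are immediate, and $\config \in \setof{\poststar{E'}}$ (resp.\ $\setof{\prestar{E'}}$) is checked by a nondeterministic search for $\config'$ with $\config' \runto \config$ (resp.\ $\config \runto \config'$) and $\config' \in \setof{E'}$, recursively. Here $\config'$ and all intermediate configurations have exactly the same appearing agents as $\config$ — since a "step" preserves the multiset of appearing agents — hence exponential size, and a search through configurations of that size, of which there are at most doubly exponentially many, needs only an exponential-bit counter. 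The whole procedure runs in nondeterministic exponential space, and $\nexps = \exps$ by Savitch's theorem, which yields \cref{thm:main-ioppud}.
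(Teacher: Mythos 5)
Your overall architecture matches the paper's: translate "interval predicates" into unions of "containers" (\cref{prop-equiv-containers-predicates}), show that $\semanticsEP{E}{\prot}$ remains a union of containers at a granularity bounded in terms of $E$ and $\prot$ (\cref{prop:bound-coefs-GRE}, via the copycat/core lemmas of \cref{sec:bounds-observed-agents}), reduce $\prestarsymb$ to $\poststarsymb$ by reversing transitions, and then decide emptiness by guessing an exponential-size "data-explicit" witness inside one container and checking membership with a \pspace{} recursive procedure (\cref{prop:config-in-GRE-pspace}), concluding via $\nexps = \exps$.

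However, one load-bearing quantitative claim in your closure step is wrong: you assert that each application of $\poststarsymb$ or $\prestarsymb$ raises the thresholds ``only additively by $\poly(\size{\prot})$.'' An additive margin on the box threshold cannot work. Consider a protocol where agents in a state $q$ can each independently move to any of $k = \size{Q}-1$ states $q_1, \dots, q_k$: a configuration $\config_{start}$ with $k n$ agents of one datum in $q$ can reach a configuration with $n$ agents in each $q_i$, but a configuration $\chi_{start}$ with only $n + c$ agents in $q$ (which is $(n+c)$-box-equivalent to $\config_{start}$ once both exceed the threshold) cannot reach anything $n$-box-equivalent to that endpoint when $c < (k-1)n$, since it lacks the agents to populate all $k$ targets up to $n$. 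So the margin must be multiplicative in $\size{\prot}$ on the box threshold — the paper uses $\functionf(n) = (n+\size{\prot}^3)\cdot\size{\prot}$ — and the container threshold grows by a factor of roughly $(n+1)^{\size{\prot}}$ (the number of boxes), giving $\functiong(n,M)$. After $\lengthGRE{E}$ operators this yields $n \le \normGRE{E}\cdot\poly(\size{\prot})^{\lengthGRE{E}}$ and $M \le \normGRE{E}^{\poly(\size{\prot})\cdot\lengthGRE{E}^2}$ rather than your $\normGRE{E} + \lengthGRE{E}\cdot\poly(\size{\prot})$. Fortunately these corrected bounds are still singly exponential in the input (under the encoding convention $\size{E}+\log\normGRE{E}$), so your witness configuration remains of exponential size and the final \exps{} bound is unaffected; but as written your pumping lemma would fail, and you should replace the additive bookkeeping with the multiplicative one.
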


We now argue that this decidability result is powerful, as it implies decidability of many classic problems on "IOPPUD". We start with "well-specification". We use the notation $\dot \forall \datvar, \varphi$ as a short form for $\neg \dot \exists \datvar, \neg \varphi$.
Given a "PPUD" $\prot$ and $b \in \{\top,\bot\}$, let $\intro*\Output{b} \deff \dot \forall \datvar, \bigwedge_{q \notin O^{-1}(\set{b})} \#(q,\datvar) = 0$ be the "GRE" for "$b$-consensus" configurations; moreover, let $\intro*\Stable{b} \deff \setcomplement{\prestar{\setcomplement{\Output{b}}}}$ be the "GRE" for \emph{stable} "$b$-consensus", \ie, "configurations" from which all "runs" lead to a "$b$-consensus".

\begin{proposition} \label{well-spec-gre}
Let $\prot$ be a "PPUD", $E_{\textsf{ws}} \deff \initconfigs \cap \prestar{\setcomplement{\prestar{{\Stable{\top}}}}} \cap \prestar{\setcomplement{\prestar{{\Stable{\bot}}}}}$. \\
$\prot$ is "well-specified" if and only if $\semanticsEP{E_{\textsf{ws}}}{\prot} = \emptyset$.
\end{proposition}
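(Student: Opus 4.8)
The plan is to distil a single reachability-style criterion for when all \quotationmark{fair runs} from a configuration converge to a fixed output, and then to match it syntactically against $E_{\textsf{ws}}$ by unwinding the Boolean, $\poststarsymb$, and $\prestarsymb$ operators. First I would record the semantics of the building blocks. Because an agent's output depends only on its state, $\setof{\Output{b}}$ is exactly the set of "$b$-consensus" configurations, so $\setof{\setcomplement{\prestar{\setcomplement{\Output{b}}}}} = \setof{\Stable{b}}$ is the set of configurations $\config$ such that every $\config'$ with $\config \runto \config'$ is a "$b$-consensus" (in particular $\config$ itself is one, by reflexivity of $\runto$); I will call these \emph{stable $b$-consensus} configurations. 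Moreover $\initconfigs = \setof{\dot\forall \datvar, \bigwedge_{q \notin I} \#(q,\datvar) = 0}$ is defined by an "interval predicate", so $E_{\textsf{ws}}$ is a legitimate "GRE" and $\setof{E_{\textsf{ws}}} \subseteq \initconfigs$.

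The core of the proof is the following claim: for an "initial configuration" $\config_0$ in which at least two agents "appear", and for $b \in \set{\top,\bot}$, all "fair runs" from $\config_0$ "stabilise" to $b$ if and only if $\poststar{\config_0} \subseteq \setof{\prestar{\Stable{b}}}$, that is, from every configuration reachable from $\config_0$ a stable $b$-consensus is still reachable. Two ingredients drive it: (i) $\poststar{\config_0}$ is \emph{finite}, since a "step" changes only the states of two already "appearing" agents and never makes an agent "appear" or disappear, so every configuration in $\poststar{\config_0}$ assigns states solely to the finitely many agents of $\config_0$; and (ii) along any "fair run" the set $C$ of configurations occurring infinitely often is nonempty (by (i) and infiniteness of the run), closed under $\runto$, and strongly connected under $\runto$ — both closure and strong connectedness being immediate from the definition of fairness. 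For the "if" direction, take any "fair run"; its set $C$ contains some $\config \in \poststar{\config_0}$, from which a stable $b$-consensus $\config'$ is reachable, hence $\config' \in C$ by closure, and strong connectedness of $C$ together with stability of $\config'$ forces every configuration of $C$ to be a "$b$-consensus"; since the run is eventually confined to $C$, it "stabilises" to $b$. For the "only if" direction I argue by contraposition: if some $\config \in \poststar{\config_0}$ reaches no stable $b$-consensus, pick a bottom strongly connected component $C$ of the (finite) reachability graph on $\poststar{\config}$; it cannot consist only of "$b$-consensus" configurations (otherwise each of its elements would be a stable $b$-consensus reachable from $\config$), so it contains some non-$b$-consensus $\config''$, and I then exhibit a "fair run" of the form $\config_0 \runto \config \runto C$ that cycles through all of $C$ forever — it stays infinite even when $\size{C}=1$ thanks to the implicit self-loops, it is fair because past some position exactly the configurations of $C$ are reachable, and it visits $\config''$ infinitely often, so it does not "stabilise" to $b$. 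I expect this explicit construction of a "fair run", and checking that it meets the paper's precise fairness condition, to be the only delicate step. Configurations with at most one "appearing" agent are handled separately and uniformly: such a $\config_0$ is itself a stable $b$-consensus for $b$ the output of its unique agent (for either $b$ if $\config_0$ is empty), so it satisfies "well-specification" vacuously and does not belong to $\setof{E_{\textsf{ws}}}$.

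It remains to chain the equivalences. Unwinding the operators, for $\config_0 \in \initconfigs$ the condition $\poststar{\config_0} \not\subseteq \setof{\prestar{\Stable{b}}}$ says exactly that $\config_0$ reaches a configuration outside $\setof{\prestar{\Stable{b}}}$, i.e.\ $\config_0 \in \setof{\prestar{\setcomplement{\prestar{\Stable{b}}}}}$. Hence, applying the claim to the non-trivial configurations and the previous paragraph to the trivial ones, $\prot$ is "well-specified" if and only if for every $\config_0 \in \initconfigs$ there is $b \in \set{\top,\bot}$ with $\config_0 \notin \setof{\prestar{\setcomplement{\prestar{\Stable{b}}}}}$, equivalently no $\config_0 \in \initconfigs$ lies in $\setof{\prestar{\setcomplement{\prestar{\Stable{\top}}}}} \cap \setof{\prestar{\setcomplement{\prestar{\Stable{\bot}}}}}$, equivalently $\setof{\initconfigs \cap \prestar{\setcomplement{\prestar{\Stable{\top}}}} \cap \prestar{\setcomplement{\prestar{\Stable{\bot}}}}} = \setof{E_{\textsf{ws}}} = \emptyset$.
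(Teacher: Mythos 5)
Your proposal is correct and follows essentially the same route as the paper: both reduce the statement to the per-$b$ equivalence ``some fair run from $\config_0$ fails to stabilise to $b$ iff $\config_0$ can reach a configuration outside $\setof{\prestar{\Stable{b}}}$'', using finiteness of $\poststar{\config_0}$ and the fact that fairness forces infinitely-often-reachable configurations to recur. Your phrasing via the strongly connected recurrence set $C$ and bottom SCCs is just a more explicit packaging of the paper's looser ``perform arbitrary steps in a fair way'' construction and its observation that a fair run visiting $\prestar{\mathcal{S}}$ infinitely often visits $\mathcal{S}$ infinitely often.
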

\begin{proof}
First, $\initconfigs = \setof{\dot \forall \datvar, \, \bigwedge_{q \notin I} \#(q,d) =0}$ and $E_{\textsf{ws}}$ is indeed a "GRE" over $\prot$.
For every $\config \in \configset$, $\poststar{\config}$ is finite as all configurations reachable from $\config$ have the same number of agents.
Therefore, a "fair run" $\run$ that visits $\prestar{\mathcal{S}}$ infinitely often for $\mathcal{S} \subseteq \configset$ must visit $\mathcal{S}$ infinitely often.
Let $\config_0 \in \initconfigs$ and $b \in \set{\top, \bot}$;
it suffices to prove that there is a "fair run" from $\config_0$ that does \emph{not} "stabilise" to $b$ if and only if $\config_0 \in \setof{\prestar{\setcomplement{\prestar{{\Stable{b}}}}}}$. If, from $\config_0$, one can reach $\config \notin \setof{\prestar{\Stable{b}}}$, then one can build a "fair run" from $\config_0$ that first goes to $\config$, and then forever performs arbitrary steps in a fair way; since $\config \notin \setof{\prestar{\Stable{b}}}$, it will stay in $\setof{\prestar{\setcomplement{\Output{b}}}}$, so by fairness it visits $\setof{\setcomplement{\Output{b}}}$ infinitely often and does not "stabilise" to $b$.
Conversely, if there is a "fair run" that does not "stabilise" to $b$, then it never visits $\setof{\Stable{b}}$, hence by fairness it eventually stops visiting $\setof{\prestar{\Stable{b}}}$; this proves that it visits a configuration $\config \in \setof{\setcomplement{\prestar{\Stable{b}}}}$, and $\config$ is reachable from $\config_0$ hence $\config_0 \in \setof{\prestar{\setcomplement{\prestar{{\Stable{b}}}}}}$.
\end{proof}

Many other problems can be expressed as "emptiness problems for GRE"; we list a few.
\begin{itemize}
\item
The \emph{correctness} problem for "IOPPUD" asks,
given an "IOPPUD" $\prot =(Q,I,O,\Delta)$ and an "interval predicate" $\phi$ over $I$, whether $\prot$ "computes" $\phi$.
This can be equivalently phrased as $\setof{\phi^{-1}(b) \cap {\prestar{\setcomplement{\prestar{\Stable{b}}}}}} = \emptyset$ for all $b \in \set{\top, \bot}$, where $\phi^{-1}(b)$ is the set of "initial configurations" that $\phi$ maps to $b$. 
Note that the previous expression is a "GRE" because, in \cref{def:GRE}, we chose as atoms  "interval predicates" such as $\phi$.
\item
The \emph{set-reachability} problem (called cube-reachability in \cite{BGW22}) asks, given two sets of configurations $\mathcal{S}_1,\mathcal{S}_2$, whether $\mathcal{S}_2$ is reachable from $\mathcal{S}_1$; this typically expresses safety problems where $\mathcal{S}_2$ represents ``bad configurations'' that must not be reached.
If $\mathcal{S}_1 = \setof{E_1}$ and $\mathcal{S}_2 = \setof{E_2}$, then this amounts to checking whether $\setof{E_1 \cap \prestar{E_2}}$ is empty.
\item
The \emph{home-space problem} asks, given a protocol $\prot$ and a set of configurations $\mathcal{H}$, whether $\mathcal{H}$ can be reached from every configuration reachable from an initial configuration.
If $\mathcal{H}$ can be expressed as a "GRE" $E$, then it suffices to check whether $\setof{\poststar{\initconfigs}} \subseteq \setof{\prestar{E}}$.
This problem has been studied in Petri nets \cite{JL23}, but also in probabilistic settings, for example in \cite{BouyerMRSS16} for asynchronous shared-memory systems; indeed, in systems with uniform probabilistic schedulers where $\poststar{\config_0}$ is finite for every initial configuration $\config_0$, this problem is equivalent to asking whether the probability of reaching $\mathcal{H}$ is equal to $1$.
\end{itemize}
Theorem \ref{thm:main-ioppud} entails that, for "IOPPUD", all these problems are decidable and in \exps.

\section{Undecidability of Verification of Population Protocols with Unordered Data}
\label{sec:undecidability}

In this section, we establish that the most fundamental verification problem for "PPUD", \ie, the "well-specification problem", is already undecidable.

\PropositionVerificationUndecidable*

We proceed by reduction from the halting problem for $2$-counter machines with zero-tests, a famously undecidable problem \cite{Minsky67}.
Here, we give a proof sketch.
The detailed reduction can be found in \cref{app:undec}.

We fix a "$2$-counter machine" and build a protocol $\prot$ which is \emph{not} well-specified if and only if the counter machine halts. 
A ""$2$-counter machine"" performs increments, decrements and zero-tests on two counters.
The main difficulty are the zero-tests. Let us first recall how increments and decrements are simulated in many prior undecidability results for population protocols and Petri nets \cite{Jancar95, Rosa-VelardoF11}. The protocol has a control part \(Q_{CM} \deff \set{q_i \mid i \in \nset{1}{n}}\) where a single ""instruction agent"" evolves; this part has one state per instruction of the machine. Increments and decrements are simulated as follows: The instruction agent interacts with states of \(\set{R} \cup \set{\counterstate{x},\counterstate{y}}\), where \(R\) is a ""reservoir state"" and \(\counterstate{x}\) and \(\counterstate{y}\) are states in which the number of agents represents the value of the counters. For example an increment on counter \(x\) moves one agent from the reservoir \(R\) to \(\counterstate{x}\) and advances the instruction agent to the next instruction. The reservoir is hence implicitly assumed to start with arbitrarily many agents. 

The main difficulty is that one does not want to take the \(=0\) branch of a zero-test when the value of the counter is non-zero.
Actually, similar to \cite{Jancar95, Rosa-VelardoF11}, we will not prevent the existence of such runs.
Instead, our protocol will have ``violating'' runs which take the wrong branch of a zero-test, but our well-specification check will  consider only \emph{violation-free} runs. The correctness of the reduction is then established in two steps: The CM halts if and only if some \emph{violation-free} run to the halting state exists, and this is true if and only if our protocol is not well-specified. 
We establish the connection between non-well-specification and the existence of a \emph{violation-free} run in our protocol.

In the first place, we guarantee that every initial configuration of our protocol has a fair run "stabilising" to $\bot$, so that $\prot$ is not "well-specified" if and only if there exists a fair run which does not "stabilise" to \(\bot\)\footnote{This can be done with the addition of a fresh state that is the only initial state and that has internal transitions to all former initial states and an internal transition to a sink state that has output $\bot$.}. Second, we introduce ""violation detection"", a mechanism which guarantees that \emph{fair} runs which contain a violation stabilise to \(\bot\), hence preserving well-specification. To do so, we add a sink state \(\cheatstate\), which has output $\bot$ and is attracting, \ie, all other states have a transition to $\cheatstate$ available when observing that $\cheatstate$ is non-empty.
"Violation detection" then entails adding transitions into $\cheatstate$ that will be available infinitely often if the run (or its initial configuration) contained a violation.
By fairness, any run containing a violation will then eventually put an agent into \(\cheatstate\),
and hence, because \(\cheatstate\) is attracting, the run ends in a deadlock with all agents in \(\cheatstate\).
In particular, any fair run containing a violation will output \(\bot\) as claimed.
There are two types of violation detection.

First, we want to only mark those runs as violation-free that start in initial configurations where \(\Uniq \in \states\) has at most one agent of each datum. 
To do so, we make agents remember whether their initial state was \(\Uniq\) or not (by encoding it into the state space),
and, from every state, we add a transition to $\cheatstate$ such that this transition is enabled when an agent who started in $\Uniq$ observes another agent of same datum that also started in \(\Uniq\). 

Second, we want to detect violations which consist in falsely simulating a zero-test, as discussed above.
Here our technique shares some similarities with \cite{Rosa-VelardoF11}. Let $c \in \set{x,y}$ be a counter; for every zero-test of the counter machine, we add two types of transitions to the protocol. The first type simulates the $c \ne 0$ branch and can be taken by the "instruction agent" upon interacting with some agent on state $\counterstate{c}$; by contrast, the $c=0$ branch can always be taken. However, if it was taken with \(c \neq 0\), then  "violation detection" will eventually detect this. For this mechanism, we introduce a counter control state \(\countercontrol{c} \in \states\). At any point in time, $\countercontrol{c}$ contains one agent, similar to the instruction agent. The crux of our violation detection is that only agents which share the datum with the agent in $\countercontrol{c}$ will be allowed to move in and out of state $\counterstate{c}$, as illustrated in \cref{fig:decrement}.

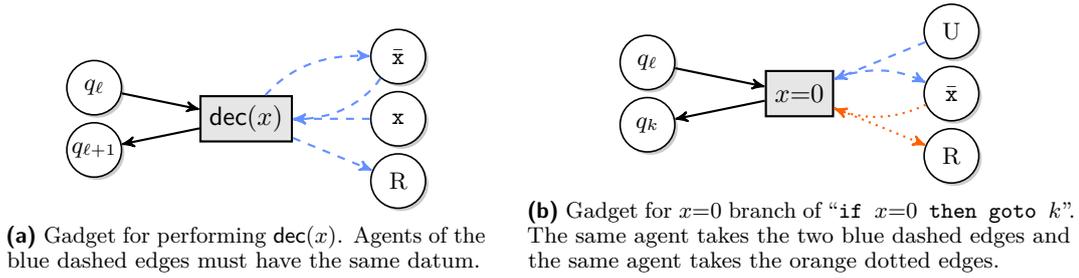
\begin{figure}
\centering
\begin{subfigure}{.45\textwidth}
  \centering
  \begin{tikzpicture}[-triangle 60,auto, thick, yscale =0.75]

  \tikzstyle{place}=[protocol state, minimum size=7.2mm, font = {\small}, inner sep = 0pt]
  \tikzstyle{transition}=[rectangle, thick, draw, fill=black!10, minimum size=6mm]

  \newcommand*{\distancesubx}{2cm}
  \newcommand*{\distancesuby}{1.1cm}
  \newcommand*{\distancelabel}{0.07cm}

  \node[place] (A1) at (0,1.5*\distancesuby) {\(\instrstate{\ell}\)};
  \node[place] (A0) at (0,0.5* \distancesuby) {\(\instrstate{\ell+1}\)};
  \node[transition] (B1) at (\distancesubx,\distancesuby) {\(\decrement(x)\)};
  \node[place] (C2) at (2*\distancesubx,2*\distancesuby) {\(\countercontrol{x}\)};
  \node[place] (C1) at (2*\distancesubx,\distancesuby) {\(\counterstate{x}\)};
  \node[place] (C0) at (2*\distancesubx,0) {\(\textup{R}\)};
  \path[->] (A1) edge[] (B1)
            (B1) edge[] (A0)
            (C2) edge[IoppudBlue, bend left, dashed] (B1)
            (B1) edge[IoppudBlue, bend left, dashed] (C2)
            (C1) edge[IoppudBlue, dashed] (B1)
            (B1) edge[IoppudBlue, dashed] (C0);
\end{tikzpicture}
   \caption{Gadget for performing \(\decrement(x)\). Agents of the blue dashed edges must have the same datum.}
  \label{fig:decrement}
\end{subfigure}
\hfill
\begin{subfigure}{.51\textwidth}
  \centering
  \begin{tikzpicture}[-triangle 60,auto, thick, yscale =0.75]

  \tikzstyle{place}=[protocol state, minimum size=7.2mm, font = {\small}, inner sep = 0pt]
  \tikzstyle{transition}=[rectangle, thick, draw, fill=black!10, minimum size=6mm]

  \newcommand*{\distancesubx}{2cm}
  \newcommand*{\distancesuby}{1.1cm}
  \newcommand*{\distancelabel}{0.07cm}

  \node[place] (A1) at (0,1.5*\distancesuby) {\(\instrstate{\ell}\)};
  \node[place] (A0) at (0,0.5* \distancesuby) {\(\instrstate{k}\)};
  \node[transition] (B1) at (\distancesubx,\distancesuby) {\(x{=}0\)};
  \node[place] (C2) at (2*\distancesubx,2*\distancesuby) {\(\textup{\Uniq}\)};
  \node[place] (C1) at (2*\distancesubx,\distancesuby) {\(\countercontrol{x}\)};
  \node[place] (C0) at (2*\distancesubx,0) {\(\textup{R}\)};
  \path[->] (A1) edge[] (B1)
            (B1) edge[] (A0)
            (C2) edge[IoppudBlue, dashed] (B1)
            (B1) edge[IoppudBlue, bend left, dashed] (C1)
            (C1) edge[IoppudOrange, bend left, dotted] (B1)
            (B1) edge[IoppudOrange, dotted] (C0);
\end{tikzpicture}
   \caption{Gadget for \(x{=}0\) branch of ``\(\texttt{if } x{=}0 \texttt{ then goto } k\)''. The same agent takes the two blue dashed edges and the same agent takes the orange dotted edges.}
  \label{fig:zerotest}
\end{subfigure}
\caption{For simplicity, we use Petri net notation: circles are states, rectangles are Petri net transitions.
To encode this into our protocols, we split each transition into pairwise interactions.}
\label{fig:countermachine}
\end{figure}

The \(=0\) branch of a zero-test is depicted in \cref{fig:zerotest}. It replaces the agent on $\countercontrol{c}$ with an agent with fresh datum from state \(\Uniq\).
Thus, when the $c=0$ branch is taken, any remaining agent in $\counterstate{c}$ is stuck in $\counterstate{c}$ as it will never again share datum with the agent in $\countercontrol{c}$. "Violation detection" then sends an agent in $\counterstate{c}$ to $\cheatstate$ upon observing an agent in $\countercontrol{c}$ with different datum.

Now that we have violation detection in place, it only remains to explain the connection to halting. The halting instruction $q_n$ in \(Q_{CM}\) is the only state with output \(\top\). Hence, any run not outputting \(\bot\) must contain an agent in the halting instruction at some point, and be violation-free by the above. That is, the counter machine reached the halting state without violations. Conversely, if the machine halts, one can build a finite run that puts an agent into the halting state without any violation occurring. The corresponding configuration is then a deadlock, and hence the extension to an infinite run (by staying there forever) is a fair run not outputting \(\bot\). This proves that "well-specification" is undecidable for "PPUD", which motivates restricting ourselves to "immediate observation" "PPUD".

	\section{An Analysis of Immediate Observation Protocols with Data}
\label{sec:bounds-observed-agents}

To obtain our complexity bounds on the "emptiness problem for GRE",
we first show some transformations on runs that allow us to bound the number of "observed" agents.
All runs that we consider in this section are finite, and we therefore write them as $\gamma_1 \to \dots \to \gamma_m$ or $\gamma_{start} \runto \gamma_{end}$.
In the rest of this section, we fix an "IOPPUD" $\prot = (Q, \Delta, I, O)$.

We introduce some notation for agents in runs.
Let $\run \colon \config_1 \runto \config_m$ and $d \in \Dataset$.
We let $\Agentset_{\rho}$
be the set of agents "appearing" in $\run$, and set $\Agentset^d_{\rho} \coloneqq \set{a \in \Agentset_{\rho} \mid \dataof(a) = d}$.
We let $\observedagents{\run}{d}$ be the set of agents with datum $d$ that are "observed" in $\run$,
\ie, the $a_o \in \Agentset^d_{\rho}$ such that there exists a "step" $\config \step{\sometest}{a}{a_o} \config'$ in $\run$.
For all $q_1, q_m \in Q$, we let $\Agentset^d_{\rho, q_1, q_m}$ be the set of agents with datum $d$ that start in $q_1$ and end in $q_m$,
\ie, the $a \in \Agentset^d_{\rho}$ such that $\config_1(a) = q_1$ and $ \config_m(a) = q_m$.
Moreover, we let $\Dataset_\run \coloneqq \set{d \in \Dataset \mid \Agentset^d_{\rho} \neq \emptyset}$
be the set of data appearing in $\run$.
We may omit $\rho$ in the subscript if the "run" is clear from the context.

\subsection{Bounds on the Number of Observed Agents per Datum}

Let $\run \colon \config_1 \to \cdots \to \config_{m}$ be a "run".
For $i \in [1,m]$, we call $\config_i \to \config_{i+1}$ the \emph{$i$-th step} in $\run$.
Let $\prefixrun{\run}{i}$ (resp.\ $\suffixrun{\run}{i}$) denote the prefix of $\run$ ending on its $i$-th "configuration" (resp.\ the suffix of $\run$ starting on its $i$-th "configuration").
Let $a,b \in \Agentset_{\run}$.
Agent $a$ is ""active"" in the $i$-th "step" if $\config_i \step{\sometest}{a}{a_o} \config_{i+1}$ for some agent $a_o$.
Otherwise, $a$ is ""idle"" in that step.
We say $b$ ""copies"" $a$ in $\run$ if after every
step $\config_i \step{\sometest}{a}{a_o} \config_{i+1}$ in $\run$
via some transition $t$,
there is a "step" $\config_{i+1} \step{\sometest}{b}{a_o} \config_{i+2}$
via $t$ and, additionally,
$b$ is idle in every "step" not immediately following an active "step" of $a$.

The following lemma allows us to add agents to a "run" that "copy" an agent of the same datum.

\begin{lemma}[Agents copycat]
	\label{lem:agents-copycat}
	Let  $\run \colon \config_{start} \runto \config_{end}$ be a "run". Let $a \in \Agentset_\run$ and $\Tilde{a} \in \Agentset \setminus \Agentset_\run$ with $\dataof(a)=\dataof(\Tilde{a})$.
	Then there exist "configurations" $\Tilde{\config}_{start}, \Tilde{\config}_{end}$ and a "run" $\Tilde{\run} \colon \Tilde{\config}_{start} \runto \Tilde{\config}_{end}$ such that:
	\begin{enumerate}[align=left, label = (\roman*), ref = \cref{lem:agents-copycat}-(\roman*)]
		\item \label{agentcopycat-item1}$\Agentset_{\tilde{\rho}} = \Agentset_\rho \uplus \set{\Tilde{a}}$, and for all $a' \in \Agentset_\rho$, $\tilde{\gamma}_{start}(a') = \gamma_{start}(a')$ and $\tilde{\gamma}_{end}(a') = \gamma_{end}(a')$;
		
		\item \label{agentcopycat-item2} $\Tilde{\config}_{start}(\Tilde{a}) = \config_{start}(a)$ and $\Tilde{\config}_{end}(\Tilde{a}) = \config_{end}(a)$;
		
		\item \label{agentcopycat-item3} $\Tilde{a}$ is not "observed" in $\Tilde{\run}$.
	\end{enumerate}
\end{lemma}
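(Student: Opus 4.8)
The plan is to construct $\Tilde{\run}$ by induction on the length of $\run$, inserting the copycat agent $\Tilde a$ step by step so that after processing each original step, $\Tilde a$ occupies the same state as $a$. First I would set $\Tilde\config_{start}(a') = \config_{start}(a')$ for all $a' \in \Agentset_\run$ and $\Tilde\config_{start}(\Tilde a) = \config_{start}(a)$, with $\Tilde a$ unassigned (value $\undefsymb$) being impossible since it must mirror $a$. The key invariant maintained along the construction is: after simulating the prefix $\prefixrun{\run}{i}$, we are in a configuration $\Tilde\config_i$ with $\Tilde\config_i(a') = \config_i(a')$ for all $a' \in \Agentset_\run$ and $\Tilde\config_i(\Tilde a) = \config_i(a)$.

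For the inductive step, consider the $i$-th step $\config_i \step{\sometest}{b}{a_o} \config_{i+1}$ of $\run$ via some transition $t = (q_{a_o}, q_b, \sometest, q_{a_o}, q_b')$. There are three cases. If $a$ is idle in this step (i.e.\ $b \neq a$; note $a$ could be the observed agent $a_o$, but that does not move $a$), then we simply replay the same step in $\Tilde\run$: the agents $b$ and $a_o$ are in the required states by the invariant, and $\Tilde a$ stays put, re-establishing the invariant. If $a$ is active in this step (i.e.\ $b = a$), then $a$ moves from $q_a$ to $q_a'$; we first replay the step for $a$ (legal by the invariant and because $a_o \neq a$ so $a_o$ is unaffected), then perform an additional step in which $\Tilde a$ takes the same transition $t$, observing the same agent $a_o$ — here we must check that $a_o$ is still in state $q_{a_o}$ after $a$'s move, which holds because immediate-observation transitions leave the observed agent's state unchanged, and that $\dataof(\Tilde a) = \dataof(a)$ makes the equality/disequality constraint $\sometest$ between $\Tilde a$ and $a_o$ identical to the one between $a$ and $a_o$ that $\run$ already satisfied. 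After these two steps $\Tilde a$ is back in sync with $a$, re-establishing the invariant. Crucially, in neither case is $\Tilde a$ ever the observed agent, which gives item \ref{agentcopycat-item3}.

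At the end, $\Tilde\config_{end} \coloneqq \Tilde\config_m$ satisfies \ref{agentcopycat-item1} and \ref{agentcopycat-item2} directly from the invariant at $i = m$, and \ref{agentcopycat-item3} from the observation above that $\Tilde a$ is only ever active, never observed. The resulting $\Tilde\run$ is essentially $\run$ with an extra step inserted immediately after each active step of $a$, so $\Tilde a$ indeed \kl{copies} $a$ in the sense defined before the lemma. I do not expect a genuine obstacle here: the main thing to be careful about is the bookkeeping in the case analysis — in particular the subtlety that $a$ may coincide with the observed agent $a_o$ of some step (in which case $a$ is idle and $\Tilde a$ must remain idle too, even though $a$ was ``used''), and verifying that disequality constraints are preserved, which relies precisely on $\Tilde a$ and $a$ carrying the same datum. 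One should also note the degenerate case where $a$ is never active in $\run$, for which the construction trivially gives $\Tilde a$ sitting idle in $\config_{start}(a)$ throughout; since $\config_{start}(a) = \config_{end}(a)$ in that case, all items hold.
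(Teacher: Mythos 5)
Your proposal is correct and follows essentially the same approach as the paper: whenever $a$ takes an active step $\step{\sometest}{a}{a_o}$, insert an immediately following step $\step{\sometest}{\Tilde{a}}{a_o}$, which is legal because $\dataof(\Tilde{a})=\dataof(a)$ and the observed agent $a_o$ has not moved, and $\Tilde{a}$ is never observed since it only ever acts as the observing agent. The extra care you take with the case $a = a_o$ and with the invariant is sound bookkeeping that the paper leaves implicit.
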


\begin{proof}
	We let $\Tilde{\config}_{start}$ be such that $\Tilde{\config}_{start}(\Tilde{a}) = \config_{start}(a)$ and $\Tilde{\config}_{start}(a') = \config_{start}(a')$ for all $a' \neq \Tilde{a}$.
We  construct $\Tilde{\run}$ by going through $\run$ step by step, making $\Tilde{a}$ "copy" $a$: whenever $\run$ takes a "step" $\step{\sometest}{a}{a_o}$, then we take this "step" followed by "step" $\step{\sometest}{\Tilde{a}}{a_o}$ to $\Tilde{\run}$.
	We can do so because $\dataof(\Tilde{a})=\dataof(a)$ and
	because agent $a_o \neq a$ has not moved and thus can be "observed" again. These are the only "steps" where $\Tilde{a}$ is involved, hence it is never "observed".
\end{proof}

The following result shows that, given a "run" $\run$, we can construct a new "run" with a small subset of the agents of $\Agentset_\run$ such that, for all $d \in \Dataset$ and all states $q_1$ and $q_2$, if there is a "$d$-agent" starting in $q_1$ and ending in $q_2$ in $\run$,
then this is also true in the new run.
The full proof can be found in \cref{app:bounds-observed-agents}.

\begin{restatable}[Agents core]{lemma}{LemmaAgentsCore}
	\label{lem:agents-core-lemma}
	Let $\run \colon \config_{start} \runto \config_{end}$ be a "run".
	Then there exist "configurations" $\config'_{start}$, $\config'_{end}$ and a "run" $\run' \colon \config'_{start} \runto \config'_{end}$ with $\Agentset_{\run'} \subseteq \Agentset_{\run}$ such that:
	\begin{enumerate}[align=left, label = (\roman*), ref = \cref{lem:agents-core-lemma}-(\roman*)]
		\item \label{agentscore-item1} for all $a \in \Agentset_{\run'}$, $\config'_{start}(a) = \config_{start}(a)$ and $\config'_{end}(a) = \config_{end}(a)$;
		
		\item \label{agentscore-item2} for all $d \in \Dataset$ and $q_s, q_e \in Q$, if $\Agentset^d_{\rho, q_s, q_e} \neq \emptyset$,
		then $\Agentset^d_{\rho', q_s, q_e} \neq \emptyset$;
		\item \label{agentscore-item3} for all $d \in \Dataset$, we have $|\Agentset^d_{\rho'}| \leq \size{Q}^3$.
	\end{enumerate}
\end{restatable}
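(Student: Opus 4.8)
The overall plan is a deletion procedure, applied one datum at a time. I would assign to every "$d$-agent" a bounded \emph{signature}, and argue that, as long as two "$d$-agents" of some datum share a signature, the "run" can be rewritten on a strictly smaller set of appearing agents while maintaining items (i) and (ii) (and the inclusion $\Agentset_{\run'} \subseteq \Agentset_{\run}$). Since the number of appearing agents is a non-negative integer that strictly decreases at each such step, the process terminates; and once no two "$d$-agents" of any datum $d$ share a signature, the bound $|\Agentset^d_{\run'}| \le \size{Q}^3$ follows because the signature space has size at most $\size{Q}^3$. Item (i) and the inclusion are straightforward to maintain, so the real content is keeping a legal "run" and preserving item (ii) along each rewriting step.

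For the signature I would use the start state, the end state, and one further $\size{Q}$-bounded piece of data recording \emph{where} the agent is "observed" (unobserved agents being treated by a trivial value of this component). To make this last component single-valued I would first normalise $\run$, by local reorderings of "steps" together with \cref{lem:agents-copycat} where convenient, so that each "observed" "$d$-agent" is "observed" only while sitting "idle" in one fixed state. The reduction step then takes two "$d$-agents" $a, a'$ with the same signature and deletes $a'$: the "steps" in which $a'$ is "active" are simply dropped, and item (ii) for the pair $(\config_{start}(a'), \config_{end}(a'))$ survives because $a$ has exactly the same endpoints; meanwhile, every "step" in which some agent "observes" $a'$ must be re-routed to "observe" a surviving agent occupying the same state at that moment, which I would arrange by reorganising $\run$ so that $a$ remains "idle" in the common state throughout the window in which $a'$ used to be "observed". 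After the surgery $a'$ no longer appears, the resulting sequence of "steps" is again a "run" with unchanged start-to-end behaviour for every remaining agent, and the run is still normalised, so the procedure can be iterated.

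The step I expect to be the main obstacle is exactly this reorganisation, together with the normalisation feeding into it: reordering the "steps" of the "run" so that a chosen agent remains "idle" in a prescribed state across all the moments at which the deleted agent was "observed", without invalidating any other "observation" performed along the "run". This is where the factor $\size{Q}^3$ is genuinely spent --- the $\size{Q}$ choices for the start state, the $\size{Q}$ for the end state, and the $\size{Q}$ for the state in which the agent is "observed" --- and where the argument turns technical; I would expect most of the work to lie in verifying that these local surgeries compose into a single legal "run".
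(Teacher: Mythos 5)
Your overall accounting matches the paper's --- the $\size{Q}^3$ budget is indeed spent on (start state) $\times$ (end state) $\times$ (state in which the agent serves as an observation target) --- but the two steps you defer as ``technical'' are exactly where the argument lives, and as described they do not go through. First, the normalisation making ``the state in which an agent is observed'' single-valued cannot be achieved by reordering steps: an agent may be observed in $q_1$, then move to $q_2$ and be observed there, and if it is the only agent of its datum ever occupying those states at those times, no reordering eliminates one of the two observations. You could repair this by splitting the agent with \cref{lem:agents-copycat} and redirecting the observations at $q_2$ to the copy, but that increases the number of appearing agents and breaks your strictly-decreasing termination measure, so the bookkeeping would have to be redone.

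The more serious gap is in the merge itself. Two agents $a,a'$ with the same signature $(q_s,q_e,q)$ may be observed in $q$ during disjoint time windows, and your plan --- keep $a$ idle in $q$ throughout $a'$'s window and then let $a$ resume its own suffix --- fails because the steps of $a$'s suffix observe other agents at specific moments and need not be executable after the delay. The paper's proof supplies precisely the missing mechanism: for each state $q$ visited by the bunch $\Agentset^d_{\run,q_s,q_e}$, it builds one witness $a_q$ that copies the prefix of the agent reaching $q$ \emph{first} (up to the first time $f$ that $q$ is reached), sits idle in $q$ until the last time $\ell$ any bunch agent occupies $q$, and then copies the suffix of the agent leaving $q$ \emph{last}. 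Both the borrowed prefix and suffix are legal because they are replayed (via \cref{lem:agents-copycat}) alongside agents that actually executed them at those times, and $a_q$ can absorb every observation of a bunch agent in $q$ because it occupies $q$ over the whole interval $[f,\ell]$. Without this first-arriver/last-leaver splice, your local surgeries do not compose into a legal run, so the proposal as written has a genuine gap.
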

\begin{proofsketch}
We adapt the bunch argument from the case of IO protocols without data \cite{EsparzaRW2019}.
Suppose there is $d\in \Dataset$ and $q_s, q_e \in Q$ such that $\size{\Agentset^d_{\run,q_s, q_e}} > \size{Q}$.
Let $\bunchreach$ be the set of states visited by agents of $\Agentset^d_{\run,q_s, q_e}$ in $\run$. 	
Notice that $|\bunchreach| \leq |Q|$.
We define a family $(a_q)_{q \in \bunchreach}$ of pairwise distinct agents such that reducing $\Agentset^d_{\run,q_s, q_e}$ in $\run$ to $(a_q)_{q \in \bunchreach}$ still yields a valid run.
	
	We iterate through $\bunchreach$ as follows. Let $q\in\bunchreach$ and
	let $f$ be the first moment $q$ is reached in $\run$,
	\ie, the minimal index such that there exists an $a\in \Agentset^d_{\run,q_s, q_e}$
	with $\gamma_f(a)=q$.
	Let $\ell$ be the last moment $q$ is occupied in $\run$,
	\ie, the maximal index such that there exists an $a\in \Agentset^d_{\run,q_s, q_e}$
	with $\gamma_\ell(a)=q$.
	Let $\alpha_q$ be the agent in $\Agentset^d_{\run,q_s, q_e}$ that reaches $q$ first,
	\ie, $\gamma_f(\alpha_q)=q$, and
	let $\beta_q$ be the agent in $\Agentset^d_{\run, q_s, q_e}$ that leaves
	$q$ last,
	\ie, $\gamma_l(\beta_q)=q$.
	Note that these agents do not have to be distinct.
We pick a fresh agent $a_q \notin \Agentset_\run$ with $\dataof(a_q) = d$ and modify $\run$ as follows. We let $a_q$ "copy" $\alpha_q$ in $\run[\to f]$, then $a_q$ stays "idle" until $\beta_q$ leaves $q$ (for the last time) and then $a_q$ "copies" $\beta_q$ in $\run[\ell \to]$. We do this for every $q \in \bunchreach$. 	

Then, for every "step" in which an $a_o$ in $\Agentset^d_{\run,q_s, q_e}$ is "observed" in state $q$, let $a_q$ be "observed" instead, \ie, replace steps $\step{\sometest}{a}{a_o}$ with $\step{\sometest}{a}{a_q}$.
Finally, remove all the agents of $\Agentset^d_{\run,q_s,q_e}$ from the run,
and identify (or substitute) each $a_q$ with a distinct agent in  $\Agentset^d_{\run,q_s,q_e}$, so that $(a_q)_{q \in \bunchreach} \subseteq \Agentset^d_{\run,q_s,q_e}$.
We do this for every $d\in \Dataset$ and $q_s, q_e \in Q$ such that $\size{\Agentset^d_{\run,q_s, q_e}} > \size{Q}$.
\end{proofsketch}

\begin{example}
\label{ex-bunch}
Consider the "run" $\run$ depicted in \cref{fig-bunch}. Applying \cref{lem:agents-core-lemma} on $\run$ yields a new "run" $\run'$ with $4$ agents instead of $5$. Indeed, let $d$ denote the datum of $\agentbunch{a}$, $\agentbunch{b}$ and $\agentbunch{c}$; we have $\size{\Agentset^d_{\run,q_1,q_1}} = \size{\set{\agentbunch{a},\agentbunch{b},\agentbunch{c}}} = 3$ whereas $\size{Q} = 2$. In $\run$, agents $\agentbunch{a}$ and $\agentbunch{b}$ successively go from $q_1$ to $q_2$ and back to $q_1$. In $\run'$, these two agents are replaced by a single agent (named $\agentbunch{b}$ again) who goes to $q_2$ on the first "step" and only leaves $q_2$ on the last step. In $\run'$, the new agent $\agentbunch{b}$ is "observed" by $\agentbunch{d}$ in the second step, and by $\agentbunch{e}$ in the penultimate step.
\end{example}

\begin{figure}
	\centering
	\begin{tikzpicture}[
  align=center,node distance=2cm,thick,inner sep=7pt,
  >={stealth'},-={stealth',ultra thick,scale=3} , xscale = 0.95]

  \colorlet{dataone}{IoppudBlue}
  \colorlet{datatwo}{IoppudMagenta}
  \colorlet{obscolor}{black}
  \tikzstyle{agent} = [draw, circle, minimum size = .8em, inner sep = 0pt, font = {\small}]
  \tikzstyle{agent one} = [agent, draw=dataone, fill=dataone, text=white]
  \tikzstyle{agent two} = [agent, draw=datatwo, fill=datatwo, text=white]
  \tikzstyle{moving} = [font=\small, inner sep=2pt]
  \tikzstyle{moving one} = [moving, text=dataone!80!black]
  \tikzstyle{moving two} = [moving, text=datatwo!80!black]
  \newcommand{\eqtestnode}{$\boldsymbol{=}$}
  \newcommand{\diseqtestnode}{$\boldsymbol{\neq}$}

  \newcommand{\drawconfig}[6]{ \draw[configuration, rounded corners=.5em] (#1-0.75,0) rectangle (#1+0.75,3);
    \node at (#1,0.75) [medium protocol state] (q2#1) {};
    \node at (#1,2.25) [medium protocol state] (q1#1) {};
    \node[agent one] at ($(#1, 0.75+1.5*#2-1.5) + 0.31*(-0.80901699437, -0.5877852522)$) (agenta#1) {$a$};
    \node[agent one] at ($(#1, 0.75+1.5*#3-1.5) + 0.31*(-0.80901699437, 0.58778525229)$) (agentb#1) {$b$};
    \node[agent one] at ($(#1, 0.75+ 1.5*#4-1.5) + 0.31*(0.3090, 0.9510)$) (agentc#1) {$c$};
    \node[agent two] at ($(#1, 0.75+1.5*#5-1.5) + 0.31*(1, 0)$) (agentd#1) {$d$};
    \node[agent two] at ($(#1-0.02, 0.75+1.5*#6-1.5) + 0.31*(0.30901699437, -0.95105651629)$) (agente#1) {$e$};
  }

  \node at (-1.25, 0.75) (q2label) {$q_2$};
  \node at (-1.25, 2.25) (q1label) {$q_1$};
  \drawconfig{0}{2}{2}{2}{1}{1}
  \drawconfig{2}{1}{2}{2}{1}{1}
  \drawconfig{4}{1}{2}{2}{2}{1}
  \drawconfig{6}{2}{2}{2}{2}{1}
  \drawconfig{8}{2}{1}{2}{2}{1}
  \drawconfig{10}{2}{1}{2}{2}{2}
  \drawconfig{12}{2}{2}{2}{2}{2}
  \draw [-stealth, very thick, color = dataone] (agenta0) -- (agenta2); \draw [-stealth, very thick, color = datatwo] (agentd2) -- (agentd4); \draw [-stealth, very thick, color = dataone] (agenta4) -- (agenta6); \draw [-stealth, very thick, color = dataone] (agentb6) -- (agentb8);  \draw [-stealth, very thick, color = datatwo] (agente8) -- (agente10); \draw [-stealth, very thick, color = dataone] (agentb10) -- (agentb12); \node[moving one] at ($(agenta0)!0.62!(agenta2) + (0,1.5ex)$) (0to2) {$a$};
  \node[moving two] at ($(agentd2)!0.35!(agentd4)+ (0,1.5ex)$) (2to4) {$d$};
  \node[moving one] at ($(agenta4)!0.62!(agenta6)+ (0,-1.5ex)$) (4to6) {$a$};
  \node[moving one] at ($(agentb6)!0.62!(agentb8)+ (0,1.5ex)$) (6to8) {$b$};
  \node[moving two] at ($(agente8)!0.46!(agente10)+ (0,1.5ex)$) (8to10) {$e$};
  \node[moving one] at ($(agentb10)!0.62!(agentb12)+ (0,1.5ex)$) (10to12) {$b$};

  \path[dashed, thick, color=obscolor]
  (0to2) edge[bend right=15] node[pos=.45, above right=-1em, color=obscolor] {\eqtestnode} (agentc0)
  (2to4) edge[bend right=15] node[pos=.4, above=-.5em, color=obscolor] {\diseqtestnode} (agenta2)
  (4to6) edge[bend left=15] node[pos=.2, below, color=obscolor] {\diseqtestnode} (agente4)
  (6to8) edge[bend right=15] node[above right=-1em, color=obscolor] {\diseqtestnode} (agentd6)
  (8to10) edge[bend right=10] node[pos=.4, above=-.5em, color=obscolor] {\diseqtestnode} (agentb8)
  (10to12) edge[bend right=46] node[pos=.2, above=-.5em, color=obscolor] {\eqtestnode} (agenta10);
\end{tikzpicture}
 	\caption{An example of a "run" with six steps on a protocol with two states $q_1,q_2$. $\agentbunch{a},\agentbunch{b},\agentbunch{c},\agentbunch{d},\agentbunch{e}$ denote agents; $\agentbunch{a},\agentbunch{b},\agentbunch{c}$ have the same datum and $\agentbunch{d},\agentbunch{e}$ have the same datum. Dashed lines are observations.}
	\label{fig-bunch}
\end{figure}

\subsection{Bounds on the Number of Observed Data}

Given a "run" and a datum $d$ appearing in it, we define the ""trace"" of $d$ in $\run$ as the function $\trace{d}{\run} \colon Q^2 \to \nats$ such that for all $q_1, q_2 \in Q$,
it holds that $\trace{d}{\run}(q_1, q_2) = \size{\Agentset^d_{\rho, q_1, q_2}}$.
For each pair of states $q_1, q_2$, the "trace" counts the number of "$d$-agents" starting in $q_1$ and ending in $q_2$.
For example, the "trace" of the "run" $\run$ of \cref{ex-bunch}, with $d$ the datum of agents $\agentbunch{a},\agentbunch{b}$ and $\agentbunch{c}$, is such that $\trace{d}{\run}(q_1, q_1) = 3$ and $\trace{d}{\run}(q, q') = 0$ for all $(q,q') \ne (q_1,q_1)$.
The "trace" is the information we need to copy data: if there is a datum $d$ with "trace" $\atrace$ in a "run", then we can add data to the "run" that mimic $d$ and have the same trace.
The following lemma echoes Lemma \ref{lem:agents-copycat}.

\begin{lemma}[Data copycat]
	\label{lem:data-copycat}
	Let  $\run \colon \config_{start} \runto \config_{end}$ be a "run". Let $d \in \Dataset_\run$ and $\Tilde{d} \in \Dataset \setminus \Dataset_\run$. Then there exist "configurations" $\Tilde{\config}_{start}$, $\Tilde{\config}_{end}$ and a "run" $\Tilde{\run} \colon \Tilde{\config}_{start} \runto \Tilde{\config}_{end}$ such that:
	\begin{enumerate}[align=left,label = (\roman*), ref = \cref{lem:data-copycat}-(\roman*)]
\item \label{datacopycat-item1} $\Agentset_{\tilde{\rho}} = \Agentset_\rho \uplus \Agentset^{\tilde{d}}_{\tilde{\rho}}$, and for all $a \in \Agentset_\rho$, $\tilde{\gamma}_{init}(a) = \gamma_{init}(a)$ and $\tilde{\gamma}_{end}(a) = \gamma_{end}(a)$,
		
		\item \label{datacopycat-item2}$\trace{\Tilde{d}}{\Tilde{\run}} = \trace{d}{\run}$ and $\trace{d'}{\Tilde{\run}} = \trace{d'}{\run}$ for all $d' \neq \Tilde{d}$,

		\item \label{datacopycat-item3} $\Tilde{d}$ is not "externally observed" in $\Tilde{\run}$.
\end{enumerate}
\end{lemma}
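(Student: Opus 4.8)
The plan is to reuse the step-by-step copying idea from the proof of \cref{lem:agents-copycat}, but this time to add an entire copy of the $d$-agents, all carrying the fresh datum $\tilde d$. Write $\Agentset^d_\rho = \set{a_1, \dots, a_k}$ and fix pairwise distinct agents $\tilde a_1, \dots, \tilde a_k \in \Agentset \setminus \Agentset_\rho$ with $\dataof(\tilde a_i) = \tilde d$. I would set $\tilde\config_{start}(\tilde a_i) \deff \config_{start}(a_i)$ and $\tilde\config_{start}(a) \deff \config_{start}(a)$ for $a \in \Agentset_\rho$, then traverse $\run$ step by step while maintaining the invariant that after each processed step every $\tilde a_i$ sits in the same state as $a_i$.

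For a step $\config \step{\sometest}{a}{a_o} \config'$ of $\run$ taken via a transition $t = q_2 \trans{q_1}{\sometest} q_3$: if the active agent $a$ is not a $d$-agent, I just replay the step -- no $\tilde d$-agent needs to move, and in particular if $a_o$ happens to be a $d$-agent then $\sometest{}$ must be ${\neq}$ and $a_o$ stays put, so the corresponding $\tilde d$-agent also stays put (this is the only place where an observed $d$-agent arises, and the point is that it does not change state). If instead $a = a_i \in \Agentset^d_\rho$, I replay the step and then immediately let $\tilde a_i$ copy it via the same $t$: when $\sometest{} = {\neq}$, the observed $a_o$ is neither a $d$- nor a $\tilde d$-agent, is still passive, and $\dataof(\tilde a_i) = \tilde d \neq \dataof(a_o)$, so $\tilde a_i$ can observe it; when $\sometest{} = {=}$, we have $a_o = a_j$ for some $j \neq i$, the invariant puts $\tilde a_j$ in $q_1$, and the equality test holds since both $\tilde a_i$ and $\tilde a_j$ carry $\tilde d$, so $\tilde a_i$ can observe $\tilde a_j$. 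In either case the invariant is restored and no state of an agent of $\Agentset_\rho$ has been touched.

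At the end this yields a run $\tilde\run \colon \tilde\config_{start} \runto \tilde\config_{end}$. Item~(i) is then immediate, and the invariant gives $\tilde\config_{end}(\tilde a_i) = \config_{end}(a_i)$; since $a_i \mapsto \tilde a_i$ is a bijection from $\Agentset^d_\rho$ onto $\Agentset^{\tilde d}_{\tilde\rho}$ preserving both endpoints, item~(ii) follows, and the other traces are unchanged because every remaining agent is left exactly as in $\run$. For item~(iii), note that a $\tilde d$-agent occurs in an inserted step either as the active agent $\tilde a_i$ or, in the equality case, as the \emph{internally} observed agent $\tilde a_j$, and never in a replayed step; hence no $\tilde d$-agent is ever externally observed. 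The only delicate point, exactly as in \cref{lem:agents-copycat}, is the timing: $\tilde a_i$'s move must be inserted immediately after $a_i$'s move so that the observed agent has not yet changed state and can be observed a second time via $t$. The genuinely new case compared to \cref{lem:agents-copycat} is a disequality step in which a $d$-agent is the observed one, which is harmless precisely because an observed agent is passive.
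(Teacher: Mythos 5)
Your proof is correct and follows essentially the same construction as the paper's: insert, immediately after each step of a $d$-agent, a copy step by the corresponding $\tilde d$-agent, observing the same agent for disequality steps and the copied counterpart for equality steps, so that $\tilde d$-agents are only ever internally observed. The extra case you flag (a $d$-agent being externally observed by a non-$d$ active agent) is indeed harmless for exactly the reason you give, and the paper handles it implicitly.
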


\begin{proof}
	For all $q_s,q_e \in Q$ and all $a \in \Agentset^d_{q_s, q_e}$,
	we add an agent $\Tilde{a}$ with datum $\Tilde{d}$ in $q_s$ at the start.
	We do this in a way similar to Lemma \ref{lem:agents-copycat}:
	after every "step" $\step{\neq}{a}{a_o}$ in $\run$, we insert a "step" $\step{\neq}{\Tilde{a}}{a_o}$, and
	after every "step" $\step{=}{a}{a_o}$ in $\run$, we insert a "step" $\step{=}{\Tilde{a}}{\Tilde{a}_o}$.
	We thus maintain the fact that each added agent $\Tilde{a}$ is in the same state as its counterpart $a$. In particular, they are in the same state at the end of the "run".
	This yields a "run" $\Tilde{\run}$ with $\trace{\Tilde{d}}{\Tilde{\run}} = \trace{d}{\run}$, and such that  for all $d' \neq \Tilde{d}$, $\trace{d'}{\Tilde{\run}} = \trace{d'}{\run}$.
	Since $\Tilde{d} \notin \Dataset_{\run}$, it is not "externally observed" in $\Tilde{\run}$.
\end{proof}

Like we showed for the agents, we show that we can reduce the number of data in a run.
We lift the proof strategy of \cref{lem:agents-core-lemma} from agents to data, exploiting the sets of data with equal traces.
The full proof is in \cref{app:bounds-observed-agents}.

\begin{restatable}[Data core]{lemma}{LemmaDataCore}
	\label{lem:data-core-lemma}
	Let $\run \colon \config_{start} \runto \config_{end}$ be a "run" and let $K$ be a number such that there are at most $K$  agents of each datum in $\run$.
	Then there exist "configurations" $\config'_{start}$, $\config'_{end}$, a "run" $\run' \colon \config'_{start} \runto \config'_{end}$, and a subset of data $\Dataset_{\run'} \subseteq \Dataset_{\run}$ such that:
	\begin{enumerate}[align=left,label = (\roman*), ref = \cref{lem:data-core-lemma}-(\roman*)]
		\item \label{datacore-item1} for all $d \in \Dataset_{\run'}$ and all agents $a$ of datum $d$, $\config_{start}(a) = \config'_{start}(a)$ and $\config_{end}(a) = \config'_{end}(a)$,
		
		\item \label{datacore-item2} for all $d \in \Dataset_{\run}$, there exists $d' \in \Dataset_{\run'}$ such that $\trace{d'}{\run'} = \trace{d}{\run}$,
		
		\item \label{datacore-item3} $\size{\Dataset_{\run'}} \leq (K+1)^{\size{Q}^3+\size{Q}^2}$.
	\end{enumerate}
\end{restatable}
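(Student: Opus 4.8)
The plan is to lift the bunch argument of \cref{lem:agents-core-lemma} from agents to data. For a datum $d$ appearing in $\run$ and an index $i$, I would record the \emph{data-state} of $d$ at step $i$, namely the function $\sigma^d_i\colon Q^3\to\nats$ with $\sigma^d_i(q_s,q,q_e)=\size{\set{a\in\Agentset^{d}_{\run}\mid \config_{start}(a)=q_s,\ \config_i(a)=q,\ \config_{end}(a)=q_e}}$. Since $\run$ has at most $K$ agents of each datum, every data-state is a map $Q^3\to[0,K]$, so there are at most $(K+1)^{\size{Q}^3}$ of them; moreover $\sum_q\sigma^d_i(q_s,q,q_e)=\trace{d}{\run}(q_s,q_e)$, so the data-state determines the trace, and at most $(K+1)^{\size{Q}^2}$ distinct traces occur in $\run$. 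It then suffices to show that, for each trace $\atrace$, the set $D_\atrace\deff\set{d\in\Dataset_\run\mid\trace{d}{\run}=\atrace}$ can be reduced to at most as many data as there are distinct data-states visited by data of $D_\atrace$, hence at most $(K+1)^{\size{Q}^3}$; iterating over all traces then yields (iii).

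Fix such a trace $\atrace$ and assume $D_\atrace$ is too large. Let $\bunchreach$ be the set of data-states reached by some $d\in D_\atrace$ along $\run$; for each $\sigma\in\bunchreach$ let $f_\sigma$ (resp.\ $\ell_\sigma$) be the first (resp.\ last) index at which some datum of $D_\atrace$ is at data-state $\sigma$, and pick $\alpha_\sigma,\beta_\sigma\in D_\atrace$ with $\sigma^{\alpha_\sigma}_{f_\sigma}=\sigma^{\beta_\sigma}_{\ell_\sigma}=\sigma$. I would introduce one fresh datum $\tilde d_\sigma$ per $\sigma\in\bunchreach$ that mimics $\alpha_\sigma$ on $\prefixrun{\run}{f_\sigma}$, in the agent-by-agent fashion of \cref{lem:data-copycat}, then stays idle at data-state $\sigma$ until step $\ell_\sigma$, then mimics $\beta_\sigma$ on $\suffixrun{\run}{\ell_\sigma}$. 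The gluing at steps $f_\sigma$ and $\ell_\sigma$ is licit since in both configurations the relevant agents realise the same data-state $\sigma$, so the agents of $\tilde d_\sigma$ at step $f_\sigma$ can be matched bijectively with those of $\beta_\sigma$ at step $\ell_\sigma$ while respecting their start and current states. Consequently each agent of $\tilde d_\sigma$ starts as in $\alpha_\sigma$ and ends as in $\beta_\sigma$, and a short count shows that the trace of $\tilde d_\sigma$ sends each pair $(q_s,q_e)$ to $\sum_q\sigma(q_s,q,q_e)=\atrace(q_s,q_e)$, so every kept datum still has trace $\atrace$.

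I would then modify $\run$ into $\run'$: delete every step in which an agent of $D_\atrace$ is active (together with all agents of $D_\atrace$), insert the mimicking steps of the $\tilde d_\sigma$, and, for every surviving step with index $i$ in which an agent $a_o$ of some $d\in D_\atrace$ is passively observed --- necessarily via a disequality, as the internal observations of $d$ were deleted with $d$ --- redirect that observation to an agent of $\tilde d_{\sigma^d_i}$. This is sound: at step $i$ the datum $d$ is at data-state $\sigma^d_i\in\bunchreach$, so $f_{\sigma^d_i}\le i\le\ell_{\sigma^d_i}$, hence $\tilde d_{\sigma^d_i}$ is at data-state $\sigma^d_i$ at step $i$ as well and has an agent in the state of $a_o$; the observer has a datum outside $D_\atrace$, so the step stays a disequality. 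The mimicked disequality observations of the $\tilde d_\sigma$ are redirected the same way, and none becomes a self-observation: one mimicked from $\alpha_\sigma$ occurs at an index $i<f_\sigma$ and one mimicked from $\beta_\sigma$ at an index $i>\ell_\sigma$, so the observed datum is not at data-state $\sigma$ at step $i$, by minimality of $f_\sigma$, resp.\ maximality of $\ell_\sigma$. Finally I would re-identify the fresh data $\set{\tilde d_\sigma\mid\sigma\in\bunchreach}$ and their agents with data of $D_\atrace$ and their agents, bijectively and so that every agent keeps its pair of start and end states --- possible because both multisets of such pairs equal $\atrace$ --- which makes the surviving data a subset of $\Dataset_\run$ and establishes (i); item (ii) holds because every datum of $D_\atrace$ is replaced by data of trace $\atrace$ while data outside $D_\atrace$ are untouched, as redirecting whom an agent observes leaves the observer's trajectory unchanged and deleting a step whose active agent is removed leaves the passive agent's trajectory unchanged.

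The hard part will be the low-level bookkeeping already faced in the proof of \cref{lem:agents-core-lemma}: interleaving the mimicking steps of all the $\tilde d_\sigma$ with one another and with the surviving steps of $\run$ so that the result is a genuine run, and handling the boundary indices $i=f_\sigma$ and $i=\ell_\sigma$ with enough care that a mimicked disequality observation never degenerates into a self-observation after redirection.
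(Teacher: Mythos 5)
Your proposal is correct and follows essentially the same route as the paper's proof: your ``data-state'' is exactly the paper's \emph{split trace}, and the construction (first/last occurrence indices $f_\sigma,\ell_\sigma$ with witnesses $\alpha_\sigma,\beta_\sigma$, one representative datum per split trace that copies $\alpha_\sigma$ up to $f_\sigma$, idles, then copies $\beta_\sigma$ from $\ell_\sigma$, with observations of deleted data redirected to the representative matching the observed datum's current split trace, and disequalities preserved by minimality of $f_\sigma$ resp.\ maximality of $\ell_\sigma$) is the paper's argument, including the bound $(K+1)^{\size{Q}^3}\cdot(K+1)^{\size{Q}^2}$. The only cosmetic difference is that you introduce fresh data and re-identify them at the end, whereas the paper picks the representatives $\eta_{str}$ inside $\Dataset_\run^{\atrace}$ directly via explicit bijections.
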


\begin{proofsketch}
	We define the notion of "split trace".
	The "split trace" of a datum $d$ at the $i$-th "configuration" of a "run" $\run$ maps every triple of states $(q_1,q_2,q_3)$ to the number of "$d$-agents" that are in $q_1$ at the start of $\run$, then in $q_2$ in the $i$-th configuration, and finally in $q_3$ at the end.
	Since there are at most $K$ agents per datum, there are at most $(K+1)^{|Q^2|}$ possible "traces" and $M=(K+1)^{|Q^3|}$ possible "split traces".
	
	For every trace $\atrace$, if there are more than $M$ data that have trace $\atrace$ in $\run$, we  apply a similar argument to~\cref{lem:agents-core-lemma}:
	 we select one datum for each possible "split trace", and use it to cover all "external observations" of agents whose datum matches that "split trace". We remove the other data, and show that this is still a valid run.
	The bound on the total number of data comes from the number of "traces" and "split traces".
\end{proofsketch}

\begin{corollary}
	\label{cor:run-few-observed-data-and-agents}
  For every "run" $\run \colon \config_{start} \runto \config_{end}$, there exists a "run" $\Tilde{\run} \colon \config_{start} \runto \config_{end}$ such that for all $d \in \Dataset$, it holds that $|\Agentset^d_{\Tilde{\run},o}| \leq \size{Q}^3$ and that agents of at most $(\size{Q}^3+1)^{\size{Q}^3+\size{Q}^2}$ different data are "externally observed".
\end{corollary}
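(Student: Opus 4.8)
The plan is to combine the two \emph{core} lemmas of this section, which shrink a "run", with the two \emph{copycat} lemmas, which re-inflate one: first shrink $\run$ to a run that is already small in both senses, then blow it back up so that it again starts in $\config_{start}$ and ends in $\config_{end}$, keeping the small parts responsible for all observations. A re-inflation phase is unavoidable, because any "run" from $\config_{start}$ to $\config_{end}$ involves exactly the agents appearing in those "configurations", so $\Agentset_{\Tilde{\run}}$ is forced to equal $\Agentset_\run$.

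For the shrinking phase I would first apply \cref{lem:agents-core-lemma} to $\run$, obtaining a "run" $\run_1$ with $\Agentset_{\run_1} \subseteq \Agentset_\run$, at most $\size{Q}^3$ agents per datum, identical start and end states on all of $\Agentset_{\run_1}$, and $\Agentset^d_{\run, q_s, q_e} \neq \emptyset \Rightarrow \Agentset^d_{\run_1, q_s, q_e} \neq \emptyset$ for all $d, q_s, q_e$ (so in particular $\Dataset_{\run_1} = \Dataset_\run$). Then I would apply \cref{lem:data-core-lemma} to $\run_1$ with $K = \size{Q}^3$, obtaining a "run" $\run_2$ with $\Dataset_{\run_2} \subseteq \Dataset_{\run_1}$, $\size{\Dataset_{\run_2}} \leq (\size{Q}^3+1)^{\size{Q}^3+\size{Q}^2}$, still at most $\size{Q}^3$ agents per datum, identical start/end states on agents of the surviving data, and, for every $d \in \Dataset_{\run_1}$, a datum $d' \in \Dataset_{\run_2}$ with $\trace{d'}{\run_2} = \trace{d}{\run_1}$. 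At this point $\run_2$ satisfies both numerical bounds (for the agents, note $\observedagents{\run_2}{d} \subseteq \Agentset^d_{\run_2}$), but it typically has fewer agents and data than $\run$.

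For the re-inflation phase I would restore the removed data first. Given $d \in \Dataset_{\run_1} \setminus \Dataset_{\run_2}$, fix $d' \in \Dataset_{\run_2}$ with $\trace{d'}{\run_2} = \trace{d}{\run_1}$; since these "traces" agree, there is a bijection between the $d$-agents discarded in passing from $\run_1$ to $\run_2$ and the $d'$-agents of $\run_2$ that preserves start and end states. I would apply \cref{lem:data-copycat} to add a fresh copy of $d'$ carrying datum $d$, but choose the added agents to be exactly those discarded $d$-agents, each copying the $d'$-agent assigned to it by the bijection; this is legitimate (those agents are absent from the current run and have datum $d$), and it places every re-added agent in its original start and end state. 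Iterating over all removed data yields a "run" $\run_3$ with $\Agentset_{\run_3} = \Agentset_{\run_1}$, the same start/end "configurations" as $\run_1$, $\Dataset_{\run_3} = \Dataset_\run$, $\trace{d}{\run_3} = \trace{d}{\run_1}$ for every $d$, at most $\size{Q}^3$ agents per datum, and the same set of "externally observed" data as $\run_2$, since a data-copycat is never "externally observed" and only re-observes agents that were already externally observed. Finally I would restore the agents removed by \cref{lem:agents-core-lemma}: for $a \in \Agentset_\run \setminus \Agentset_{\run_1}$ with datum $d$, start state $q_s$, end state $q_e$, we have $\trace{d}{\run_3}(q_s, q_e) = \trace{d}{\run_1}(q_s, q_e) \geq 1$, so some $a' \in \Agentset^d_{\run_3, q_s, q_e}$ exists and \cref{lem:agents-copycat} adds $a$ as a non-observed copy of $a'$. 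Iterating produces the desired run $\Tilde{\run} \colon \config_{start} \runto \config_{end}$ with $\Agentset_{\Tilde{\run}} = \Agentset_\run$, and since agent-copycats are never "observed" and cause no new external observation, the bounds $\size{\observedagents{\Tilde{\run}}{d}} \leq \size{Q}^3$ for all $d$ and $(\size{Q}^3+1)^{\size{Q}^3+\size{Q}^2}$ on the number of externally observed data are inherited from $\run_3$.

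The hard part is not any single lemma but the bookkeeping in the re-inflation: since a run between two fixed configurations is committed to a fixed agent set, I must recover exactly $\Agentset_\run$ together with exactly the right start and end states, which is precisely why the data-copycat step has to reuse the discarded agents along a trace-matching bijection rather than spawn new ones. One then has to check that neither restoration phase inflates the two quantities being bounded, which is exactly what the non-observation guarantees of \cref{lem:agents-copycat} and \cref{lem:data-copycat} provide. The order of the two shrinking steps also matters: \cref{lem:data-core-lemma} becomes useful only once \cref{lem:agents-core-lemma} has capped the per-datum agent count at $\size{Q}^3$, which is what turns its data bound into $(\size{Q}^3+1)^{\size{Q}^3+\size{Q}^2}$ rather than something depending on $\run$.
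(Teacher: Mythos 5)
Your proposal is correct and follows essentially the same route as the paper: shrink with \cref{lem:agents-core-lemma} then \cref{lem:data-core-lemma} (with $K=\size{Q}^3$), and re-inflate with \cref{lem:data-copycat} then \cref{lem:agents-copycat}, relying on the non-observation guarantees \ref{datacopycat-item3} and \ref{agentcopycat-item3} to preserve both bounds. Your extra bookkeeping — reusing exactly the discarded agents along a trace-matching bijection so that the final run really starts in $\config_{start}$ and ends in $\config_{end}$ — is a point the paper's proof treats more tersely but needs just the same.
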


\begin{proof}
	We first apply~\cref{lem:agents-core-lemma} to $\run$ to obtain
	$\run^{(1)} \colon \config_{start}^{(1)} \runto \config_{end}^{(1)}$ over the same data
	such that for all $d \in \Dataset$, it holds that $|\Agentset^d_{\run^{(1)}}| \leq \size{Q}^3$.
	Then we apply \cref{lem:data-core-lemma} to obtain
	$\run^{(2)} \colon \config_{start}^{(2)} \runto \config_{end}^{(2)}$ with at most $(\size{Q}^3+1)^{\size{Q}^3+\size{Q}^2}$ data.
	By \ref{agentscore-item1} and \ref{datacore-item1}, the remaining agents have the same initial and final states in $\run$ and $\run^{(2)}$. It remains to put back the agents and data we removed, without increasing the number of "externally observed" data or "observed" agents per datum.
	
	By \ref{datacore-item2}, every "trace" of a datum in $\run^{(1)}$ appears as the "trace" of a datum in $\run^{(2)}$.
	Thus, it is possible to re-add data of $\Dataset_{\run^{(1)}}$ to $\Dataset_{\run^{(2)}}$ using repeated applications of~\cref{lem:data-copycat}.
	By \ref{datacopycat-item3}, this does not add any "external observation".
	So we obtain a "run" $\Tilde{\run}^{(1)}$ from $\config_{start}^{(1)}$ to $\config_{end}^{(1)}$ such that at most $(\size{Q}^3+1)^{\size{Q}^3+\size{Q}^2}$ data are "externally observed" by \ref{datacopycat-item3}.
	Recall that there are at most $\size{Q}^3$ agents per datum in $\config^{(1)}_{start}$ by \ref{agentscore-item3}; in particular there are at most $\size{Q}^3$ "observed" agents per datum in $\Tilde{\run}^{(1)}$.
	
	By \ref{agentscore-item2}, for each datum $d$ and states $q_s,q_e$,
	if there is a "$d$-agent" $a$ such that $\config_{start}(a)=q_{s}$ and $\config_{end}(a) = q_{e}$ then there is an agent $a'$ such that $\config^{(1)}_{start}(a')=q_{s}$ and $\config^{(1)}_{end}(a') = q_{e}$ in $\run$.
	Therefore, due to \ref{agentscore-item2}, we can apply \cref{lem:agents-copycat} repeatedly to add back the missing agents in $\Tilde{\run}^{(1)}$ and obtain a "run" $\Tilde{\run}$ from $\config_{start}$ to $\config_{end}$. By \ref{agentcopycat-item3}, this does not add any "observation".
	As a result, we obtain a "run" from $\config_{start}$ to $\config_{end}$ in which at most $(\size{Q}^3+1)^{\size{Q}^3+\size{Q}^2}$ data are "externally observed" and for all datum $d$, at most $\size{Q}^3$ "$d$-agents" are "observed".
\end{proof}
 
	\section{From Expressions to Containers}
\label{sec:equivalence-relation}

In this section, we define the technical notions of "boxes" and "containers", which are meant to represent sets of "configurations" defined by counting agents and data up to some thresholds.
In \cref{prop:bound-coefs-GRE}, we will prove that the set of configurations defined by a "generalised reachability expression" $E$ can be described as a union of "containers" whose thresholds are exponential in the "length@@exp" of $E$ and polynomial in its "norm@@exp". To do so, we will leverage the bounds on the number of observed agents from \cref{sec:bounds-observed-agents} to bound the description of the "GRE" $\poststar{F}$ with respect to the one of "GRE" $F$.  The key result of \cref{prop:bound-coefs-GRE} will be used in \cref{sec:complexity} to obtain the decidability of the "emptiness problem for GRE".

\subsection{Equivalence of Predicates and Containers}
In this subsection, we fix an "IOPPUD" $\prot = (Q, \Delta, I, O)$.

\AP Let $n, M \in \nats$.
An $n$-""box"" is a vector $\abox \colon Q \to \nset{0}{n}$.
Given a configuration $\config$ and a datum $d \in \Dataset$, we define \emph{the $n$-"box" of $d$ in $\config$} as $\cubeapprox{n}{\config}{d} \colon Q \to \nset{0}{n}$ such that for all $q\in Q$, $\intro*\cubeapprox{n}{\config}{d}(q)=\min \set{n, \counting{\config}{d}(q)}$; in words, the $n$-box of $d$ truncates the number of agents of $d$ if it exceeds $n$. We write $\intro*\Boxes{n}$ for the set of all $n$-"boxes".
We define the equivalence relation $\cubeequiv{n}$ over $\configset \times \Dataset$ by $(\config_1, d_1) \intro*\cubeequiv{n} (\config_2, d_2)$ whenever $\cubeapprox{n}{\config_1}{d_1} = \cubeapprox{n}{\config_2}{d_2}$.
An equivalence class of $\cubeequiv{n}$ is a set of the form $\set{(\config, d) \in \configset\times \Dataset \mid \cubeapprox{n}{\config}{d} = \abox}$ for $\abox \in \Boxes{n}$; we represent such an equivalence class for $\cubeequiv{n}$ by the associated $n$-"box" $\abox$.

\AP
To lift this concept to data, we count the number of data with the same $n$-"box" up to bound $M$.
The $(n,M)$-""container"" of a "configuration" $\config$ is the function $\intro*\metaapprox{n}{M}{\config} \colon \Boxes{n} \to \nset{0}{M}$ such that
$\metaapprox{n}{M}{\config}(\abox) = \min \bigset{M, \bigsize{\set{d \in \Dataset \mid \cubeapprox{n}{\config}{d} = \abox}}}$ for all $\abox \in \Boxes{n}$.
We define the equivalence relation $\metaequiv{n}{M}$ over $\configset$ by $\config_1 \intro*\metaequiv{n}{M} \config_2$ whenever $\metaapprox{n}{M}{\config_1} = \metaapprox{n}{M}{\config_2}$. An equivalence relation for $\metaequiv{n}{M}$ is the preimage of some $(n,M)$-"container" by the previously described function; we represent such an equivalence class by the associated $(n,M)$-"container". \cref{fig-boxes} illustrates the function mapping a given configuration to its container.

\setlength{\FrameSep}{.7em}
\begin{framed}
In all the following,
we use the terms $n$-"boxes" and $(n,M)$-"containers" to designate both the vectors and the equivalence classes of $\cubeequiv{n}$ and $\metaequiv{n}{M}$ that they represent.
For instance, we write \emph{union of $n$-boxes} for the union of the corresponding equivalence classes of $\cubeequiv{n}$.
\end{framed}
\begin{figure}
  \centering
  \begin{tikzpicture}
  \colorlet{Datatype1}{IoppudMagenta}
  \colorlet{Datatype2}{IoppudBlue}
  \colorlet{Datatype3}{IoppudOrange}

  \node[big protocol state, label={[label distance=.5em]left:{\(q_0\)}}] (q0) {};
  \foreach \i [count=\j] in {0,1,2,3} {
    \node[big protocol state, label={[label distance=.5em]left:{\(q_\j\)}}, below of=q\i, node distance=4.35em] (q\j) {};
  };
  \begin{pgfonlayer}{background}
    \draw[configuration] ($(q0) + (-2.5em,2.5em)$) rectangle ($(q4) + (2.5em,-2.5em)$);
  \end{pgfonlayer}
  \node[below of=q4, node distance=3.25em] {configuration};

\def\statenum{0}
  \foreach \i [count=\c] in {1,2} {
    \node[ccircle] at ([shift={(q\statenum)}]{27 * \c}:1.5em) (q\statenum{}ci\i) {};
  };
  \foreach \i [count=\c] in {1,2,3} {
    \node[square] at ([shift={(q\statenum)}]{30 * \c + 80}:1.5em) (q\statenum{}sq\i) {};
  };
  \foreach \i [count=\c] in {4,5} {
    \node[square] at ([shift={(q\statenum)}]{30 * \c + 93.5}:1.0em) (q\statenum{}sq\i) {};
  };
  \foreach \i [count=\c] in {1,2} {
    \node[sstar] at ([shift={(q\statenum)}]{27 * \c + 190}:1.5em) (q\statenum{}st\i) {};
  };
  \foreach \i [count=\c] in {1,2} {
    \node[triangle] at ([shift={(q\statenum)}]{27 * \c + 270}:1.5em) (q\statenum{}t\i) {};
  };
  \foreach \i in {1,2} {
    \node[crosso] at ([shift={(q\statenum)}]{72.5 * \i + 180}:.6em) (q\statenum{}cro\i) {};
    \node[crossi] at ([shift={(q\statenum)}]{72.5 * \i + 180}:.6em) (q\statenum{}cri\i) {};
  };

\def\statenum{1}
  \foreach \i [count=\c] in {1,2,3} {
    \node[ccircle] at ([shift={(q\statenum)}]{27 * \c}:1.5em) (q\statenum{}ci\i) {};
  };
  \foreach \i [count=\c] in {4,5} {
    \node[ccircle] at ([shift={(q\statenum)}]{27 * \c + 13.5}:1.0em) (q\statenum{}ci\i) {};
  };
  \foreach \i [count=\c] in {1} {
    \node[square] at ([shift={(q\statenum)}]{27 * \c + 90}:1.5em) (q\statenum{}sq\i) {};
  };
  \foreach \i [count=\c] in {1} {
    \node[sstar] at ([shift={(q\statenum)}]{27 * \c + 135}:1.5em) (q\statenum{}st\i) {};
  };
  \foreach \i [count=\c] in {1,2,3} {
    \node[triangle] at ([shift={(q\statenum)}]{27 * \c + 270}:1.5em) (q\statenum{}t\i) {};
  };
  \foreach \i [count=\c] in {4} {
    \node[triangle] at ([shift={(q\statenum)}]{27 * \c + 310.5}:1.0em) (q\statenum{}t\i) {};
  };
  \foreach \i in {1,2,3,4,5} {
    \node[crosso] at ([shift={($(q\statenum)+(-.5em,-.5em)$)}]{72.5 * \i + 180}:.6em) (q\statenum{}cro\i) {};
    \node[crossi] at ([shift={($(q\statenum)+(-.5em,-.5em)$)}]{72.5 * \i + 180}:.6em) (q\statenum{}cri\i) {};
  };

\def\statenum{2}
  \foreach \i [count=\c] in {1} {
    \node[ccircle] at ([shift={(q\statenum)}]{27 * \c}:1.5em) (q\statenum{}ci\i) {};
  };
  \foreach \i [count=\c] in {1,2} {
    \node[square] at ([shift={(q\statenum)}]{27 * \c + 90}:1.5em) (q\statenum{}sq\i) {};
  };
  \foreach \i [count=\c] in {1,2,3} {
    \node[sstar] at ([shift={(q\statenum)}]{30 * \c + 170}:1.5em) (q\statenum{}st\i) {};
  };
  \foreach \i [count=\c] in {4,5,6} {
    \node[sstar] at ([shift={(q\statenum)}]{35 * \c + 175}:.9em) (q\statenum{}st\i) {};
  };
  \foreach \i [count=\c] in {1} {
    \node[triangle] at ([shift={(q\statenum)}]{27 * \c + 290}:1.5em) (q\statenum{}t\i) {};
  };
  \foreach \i in {1} {
    \node[crosso] at ([shift={(q\statenum)}]{72.5 * \i}:.6em) (q\statenum{}cro\i) {};
    \node[crossi] at ([shift={(q\statenum)}]{72.5 * \i}:.6em) (q\statenum{}cri\i) {};
  };

\def\statenum{3}
  \foreach \i [count=\c] in {1,2} {
    \node[ccircle] at ([shift={(q\statenum)}]{27 * \c}:1.5em) (q\statenum{}ci\i) {};
  };
  \foreach \i [count=\c] in {1,2,3} {
    \node[square] at ([shift={(q\statenum)}]{27 * \c + 90}:1.5em) (q\statenum{}sq\i) {};
  };
  \foreach \i [count=\c] in {1} {
    \node[sstar] at ([shift={(q\statenum)}]{27 * \c + 180}:1.5em) (q\statenum{}st\i) {};
  };
  \foreach \i [count=\c] in {1,2} {
    \node[triangle] at ([shift={(q\statenum)}]{27 * \c + 270}:1.5em) (q\statenum{}t\i) {};
  };
  \foreach \i in {1,2} {
    \node[crosso] at ([shift={(q\statenum)}]{72.5 * \i + 180}:.6em) (q\statenum{}cro\i) {};
    \node[crossi] at ([shift={(q\statenum)}]{72.5 * \i + 180}:.6em) (q\statenum{}cri\i) {};
  };

\def\statenum{4}
  \foreach \i [count=\c] in {1,2,3} {
    \node[ccircle] at ([shift={(q\statenum)}]{27 * \c + 20}:1.5em) (q\statenum{}ci\i) {};
  };
  \foreach \i [count=\c] in {4} {
    \node[ccircle] at ([shift={(q\statenum)}]{27 * \c + 33.5}:1.0em) (q\statenum{}ci\i) {};
  };
  \foreach \i [count=\c] in {1} {
    \node[sstar] at ([shift={(q\statenum)}]{27 * \c + 120}:1.5em) (q\statenum{}st\i) {};
  };
  \foreach \i [count=\c] in {1,2,3} {
    \node[triangle] at ([shift={(q\statenum)}]{27 * \c + 275}:1.5em) (q\statenum{}t\i) {};
  };
  \foreach \i [count=\c] in {4,5,6} {
    \node[triangle] at ([shift={(q\statenum)}]{27 * \c + 288.5}:1.0em) (q\statenum{}t\i) {};
  };
  \foreach \i in {1,2,3,4,5} {
    \node[crosso] at ([shift={($(q\statenum)+(-.5em,-.5em)$)}]{72.5 * \i + 180}:.6em) (q\statenum{}cro\i) {};
    \node[crossi] at ([shift={($(q\statenum)+(-.5em,-.5em)$)}]{72.5 * \i + 180}:.6em) (q\statenum{}cri\i) {};
  };

\node[container]
    at ($(q2) + (8em, 0)$)
    (container)
    {\nodepart{two}
    \nodepart{three}};
  \node[below of=container, node distance=7.5em] {container};
\draw[|-|, semithick, black!70] ($(container) + (-2.5em, 6.7em)$)
    -- node[above=-.2em,font=\footnotesize, color=black!70] {$M=2$}
    ($(container) + (2.5em, 6.7em)$);

\node[small box, draw=Datatype1]
    at ($(container) + (-1.2em, 4.1em)$)
    {\nodepart{two}
    \nodepart{three}
    \nodepart{four}
    \nodepart{five}};
  \node[small box, draw=Datatype1]
    at ($(container) + (1.2em, 4.1em)$)
    {\nodepart{two}
    \nodepart{three}
    \nodepart{four}
    \nodepart{five}};
  \node[small box, draw=Datatype1!60]
    at ($(container) + (3.8em, 4.1em)$)
    {\nodepart{two}
    \nodepart{three}
    \nodepart{four}
    \nodepart{five}};
\node[small box, draw=Datatype2]
    at ($(container) + (-1.2em, 0em)$)
    {\nodepart{two}
    \nodepart{three}
    \nodepart{four}
    \nodepart{five}};
\node[small box, draw=Datatype3]
    at ($(container) + (-1.2em, -4.1em)$)
    {\nodepart{two}
    \nodepart{three}
    \nodepart{four}
    \nodepart{five}};

    \draw[-latex] ($(container) - (5em, 0em)$) --  ($(container) - (3em, 0em)$);

\def\boxname{circlebox}
  \node[big box, draw=Datatype1]
    at ($(container) + (12em,7.2em)$) (\boxname)
    {\nodepart{two}
    \nodepart{three}
    \nodepart{four}
    \nodepart{five}};
  \node[ccircle, above of=\boxname, node distance=3.45em] {};
  \node at ($(\boxname) + (-2.2em,2.2em)$) {\footnotesize\(q_0\)};
  \node[right] at ($(\boxname) + (-1.6em,2.2em)$)
  {\(\bullet\)\,\(\bullet\)};
  \node at ($(\boxname) + (-2.2em,1.1em)$) {\footnotesize\(q_1\)};
  \node[right] at ($(\boxname) + (-1.6em,1.1em)$)
  {\(\bullet\)\,\(\bullet\)\,\(\bullet\)\,\(\bullet\)\ \textcolor{gray}{\(\bullet\)}};
  \node at ($(\boxname) + (-2.2em,0)$) {\footnotesize\(q_2\)};
  \node[right] at ($(\boxname) + (-1.6em,0)$)
  {\(\bullet\)};
  \node at ($(\boxname) + (-2.2em,-1.1em)$) {\footnotesize\(q_3\)};
  \node[right] at ($(\boxname) + (-1.6em,-1.1em)$)
  {\(\bullet\)\,\(\bullet\)};
  \node at ($(\boxname) + (-2.2em,-2.2em)$) {\footnotesize\(q_4\)};
  \node[right] at ($(\boxname) + (-1.6em,-2.2em)$)
  {\(\bullet\)\,\(\bullet\)\,\(\bullet\)\,\(\bullet\)};

  \def\boxname{trianglebox}
  \node[big box, draw=Datatype1]
    at ($(circlebox) + (6em,0)$) (\boxname)
    {\nodepart{two}
    \nodepart{three}
    \nodepart{four}
    \nodepart{five}};
  \node[triangle, above of=\boxname, node distance=3.45em] {};
  \node at ($(\boxname) + (-2.2em,2.2em)$) {\footnotesize\(q_0\)};
  \node[right] at ($(\boxname) + (-1.6em,2.2em)$)
  {\(\bullet\)\,\(\bullet\)};
  \node at ($(\boxname) + (-2.2em,1.1em)$) {\footnotesize\(q_1\)};
  \node[right] at ($(\boxname) + (-1.6em,1.1em)$)
  {\(\bullet\)\,\(\bullet\)\,\(\bullet\)\,\(\bullet\)};
  \node at ($(\boxname) + (-2.2em,0)$) {\footnotesize\(q_2\)};
  \node[right] at ($(\boxname) + (-1.6em,0)$)
  {\(\bullet\)};
  \node at ($(\boxname) + (-2.2em,-1.1em)$) {\footnotesize\(q_3\)};
  \node[right] at ($(\boxname) + (-1.6em,-1.1em)$)
  {\(\bullet\)\,\(\bullet\)};
  \node at ($(\boxname) + (-2.2em,-2.2em)$) {\footnotesize\(q_4\)};
  \node[right] at ($(\boxname) + (-1.6em,-2.2em)$)
  {\(\bullet\)\,\(\bullet\)\,\(\bullet\)\,\(\bullet\)\ \textcolor{gray}{\(\bullet\)\,\(\bullet\)}};

  \def\boxname{crossbox}
  \node[big box, draw=Datatype1]
    at ($(trianglebox) + (6em,0)$) (\boxname)
    {\nodepart{two}
    \nodepart{three}
    \nodepart{four}
    \nodepart{five}};
  \node[crosso, above of=\boxname, node distance=3.45em] {};
  \node[crossi, above of=\boxname, node distance=3.45em] {};
  \node at ($(\boxname) + (-2.2em,2.2em)$) {\footnotesize\(q_0\)};
  \node[right] at ($(\boxname) + (-1.6em,2.2em)$)
  {\(\bullet\)\,\(\bullet\)};
  \node at ($(\boxname) + (-2.2em,1.1em)$) {\footnotesize\(q_1\)};
  \node[right] at ($(\boxname) + (-1.6em,1.1em)$)
  {\(\bullet\)\,\(\bullet\)\,\(\bullet\)\,\(\bullet\)\ \textcolor{gray}{\(\bullet\)}};
  \node at ($(\boxname) + (-2.2em,0)$) {\footnotesize\(q_2\)};
  \node[right] at ($(\boxname) + (-1.6em,0)$)
  {\(\bullet\)};
  \node at ($(\boxname) + (-2.2em,-1.1em)$) {\footnotesize\(q_3\)};
  \node[right] at ($(\boxname) + (-1.6em,-1.1em)$)
  {\(\bullet\)\,\(\bullet\)};
  \node at ($(\boxname) + (-2.2em,-2.2em)$) {\footnotesize\(q_4\)};
  \node[right] at ($(\boxname) + (-1.6em,-2.2em)$)
  {\(\bullet\)\,\(\bullet\)\,\(\bullet\)\,\(\bullet\)\ \textcolor{gray}{\(\bullet\)}};

\def\boxname{starbox}
  \node[big box, draw=Datatype2]
    at ($(container) + (12em,0)$) (\boxname)
    {\nodepart{two}
    \nodepart{three}
    \nodepart{four}
    \nodepart{five}};
  \node[sstar, above of=\boxname, node distance=3.45em] {};
  \node at ($(\boxname) + (-2.2em,2.2em)$) {\footnotesize\(q_0\)};
  \node[right] at ($(\boxname) + (-1.6em,2.2em)$)
  {\(\bullet\)\,\(\bullet\)};
  \node at ($(\boxname) + (-2.2em,1.1em)$) {\footnotesize\(q_1\)};
  \node[right] at ($(\boxname) + (-1.6em,1.1em)$)
  {\(\bullet\)};
  \node at ($(\boxname) + (-2.2em,0)$) {\footnotesize\(q_2\)};
  \node[right] at ($(\boxname) + (-1.6em,0)$)
  {\(\bullet\)\,\(\bullet\)\,\(\bullet\)\,\(\bullet\)\ \textcolor{gray}{\(\bullet\)\,\(\bullet\)}};
  \node at ($(\boxname) + (-2.2em,-1.1em)$) {\footnotesize\(q_3\)};
  \node[right] at ($(\boxname) + (-1.6em,-1.1em)$)
  {\(\bullet\)};
  \node at ($(\boxname) + (-2.2em,-2.2em)$) {\footnotesize\(q_4\)};
  \node[right] at ($(\boxname) + (-1.6em,-2.2em)$)
  {\(\bullet\)};

\def\boxname{squarebox}
  \node[big box, draw=Datatype3]
    at ($(container) + (12em,-7.2em)$) (\boxname)
    {\nodepart{two}
    \nodepart{three}
    \nodepart{four}
    \nodepart{five}};
  \node[square, above of=\boxname, node distance=3.45em] {};
  \node at ($(\boxname) + (-2.2em,2.2em)$) {\footnotesize\(q_0\)};
  \node[right] at ($(\boxname) + (-1.6em,2.2em)$)
  {\(\bullet\)\,\(\bullet\)\,\(\bullet\)\,\(\bullet\)\ \textcolor{gray}{\(\bullet\)}};
  \node at ($(\boxname) + (-2.2em,1.1em)$) {\footnotesize\(q_1\)};
  \node[right] at ($(\boxname) + (-1.6em,1.1em)$)
  {\(\bullet\)};
  \node at ($(\boxname) + (-2.2em,0)$) {\footnotesize\(q_2\)};
  \node[right] at ($(\boxname) + (-1.6em,0)$)
  {\(\bullet\)\,\(\bullet\)};
  \node at ($(\boxname) + (-2.2em,-1.1em)$) {\footnotesize\(q_3\)};
  \node[right] at ($(\boxname) + (-1.6em,-1.1em)$)
  {\(\bullet\)\,\(\bullet\)\,\(\bullet\)};
  \node at ($(\boxname) + (-2.2em,-2.2em)$) {\footnotesize\(q_4\)};
  \node[right] at ($(\boxname) + (-1.6em,-2.2em)$)
  {};
\draw[|-|, semithick, black!70] ($(\boxname) + (-1.4em, -3.7em)$)
    -- node[below=-.2em,font=\footnotesize, color=black!70] {$n=4$}
    ($(\boxname) + (1.4em, -3.7em)$);

\begin{pgfonlayer}{background}
    \fill[ConfigBackgroundFill] ($(container) + (2.65em,6.3em)$)
      -- ($(circlebox) + (-3em, 4em)$)
      -- ($(crossbox) + (2.5em, 4em)$)
      -- ($(squarebox) + (14.5em, -3.2em)$)
      -- ($(squarebox) + (-3em, -3.2em)$)
      -- ($(container) + (2.65em,-6.3em)$)
      -- ($(container) + (2.65em,6.3em)$);
    \draw[dashed, thick, black!50] ($(container) + (2.65em,6.3em)$) -- ($(circlebox) + (-3em, 4em)$)
      -- ($(crossbox) + (2.5em, 4em)$);
    \draw[dashed, thick, black!50] ($(container) + (2.65em,2.15em)$) -- ($(circlebox) + (-3em, -3.2em)$)
      -- ($(crossbox) + (2.5em, -3.2em)$);
    \draw[dashed, thick, black!50] ($(container) + (2.65em,-2.15em)$) -- ($(starbox) + (-3em, -3.2em)$)
      -- ($(starbox) + (14.5em, -3.2em)$);
    \draw[dashed, thick, black!50] ($(container) + (2.65em,-6.3em)$) -- ($(squarebox) + (-3em, -3.2em)$)
      -- ($(squarebox) + (14.5em, -3.2em)$);
    \draw[dashed, thick, black!30] ($(circlebox) + (-3em, 4em)$) -- ($(squarebox) + (-3em, -3.2em)$);
    \draw[dashed, thick, black!50] ($(crossbox) + (2.5em, 4em)$) -- ($(squarebox) + (14.5em, -3.2em)$);
  \end{pgfonlayer}
  \node at ($(squarebox) + (6em, -4.5em)$) {boxes};
\end{tikzpicture}
   \caption{
    How a configuration is mapped to a $(4,2)$-container. Here, the protocol has five states \(q_0, \dots, q_4\). Five distinct data appear in the configuration, and they are represented using symbols.
  }
  \label{fig-boxes}
\end{figure}
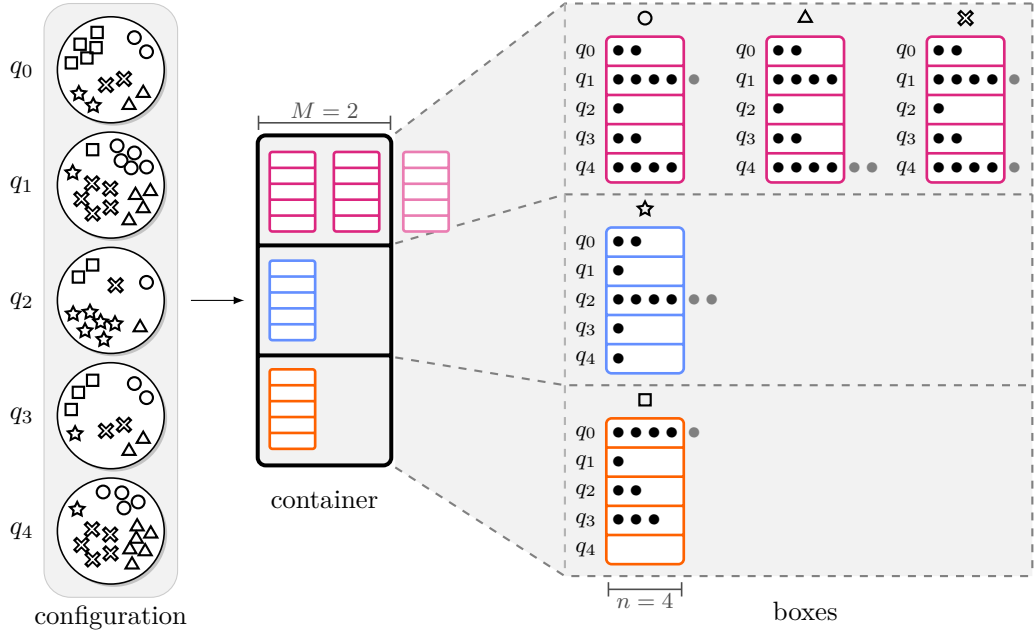

The partition of $\configset$ into $(n,M)$-"containers" becomes finer as $n$ and $M$ grow.

\begin{restatable}{lemma}{LemmaContainerMonotonicity}
  \label{lem:metacube-monotonicity}
  Let $n_1, n_2,M_1, M_2 \in \nats$.
  If $n_1 \leq n_2$ and $M_1 \leq M_2$, then every $(n_1, M_1)$-"container" is a union of $(n_2, M_2)$-"containers".
\end{restatable}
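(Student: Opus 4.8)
The plan is to reduce \cref{lem:metacube-monotonicity} to the statement that $\metaequiv{n_2}{M_2}$ refines $\metaequiv{n_1}{M_1}$, \ie, that $\config_1 \metaequiv{n_2}{M_2} \config_2$ implies $\config_1 \metaequiv{n_1}{M_1} \config_2$. This is exactly the claim, since the $(n_i, M_i)$-"containers" are the equivalence classes of $\metaequiv{n_i}{M_i}$ and a refining partition decomposes every coarser class into a union of its own classes. I would establish the refinement by writing down an explicit formula that recovers $\metaapprox{n_1}{M_1}{\config}$ from $\metaapprox{n_2}{M_2}{\config}$, so that the former depends only on the latter.

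The first step is the compatibility of "boxes" under lowering $n$. Since $n_1 \leq n_2$ we have $\min\set{n_1, \min\set{n_2, k}} = \min\set{n_1, k}$ for all $k \in \nats$, so the projection $\pi \colon \Boxes{n_2} \to \Boxes{n_1}$ given by $\pi(\abox)(q) \coloneqq \min\set{n_1, \abox(q)}$ satisfies $\cubeapprox{n_1}{\config}{d} = \pi\bigl(\cubeapprox{n_2}{\config}{d}\bigr)$ for every "configuration" $\config$ and every $d \in \Dataset$. Fixing $\abox' \in \Boxes{n_1}$, the set of data having $n_1$-"box" $\abox'$ in $\config$ is then the disjoint union, over $\abox \in \pi^{-1}(\abox')$, of the sets of data having $n_2$-"box" $\abox$. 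With $c_{\abox} \coloneqq \bigsize{\set{d \in \Dataset \mid \cubeapprox{n_2}{\config}{d} = \abox}}$, this gives $\metaapprox{n_1}{M_1}{\config}(\abox') = \min\bigl\{M_1, \sum_{\abox \in \pi^{-1}(\abox')} c_{\abox}\bigr\}$, whereas $\metaapprox{n_2}{M_2}{\config}$ only reports the truncated counts $e_{\abox} \coloneqq \min\set{M_2, c_{\abox}}$.

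The hard part, and the only place the assumption $M_1 \leq M_2$ is genuinely used, is to show that these truncated counts still suffice: I claim $\min\bigl\{M_1, \sum_{\abox \in S} c_{\abox}\bigr\} = \min\bigl\{M_1, \sum_{\abox \in S} e_{\abox}\bigr\}$ for every finite $S \subseteq \Boxes{n_2}$. Indeed, if $\sum_{\abox \in S} e_{\abox} \geq M_1$ then, since $e_{\abox} \leq c_{\abox}$, both sides equal $M_1$; otherwise each $e_{\abox} < M_1 \leq M_2$, which forces $e_{\abox} = c_{\abox}$, so the two sums coincide. Taking $S = \pi^{-1}(\abox')$ then yields $\metaapprox{n_1}{M_1}{\config}(\abox') = \min\bigl\{M_1, \sum_{\abox \in \pi^{-1}(\abox')} \metaapprox{n_2}{M_2}{\config}(\abox)\bigr\}$ for all $\abox' \in \Boxes{n_1}$, which is the desired formula and finishes the argument. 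Everything outside this double-truncation bookkeeping — the projection $\pi$ and the disjoint-union count — is routine.
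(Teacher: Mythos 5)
Your proof is correct and follows essentially the same route as the paper's: decompose each $n_1$-"box" into the $n_2$-"boxes" projecting onto it, then use $M_1 \leq M_2$ to argue that counts below the $M_1$-threshold are reported exactly by the $(n_2,M_2)$-"container". The only difference is presentational — you package the argument as an explicit formula recovering $\metaapprox{n_1}{M_1}{\config}$ from $\metaapprox{n_2}{M_2}{\config}$, whereas the paper argues pairwise that $\metaequiv{n_2}{M_2}$-equivalent configurations remain $\metaequiv{n_1}{M_1}$-equivalent; these are equivalent ways of establishing the refinement.
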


Algorithmically, we represent an $n$-"box" as a list of appearing states with associated numbers from $\nset{1}{n}$ encoded in binary.
Similarly, we represent an $(n,M)$-"container" as a list of appearing $n$-"boxes" with associated numbers from $\nset{1}{M}$ encoded in binary.

In fact, "interval predicates" exactly describe finite unions of "containers".

\begin{restatable}{proposition}{PropEquivContPred}
  \label{prop-equiv-containers-predicates}
  The sets of "configurations" defined by "interval predicates" of "height@@pred" at most $n$ and "width@@pred" at most $M$ are exactly the sets formed by unions of $(n,M)$-"containers".
\end{restatable}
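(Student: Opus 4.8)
The plan is to prove both inclusions. For the direction from "containers" to "interval predicates", I would start with a single $(n,M)$-"container" $\acontainer \colon \Boxes{n} \to \nset{0}{M}$ and write down an "interval predicate" of "height@@pred" at most $n$ and "width@@pred" at most $M$ defining exactly the preimage of $\acontainer$ under the container map. The natural candidate is a conjunction, over all $n$-"boxes" $\abox$, of a "simple interval predicate" asserting that there exist exactly $\acontainer(\abox)$ data whose $n$-"box" equals $\abox$ (or at least $M$ such data, when $\acontainer(\abox) = M$). Concretely, for a box $\abox$ one expresses ``a datum has $n$-box $\abox$'' by the conjunction $\bigwedge_{q \in Q} \#(q,\datvar) \in [\abox(q),\abox(q)]$ if $\abox(q) < n$, and $\#(q,\datvar) \in [n, +\infty)$ if $\abox(q) = n$; then ``at least $k$ data have box $\abox$'' is a "simple interval predicate" of "width@@simple" $k \le M$. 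Saying ``exactly $k$'' (for $k<M$) is the conjunction of ``at least $k$'' with the negation of ``at least $k+1$''; since "interval predicates" are closed under Boolean combinations, this is allowed. The "height@@pred" is exactly $n$ by construction, and the "width@@pred" is at most $M$. Taking disjunctions over finitely many "containers" handles finite unions, since the Boolean closure is built into the definition of "interval predicate".

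For the converse direction, I would take an "interval predicate" $\phi$ of "height@@pred" at most $n$ and "width@@pred" at most $M$ and show $\setof{\phi}$ is a union of $(n,M)$-"containers". Since $\phi$ is a Boolean combination of "simple interval predicates" and unions of "containers" are closed under Boolean operations at a fixed $(n,M)$ — this uses that the finitely many $(n,M)$-"containers" partition $\configset$, so the complement of a union of "containers" is again a union of "containers", and likewise for intersection — it suffices to treat a single "simple interval predicate" $\psi = \dot\exists \datvar_1,\dots,\datvar_m \bigwedge_{q \in S}\bigwedge_{j=1}^m \#(q,\datvar_j) \in [A_{q,j},B_{q,j}]$ with $m \le M$ and all finite $A_{q,j}, B_{q,j} \le n$. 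The key observation is that whether $\psi$ is "satisfied@@simple" by $\config$ depends only on the $n$-"boxes" of the data in $\config$ and on how many data realise each such box, i.e.\ only on $\metaapprox{n}{M}{\config}$: each interval $[A_{q,j},B_{q,j}]$ with bounds at most $n$ is either ``below the truncation threshold'' and thus determined by $\cubeapprox{n}{\config}{d}(q)$ exactly, or is of the form $[A,+\infty)$ and is determined by whether $\cubeapprox{n}{\config}{d}(q) \ge A$, which is again read off the $n$-box; and finding $m \le M$ \emph{pairwise distinct} witnesses among the data reduces to a counting condition on how many data fall into each box, which the $(n,M)$-"container" records faithfully since we only need up to $m \le M$ witnesses. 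Hence $\psi$ is invariant under $\metaequiv{n}{M}$, so $\setof{\psi}$ is a union of $\metaequiv{n}{M}$-classes, i.e.\ of $(n,M)$-"containers".

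The one point requiring care — and what I expect to be the main obstacle — is the matching between the "width" parameter of predicates and the $M$ parameter of "containers", together with the pairwise-distinctness of the dotted quantifiers. On the container-to-predicate side, one must check that asserting ``exactly $\acontainer(\abox)$ data have box $\abox$'' simultaneously for \emph{all} boxes $\abox$ is consistent and that the resulting predicate has "width@@pred" at most $M$: the widths of the individual "simple interval predicates" are the numbers $\acontainer(\abox) \le M$ and $\acontainer(\abox)+1 \le M$ (the latter only when $\acontainer(\abox) < M$, so it is $\le M$), so the bound holds. On the predicate-to-container side, the subtlety is that $\psi$ needs $m$ \emph{distinct} data witnessing $m$ (possibly distinct) box-membership constraints; one must argue via a matching/Hall-type or greedy argument that the existence of such a system of distinct representatives depends only on the truncated counts $\metaapprox{n}{M}{\config}$ — which works precisely because $m \le M$, so truncating data-multiplicities at $M$ never destroys a valid assignment. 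Once this invariance is nailed down, the rest is bookkeeping with Boolean closure, which I would state but not belabour.
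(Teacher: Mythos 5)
Your proposal matches the paper's proof in both directions: the container-to-predicate translation via predicates $\phi_{\abox,\geq m}$ asserting that at least $m$ data realise a given $n$-box, with ``exactly $k$'' expressed as ``at least $k$ and not at least $k{+}1$'', is exactly the paper's construction, and the predicate-to-container direction is the paper's argument that a simple interval predicate of height at most $n$ and width at most $M$ cannot distinguish $\metaequiv{n}{M}$-equivalent configurations. The greedy, box-by-box re-selection of $m \leq M$ pairwise distinct witnesses that you sketch is precisely how the paper handles the distinctness of the dotted quantifiers, so no additional Hall-type argument is needed.
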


\begin{proofsketch}
  For the translation from predicates to "containers", consider a "simple interval predicate" $\dot\exists \datvar_1, \ldots, \datvar_M,\, \bigwedge_{q \in Q} \bigwedge_{j=1}^M \#(q,\datvar_j) \in \nset{A_{q,j}}{B_{q,j}}$ of "height@@pred" $n$.
  This predicate cannot distinguish data mapped to the same $n$-"box", hence cannot distinguish configurations in the same equivalence class for $\metaequiv{n}{M}$, \ie,  $(n,M)$-"containers". The same directly extends to "interval predicates".

  For the other direction, we prove that a given $(n,M)$-container can be expressed as an "interval predicate" of height at most $n$ and width at most $M$. To do so, given a box $\abox \in \Boxes{n}$ and $m \leq M$, we define the "simple interval predicate" $\phi_{\abox, \geq m}$ expressing that at least $m$ data are mapped to box $\abox$. Formally,
  $\phi_{\abox, \geq m} \deff \dot\exists \datvar_1, \ldots, \datvar_{m}, \,\bigwedge_{q \in Q} \bigwedge_{j=1}^{m} \#(q,\datvar_j) \in [A_{q},B_{q}]$,
  where, for all \(q \in Q\), \(A_{q} \deff \abox(q)\),
 \(B_{q} \deff \abox(q)\) if \(\abox(q) < n\) and \(B_{q} \deff +\infty\) if \(\abox(q) = n\).
  This predicate has "height@@pred" at most $n$ and "width@@pred" at most $M$.
  A Boolean combination of such predicates allows us to express an $(n,M)$-container.
  The full proof can be found in \cref{app-equiv-containers-predicates}.
\end{proofsketch}

We therefore have two equivalent representations. Both are useful: "interval predicates" allow us to express properties more naturally, but "containers" are more convenient for the proofs in the remainder of this section.
While they are equally expressive, each can be much more succinct than the other, as stated below.
We refer to Appendix~\ref{app-equiv-containers-predicates} for details.

\begin{restatable}{remark}{ClaimSuccinctness}
	"Containers" can be exponentially more succinct than "interval predicates", while "interval predicates" can be doubly exponentially more succinct than unions of "containers".
\end{restatable}

\subsection{A Translation from Expressions to Containers}
\label{subsec:expressions-to-containers}

Based on the translation from "interval predicates" to "containers" from \cref{prop-equiv-containers-predicates},
we can now show that for all "generalised reachability expressions" $E$ over an "IOPPUD" $\prot$,
the set $\semanticsEP{E}{\prot}$ is a union of $(n,M)$-"containers" with $n$ and $M$ bounded in terms of $E$ and $\prot$.

\begin{restatable}{proposition}{boundCoefsGRE}
  \label{prop:bound-coefs-GRE}
  There is a polynomial function $\poly \colon \nats \to \nats$ such that for all "IOPPUD" $\prot$ and "GRE" $E$,
  the set $\semanticsEP{E}{\prot}$ is a union of $\Bigl(\norm{E} \cdot \bigl(\poly(\size{\prot})\bigr)^{\lengthGRE{E}}, \normGRE{E}^{\poly(\size{\prot}) \cdot \lengthGRE{E}^2}\Bigr)$-"containers".
\end{restatable}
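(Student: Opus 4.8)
The plan is to proceed by structural induction on the GRE $E$, showing that $\semanticsEP{E}{\prot}$ is a union of $(n_E, M_E)$-"containers" and tracking how the parameters $n_E$ and $M_E$ grow with each operator. We maintain the invariant that $n_E \le \normGRE{E} \cdot (\poly(\size{\prot}))^{\lengthGRE{E}}$ and $M_E \le \normGRE{E}^{\poly(\size{\prot}) \cdot \lengthGRE{E}^2}$ for a suitable polynomial $\poly$. For the base case, $E = \phi$ an "interval predicate": by \cref{prop-equiv-containers-predicates}, $\setof{\phi}$ is a union of $(n,M)$-"containers" with $n = $ "height@@pred"$(\phi) \le \normGRE{E}$ and $M = $ "width@@pred"$(\phi) \le \normGRE{E}$, which clears the bound with $\lengthGRE{E} = 0$. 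The easy inductive cases are union and complement: by \cref{lem:metacube-monotonicity} we may first refine the two sub-expressions to a common pair $(n,M)$ (the larger of the two), and then a union or complement of unions of $(n,M)$-"containers" is again a union of $(n,M)$-"containers"; here $\lengthGRE{E}$ increases by one and neither $n$ nor $M$ needs to increase, so the bounds are preserved (with room to spare).

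The substantial cases are $E = \poststar{F}$ and $E = \prestar{F}$. By the inductive hypothesis, $\setof{F}$ is a union of $(n_F, M_F)$-"containers"; since $\poststarsymb$ and $\prestarsymb$ commute with unions, it suffices to bound, for a single $(n_F,M_F)$-"container" $\acontainer$, the parameters needed to describe $\poststar{\setof{\acontainer}}$ (the $\prestarsymb$ case is symmetric by reversing runs). The key is the bounds from \cref{sec:bounds-observed-agents}: by \cref{cor:run-few-observed-data-and-agents}, if $\config' \in \poststar{\config}$, then there is a run $\config \runto \config'$ in which at most $\size{Q}^3$ agents per datum are "observed" and at most $N \deff (\size{Q}^3+1)^{\size{Q}^3+\size{Q}^2}$ data are "externally observed". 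The plan is to argue that whether $\config$ can reach a configuration in a given $(n_F, M_F)$-"container" depends only on the $(n', M')$-"container" of $\config$ for $n' = n_F + c\cdot\size{Q}^3$ and $M' = M_F + N + $ (a polynomial-in-$\size{\prot}$ correction), for a small constant $c$: the "non-observed" agents of a datum can be freely added or removed via \cref{lem:agents-copycat}/\cref{lem:data-copycat} (the "copy"/copycat lemmas produce agents and data that are never observed, resp.\ never externally observed), so only the truncated counts up to these slightly larger thresholds matter for reachability. Concretely, one shows $\config_1 \metaequiv{n'}{M'} \config_2$ implies $\poststar{\setof{\config_1}} \cap \setof{\acontainer} \ne \emptyset \iff \poststar{\setof{\config_2}} \cap \setof{\acontainer} \ne \emptyset$; hence $\poststar{\setof{\acontainer}}$ is a union of $(n',M')$-"containers". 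Each application of $\poststarsymb$ therefore adds a polynomial-in-$\size{\prot}$ amount to the height and multiplies (or adds an exponential-in-$\size{\prot}$ amount to) the width; iterating $\lengthGRE{E}$ times gives the claimed bounds — the height stays $\normGRE{E} \cdot (\poly(\size\prot))^{\lengthGRE{E}}$ and the width, because the per-step additive blow-up $N$ is already exponential in a polynomial of $\size{\prot}$ and the thresholds feeding into subsequent $N$-like quantities grow, lands at $\normGRE{E}^{\poly(\size\prot)\cdot\lengthGRE{E}^2}$ (the $\lengthGRE{E}^2$ accounting for the compounding of per-step exponents).

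The main obstacle I expect is making the reachability-equivalence claim for $\poststarsymb$ fully precise, i.e.\ pinning down exactly how large $n'$ and $M'$ must be so that $\metaequiv{n'}{M'}$-equivalent configurations agree on whether they can reach a fixed $(n_F,M_F)$-"container". One must carefully combine the witness run from \cref{cor:run-few-observed-data-and-agents} with the copycat lemmas in both directions: given $\config_1 \metaequiv{n'}{M'} \config_2$ and a run from $\config_1$ into $\acontainer$, first restrict $\config_1$ to its "observed" agents/data (bounded by \cref{lem:agents-core-lemma}, \cref{lem:data-core-lemma}), note that the corresponding bounded-size sub-configuration is also present in $\config_2$ up to the thresholds, transport the run, then re-inflate using \cref{lem:agents-copycat} and \cref{lem:data-copycat} to recover the full populations of $\config_2$ at the end — and check that the endpoint still lies in $\acontainer$, which requires that the re-added non-observed agents only increase counts already at or above $n_F$ (resp.\ data already at or above $M_F$), forcing the "$+\,c\cdot\size{Q}^3$" and "$+\,N$" slack in $n'$ and $M'$. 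Getting the bookkeeping of these thresholds right, and confirming that the recurrence for $(n_E, M_E)$ it induces solves to the stated closed form, is the delicate part; the rest is routine.
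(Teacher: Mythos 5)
Your proposal is correct and follows essentially the same route as the paper: structural induction on $E$, the base case via \cref{prop-equiv-containers-predicates}, Boolean operators via \cref{lem:metacube-monotonicity}, and the $\prestarsymb$/$\poststarsymb$ step via precisely the equivalence-transfer statement that the paper isolates as \cref{lem:eq-classes-coreach} (and its reversed-transition corollary), proved from \cref{cor:run-few-observed-data-and-agents} together with the copycat/core lemmas. The only caveat is that the margins are multiplicative rather than the purely additive $n' = n_F + c\cdot\size{Q}^3$ and $M' = M_F + N$ you guessed --- the paper needs $\functionf(n) = (n+\size{\prot}^3)\cdot\size{\prot}$ and $\functiong(n,M) = (M+N)(n+1)^{\size{\prot}}$, since the slack must be reserved per target state and per $n$-box respectively --- but your stated closed forms already absorb this, so the recurrence still solves to the claimed bounds.
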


The detailed proof of \cref{prop:bound-coefs-GRE} can be found in \cref{app-bound-coefs-GRE}. We prove the result by structural induction on $E$.
The base case, when $E$ is an "interval predicate", is provided by \cref{prop-equiv-containers-predicates}.
For the induction step, handling Boolean operators is straightforward;
the difficulty lies in operators $\prestarsymb$ and $\poststarsymb$.
This is handled by the following lemma, which relies on the bounds from \cref{sec:bounds-observed-agents}.

\AP Equivalence classes for fixed values of $n$ and $M$ do not behave well with respect to the reachability relation, in the sense that it can happen that $\config_{start} \runto \config_{end}$ and $\config_{start} \equiv_{n,M} \chi_{start}$, but there is no $\chi_{end} \equiv_{n,M} \config_{end}$ such that $\chi_{start} \runto \chi_{end}$. However, this will hold if we take some margin on the equivalence relation of configurations at the start; the following two functions express this margin.
For all $n, M \in \nats$, let $\functionf(n) \deff (n+\size{\prot}^3) \cdot \size{\prot}$ and $\functiong(n,M) \deff \bigl(M+(\size{\prot}^3+1)^{\size{\prot}^3+\size{\prot}^2}\bigr)(n+1)^{\size{\prot}}$.

The following lemma states that, if a set of "configurations" $C$ cannot distinguish $\metaequiv{n}{M}$-equivalent "configurations", then $\prestar{C}$ cannot distinguish $\metaequiv{\functionf(n)}{\functiong(n,M)}$-equivalent "configurations".
In other words, if $C$ is a union of $\metaequiv{n}{M}$-equivalence classes, (\ie, of $(n,M)$-"containers"), then $\prestar{C}$ is a union of $\metaequiv{\functionf(n)}{\functiong(n,M)}$-equivalence classes.

\begin{restatable}{lemma}{LemEqClassesCoReach}
  \label{lem:eq-classes-coreach}
  For all $n, M \in \nats$ and all "configurations" $\config_{start}, \config_{end}, \chi_{start} \in \configset$, if there is a run $\run \colon \config_{start} \runto \config_{end}$
  and $\config_{start} \metaequiv{\functionf(n)}{\functiong(n,M)} \chi_{start}$, then there is a "configuration" $\chi_{end} \in \configset$ with $\config_{end} \metaequiv{n}{M} \chi_{end}$ and a run $\pi \colon \chi_{start} \runto \chi_{end}$.
\end{restatable}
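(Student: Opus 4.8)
The plan is to turn the given run $\run$ into one whose ``essential part'' is small enough to be reproduced from $\chi_{start}$, and then to route the remaining agents and data so that the endpoint lands in the right $\metaequiv{n}{M}$-class. This amounts to a two-level analogue of the passenger-routing argument for immediate-observation protocols without data~\cite{EsparzaRW2019}, with data playing the role of agents and $n$-boxes the role of states.

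First I would normalise $\run$ using \cref{cor:run-few-observed-data-and-agents}, obtaining a run $\run' \colon \config_{start} \runto \config_{end}$ with the same endpoints in which every datum has at most $\size{\prot}^3$ observed agents and at most $\kappa \deff (\size{\prot}^3+1)^{\size{\prot}^3+\size{\prot}^2}$ data are externally observed. Write $D_{ext}$ for this set of data, and for each datum $d$ fix a \emph{core} of at most $\size{\prot}^3$ agents of datum $d$ containing all observed ones, the remaining agents of $d$ being \emph{passengers}. Two structural facts drive the argument: every $\neq$-step of $\run'$ observes an agent of a $D_{ext}$-datum, and an agent of a datum outside $D_{ext}$ is observed only via $=$-steps by agents of its own datum. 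Hence a datum outside $D_{ext}$ interacts only with its own agents and with agents of $D_{ext}$-data, so its behaviour in $\run'$ depends only on the internal dynamics of its own agents together with the trajectory through the state space followed by the $D_{ext}$-data.

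Next I would transplant this essential part onto $\chi_{start}$. Since $\config_{start} \metaequiv{\functionf(n)}{\functiong(n,M)} \chi_{start}$ and $\functiong(n,M) = (M+\kappa)\cdot\size{\Boxes{n}} \geq M\cdot\size{\Boxes{n}} + \kappa$, for every $\functionf(n)$-box $\abox$ the number of data of $\chi_{start}$ with box $\abox$ coincides with that of $\config_{start}$ whenever the latter is below $\functiong(n,M)$, and is at least $\functiong(n,M)$ otherwise. As $\size{D_{ext}} \leq \kappa \leq \functiong(n,M)$, this lets me injectively assign to each $d \in D_{ext}$ a \emph{surrogate} datum $\sigma(d)$ of $\chi_{start}$ with $\cubeapprox{\functionf(n)}{\chi_{start}}{\sigma(d)} = \cubeapprox{\functionf(n)}{\config_{start}}{d}$, and to pair every remaining datum of $\chi_{start}$ with a datum of $\config_{start}$ outside $D_{ext}$ sharing its $\functionf(n)$-box (several $\chi_{start}$-data may get the same partner; the pairing will be fixed so as to control the endpoint, see below). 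The choice $\functionf(n) = (n+\size{\prot}^3)\cdot\size{\prot}$ is tailored so that each surrogate $\sigma(d)$ can both host the at most $\size{\prot}^3$ core agents of $d$ and supply, where needed, the passengers that reproduce the $n$-box of $d$ at the end. I then replay $\run'$ from $\chi_{start}$: each $\sigma(d)$ performs exactly the moves of the core of $d$ --- so the $D_{ext}$-data occupy the same states over time as in $\run'$ --- each datum of $\chi_{start}$ paired with some $d' \notin D_{ext}$ replays the internal dynamics of $d'$, and \cref{lem:agents-copycat}, \cref{lem:agents-core-lemma}, \cref{lem:data-copycat} and \cref{lem:data-core-lemma} are invoked to pad or trim the numbers of passengers within each datum and of data altogether, every passenger being moved by copying a suitable agent of its own datum.

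Finally I would check that the configuration $\chi_{end}$ reached in this way satisfies $\config_{end} \metaequiv{n}{M} \chi_{end}$; this is where the margins in $\functionf$ and $\functiong$ are spent, and I expect it to be the main obstacle. At the agent level, within a fixed datum the passengers starting in a state $q$ can be made to follow any trajectory realised by a core agent passing through $q$, so they can be distributed over end-states so as to match, up to the threshold $n$, the $n$-box that $\config_{end}$ assigns to the datum at hand; any surplus passengers, whose existence is absorbed by the factor $\size{\prot}$ in $\functionf$, are dumped into a state that $\config_{end}$ already fills beyond $n$, which exists by pigeonhole whenever a surplus occurs. This pins down the $n$-box of each datum of $\chi_{end}$. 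At the data level one repeats verbatim this argument one layer up, with data replacing agents, $n$-boxes replacing states, and the faithfully replayed trajectory of the $D_{ext}$-data replacing the trajectory of observed agents: the data of $\chi_{start}$ that share a given $\functionf(n)$-box realise $\config_{end}$'s distribution over $n$-boxes up to the threshold $M$, the surplus data, absorbed by the factor $\size{\Boxes{n}} = (n+1)^{\size{\prot}}$ in $\functiong$, being routed to an $n$-box that $\config_{end}$ already overfills past $M$. Combining the two levels yields $\metaapprox{n}{M}{\chi_{end}} = \metaapprox{n}{M}{\config_{end}}$, as required. The delicate point is that the two routings must genuinely compose: this works because data outside $D_{ext}$ are insensitive to which concrete agents realise the states occupied by the $D_{ext}$-data, so the freedom used to route passengers inside a surrogate never interferes with the data-level routing.
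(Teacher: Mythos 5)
Your proposal is correct and follows essentially the same strategy as the paper's proof: normalise the run via \cref{cor:run-few-observed-data-and-agents}, build a two-level correspondence between the data and agents of $\chi_{start}$ and those of $\config_{start}$ that covers every observed agent and every externally observed datum while matching end-counts up to the thresholds $n$ and $M$ (with the pigeonhole on $\functionf(n)\geq n\cdot\size{\prot}$ and $\functiong(n,M)\geq M\cdot\size{\Boxes{n}}$ supplying the overfilled state, resp.\ box, that absorbs the surplus), and then replay the run along this correspondence. The only differences are presentational: the paper packages the correspondence as many-to-one maps $\mu$ and $\nu$ from $\chi_{start}$ to $\config_{start}$ and simulates step by step, whereas you describe the inverse assignment (injective surrogates for $D_{ext}$ plus a pairing of the rest) and delegate the padding to the copycat lemmas, which amounts to the same construction.
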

\begin{proofsketch}
	We first apply Corollary~\ref{cor:run-few-observed-data-and-agents} to $\run$, so that we can assume that $\run$ has a limited number of externally observed data and of observed agents per datum.

  In this proof sketch, we first handle the case with only one datum.
  Then, we explain how to generalise this.
  The full proof can be found in \cref{app-eq-classes-coreach}.

  Suppose that all agents in $\config_{start}$ and $\chi_{start}$ share a single datum $d$,
  and suppose $(\config_{start}, d) \cubeequiv{\functionf(n)} (\chi_{start}, d)$.
  Let $\Agentset_{\config}$ and $\Agentset_{\chi}$ be the agents in $\config_{start}$ and $\chi_{start}$, respectively.
For all $q, q' \in Q$, we set $\Agentset^{q\to}_{\config} \deff \setc{a \in \Agentset_{\config}}{\config_{start}(a) = q}$,
	$\Agentset^{q\to}_{\chi} \deff \setc{a \in \Agentset_{\chi}}{\chi_{start}(a) = q}$,
	$\Agentset^{\to q'}_{\config} \deff \setc{a \in \Agentset_{\config}}{\config_{end}(a) = q'}$,
	and $\Agentset^{q \to q'}_{\config} \deff \Agentset^{q\to}_{\config} \cap \Agentset^{\to q'}_{\config}$.

	Our aim is to assign to each agent in $\chi_{start}$ an agent in $\config_{start}$ to mimic. To do so, we construct a mapping $\nu \colon \Agentset_{\chi} \to \Agentset_{\config}$ such that
	\begin{enumerate}[label = (\Alph*)]
		\item
      \label{item:eq-classes-coreach-start}
      for all $a \in \Agentset_{\chi}$, we have $\chi_{start}(a) = \config_{start}\bigl(\nu(a)\bigr)$,

		\item
      \label{item:eq-classes-coreach-observed}
      for all $a' \in \Agentset_{\config}$ "observed" in $\run$, we have $\nu^{-1}(a')\neq \emptyset$, and

		\item
      \label{item:eq-classes-coreach-size}
      for all $q' \in Q$, we have $\size{\nu^{-1}(\Agentset^{\to q'}_{\config,d})} = \size{\Agentset^{\to q'}_{\config,d}}$,
		or $\size{\nu^{-1}(\Agentset^{\to q'}_{\config,d})} \geq n$ and $\size{\Agentset^{\to q'}_{\config,d}}\geq n$.
	\end{enumerate}

  We build $\nu$ separately on each set $\Agentset^{q\to}_{\chi}$ by defining, for each $q \in Q$,
  a mapping $\nu_q \colon \Agentset^{q\to}_{\chi} \to \Agentset^{q\to}_{\config}$.
		Let $q \in Q$. As $(\chi_{start}, d) \equiv_{\functionf(n)} (\config_{start}, d)$, either $\size{\Agentset^{q\to}_{\config}} = \size{\Agentset^{q\to}_{\chi}}$,
		or both $\size{\Agentset^{q\to}_{\config}}$ and $\size{\Agentset^{q\to}_{\chi}}$ are at least $\functionf(n)$.
If $\size{\Agentset^{q\to}_{\config}} = \size{\Agentset^{q\to}_{\chi}}$, we let $\nu_q$ form a bijection between $\Agentset^{q\to}_{\chi}$ and $\Agentset^{q\to}_{\config}$.
Consider now the second case, where $\size{\Agentset^{q\to}_{\config}}$ and $\size{\Agentset^{q\to}_{\chi}}$ are at least $\functionf(n)$.
  We aim at selecting, for every $q' \in Q$, a set $A_{q \to q'} \subseteq \Agentset^{q\to q'}_{\config}$ of agents that must be copied in $\pi$.
  If $\size{\Agentset^{q\to q'}_{\config}} \leq n$, then we let $A_{q \to q'} \deff \Agentset^{q\to q'}_{\config}$.
  Otherwise, we first put in $A_{q \to q'}$ all agents in $\Agentset^{q\to q'}_{\config}$ that are observed in $\run$, at most $\size{\prot}^3$ in total by \cref{cor:run-few-observed-data-and-agents}.
  If $\size{A_{q \to q'}} < n$, we add arbitrary agents from $\Agentset^{q\to q'}_{\config}$ to $A_{q \to q'}$ until $\size{A_{q \to q'}} \geq n$.
  Either way, we have selected $A_{q \to q'}$ of size at most $\size{\prot}^3+n$ for each $q'$, hence at most $\functionf(n)$ agents in total.
  For every $q'$, we have $\size{A_{q \to q'}} \leq \size{\Agentset^{q\to q'}_{\config}}$,
  and either $\size{A_{q \to q'}} = \size{\Agentset^{q\to q'}_{\config}}$ or the two sets have size more than $n$.

  We now build $\nu_q$ such that its image over $\Agentset^{q\to }_{\chi}$ is $\bigcup_{q' \in Q} A_{q\to q'}$.
  We build this in two steps.
  First, we assign to each $\bigcup_{q' \in Q} A_{q\to q'}$ one antecedent by $\nu_q$ in $\Agentset^{q\to }_{\chi}$.
  This is possible because $\size{\bigcup_{q' \in Q} A_{q\to q'}} \leq \functionf(n) \leq \size{\Agentset^{q\to }_{\chi}}$.
  We then identify some $q''$ such that $\size{A_{q\to q''}} > n$ and map all remaining agents of $\Agentset^{q\to }_{\chi}$ to an arbitrary agent in $A_{q\to q''}$.
  Such a $q''$ exists because $\size{\Agentset^{q\to}_{\config}}\geq \functionf(n) \geq n \cdot \size{\prot}$,
  so there is a $q'$ such that $\size{\Agentset^{q\to q'}_{\config}} \geq n$,
  and hence $\size{A_{q\to q'}} \geq n$ by construction.

  This concludes the construction of $\nu$.
  It remains to prove that $\nu$ fulfils \cref{item:eq-classes-coreach-start,item:eq-classes-coreach-observed,item:eq-classes-coreach-size}.
  \cref{item:eq-classes-coreach-start,item:eq-classes-coreach-observed} are immediate from the definition.
  We prove \cref{item:eq-classes-coreach-size}.
  Let $q' \in Q$. We distinguish two cases:
		\begin{itemize}
			\item if $\size{\Agentset^{q \to q'}_{\config}} < n$ for all $q \in Q$, then for all $q$,
        we have $\size{\nu^{-1}(\Agentset^{q \to q'}_{\config})} = \size{\nu^{-1}(A_{q \to q'})} = \size{A_{q \to q'}}= \size{\Agentset^{q \to q'}_{\config}}$,
        so $\size{\nu^{-1}(\Agentset^{\to q'}_{\config})} = \size{\Agentset^{\to q'}_{\config}}$;
			\item if $\size{\Agentset^{q \to q'}_{\config}} \geq n$ for some $q \in Q$, then $\size{A_{q \to q'}} \geq n$,
        so $\size{\nu^{-1}(\Agentset^{q \to q'}_{\config})} \geq n$, and thus both $\size{\nu^{-1}(\Agentset^{\to q'}_{\config})}$ and $\size{\Agentset^{\to q'}_{\config}}$ are at least $n$.
		\end{itemize}

	We construct a "run" $\pi$ from $\chi_{start}$ by copying $\run$ as follows. For each step of $\run$ where an agent $a$ performs some transition $t$, we make $\size{\nu^{-1}(a)}$ steps in $\pi$ so that all agents in $\nu^{-1}(a)$ perform transition $t$ one by one.
	If $a$ observed some agent $a'$, there is $a''$ in $\pi$ that can be observed because $\nu^{-1}(a') \ne \emptyset$: we made sure to map an agent to each "observed" agent in $\run$.

	For the general case with more data,
  we similarly construct two mappings $\mu$ and $\nu$. First we define $\mu$, which maps each datum $d$ of $\chi_{start}$ to one of $\config_{start}$ such that $(\config_{start}, \mu(d))\cubeequiv{\functionf(n)} (\chi_{start}, d)$.
  Then, for each datum $d$, $\nu$ maps each agent $a$ with datum $d$ of $\chi_{start}$ to one with datum $\mu(d)$ of $\config_{start}$.

  Once \(\mu\) and \(\nu\) are defined, we build a run from $\chi_{start}$ to a "configuration" $\chi_{end}$ in which each agent $a$ mimics the behaviour of $\nu(a)$ in $\run$.
  We make sure that agents (resp. data) "observed" in $\run$ have agents (resp. data) mapped to them, so that we can take the same transitions in $\run$ and $\pi$.
  The construction of $\nu$ ensures that, for all data $d$, we have $(\config_{end}, \mu(d)) \cubeequiv{\functionf(n)} (\chi_{end}, d)$. The construction of $\mu$ ensures that $\config_{end} \metaequiv{\functionf(n)}{\functiong(n,M)} \chi_{end}$.
\end{proofsketch}
 
	\section{Decidability and Complexity Bounds}
\label{sec:complexity}

\subsection{Decidability in Exponential Space}
In this section, we use the results on "GRE" from \cref{sec:equivalence-relation} to provide an \exps upper bound for the "emptiness problem for GRE".
In the following, we assume that the representation of a "GRE" $E$ takes $\size{E} + \log(\norm{E})$ space. 

We first prove that we can decide membership of a "configuration" (encoded in a naive way) in a "GRE" in \pspace.
A "configuration" is represented ""data-explicitly"" if it is represented as a list of vectors of $\nats^Q$, one vector for each datum.
The ""size@@dataexplicit"" of this representation is $k \cdot\size{\prot}\cdot log(m)$ where $k$
is the number of data and $m$
is the number of agents "appearing" in $\config$.

\begin{restatable}{proposition}{ConfigInGREPspace}
  \label{prop:config-in-GRE-pspace}
  The following problem is decidable in \pspace:
  given a "PPUD" $\mathcal{P}$, a "GRE" $E$, and a "configuration" $\config$ described "data-explicitly",
  decide if $\config \in \semanticsEP{E}{\prot}$.
\end{restatable}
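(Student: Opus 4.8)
The plan is to design a \emph{deterministic} recursive procedure $\mathrm{Member}(E,\config)$ that decides whether $\config \in \semanticsEP{E}{\prot}$ and runs in polynomial space, proceeding by structural induction on $E$. A key invariant is that every "configuration" manipulated by the procedure is stored "data-explicitly" and has the same number of agents of each datum as the input $\config$. This is legitimate: a "step" only rewrites the states of two already-appearing agents and never creates, deletes, or relabels an agent, so $\config' \runto \config''$ implies that $\config'$ and $\config''$ have the same per-datum agent counts and the same set of appearing data. Consequently every intermediate configuration has a "data-explicit" description of size polynomial in the size of the input.

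\textbf{Atoms.} When $E$ is an "interval predicate" $\phi$, I evaluate each "simple interval predicate" $\psi = \dot\exists \datvar_1,\dots,\datvar_m,\, \bigwedge_{q}\bigwedge_{j}\#(q,\datvar_j)\in[A_{q,j},B_{q,j}]$ occurring in $\phi$ on $\config$ and then combine the truth values according to the Boolean structure of $\phi$. To evaluate $\psi$, I use that only the finitely many data appearing in $\config$ have nonzero counts and that all fresh data behave identically. Calling a position $j$ \emph{free} when $A_{q,j} = 0$ for all $q$, a free position can always be witnessed by a fresh datum, while a non-free position $j$ can only be witnessed by an appearing datum $d$ with $\counting{\config}{d}(q)\in[A_{q,j},B_{q,j}]$ for all $q$; moreover the witnesses must be pairwise distinct. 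Hence $\psi$ holds at $\config$ iff the non-free positions admit an injective assignment of pairwise compatible appearing data, which is a bipartite matching and is decidable in polynomial time (the free positions are then filled with distinct fresh data). This whole step runs in polynomial time, a fortiori in polynomial space.

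\textbf{Boolean and reachability operators.} For $E = E_1 \cup E_2$ I return $\mathrm{Member}(E_1,\config) \lor \mathrm{Member}(E_2,\config)$, and for $E = \setcomplement{E_1}$ I return $\neg\,\mathrm{Member}(E_1,\config)$; these only add a constant to the recursion depth and reuse space. For $E = \poststar{E_1}$, the semantics give $\config\in\semanticsEP{E}{\prot}$ iff there is a "configuration" $\config'$ with the same per-datum agent counts as $\config$ such that $\config' \runto \config$ and $\mathrm{Member}(E_1,\config')$ holds; the case $E = \prestar{E_1}$ is symmetric, with $\config \runto \config'$ in place of $\config' \runto \config$. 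There are only exponentially many such candidates $\config'$, each of polynomial description size, so I enumerate them one at a time with a polynomially-bounded counter, keeping only the current candidate in memory, and for each I test the reachability condition and then make the recursive call. The reachability test is the problem of deciding $\config_1 \runto \config_2$ between two explicitly given "configurations" of a "PPUD" whose population is fixed and encoded in binary: a "data-explicit" configuration with the prescribed per-datum totals has polynomial size and a single "step" changes the counts of at most two data, so this is reachability in a graph with at most exponentially many vertices, decidable in polynomial space by the usual Savitch-style recursion on a midpoint configuration with a polynomially-bounded distance counter. Crucially, the procedure stays deterministic — the search over candidate $\config'$ is an exhaustive enumeration and reachability is tested deterministically — so the complementation case incurs no alternation and remains in polynomial space.

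\textbf{Space accounting and main obstacle.} The recursion depth is at most the number of operators of $E$, i.e., $\lengthGRE{E}$; each activation record stores only a constant number of polynomially-sized configurations together with an enumeration counter, and each base computation (atom evaluation or reachability test) is carried out in polynomial space and reclaimed before recursing. Hence $\mathrm{Member}$ runs in polynomial space and the proposition follows. The main obstacle is the treatment of $\poststar{\cdot}$ and $\prestar{\cdot}$: one must observe that reachability conserves the per-datum distribution of agents, so that the existential quantification over configurations ranges only over objects of polynomial description size, and one must check that bounded-population reachability in "PPUD" — with the population written in binary — lies in \pspace. The remaining cases are routine.
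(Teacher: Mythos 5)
Your proof is correct and follows essentially the same route as the paper's: a structural induction on $E$ in which the Boolean operators are immediate and $\poststarsymb$/$\prestarsymb$ are handled by observing that steps conserve the agent and data counts, enumerating the (exponentially many, polynomially describable) candidate configurations $\config'$, testing fixed-population reachability in polynomial space, and recursing. The only divergence is in the base case, where you evaluate atomic predicates via a bipartite matching with an explicit treatment of positions witnessable by fresh data, while the paper enumerates injective assignments of the data variables to appearing data; both implementations fit comfortably within the polynomial-space budget.
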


The proof is given in Appendix~\ref{app-complexity-upper-bounds}.
It uses a relatively straightforward induction on $E$ to show that this problem can be decided in polynomial space using a recursive algorithm (with a polynomial whose degree does not depend on $E$).
For the case where $E = \poststar{F}$, we rely on the fact that the numbers of agents and data remain the same throughout a run; we therefore can guess the configuration $\config'$ such that $\config' \in \semanticsEP{F}{\prot}$ (which can be checked with a recursive call) and $\config \runto \config'$ (which can be checked by exploration of the graph containing configurations with as many agents and data as $\config$). The case $E = \prestar{F}$ is similar.

\cref{prop:config-in-GRE-pspace} allows us to check if a given configuration of a "PPUD" is in the set described by a "GRE"\footnote{This implies that the "emptiness problem for GRE" over "PPUD", while undecidable due to \cref{undec-well-spec}, is semi-decidable: one can simply enumerate all configurations and test membership for each of them.}.
In the case of "IOPPUD", \cref{prop:bound-coefs-GRE} allows us to search for a witness configuration within some bounded set, yielding decidability.

\TheoremEXPSPACE*

\begin{proof}
  Suppose $\semanticsEP{E}{\prot}$ is not empty.
  By \cref{prop:bound-coefs-GRE}, it contains an $(n,M)$-"container" $\acontainer$ with $n \deff \norm{E} \cdot \poly(\size{\prot})^{\size{E}}$ and $M \deff \norm{E}^{\poly(\size{\prot} ) \cdot \size{E}^2}$.
  We construct a configuration $\config \in \acontainer$ as follows.
  For each $n$-box $\abox$, we select $\acontainer(\abox)$ many data such that over all $n$-boxes, the selected data are pairwise distinct.
  Then, for each $n$-box $\abox$, each state $q \in Q$ of $\prot$, and each datum $d_{\abox}$ selected for $\abox$, we put $\abox(q)$ many agents with datum $d_{\abox}$ in $q$.
Note that the configuration $\config$ is in $\acontainer$, and the number of agents it contains it at most $n \cdot \size{\prot} \cdot \size{\Boxes{n}} \cdot M$. We have $\size{\Boxes{n}} = (n+1)^{\size{\prot}} = \norm{E} \cdot \poly(\size{\prot})^{\size{E}\size{\prot}}$. We assumed at the beginning of Section~\ref{sec:complexity} that the encoding of $E$ uses memory $\size{E} + \logarithm{\norm{E}}$.
  As a result, $n$, $M$, $\size{\prot}$ and $\size{\Boxes{n}}$ are all at most exponential in the size of the input.
  Therefore, if $\semanticsEP{E}{\prot}$ is not empty, then it contains a "configuration" with at most exponentially many agents.
  We can guess the "data-explicit" description of such a "configuration" in non-deterministic exponential space, and then check that the guessed configuration is in $\semanticsEP{E}{\prot}$ in exponential space by \cref{prop:config-in-GRE-pspace} (we apply the \pspace algorithm on an exponential input).
  As a result, deciding emptiness of $\semanticsEP{E}{\prot}$ is in \nexps, which is identical with \exps.
\end{proof}

\subsection{A Lower Complexity Bound}
We now provide the following lower complexity bound.
\begin{theorem}
\label{nexptime-hardness}
The "emptiness problem for GRE" over "IOPPUD" is \conexpt-hard.
\end{theorem}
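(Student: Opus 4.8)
The plan is to reduce in polynomial time from the complement of an \nexpt-complete \emph{tiling problem}: given a finite set of tiles together with a horizontal and a vertical compatibility relation, a boundary condition (say, a designated tile at the origin), and a number $n$ in unary, decide whether the $2^n \times 2^n$ grid \emph{cannot} be tiled compatibly with the boundary condition. Since this tiling problem is \nexpt-complete, its complement is \conexpt-complete, so a polynomial-time reduction from it to the "emptiness problem for GRE" over "IOPPUD" yields \conexpt-hardness. Given a tiling instance, we build an "IOPPUD" $\prot$ and a "GRE" $E$ over $\prot$ such that $\semanticsEP{E}{\prot}=\emptyset$ exactly when no compatible tiling exists.

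\textbf{Encoding and protocol.} A "configuration" of $\prot$ will encode a candidate tiling: one datum per grid cell; for each datum, $2n$ \emph{reader} agents storing bit by bit the $2n$-bit coordinate of the corresponding cell and one agent recording the tile placed on it; and exactly one agent in a designated \emph{main} state. That a "configuration" has this shape --- each datum carries exactly the prescribed reader and tile agents, all coordinates are pairwise distinct, there is exactly one main agent --- is a per-datum counting property, hence is defined by an "interval predicate" $\varphi_{\mathsf{wf}}$ with constant "height@@pred", "width@@pred" at most $2$, and polynomially many conjuncts, so $\varphi_{\mathsf{wf}}$ has polynomial "size@@pred". (The grid has $4^n$ cells, so such configurations have exponentially many agents, which is allowed.) The protocol then \emph{verifies} a candidate: the main agent performs one sequential walk over the cells in the lexicographic order of their coordinates, from $0$ to $4^n-1$. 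Advancing from the current cell to the next one is done by a \emph{synchronisation test} --- the main agent interacts, one bit position at a time, with the reader agents of the current cell and of a claimed successor cell, checking while carrying a single carry bit that the successor's coordinate is the current one plus one; since the reader agents being observed never move, this respects the immediate-observation restriction. While at a cell, the main agent also locates, by an analogous test, the cell directly below it (if any), and dedicated \emph{horizontal verifier} and \emph{vertical verifier} agents check the two compatibility relations on the pairs thus identified, sending an agent to a sink state $\cheatstate$ on any violation; the boundary condition is checked as the walk passes the origin. Having processed all $4^n$ cells, the main agent reaches a state $\done$.

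\textbf{The GRE and correctness.} The key point is that a missing cell leaves the relevant synchronisation test without a valid partner, so the walk stalls and $\done$ is unreachable, whereas if the candidate is complete and compatibility-respecting the walk can be completed. Hence, from a well-formed candidate, $\done$ is reachable while $\cheatstate$ stays empty precisely when the candidate is a full tiling respecting all constraints. With $\psi_{\done} \deff \dot\exists\datvar,\,\#(\done,\datvar)\ge 1$ and $\psi_{\mathsf{bad}} \deff \dot\exists\datvar,\,\#(\cheatstate,\datvar)\ge 1$, we set
\[ E \;\deff\; \varphi_{\mathsf{wf}} \,\cap\, \prestar{\psi_{\done}} \,\cap\, \setcomplement{\prestar{\psi_{\mathsf{bad}}}}, \]
a "GRE" of constant "length@@exp" and polynomial "norm@@exp" and "size@@exp". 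Then $\semanticsEP{E}{\prot}\neq\emptyset$ iff some well-formed candidate admits a run reaching $\done$ that never enables $\cheatstate$, iff the $2^n \times 2^n$ grid admits a compatible tiling; equivalently $\semanticsEP{E}{\prot}=\emptyset$ iff the instance is negative. Since $\prot$ has $\poly(n,\size{\mathcal{T}})$ states and $E$ is of polynomial size, the reduction runs in polynomial time, which proves the theorem.

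\textbf{Main obstacle.} The hard part is designing the synchronisation-test gadgets within the immediate-observation discipline. The coordinates compared have exponential value but only linear bit-length, while each agent has $O(1)$ memory and every transition is a one-sided observation; the gadgets must let the main agent test a cell's distributively-stored $2n$-bit counter and certify that another cell's counter is its successor using only a constant-size carry, and must prevent the main agent from cheating --- accepting a cell that is not the true successor should be impossible because the increment check then fails, so that only runs always picking the genuine successor progress, and an incomplete grid must therefore really stall the walk. Arranging that the main agent interacts with the reader agents of a single, well-defined candidate cell at a time rather than with an arbitrary mixture of reader agents from several cells is the delicate point; the remaining book-keeping (the boundary condition, sending reader agents to inert states after the walk leaves their cell, coordinating the vertical verifiers) is routine but must also respect immediate observation.
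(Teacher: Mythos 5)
Your high-level plan coincides with the paper's: reduce from the exponential-grid tiling problem, encode one cell per datum with $2n$ bit-agents plus a tile agent, and take a GRE of the shape ``well-formed $\cap\ \prestar{\mathrm{done}}\ \cap\ \setcomplement{\prestar{\mathrm{bad}}}$''. However, there are two genuine gaps. First, you fold ``all coordinates are pairwise distinct'' into the "interval predicate" $\varphi_{\mathsf{wf}}$ and claim polynomial size. This fails: a "simple interval predicate" fixes one interval per state and per data variable, so expressing ``two distinct data agree on all $2n$ bits'' requires branching over all $2^{2n}$ coordinate assignments, i.e.\ exponentially many simple interval predicates. Pairwise distinctness of exponentially-valued, distributively stored coordinates is a cross-datum relational constraint, not a per-datum counting property. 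The paper has to handle this with a dedicated duplication gadget in the protocol (two agents of distinct data walk down parallel tracks comparing bits, with an overflow state $\stateover{3}$ catching the case of three or more participants) together with the \emph{nested} GRE constraint $\setcomplement{\prestar{\presenceformula{\dupstate} \cap \setcomplement{\prestar{\presenceformula{\stateover{3}}}}}}$ --- so the expression is not of the simple three-conjunct shape you propose, and this component cannot be dispensed with.

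Second, the synchronisation gadget, which you correctly identify as the crux, is not actually constructed, and the architecture you sketch does not obviously work under "immediate observation". Your main agent is supposed to interact ``with the reader agents of the current cell and of a claimed successor cell'', but an observing agent only learns the observed agent's state and whether the observed datum equals \emph{its own}; with $O(1)$ memory it has no way to bind ``the current cell'' and ``the successor cell'' to specific data, so nothing prevents it from mixing bits of several cells and passing a bogus increment check. The paper resolves this by inverting the roles: the verifier owns a \emph{private} counter $z=x+2^ny$ (auxiliary agents only it manipulates), and each cell's \emph{reader} --- which shares its datum with its own bit-agents --- synchronises against that counter, using internal ($=$) observations for its own bits and external ($\neq$) observations for the verifier's. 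Even then one must cope with the fact that $z$ changes during the run while only one bit can be tested per step; the paper's Lemma~\ref{sync-test} shows that one pass testing bits from most to least significant followed by a second pass from least to most significant pins $z$ to the reader's value at the moment of the handshake. Some argument of this kind is indispensable, and your proposal contains neither it nor a substitute.
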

\begin{proof}[Proof sketch]
We proceed by reduction from the problem of tiling an exponentially large grid,
a \nexpt-complete problem \cite{tiling_problems},
to the complement of the "emptiness problem for GRE".
The full proof can be found in \cref{app:lowerbound}.

\AP A \emph{tiling instance} is a tuple \((2^n, \tcolors, \tiles)\),
with \(n \geq 1\), \(\tcolors\) a finite set of \emph{colours} with special colour \(\twhite\),
and \(\tiles  = \set{t_1, \dots, t_m} \subseteq \tcolors^4\) a finite set of \emph{tiles}.
We can view a tile as a square whose four edges are coloured.
The \emph{tiling problem} asks whether there is a \emph{tiling}, that is,
a mapping \(\tiling \colon \nset{0}{2^n{-}1} \times \nset{0}{2^n{-}1} \to \tiles\)
such that the colours of neighbouring tiles match
and the borders of the grid are white.

Given a tiling instance \((2^n, \tcolors, \tiles)\),
we build an instance \((\prot, E)\) of the "emptiness problem for GRE".
In \(\prot\), witness tilings can be encoded in the configurations,
and we construct \((\prot, E)\) such that \(\setof{E}\) contains exactly the configurations
that correspond to a correctly encoded witness tiling.
More precisely, $\config \in \setof{E}$ when:
\begin{enumerate}[label=\textup{\textsf{(Cond}\textsf{\arabic*}\textsf{)}},ref=\textsf{(Cond}\textsf{\arabic*}\textsf{)}, leftmargin=50pt]
\item \label{ingr:datum-tile} for all $(i,j) \in \nset{0}{2^n-1}^2$, some datum encodes coordinates $(i,j)$ and a tile type;
\item \label{ingr:no-dup} for all $(i,j) \in \nset{0}{2^n-1}^2$, there is at most one datum encoding $(i,j)$; 
\item \label{ingr:horizontal-verif} the mapping $\nset{0}{2^n-1}^2 \to \tcolors$ defined by the data is a tiling.  
\end{enumerate}

The GRE $E$ will be of the form of a conjunction, \ie, a list of \emph{constraints} that the configuration must satisfy.  
Our first constraint is $\setcomplement{\prestar{\presenceformula{\cheatstate}}}$ where $\cheatstate$ is a special error state and $\presenceformula{q}$ is the GRE expressing that some agent is in $q$. This forbids, in $\setof{E}$, configurations from which $\cheatstate$ can be covered. 

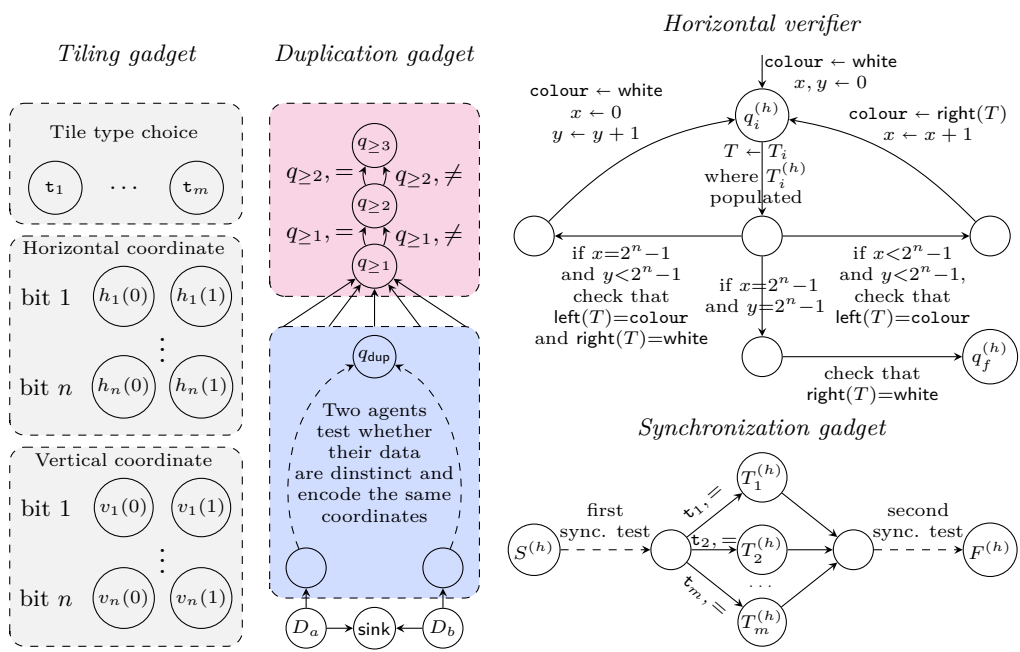
\begin{figure}
\begin{tikzpicture}[auto, xscale = 0.6, yscale = 0.8]
\tikzset{every node/.style = {font = {\small}, align = center}}
\tikzset{every state/.style = {font = {\scriptsize}, inner sep = 1pt, minimum size = 15pt}}
\draw[rounded corners=2mm,dashed,fill=black!5] (-3,-1) -| (2.1,-4.3) -| (-3,-4.3) -- cycle;
\node[font = {\scriptsize}] at (-0.5, -1.2) {Horizontal coordinate} ;
\node[state] at (-0.5,-2) (b00h) {$h_1(0)$};
\node[state] at (1.2,-2) (b01h) {$h_1(1)$};
\node at (0.35, -2.75) {\Large $\vdots$};
\node[state] at (-0.5,-3.5) (bn0h) {$h_n(0)$};
\node[state] at (1.2,-3.5) (bn1h) {$h_n(1)$};
\node at (-2.2, -2) {bit $1$};
\node at (-2.2, -3.5) {bit $n$};

\node[align = center, font = {\small}] at (-0.45, 2) {\emph{Tiling gadget}}; 
\node[align = center, font = {\small}] at (5, 2) {\emph{Duplication gadget}}; 
\node[align = center, font = {\small}] at (13.5, 2.5) {\emph{Horizontal verifier}}; 
\node[align = center, font = {\small}] at (13.5, -4.2) {\emph{Synchronization gadget}}; 

\begin{scope}[yshift = -3.5cm]
\draw[rounded corners=2mm,dashed,fill=black!5] (-3,-1) -| (2.1,-4.3) -| (-3,-4.3) -- cycle;
\node[font = {\scriptsize}] at (-0.5, -1.2) {Vertical coordinate} ;
\node[state] at (-0.5,-2) (b00h) {$v_1(0)$};
\node[state] at (1.2,-2) (b01h) {$v_1(1)$};
\node at (0.35, -2.75) {\Large $\vdots$};
\node[state] at (-0.5,-3.5) (bn0h) {$v_n(0)$};
\node[state] at (1.2,-3.5) (bn1h) {$v_n(1)$};
\node at (-2.2, -2) {bit $1$};
\node at (-2.2, -3.5) {bit $n$};
\end{scope}

\begin{scope}[yshift = 3.2cm]
\draw[rounded corners=2mm,dashed,fill=black!5] (-3,-2) -| (2.1,-4) -| (-3,-4) -- cycle;
\node[align = center, font = {\scriptsize}] at (-0.5,-2.5) {Tile type choice};
\node[state, minimum size = 20pt] at (-2,-3.4)  {$\statet{1}$};
\node at (-0.45, -3.4) {$\dots$};
\node[state, inner sep = 0pt, minimum size = 20pt] at (1.1,-3.4) {$\statet{m}$};

\end{scope}

\begin{scope}[yshift = 0.5cm]
\draw[rounded corners=2mm,dashed,fill=IoppudBlue!30] (2.7,-3) -| (7.3,-7.5) -| (2.7,-7.5) -- cycle;
\node[state] at (3.5,-8) (entry1) {$D_a$};
\node[state] at (6.5, -8) (entry2) {$D_b$};
\node[state] at (5,-8) (sinkstate) {$\sinkstate$};
\node[state] at (3.5, -7) (duptrack1) {};
\node[state] at (6.5, -7) (duptrack2) {};
\node[state] at (5, -3.5) (qdup) {$\dupstate$};
\node[align =center, font = {\scriptsize}] at (5,-5.25) (blabla) {Two agents \\ test whether \\ their data \\ are dinstinct and \\ encode the same \\ coordinates};

\path[-stealth] 
(entry1) edge (duptrack1)
(entry2) edge (duptrack2) 
(entry1) edge (sinkstate)
(entry2) edge (sinkstate)
(duptrack1) edge[bend left = 50, dashed] (qdup)
(duptrack2) edge[bend right = 50, dashed] (qdup)
;

\draw[rounded corners=2mm,dashed,fill=IoppudMagenta!20] (2.7,-2.5) -| (7.3,0.7) -| (2.7,-2.5) -- cycle;

\node[state] at (5,0) (over3) {$\stateover{3}$};
\node[state] at (5,-1) (over2) {$\stateover{2}$};
\node[state] at (5,-2) (over1) {$\stateover{1}$};
\draw[-stealth, color = black] (3,-3) -- (over1);
\draw[-stealth, color = black] (4,-3) -- (over1);
\draw[-stealth, color = black] (5,-3) -- (over1);
\draw[-stealth, color = black] (6,-3) -- (over1);
\draw[-stealth, color = black] (7,-3) -- (over1);
\path[-stealth] 
(over1) edge[bend left= 30] node[left] {$\stateover{1}, =$} (over2)
(over1) edge[bend right= 30] node[right] {$\stateover{1}, \ne$} (over2)
(over2) edge[bend left= 30] node[left] {$\stateover{2}, =$} (over3)
(over2) edge[bend right= 30] node[right] {$\stateover{2},\ne$} (over3);
\end{scope}

\tikzset{every node/.style = {font = {\scriptsize}, align = center}}

\begin{scope}[yshift = -6.2cm, xshift = -0.5cm]
\node[state] (S) at (9,0) {$S^{(h)}$};
\node[state] (S2) at (12,0) {};
\node[state, minimum size = 15pt, inner sep = 0pt] (T1) at (14,1.2) {$T^{(h)}_1$};
\node[state, minimum size = 15pt, inner sep = 0pt] (T2) at (14,0) {$T^{(h)}_2$};
\node[font = {\tiny}] (Tsusp) at (14, -0.6) {$\dots$};
\node[state, minimum size = 15pt, inner sep = 0pt] (TT) at (14, -1.2) {$T^{(h)}_{m}$};
\node[state] (F2) at (16,0) {};
\node[state] (F) at (19,0) {$F^{(h)}$};

\path[-stealth]
(S) edge[dashed] node[above, align = center, font = {\scriptsize}] {first \\ sync. test} (S2)
(S2) edge node[above, sloped, rotate = 10] {$\statet{1}, =$} (T1)
(S2) edge node[above, sloped, yshift = -3pt] {$\statet{2}, =$} (T2)
(S2) edge node[below, sloped, rotate = -10] {$\statet{m}, =$} (TT)
(T1) edge (F2)
(T2) edge (F2)
(TT) edge (F2)
(F2) edge[dashed] node[above, align = center, font = {\scriptsize}] {second \\ sync. test} (F)
;
\end{scope}

\tikzset{every node/.style = {font = {\scriptsize}, align = center}}

\begin{scope}[xshift = 14.5cm, yshift = 1cm]
\node[state, inner sep = 1pt] at (-1,0) (start) {$q_i^{(h)}$};
\draw[-stealth] (-1,1) -- (start);
\node at (0.5,0.7) {$\colorvar \assignalgo \twhite$ \\ $x,y \assignalgo 0$};
\node[state] (getT) at (-1,-2) {};
\node[state] at (4,-2) (hsmallvsmallok) {};
\node[state] at (-6,-2) (hbigvsmallok) {};
\node[state] at (-1,-4) (hbigvbig) {};
\node[state, inner sep = 1pt] at (4,-4) (qf) {$q_f^{(h)}$};

\path[-stealth]
(start) edge node[left, xshift = 0.73cm, yshift = 0.05cm] 
{$T \assignalgo T_i$ \\ where $T_i^{(h)}$ \\ populated} (getT)
(getT) edge node[left, yshift = 0pt, xshift = 30pt] {if $x {=} 2^n{-}1$ \\ and $y {=} 2^n{-}1$ \hspace{0.2cm}} (hbigvbig)
(getT) edge node[below, xshift = 10pt] {if $x {<} 2^n{-}1$ \\ and $y {<} 2^n{-}1$, \\ check that \\ $\leftcolor{T} {=} \colorvar$} (hsmallvsmallok)
(getT) edge node[below, xshift = -10pt] {if $x {=} 2^n{-}1$ \\ and $y {<} 2^n{-}1$ \\ check that \\ $\leftcolor{T} {=} \colorvar$ \\and $\rightcolor{T} {=} \twhite$} (hbigvsmallok)
(hbigvbig) edge node[below] {check that \\$\rightcolor{T} {=} \twhite$} (qf)
(hbigvsmallok) edge[bend left = 15] node[above, xshift = -15pt] {$\colorvar \assignalgo \twhite$ \\ $x \assignalgo 0$ \\ $y \assignalgo y+1$} (start)
(hsmallvsmallok) edge[bend right = 15] node[above, xshift = 15pt] {$\colorvar \assignalgo \rightcolor{T}$ \\ $x \assignalgo x+1$} (start)
;

\end{scope}

\end{tikzpicture}
 \caption{Partial depiction of the protocol constructed in \cref{nexptime-hardness}.}
\label{fig:lowerbound-summary}
\end{figure}

\ref{ingr:datum-tile} is obtained using the \emph{tiling gadget} in \cref{fig:lowerbound-summary}. 
States \(\statet{1}, \dots, \statet{m}\) represent the available tiles of $\tiles$, and coordinate states allow for a binary representation of the horizontal and vertical coordinates
of a square in the grid.
For a datum \(d\), the agents of datum \(d\) in the coordinate states encode the position of the square corresponding to \(d\),
and an agent of datum \(d\) in state \(\statet{i}\) indicates that the square in the grid corresponding to \(d\) should be coloured according to tile \(t_i\).  Configurations in $\setof{E}$ are not allowed to have two agents of same datum playing the same role; otherwise, one of them may observe the other and go to $\cheatstate$. In particular, each datum has at most $2n+1$ agents in the tiling gadget. 

To obtain \ref{ingr:no-dup}, we use a \emph{duplication gadget}, partially represented in \cref{fig:lowerbound-summary}. 
We enforce that any configuration in $\setof{E}$ has one agent of each datum in $D_a$, one in $D_b$ and none in the rest of the duplication gadget.
The blue part implements a test (depicted in \cref{fig:no_duplication} in \cref{app:lowerbound}) where two agents of distinct data, one from $D_a$ and one from $D_b$, may test that their data encode the same coordinates; if this is the case, they may go to $\dupstate$. If there are more than two agents in the blue part, this test is not reliable but $\stateover{3}$ can be covered. 
\cref{ingr:no-dup} can therefore be achieved by enforcing that configurations in $\setof{E}$ are \emph{not} in $\prestar{\presenceformula{\dupstate} \cap \setcomplement{\prestar{\presenceformula{\stateover{3}}}}}$.

Finally, we explain how \ref{ingr:horizontal-verif} is achieved; we describe only how the horizontal (left-right) borders are verified. We use a gadget, named \emph{horizontal verifier} in \cref{fig:lowerbound-summary}. In this gadget, a single agent, called \emph{verifier}, is in charge of verifying that colours of left-right borders match. The verifier uses $2n$ auxiliary agents to encode two variables $x,y \in \nset{0}{2^n-1}$ in binary. Again, transitions to $\cheatstate$ detect when two agents play the same role, so that there is only one verifier and so that variables $x$ and $y$ can be implemented faithfully. The initialisation  $x =y = 0$ is enforced as a constraint in $E$. We now sketch how the verifier reads the encoded tiling; to do that, it must synchronise with the datum encoding $(x,y)$.  

 This is done using the synchronisation gadget of \cref{fig:lowerbound-summary}. In $\setof{E}$, all agents in the synchronisation gadget are in $S^{(h)}$. Moreover, we add a constraint in $E$ so that $\config \in \setof{E}$ requires that there is a run from $\config$ where all agents in the synchronisation gadget end in $F^{(h)}$ and where the verifier ends in $q_f^{(h)}$.  
 The synchronisation tests guarantee that, whenever there is an agent in $T_i^{(h)}$, this agent's datum encodes square $(i,j)$ where $i$ is equal to the current value of $x$ and $j$ is equal to the current value of $y$. The synchronisation is challenging to design because the values of $x$ and $y$ may change throughout a run and only one bit can be tested at a time. However, as proved in \cref{sync-test} of \cref{app:lowerbound}, this can be achieved by having a first synchronisation test that checks equality of bits from most to least significant, and a second test that checks equality from least to most significant.
\end{proof}

\subsection{Discussion on Complexity Gaps}
We now discuss some complexity gaps left open by this paper.
First, there remains a complexity gap for the "emptiness problem for GRE", which is known to be between \conexpt (\cref{nexptime-hardness}) and \exps (\cref{thm:main-ioppud}). Closing the gap appears challenging. On one hand, if the problem is below \exps, then this probably requires developing new techniques. On the other hand, proving \exps-hardness does not seem easy. In particular, the synchronisation techniques from \cref{nexptime-hardness} assumes that each datum synchronises only once with the verifier. This synchronisation technique would not suitable for, \eg, multiple interactions between the head and the cells of a Turing machine.

Another, arguably more important open question is the exact complexity of "well-specification", which is only known to be between \pspace (model without data, \cite{EsparzaRW2019}) and \exps (\cref{thm:main-ioppud}). On the one hand, it is unclear whether relevant configurations can be stored in polynomial space:
\begin{claim}
\label{well-spec-exp}
The number of data to consider for "well-specification" may have to be exponential.
\end{claim}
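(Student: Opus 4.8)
The plan is to exhibit, for each $n \ge 1$, an "IOPPUD" $\prot_n$ of size polynomial in $n$ that is \emph{not} "well-specified" but such that every "initial configuration" witnessing this fact contains at least $2^n$ distinct data; this makes the number of data one has to inspect for "well-specification" exponential in the size of the protocol. By \cref{well-spec-gre}, $\prot_n$ is not "well-specified" exactly when $\semanticsEP{E_{\textsf{ws}}}{\prot_n} \ne \emptyset$, so it suffices to design $\prot_n$ so that $\semanticsEP{E_{\textsf{ws}}}{\prot_n}$ is nonempty and contains only "configurations" with $\ge 2^n$ data.

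The protocol $\prot_n$ builds on the binary-encoding machinery from the proof of \cref{nexptime-hardness}. Each datum may \emph{declare} a value in $\nset{0}{2^n-1}$ by placing, for each bit $i \in \nset{1}{n}$, one agent in a state $b_i(0)$ or $b_i(1)$. In addition, a distinguished \emph{control component} --- a control agent together with $n$ counter-bit agents sharing its datum and forced to encode $0$ in every "initial configuration" --- maintains a current count $c \in \nset{0}{2^n-1}$ and, using "immediate observation", repeatedly tries to \emph{certify} $c$ by locating, bit by bit, a datum whose declared value equals $c$; each successful certification increments $c$, and certifying $c = 2^n-1$ sends the control agent to a sink state $q_{\mathsf{acc}}$ of output $\top$. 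Every other state has output $\bot$; $q_{\mathsf{acc}}$ and a further sink $q_{\mathsf{rej}}$ of output $\bot$ are \emph{attracting}, i.e.\ when occupied they draw every other agent in. As in \cref{sec:undecidability}, the unique initial state has an internal transition to $q_{\mathsf{rej}}$, so every "initial configuration" has a "fair run" "stabilising" to $\bot$; moreover the certification mechanism is hardened (by being made destructive and by "violation detection" into $q_{\mathsf{rej}}$, both along the lines of \cref{sec:undecidability}) so that no datum can be used to certify two distinct values.

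For correctness, first note that the "configuration" $\config_0$ holding one well-formed control component at $c = 0$ together with exactly one well-formed datum for each value of $\nset{0}{2^n-1}$ admits a "fair run" in which the control agent certifies $0, 1, \dots, 2^n-1$ in turn, reaches $q_{\mathsf{acc}}$, and attracts every agent there; this "run" "stabilises" to $\top$, while $\config_0$ also has the "fair run" straight to $q_{\mathsf{rej}}$, so $\prot_n$ is not "well-specified". Conversely, let $\config_0 \in \initconfigs$ be a witness. By \cref{well-spec-gre} there is a "fair run" $\run$ from $\config_0$ that does not "stabilise" to $\bot$; since $q_{\mathsf{acc}}$ is the only non-$\bot$ state, a "run" avoiding $q_{\mathsf{acc}}$ meets only $\bot$-consensus "configurations" and "stabilises" to $\bot$, so $\run$ must place an agent in $q_{\mathsf{acc}}$, which forces the control agent to have certified every value of $\nset{0}{2^n-1}$. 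As certifications are sound (the consulted datum really declares the certified value) and no datum certifies two values, the $2^n$ certifications involve $2^n$ data with pairwise distinct declared values; since "runs" of "IOPPUD" neither create data nor change an agent's datum, these data already appear in $\config_0$. Hence $\config_0$ has at least $2^n$ data.

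The main obstacle is making the bit-by-bit certification both \emph{sound} and \emph{robust} against adversarial "initial configurations": a single ``chameleon'' datum carrying conflicting bits, or a datum carrying a huge reservoir of auxiliary agents, must not be able to certify many values cheaply. Soundness of one comparison despite the count changing between consecutive certifications is obtained by the two-pass device of \cref{sync-test} (scanning bits from most to least significant, then from least to most significant), which is applicable because the count is monotone and each value is certified once. Forcing each datum to certify at most one value is the delicate point; it is secured by making certification \emph{destructive} (the datum-agents consulted are moved to a dead state, so none is consulted twice) and by invoking the cheat-state mechanism of \cref{sec:undecidability} dynamically, rather than through a GRE constraint as in \cref{nexptime-hardness}, so as to neutralise data carrying large reservoirs of auxiliary agents. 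Writing out these "violation detection" transitions and checking that $q_{\mathsf{acc}}$ has no spurious predecessor is routine.
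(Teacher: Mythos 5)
Your construction is essentially the paper's: a polynomial-size protocol assembled from the binary-encoding, verifier and two-pass synchronisation machinery of \cref{nexptime-hardness}, with the static GRE constraints replaced by dynamic cheat-state violation detection (unguarded transitions into the attracting error state from every intermediate synchronisation state), so that the only witnesses of non-well-specification must carry one datum per value in $\nset{0}{2^n-1}$. One small caveat: the ``destructive consultation'' you add (moving the consulted datum-agents to a dead state) is not directly implementable in an immediate-observation protocol, since an observed agent cannot be forced to move --- but it is also not needed, because the acyclic synchronisation gadget together with \cref{sync-test} already guarantees that each faithfully-encoded datum can be consulted for at most one value of the counter.
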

The claim is formalised and proven in \cref{app:proof-well-spec-exp}. As a consequence, proving that the problem is in \pspace cannot be achieved with a procedure that explicitly stores configurations. 
On the other hand, in order to build a reduction from the tiling problem as in \cref{nexptime-hardness}, we need a new idea to enforce that \emph{at most} one datum encodes each tile. In \cref{nexptime-hardness}, we had states $\dupstate$ and $\stateover{3}$ and duplication meant being able to cover $\dupstate$ and, at the same, forbid that $\stateover{3}$ can ever be covered in the future. We do not know how to encode this constraint when working with an instance of well-specification. 
 
	\section{Conclusion}
	\label{sec:conclusion}
	We have studied the verification of "population protocols with unordered data" \cite{BL23}, an extension of population protocols where agents carry data from an infinite unordered set. We first proved that the "well-specification" problem is undecidable (\cref{undec-well-spec}), which then led us to consider the restriction to protocols with immediate observation. This subclass was defined in \cite{BL23}, where the authors proved that these protocols  compute exactly the "interval predicates". We defined a general class of problems on this model, which consists in deciding the existence of a configuration satisfying a so-called "generalised reachability expression"; this class of problems subsumes many classic problems, one of which is well-specification. Despite its generality, we showed the problem to be decidable in exponential space (\cref{thm:main-ioppud}); we also provided a \conexpt lower bound. A remaining open question is the exact complexity of "well-specification" for immediate observation population protocols with unordered data, which is located between \pspace (model without data, \cite{EsparzaRW2019}) and \exps (\cref{thm:main-ioppud}).

	\bibliography{biblio}

	\newpage
	\appendix

\section{Undecidability of Verification of Population Protocols with Unordered Data: Proof for Section \ref{sec:undecidability}}
\label{app:undec}

	\PropositionVerificationUndecidable*

A "\(2\)-counter machine" is a transition system using two counters \(x\) and \(y\).
  It consists of a list of instructions of the form: \(\increment(c)\), increment on counter \(c\); \(\decrement(c)\), decrement on counter \(c\); \(\zerotest(c,k)\), zero-test that moves to instruction $k$ if $c = 0$ (and to the next instruction otherwise); and \(\Halt\), halting instruction. The machine starts at the first instruction with both counters equal to \(0\). It is well known  that deciding whether a "$2$-counter machine" eventually reaches the $\Halt$ instruction is an undecidable problem.

	Let \(\mathsf{CM}\) be a "\(2\)-counter machine" with \(n\) instructions \(i_1, \dots, i_n\).

	Let \(Q_{CM}\coloneqq\{i_1,\dots, i_n, i_1', \dots, i_n'\}\). We define a "population protocol with unordered data" \(\Prot\) over infinite domain \(\Dataset\) as follows:

	\begin{enumerate}
		\item \(Q_{op} \coloneqq \{\idle, \increment, \decrement, \done, \zerotest, =0,>0\}\)
		\item \(Q_{main} \coloneqq Q_{CM} \cup \{\counterstate{x}, \counterstate{y}\} \cup \{\countercontrol{x}, \countercontrol{y}\} \times Q_{op} \cup \{R,\Uniq, \cheatstate\}\).
		\item \(Q \coloneqq Q_{main} \times \{R,\other\}\).
		\item \(\delta\) will be defined step by step later.
		\item \(I=\{(R,R),(\Uniq, \other), ((\countercontrol{x}, \idle), \other), ((\countercontrol{y}, \idle), \other), (i_1, \other)\}\).
		\item \(O(i_m, \other)=O(i_m, R)=\top\) if \(i_m=halt\), else \(O(q)=\bot\) for all other \(q \in Q\).
	\end{enumerate}

	The actual states of the protocol are \(Q\), but the second component is never updated; it simply remembers whether the initial state was \(R\) or not. We will hence mainly refer to \(Q_{main}\) and specify all except one transition in terms of \(Q_{main}\) only.

	Recall the role of each state:
	\begin{itemize}
		\item $R$ is a reservoir, which can contain many agents of all data.

		\item $\Uniq$ is a state in which all agents should have different data

		\item $Q_{CM}$ is the set of control states in which a single agent should evolve.

		\item for both $c \in \set{x,y}$, $\countercontrol{c} \times Q_{op}$ must contain a single agent at all times, with a datum that is not in $U$, which is in charge of synchronising with the agent in $Q_{CM}$ and apply the transitions of the machine to counter $\counterstate{c}$.

		\item $\counterstate{x}, \counterstate{y}$ represent the counters: all agents in each $\counterstate{c}$ should have the same datum as the current agent in $\countercontrol{c} \times Q_{op}$, and the number of agents in $\counterstate{c}$ represents the current counter value.
	\end{itemize}
We will ensure that those conditions are met using the "violation detection" mechanisms below.

	 We use the following notation: For \(p,p',q,q' \in Q\) we denote \(((p,p'), \sometest, (q,q')) \in \delta\) by \(p,p' \mapsto_{\sometest} q, q'\), or leave away \(\sometest\) if this transition is enabled in either case. The first transition we present allows us to guarantee that at most one agent per datum is not in $R$ initially.

	\begin{align}
		\tag{Input Violation} \label{tra:InputViolation}
		&(p, \other), (p', \other) \mapsto_{=} (\cheatstate, \other), (\cheatstate, \other) & p, p' \in Q_{main}
	\end{align}

	This transition detects that a "violation" occurred; multiple agents of the same datum started in \(\Uniq\) or in the control states. It overwrites whatever other transition would be defined between the states in \(Q_{main}\).
	It is important that it is also impossible for one of the counter control agents in $\set{\countercontrol{x}, \countercontrol{y}}\times Q_{op}$ to initially have a datum which is still in the pool \(\Uniq\), otherwise one cheat, \ie\ an incorrectly performed zero-test, would be undetectable at the start.

	We continue by specifying the other violation transitions; namely if two agents are in \(Q_{CM}\), if two agents are in \(\countercontrol{x}\), or if an agent in \(\countercontrol{x}\) meets an agent in \(\counterstate{x}\) of different datum:

	\begin{align}
		\tag{Counter Colour Violation} \label{tra:CounterColorViolation}
		&(\countercontrol{c}, b), \counterstate{c} \mapsto_{\neq} \cheatstate, \cheatstate & b \in Q_{op}, c \in \{x,y\} \\
		\tag{Counter Control Violation} \label{tra:ControlStateViolation1}
		&(\countercontrol{c}, b), (\countercontrol{c}, b') \mapsto \cheatstate, \cheatstate & b, b' \in Q_{op}, c \in \{x,y\} \\
		\tag{Control State Violation} \label{tra:ControlStateViolation2}
		&q, q' \mapsto \cheatstate, \cheatstate & q,q' \in Q_{CM} \\
		\tag{Convert To Sink} \label{tra:ConvertToSink}
		&\cheatstate, q \mapsto \cheatstate, \cheatstate & q \in Q_{main}
	\end{align}

	The \eqref{tra:ConvertToSink} transition informs other agents about violations.

	Next we explain the actual counter machine simulation. Increments and Decrements are performed via a sequence of three transitions each as follows.

	\begin{align}
		\tag{Start Increment c} \label{tra:StartIncreC}
		& i_m, (\countercontrol{c}, \idle) \mapsto i_m', (\countercontrol{c}, \increment) & c\in \{x,y\}, i_m=\increment(c) \\
		\tag{Increment c} \label{tra:IncreC}
		& (\countercontrol{c}, \increment), R \mapsto_{=} (\countercontrol{c}, \done), \counterstate{c} & c \in \{x,y\}\\
		\tag{End Operation on c} \label{tra:EndOperationC}
		& (\countercontrol{c}, \done), i_m' \mapsto (\countercontrol{c}, \idle), i_{m+1} & c \in \{x,y\}, m \in \{1,\dots, n\} \\
		\tag{Start Decrement c} \label{tra:StartDecreC}
		& i_m, (\countercontrol{c}, \idle) \mapsto i_m', (\countercontrol{c}, \decrement) & c\in \{x,y\}, i_m=\decrement(c) \\
		\tag{Decrement c} \label{tra:DecreC}
		& (\countercontrol{c}, \decrement), \counterstate{c} \mapsto_{=} (\countercontrol{c}, \done), R & c\in \{x,y\}
	\end{align}

	To simulate zero test instructions, the counter control agent in \(\countercontrol{c}\) is informed about the instruction; afterwards they either meet an agent in \(\counterstate{c}\) of their own datum, in which case they decide \(>0\), or an agent in \(\Uniq\), in which case this new agent takes their place and the counter is assumed to be \(0\).

	\begin{align}
		\tag{Zerotest Start} \label{tra:decrementC}
		& i_m, (\countercontrol{c}, \idle) \mapsto i_m', (\countercontrol{c}, \zerotest) & i_m=\zerotest(c,k), c\in \{x,y\}, k\in \N \\
		\tag{c=0}
		& (\countercontrol{c}, \zerotest), \Uniq \mapsto R, (\countercontrol{c}, =0) & c\in \{x,y\} \\
		\tag{c>0}
		& (\countercontrol{c}, \zerotest), \counterstate{c} \mapsto (\countercontrol{c}, >0), \counterstate{c} & c\in \{x,y\} \\
		\tag{End Zerotest =0}
		& i_m', (\countercontrol{c}, =0) \mapsto i_k, (\countercontrol{c}, \idle) & i_m=\zerotest(c,k), c\in \{x,y\}, k \in \N \\
		\tag{End Zerotest >0}
		& i_m', (\countercontrol{c}, >0) \mapsto i_{m+1}, (\countercontrol{c}, \idle) & i_m=\zerotest(c,k), c \in \{x,y\}, k \in \N
	\end{align}

	\textbf{Correctness}: First assume that the counter machine \(\mathsf{CM}\) does halt.

	Let \(k\) be the number of steps \(\mathsf{CM}\) requires to halt. Let \(\config_0\) be an initial configuration with \(\config_0(d_i,R)\geq k\) and \(\config_0(d_i, \Uniq)=1\) for at least \(k\) different data \(d_1, \dots, d_k\). Moreover, assume that in \(\config_0\) there is no input violation, \ie\ all transitions with ``Violation'' as part of the name are disabled. In particular, \(\config_0(\Dataset, i_1)=\config_0(\Dataset, \countercontrol{x})=\config_0(\Dataset, \countercontrol{y})=1\), \ie\ we do not have a violation in the counter or the control states.

	We perform the following transition sequence \(\sigma\): Using the \(k\) data above, we correctly simulate the counter machine until it halts. Observe that any simulating transition takes at most \(1\) agent out of \(R\) and at most one new datum from \(\Uniq\), hence we do not run out of agents in \(R, \Uniq\). Call the reached configuration \(\config\). This configuration is clearly a deadlock, since we do not have any violations to detect, and the control agent in \(Q_{CM}\) does not start any new operation. However, some agent is in the halt instruction, \ie\ we indeed have a fair (in fact terminal) run which does not output \(\bot\).

	Now assume that \(\mathsf{CM}\) does \emph{not halt}. We start with three important observations, which will be used to prove that violation transitions can only be disabled by occurring:

	\begin{enumerate}
		\item \(\config(Q_{CM},\Dataset)\), \(\config(\{\countercontrol{x}\} \times Q_{op}, \Dataset)\) and \(\config(\{\countercontrol{y}\} \times Q_{op}, \Dataset)\) are preserved by all transitions except transitions moving an agent to the sink state.
		\item Agents who enter \(\{\countercontrol{x}, \countercontrol{y}\} \times Q_{op}\) are always from \(\Uniq\), a previously unused datum.
		\item Agents who enter or leave \(\counterstate{x},\counterstate{y}\) are always the same datum as the corresponding \(\countercontrol{x}, \countercontrol{y}\).
	\end{enumerate}

	We have to prove that every initial configuration has a unique output. Let \(\config_0\) be any initial configuration. We claim that every fair run \(\pi=(\config_0,\config_1, \dots)\) "stabilises" to output \(\bot\). Assume that in \(\pi\) no violation occurs, \ie \(\config_m(\cheatstate)=0\) for all \(m \in \N\), otherwise \(\pi\) has output \(\bot\) via transition \eqref{tra:ConvertToSink}, since agents cannot leave the sink.

	Initially all agents not starting in \(R\) have a different datum, otherwise \eqref{tra:InputViolation} would eventually occur. Furthermore, by observation 1, \(Q_{CM}, (\countercontrol{x}, \idle)\) and \((\countercontrol{y},\idle)\) all start with one agent each, otherwise the corresponding violation transition would eventually occur by fairness.  We claim that in \(\pi\) the counter machine is simulated faithfully. Assume the opposite. The only way to not simulate the counter machine correctly is by performing a zero-test wrong. That is, at some \(\config_m\) the goto \(k\) part of a \(\zerotest(c,k)\) instruction was applied with \(\config_m(\counterstate{c},\adatum)>0\) for some \(\adatum \in \Dataset\).

	However, by observation 2, the unique agent with a state in the set \(\{\countercontrol{c}\} \times Q_{op}\) will always have a data \(\adatum' \neq \adatum\) for the rest of the run. By observation 3, the agent in state \(\counterstate{c}\) with data \(\adatum\) can hence never leave, \ie\ we have \(\config_l(\counterstate{c},\adatum)>0\) for all \(l \geq m\). This implies that \eqref{tra:CounterColorViolation} is always enabled, and would eventually occur, contradiction.

	Hence in \(\pi\) the counter machine is faithfully simulated. Since \(\mathsf{CM}\) does not halt, we have \(\config_l(i_k, \Dataset)=0\) for all \(l \in \N\) and halt instructions \(i_k\). Since this is the only state with output \(\top\), every agent always has output \(\bot\), and the run \(\pi\) hence has output \(\bot\) as required, concluding the proof.

As a final remark, this reduction in fact also establishes that the following problem is undecidable: given two interval predicates \(\varphi_1, \varphi_2\) and a "PPUD" \(\prot\), decide whether the "GRE" \(\poststar{\varphi_1} \cap \varphi_2\) is empty. To see this, instead of using "violation detection", use \(\varphi_1\) to encode the restrictions on "initial configurations" and \(\varphi_2\) to encode that at the end, no agent in \(\counterstate{x}\) or \(\counterstate{y}\) is supposed to have a different datum than the corresponding agent in \(\countercontrol{x}\) or \(\countercontrol{y}\).

 	\section{An Analysis of Immediate Observation Protocols with Data: Proofs for Section~\ref{sec:bounds-observed-agents}}
\label{app:bounds-observed-agents}

We recall a few notations: 
Let $\run : \config_1 \to \config_2 \to \cdots \to \config_{m}$ be a "run".
For $i \in [1,m]$, let $\prefixrun{\run}{i}$ (resp.\ $\suffixrun{\run}{i}$) denote the prefix of $\run$ ending on its $i$-th "configuration" (resp.\ the suffix of $\run$ starting on its $i$-th configuration).

We let $\Agentset_{\rho}$
be the set of agents "appearing" in $\run$, and set $\Agentset^d_{\rho} \coloneqq \set{a \in \Agentset_{\rho} \mid \dataof(a) = d}$. 
We let $\observedagents{\run}{d}$ be the set of agents with datum $d$ that are "observed" in $\run$, 
\ie, the $a_o \in \Agentset^d_{\rho}$ such that there exists a "step" $\config \step{\sometest}{a}{a_o} \config'$ in $\run$.
For all $q_1, q_m \in Q$, we let $\Agentset^d_{\rho, q_1, q_m}$ be the set of agents with datum $d$ that start in $q_1$ and end in $q_m$,
\ie, the $a \in \Agentset^d_{\rho}$ such that $\config_1(a) = q_1$ and $ \config_m(a) = q_m$. 

\LemmaAgentsCore*

\begin{proof}
	Let $\run : \config_1 \to \config_2 \to \cdots \to \config_{m}$.
	Suppose there are $d\in \Dataset$ and $q_s, q_e \in Q$ such that $\size{\Agentset^d_{\run,q_s, q_e}} > \size{Q}$ (otherwise we can set $\run' \coloneqq \run$).
	Let $\bunchreach$ be the set of states visited by agents of $\Agentset^d_{\run,q_s, q_e}$ during $\run$. Notice that $|\bunchreach| \leq |Q|$. We are going to define a family $(a_q)_{q \in \bunchreach}$ of pairwise distinct agents and corresponding trajectories and identify them with agents in $\Agentset^d_{\run,q_s, q_e}$ such that reducing $\Agentset^d_{\run,q_s, q_e}$ in $\run$ to $(a_q)_{q \in \bunchreach}$ still yields a valid run.
	Repeating the operation for every $\Agentset^d_{\run,q_s, q_e}$ will yield a "run" fulfilling the conditions of the lemma.
	
	We iterate through $\bunchreach$ as follows. Let $q$ be a state in $\bunchreach$ and
	let $f$ be the first moment $q$ is reached in $\run$,
	\ie, the minimal index such that there exists an $a\in \Agentset^d_{\run,q_s, q_e}$
	with $\gamma_f(a)=q$.
	Let $\ell$ be the last moment $q$ is occupied in $\run$,
	\ie, the maximal index such that there exists an $a\in \Agentset^d_{\run,q_s, q_e}$
	with $\gamma_\ell(a)=q$.
	Let $\alpha_q$ be an agent in $\Agentset^d_{\run,q_s, q_e}$ that reaches $q$ first,
	\ie, $\gamma_f(\alpha_q)=q$, and
	let $\beta_q$ be an agent in $\Agentset^d_{\run, q_s, q_e}$ that leaves
	$q$ last,
	\ie, $\gamma_l(\beta_q)=q$.
	Note that these agents do not have to be distinct.
	
	We pick a fresh agent $a_q \notin \Agentset_\run$ with $\dataof(a_q) = d$ and modify $\run$ as follows. We let $a_q$ copy $\alpha_q$ in $\run[\to f]$, then $a_q$ stays idle until $\beta_q$ leaves $q$ (for the last time) and then $a_q$ copies $\beta_q$ in $\run[\ell \to]$. We do this for every $q \in \bunchreach$. To see that this still yields a valid "run" $\run''$, we can apply Lemma \ref{lem:agents-copycat} stepwise.
	
	In $\run''$, for every "step" in which an $a_o$ in $\Agentset^d_{\run,q_s, q_e}$ is observed in state $q$, let $a_q$ be observed instead, \ie, replace steps $\step{\sometest}{a}{a_o}$ with $\step{\sometest}{a}{a_q}$.
	Now remove all steps involving agents in $\Agentset^d_{\run'',q_s,q_e} \setminus \{a_q \mid q \in \bunchreach\}$.
	To see that this yields a valid run, note that we only need to verify that every "step" is valid.
	This is the case, since, whenever an agent in $\Agentset^d_{\rho'',q_s, q_e}$ is observed in state $q$ in $\rho''$, by construction of the trajectory for $a_q$, the agent $a_q$ is also in state $q$.
	To render $\Agentset^d_{\rho'',q_s,q_e}$ a subset of $\Agentset^d_{\rho,q_s,q_e}$, we identify (or substitute) each $a_q$ with a distinct agent in $\Agentset^d_{\rho,q_s,q_e}$.
	
	By applying this transformation to all data in $\Dataset_\run$ and all pairs of states $q_s, q_e$ for which $\size{\Agentset^d_{\run,q_s, q_e}} > \size{Q}$,
	we obtain a "run" $\run'$ in which, for all $d$, $q_s$, $q_e$,
	at most $\size{Q}$ agents go from $q_s$ to $q_e$ with datum $d$.
	In total, for each $d$, there are most $\size{Q}^3$ $d$-agents.
	\ref{agentscore-item1} is guaranteed by the fact that all agents of $\Agentset^d_{\run',q_s, q_e}$ are from $\Agentset^d_{\run,q_s, q_e}$.
	\ref{agentscore-item2} follows from the fact that for all $d$, $q_s$, $q_e$, either $|\Agentset^d_{\run,q_s, q_e}|\leq \size{Q}$ and $\Agentset^d_{\run',q_s, q_e} = \Agentset^d_{\run,q_s, q_e}$ or $|\Agentset^d_{\run,q_s, q_e}|> \size{Q}$ and $|\Agentset^d_{\run',q_s, q_e}| = |Q|$ as we have replaced those agents by the agents $a_q$.
\end{proof}

\LemmaDataCore*

\begin{proof}
	Let $\run \colon \config_1 \to \config_2 \to \cdots \to \config_{m}$.
  We proceed similarly as in the proof of \cref{lem:agents-core-lemma}, lifting the proof from agents to data.

  For a datum $d$ and \AP$i \in \nset{1}{m}$, we let \(\intro* \splittr{\run}{d}(i) \colon Q^3 \to \nats\),
  called the ""split trace of $d$ in $\run$ at $i$"",
  be such that for all $q_1, q_2, q_3 \in Q$, we have
  \[\splittr{\run}{d}(i)(q_1, q_2,q_3) \deff
  \bigsize{\Agentset^d_{\prefixrun{\run}{i},q_1,q_2} \cap \Agentset^d_{\suffixrun{\run}{i},q_2,q_3}}.\]
  Thus, for each triple of states $(q_1, q_2,q_3)$, the value of \(\splittr{\run}{d}(i)(q_1, q_2,q_3)\) is the number of $d$-agents that were in $q_1$ at the start of $\run$, in $q_2$ at the $i$th configuration, and in $q_3$ at the end.

  Let \(\shpair \deff \setc{\splittr{\run}{d}(i)}{d \in \Dataset_\run, i \in \nset{1}{m}}\).
  As there are no more than $K$ agents of each datum in $\run$,
  it holds that \(\splittr{\run}{d}(i)(q_1, q_2, q_3) \in \nset{0}{K}\) for all data \(d\),
  for \(i \in \nset{1}{m}\), and states \(q_1, q_2, q_3 \in Q\).
  Hence, $\size{\shpair} \leq M \deff (K+1)^{\size{Q}^3}$.

  For $\atrace \colon Q^2 \to \nset{0}{K}$,
  let $\Dataset_\run^\atrace \deff \set{d \in \Dataset_{\run} \mid \trace{d}{\run} = \atrace}$
  be the data in $\run$ with "trace" $\atrace$.
  If $\size{\Dataset_\run^\atrace} \leq M$ for all $\atrace \colon Q^2 \to \nset{0}{K}$,
  then \(\size{\Dataset_\run} \leq M \cdot (K+1)^{\size{Q}^2} = (K+1)^{\size{Q}^3+\size{Q}^2}\),
  so \(\rho' \deff \rho\) already fulfils the requirements of the lemma.

  Hence, in the following, let $\atrace \colon Q^2 \to \nset{0}{K}$ be a "trace" with $\size{\Dataset_\run^\atrace} > M$.
We define $\shpair_\atrace = \set{\splittr{\run}{d}(i) \mid d \in \Dataset_\run^\atrace, i\in\nset{1}{m}}$ the set of "split traces" corresponding to some datum $d$ of "run"  $\run$ with "trace" $\atrace$ at some point $i$.
  Note that \(\shpair_\atrace \subseteq \shpair\), and thus, $\size{\shpair_\atrace}\leq M$.

  For each "split trace" $str \in \shpair_\atrace$, we let $f_{str}$ be the minimal index such that there exists a datum $\delta_{str}$ in $\Dataset_\run^\atrace$ such that $\splittr{\run}{\delta_{str}}(f_{str}) = str$.
  Similarly, for each $str \in \shpair_\atrace$, we let $\ell_{str}$ be the maximal index such that there exists a datum $\epsilon_{str}$ in $\Dataset_\run^\atrace$ such that $\splittr{\run}{\epsilon_{str}}(\ell_{str}) = str$.
  Note that those data do not have to be distinct.
  Also note that, by definition, we have $f_{str} \leq \ell_{str}$ for all $str \in \shpair_\atrace$.
As $\splittr{\run}{\delta_{str}}(f_{str}) = str = \splittr{\run}{\epsilon_{str}}(\ell_{str})$,
  we can define a bijection $\gbij_{str} \colon \Agentset_\run^{\delta_{str}} \to \Agentset_\run^{\epsilon_{str}}$
  between $\delta_{str}$-agents and $\epsilon_{str}$-agents appearing in $\run$ such that
  $\config_{1}(a) = \config_{1}(\gbij_{str}(a))$, $\config_{m}(a) = \config_{m}(\gbij_{str}(a))$, and $\config_{f_{str}} (a) = \config_{\ell_{str}} (\gbij_{str}(a))$
  for all \(\delta_{str}\)-agents $a \in \Agentset_\run^{\delta_{str}}$.

  We pick $\size{\shpair_\atrace} \leq M$ arbitrary, but pairwise distinct, data $(\eta_{str})_{str \in \shpair_\atrace}$ in $\Dataset_{\run}^{\atrace}$, one for each "split trace" reached by a datum of $\Dataset_{\run}^{\atrace}$ in $\run$.
For each $str \in \shpair_\atrace$, since $\delta_{str}, \eta_{str} \in \Dataset_{\run}^{\atrace}$ have the same "trace" $\atrace$ in $\run$,
  we can define a bijection $\fbij_{str} \colon \Agentset_\run^{\eta_{str}} \to \Agentset_\run^{\delta_{str}}$ between $\eta_{str}$-agents and $\delta_{str}$-agents appearing in $\run$
  so that $\config_{1}(a) = \config_{1}(\fbij_{str}(a))$ and $\config_{m}(a) = \config_{m}(\fbij_{str}(a))$ for all $\eta_{str}$-agents $a \in \Agentset_\run^{\eta_{str}}$.

  The function $\lbij_{str} \deff \gbij_{str} \circ \fbij_{str} \colon \Agentset_\run^{\eta_{str}} \to \Agentset_\run^{\epsilon_{str}}$
  is then a bijection between $\eta_{str}$-agents and $\epsilon_{str}$-agents
  such that $\config_{1}(a) = \config_{1}(\lbij_{str}(a))$ and $\config_{m}(a) = \config_{m}(\lbij_{str}(a))$ for all $\eta_{str}$-agents $a \in \Agentset_\run^{\eta_{str}}$.

	Intuitively, we will replace $\Dataset_{\run}^{\atrace}$ with a subset $\set{\eta_{str} \mid str \in \shpair_\atrace}$. 
	Agents of each $\eta_{str}$ will be in charge of allowing all "external observations"  on agents of data $d \in \Dataset_{\run}^\atrace$ in carried when that datum matches "split trace" $str$.
	
	To do so, agents of $\eta_{str}$ start by mimicking the moves of $\delta_{str}$ (the first datum of $\Dataset_{\run}^\atrace$ to reach $str$) until $f_{str}$. This is done by making each agent copy its image by $\fbij_{str}$.
	Then, agents of $\eta_{str}$ remain idle and allow all aforementioned "external observations" to be carried out. 
	Finally, when point $\ell_{str}$ is reached in $\run$, the agents of $\eta_{str}$ copy their images by $\lbij_{str}$ until the end. 

	This is formalised by the following claim.
	
  \begin{claim}
    There is a "run"  $\tilde{\run} \colon \tilde{\config}_1 \runto \tilde{\config}_2 \runto \cdots \runto \tilde{\config}_m$
    with \(\Dataset_{\tilde{\run}} = (\Dataset_{\run} \setminus \Dataset_{\run}^{\atrace}) \cup \setc{\eta_{str}}{str \in \shpair_\atrace}\)
    such that for all $i \in \nset{1}{m}$, the following holds.
    \begin{enumerate}
      \item
        \label{claim:data-core-keep-agents}
        For all $d \in \Dataset_{\tilde{\run}}$, we have $\Agentset_{\tilde{\run}}^d = \Agentset_\run^d$.
      \item
        \label{claim:data-core-keep-agent-trajectories}
        For all $d \in \Dataset_{\run} \setminus \Dataset_{\run}^{\atrace}$ and $a \in \Agentset_\run^d$, we have $\tilde{\config}_i(a) = \config_i(a)$.
      \item
        \label{claim:data-core-copy-agent-trajectories}
        For all $str \in \shpair_\atrace$ and all $a \in \Agentset_\run^{\eta_{str}}$,
      \begin{itemize}
        \item if $i \leq f_{str}$, then $\tilde{\config}_{i}(a) = \config_i(\fbij_{str}(a))$,

        \item if $f_{str}\leq i\leq \ell_{str}$, then $\tilde{\config}_{i}(a) = \config_{f_{str}}(\fbij_{str}(a)) = \config_{\ell_{str}} (\textbf{l}_{str}(a))$, and

        \item if $i \geq \ell_{str}$ then $\tilde{\config}_{i}(a) = \config_i(\textbf{l}_{str}(a))$.
      \end{itemize}
    \end{enumerate}
  \end{claim}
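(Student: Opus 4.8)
The plan is to construct $\tilde\run$ inductively along $\run$, treating conditions \ref{claim:data-core-keep-agents}, \ref{claim:data-core-keep-agent-trajectories} and \ref{claim:data-core-copy-agent-trajectories} \emph{themselves} as the invariant. Once the agent set of $\tilde\run$ is fixed — by \ref{claim:data-core-keep-agents} it must be $\bigl(\Agentset_\run \setminus \bigcup_{d \in \Dataset_\run^{\atrace}} \Agentset_\run^d\bigr) \cup \bigcup_{str \in \shpair_\atrace} \Agentset_\run^{\eta_{str}}$ — the three conditions pin down a unique candidate configuration $\tilde\config_i$ for every $i \in \nset{1}{m}$, so it only remains to show that $\tilde\config_i \runto \tilde\config_{i+1}$ by mirroring the $i$-th step of $\run$. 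Before the case analysis I would record three facts. First, condition \ref{claim:data-core-copy-agent-trajectories} is internally consistent at the boundaries $i = f_{str}$ and $i = \ell_{str}$: from $\splittr{\run}{\delta_{str}}(f_{str}) = \splittr{\run}{\epsilon_{str}}(\ell_{str}) = str$ and the defining properties of $\fbij_{str}, \gbij_{str}$ and $\lbij_{str} = \gbij_{str} \circ \fbij_{str}$ we get $\config_{f_{str}}(\fbij_{str}(a)) = \config_{\ell_{str}}(\lbij_{str}(a))$, so the prescribed state of an $\eta_{str}$-agent $a$ changes between $\tilde\config_i$ and $\tilde\config_{i+1}$ \emph{only} when ($i < f_{str}$ and $\fbij_{str}(a)$ is the active agent of the $i$-th step of $\run$) or ($i \geq \ell_{str}$ and $\lbij_{str}(a)$ is that active agent). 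Second, if $d \in \Dataset_\run^{\atrace}$ has $\splittr{\run}{d}(i) = str$, then $f_{str} \leq i \leq \ell_{str}$, and at $\tilde\config_i$ the $\eta_{str}$-agents occupy exactly the same multiset of states as the $d$-agents do in $\config_i$ (both read off from $str$ via $q \mapsto \sum_{q_1,q_3} str(q_1,q,q_3)$). Third (extremality): if $d \in \Dataset_\run^{\atrace}$ has $\splittr{\run}{d}(i) = str'$ then $f_{str'} \leq i \leq \ell_{str'}$, and if additionally no $d$-agent is active in the $i$-th step then $\splittr{\run}{d}(i{+}1) = str'$ as well.

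Now consider the $i$-th step $\config_i \step{\sometest}{a^*}{a_o} \config_{i+1}$ of $\run$, with $a^*$ moving from $q_2$ to $q_3$ and $a_o$ in $q_1$. By the first fact, the agents of $\tilde\run$ that must move to pass from $\tilde\config_i$ to $\tilde\config_{i+1}$ are: $a^*$ itself if $\dataof(a^*) \notin \Dataset_\run^{\atrace}$; the agent $\fbij_{str}^{-1}(a^*)$ for each $str$ with $\dataof(a^*) = \delta_{str}$ and $i < f_{str}$; and the agent $\lbij_{str}^{-1}(a^*)$ for each $str$ with $\dataof(a^*) = \epsilon_{str}$ and $i \geq \ell_{str}$ — in every case the required transition is the same $q_2 \to q_3$. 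I would perform these moves one after another in $\tilde\run$, picking the observed agent as follows for a moving agent $b$. If $\sometest$ is ``$=$'', then $\dataof(a_o) = \dataof(a^*)$ and I let $b$ observe the ``parallel copy'' of $a_o$ in $b$'s own datum — $a_o$ itself, or $\fbij_{str}^{-1}(a_o)$, or $\lbij_{str}^{-1}(a_o)$, matching the bijection used to place $b$; by the first fact this copy is idle in step $i$, by \ref{claim:data-core-copy-agent-trajectories} it sits in $q_1$ at $\tilde\config_i$, and it is a distinct agent from $b$ since $a_o \neq a^*$. If $\sometest$ is ``$\neq$'' and $\dataof(a_o) \notin \Dataset_\run^{\atrace}$, I let $b$ observe $a_o$ directly (it is in $q_1$, idle, and of a datum distinct from $b$'s). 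If $\sometest$ is ``$\neq$'' and $\dataof(a_o) \in \Dataset_\run^{\atrace}$, set $str' = \splittr{\run}{\dataof(a_o)}(i)$ and let $b$ observe any $\eta_{str'}$-agent lying in $q_1$ at $\tilde\config_i$; such an agent exists by the second fact, and the extremality fact forces $str'$ to be different from the split trace $str$ for which $b$ is an $\eta_{str}$-agent — for the $\fbij$-copies because $f_{str'} \leq i < f_{str}$, and for the $\lbij$-copies because $str' = str$ would give $i = \ell_{str}$ and, since $a_o$ is idle, $\dataof(a_o)$ would still have split trace $str$ at step $i{+}1$, contradicting maximality of $\ell_{str}$. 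Thus each chosen observed agent has the correct datum relation and, being never one of the agents moved in this block of sub-steps, stays in $q_1$ throughout, so the block is legal and produces exactly $\tilde\config_{i+1}$. When $\dataof(a^*) \in \Dataset_\run^{\atrace}$ and no relevant $str$ occurs, no agent of $\tilde\run$ moves and $\tilde\config_{i+1} = \tilde\config_i$.

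Conditions \ref{claim:data-core-keep-agents} and \ref{claim:data-core-keep-agent-trajectories} then hold by construction, since agents of surviving data are moved only in precisely the steps and ways they move in $\run$, and condition \ref{claim:data-core-copy-agent-trajectories} holds because each $\eta_{str}$-agent is moved only so that its state tracks $\fbij_{str}(\cdot)$ before $f_{str}$, stays constant on $\nset{f_{str}}{\ell_{str}}$, and tracks $\lbij_{str}(\cdot)$ after $\ell_{str}$. I expect the main obstacle to be exactly the disequality bookkeeping in the last observation case: one must always be able to locate an $\eta_{str'}$-agent in the right state whose datum is genuinely distinct from that of the observing agent and which is not itself being moved inside the current block — this is where the minimality of $f_{str}$ and the maximality of $\ell_{str}$ are indispensable, and where the degenerate situations ($q_1 = q_2$, $\delta_{str} = \epsilon_{str}$, or one datum playing the role $\delta$ or $\epsilon$ for several split traces at once) have to be dispatched with care. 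A secondary point to make explicit is that the $\eta_{str}$-agents must remain idle, hence available to be observed, throughout the whole window $\nset{f_{str}}{\ell_{str}}$ — which is precisely why they are kept idle there instead of being forced to mimic a single datum for the entire run.
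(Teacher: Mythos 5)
Your proposal is correct and follows essentially the same route as the paper's proof: an inductive, step-by-step simulation in which the surviving agents and the $\eta_{str}$-copies are moved exactly when their prescribed trajectories change, with the minimality of $f_{str}$ and the maximality of $\ell_{str}$ (together with the fact that only the single active agent can alter a split trace) used to discharge the disequality tests exactly as in the paper. Your treatment of the equality case --- letting the mover observe the parallel copy of $a_o$ under the same bijection that places the mover --- is the right choice and is in fact spelled out more carefully than in the paper's text, which at that point reuses the globally chosen observed agent $\tilde a_o$ even though its datum is $\eta_{str_o}$ for $str_o = \splittr{\run}{\dataof(a_o)}(i)$ rather than $\eta_{str}$.
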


  \begin{claimproof}
    We inductively define configurations \(\tilde{\config}_i\) and runs
    \(\tilde{\config}_1 \runto \cdots \runto \tilde{\config}_i\) for all \(i \in \nset{1}{m}\)
    that satisfy \cref{claim:data-core-keep-agents,claim:data-core-keep-agent-trajectories,claim:data-core-copy-agent-trajectories}.
    For all $a \in \Agentset$, we set $\tilde{\config}_1(a) \deff \config_{1}(a)$ if $\dataof(a) \notin \Dataset_{\run}^{\atrace}$ or if $\dataof(a) = \delta_{str}$ for some $str \in \shpair_\atrace$,
    and we set $\tilde{\config}_1(a) \deff *$ otherwise.
Now assume we have defined $\tilde{\run}$ up to $\tilde{\config}_i$ with $i<m$.
    We continue with the "run" up to $\tilde{\config}_{i+1}$.
Let $\config_{i} \step{\sometest}{a}{a_o} \config_{i+1}$ be the "step" at hand, and let $q \trans{q_o}{\sometest} p$ be the used transition.
\begin{itemize}
      \item
        If $\dataof(a_o) \notin \Dataset_{\run}^{\atrace}$,
        by \cref{claim:data-core-keep-agents,claim:data-core-keep-agent-trajectories},
        we have $\tilde{\config}_{i}(a_o) = \config_i(a_o) = q_o$,
        and we set $\tilde{a}_o \deff a_o$.

      \item
        If $d_o \deff \dataof(a_o) \in \Dataset_{\run}^{\atrace}$, then let $str_o \deff \splittr{\run}{d_o}(i)$.
        Since \(f_{str_o}\) and \(\ell_{str_o}\) were defined as the minimal and maximal indices where \(str_o\) occurs,
        we have $f_{str_o} \leq i \leq \ell_{str_o}$.
        Moreover, since $\config_i(a_o) = q_o$, we must have $str_o(q_{1,o}, q_o, q_{m,o}) >0$ for some $q_{1,o}, q_{m,o} \in Q$.
        As a result, there exists a $\delta_{str_o}$-agent $a'_o$ such that $\config_{f_{str_o}}(a'_o) = q_o$.
        We let \(\tilde{a}_o \deff \fbij_{str_o}^{-1}(a'_o)\).
        Then, by \cref{claim:data-core-copy-agent-trajectories}, we have
        $\tilde{\config}_{i}(\tilde{a}_o) = \config_{f_{str_o}}(a'_o) = q_o$.
    \end{itemize}
In both cases, we have $\tilde{\config}_{i}(\tilde{a}_o) = q_o$.
    We now define the "steps" $\tilde{\config}_i \runto \tilde{\config}_{i+1}$ based on another case distinction.
\begin{itemize}
      \item If $\dataof(a) \notin \Dataset_{\run}^{\atrace}$,
        then we use the "step" $\tilde{\config}_i \step{\sometest}{a}{\tilde{a}_o} \tilde{\config}_{i+1}$ based on the transition $q \trans{q_o}{\sometest} p$.
        If $\dataof(a_o) \notin \Dataset_{\run}^{\atrace}$, then $\tilde{a}_o = a_o$,
        and this transition can be taken because $\tilde{\config}_i(a) = \config_i(a)$ and $\tilde{\config}_i(a_o) = \config_i(a_o)$.
        Otherwise, if $\dataof(a_o) \in \Dataset_{\run}^{\atrace}$,
        then $\dataof(a_o) \neq \dataof(a)$, so $\sometest$ is $\neq$.
        Further, $\dataof(\tilde{a}_o) \in \set{\eta_{str} \mid str \in \shpair_\atrace}$, so $\dataof(\tilde{a}_o) \neq \dataof(a)$.
        Also, we defined $\tilde{a}_o$ so that $\tilde{\config}_i(\tilde{a}_o) = \config_i(a_o)$.
        Then the transition can be taken because $\tilde{\config}_i(a) = \config_i(a)$ and $\tilde{\config}_i(\tilde{a}_o) = \config_i(a_o)$.

      \item If $\dataof(a) \in \Dataset_{\run}^{\atrace} \setminus \set{\delta_{str}, \epsilon_{str} \mid str\in \shpair_\atrace}$, then we ignore the "step" and set $\tilde{\config}_{i+1} \deff \tilde{\config}_{i}$.

      \item If $\dataof(a) \in \set{\delta_{str}, \epsilon_{str} \mid str\in \shpair_\atrace}$, then, for all $str \in \shpair_\atrace$ we sequentially apply the following.
        \begin{itemize}
          \item If $i+1 \leq f_{str}$ and $\dataof(a) = \delta_{str}$, then we move the agent $\tilde{a} \deff \fbij^{-1}_{str}(a)$.
            By \cref{claim:data-core-copy-agent-trajectories},
            we know that $\tilde{\config}_i(\tilde{a}) = \config_i(\fbij_{str}(\tilde{a})) = \config_i(a) = q$.
            Furthermore, we have $\tilde{\config}_i(\tilde{a}_o) = \config_i(a_o) = q_o$.
            Finally, if $\sometest$ is $=$, then it holds that $\dataof(a_o) = \delta_{str}$,
            and thus \(\dataof(\tilde{a}) = \eta_{str} = \dataof(\tilde{a}_o)\).
            If $\sometest$ is $\neq$, then by definition of $f_{str}$, as $i< f_{str}$, at this point, no datum of "trace" $\atrace$ matches "split trace" $str$.
            That is, for all data $d \in \Dataset_\run^\atrace$, we have $\splittr{\run}{d}(i) \neq str$.
            Thus, for $d_o \deff \dataof(a_o)$, we have $d_0 \not\in \Dataset_\run^\atrace$ or $str_o \deff \splittr{\run}{d_o}(i) \neq str$.
            In the first case, we have $\tilde{a}_o = a_o$, so $\dataof(\tilde{a}_o) \not\in \Dataset_\run^\atrace$, but $\dataof(\tilde{a}) = \eta_{str} \in \Dataset_\run^\atrace$.
            In the second case, we have \(\dataof(\tilde{a}_o) = \eta_{str_o} \neq \eta_{str} = \dataof(\tilde{a})\),
            since the data $(\eta_{str'})_{str' \in \shpair_\atrace}$ are pairwise distinct.
            In both cases, we have $\dataof(\tilde{a}_o) \neq \eta_{str}$.
         
            Hence, in all cases, we can let the agent $\tilde{a}$ take transition $q \trans{q_o}{\sometest} p$.

          \item Similarly, if $i \geq \ell_{str}$ and $\dataof(a) = \epsilon_{str}$,
            then we move the agent $\tilde{a} \deff \lbij^{-1}_{str}(a)$.
            By \cref{claim:data-core-copy-agent-trajectories},
            we know that $\tilde{\config}_i(\tilde{a}) = \tilde{\config}_i(\lbij^{-1}_{str}(a)) = \config_i(a) = q$.
            Furthermore, we have $\tilde{\config}_i(\tilde{a}_o) = \config_i(a_o) = q_o$.
            Finally, if $\sometest$ is $=$, then it holds that $\dataof(a_o) = \epsilon_{str}$,
            and thus \(\dataof(\tilde{a}) = \eta_{str} = \dataof(\tilde{a}_o)\).
            If $\sometest$ is $\neq$ then by definition of $\ell_{str}$, as $i\geq \ell_{str}$, $\dataof(a_o)$ cannot have "trace" $\atrace$ and match "split trace" $str$, as otherwise it would still match it at "step" $i+1> \ell_{str}$. 
            Therefore, analogously to the case $i+1 \leq f_{str}$ and $\dataof(a) = \delta_{str}$,
            for $d_o \deff \dataof(a_o)$, we have $d_0 \not\in \Dataset_\run^\atrace$ or $str_o \deff \splittr{\run}{d_o}(i) \neq str$.
            Again, this implies that $\dataof(\tilde{a}_o) \neq \eta_{str}$.
            In all cases, we can let the agent $\tilde{a}$ take transition $q \trans{q_o}{\sometest} p$.
        \end{itemize}

        Note that for each $str$, at most one of the two cases applies.
        Furthermore, all the moving agents $\tilde{a}$ are distinct,
        as they have different data (since the $\eta_{str}$ are distinct).
        Moreover, $\tilde{a}_o$ does not move at any point.
        Therefore, as all those steps are enabled in $\tilde{\config}_i$,
        they can all be taken sequentially to get to $\tilde{\config}_{i+1}$.
    \end{itemize}
It is straightforward to show that the induction hypothesis is maintained in all these cases.
    Since the moves of agents $a$ with $\dataof(a) \notin \Dataset_{\run}^{\atrace}$ do not change,
    clearly, $\tilde{\config}_{i+1}(a) = \config_{i+1}(a)$.
For agents $a$ with $\dataof(a) = \eta_{str}$ for some $str \in \shpair_\atrace$,
    note that we make $a$ follow the same steps as $\fbij(a)$ until point $f_{str}$,
    then stay idle until $\ell_{str}$,
    and then follow the same steps as $\lbij(a)$.

    This concludes our induction.
  \end{claimproof}

  The following claim is a direct consequence of the previous one,
  as all remaining data have preserved their initial and final configuration, thus their traces.
  Moreover, we have only deleted data with "trace" $\atrace$.
  Since there is a datum with "trace" $\atrace$ in $\run$,
  the set $\shpair_\atrace$ is not empty,
  and thus there are data in $\run'$ with "trace" $\atrace$, \ie, the datum $\eta_{str}$ for every \(str \in \shpair_\atrace\).
  Hence, as all traces appearing in $\run$ are represented in $\run'$.
  
  \begin{claim}
    For all $\atrace \colon Q^2 \to [0,K]$, there is a "run" $\tilde{\run} \colon \tilde{\config}_1 \runto \tilde{\config}_m$ over the set of agents appearing in $\run$ with datum in $(\Dataset_{\run} \setminus \Dataset_{\run}^{\atrace}) \cup \set{\eta_{str} \mid str \in \shpair_\atrace}$ such that
    \begin{itemize}
      \item for all $d \in \Dataset_{\run'}$ and all $d$-agent $a$, we have $\config_1(a) = \tilde{\config}_1(a)$ and $\config_m(a) = \tilde{\config}_m(a)$,
      
      \item for all $d \in \Dataset_{\run}$, there exists $d'$ such that $\trace{d'}{\run'} = \trace{d}{\run}$, and
      
      \item there are at most $M$ data $d'$ such that $\trace{\run'}{d'} = \atrace$.
    \end{itemize}
  \end{claim}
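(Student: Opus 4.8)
The plan is to obtain this claim directly from the previous one, which already furnishes the run $\tilde{\run}$ together with the properties \cref{claim:data-core-keep-agents,claim:data-core-keep-agent-trajectories,claim:data-core-copy-agent-trajectories}; essentially all the work sits in that inductive construction, so what remains is bookkeeping. First I would fix a "trace" $\atrace \colon Q^2 \to \nset{0}{K}$. If $\size{\Dataset_\run^\atrace} \le M$ there is nothing to prove (take $\tilde{\run} \deff \run$), so I may assume $\size{\Dataset_\run^\atrace} > M$, which is exactly the regime in which $\shpair_\atrace$, the data $(\eta_{str})_{str \in \shpair_\atrace}$, the bijections $\fbij_{str}$ and $\lbij_{str}$, and the run $\tilde{\run}$ of the previous claim are all defined; write $\run' \deff \tilde{\run}$. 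That $\run'$ ranges over exactly the agents appearing in $\run$ with datum in $(\Dataset_\run \setminus \Dataset_\run^\atrace) \cup \set{\eta_{str} \mid str \in \shpair_\atrace}$ is precisely \cref{claim:data-core-keep-agents}.

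For the first bullet, let $d \in \Dataset_{\run'}$ and let $a$ be a $d$-agent. If $d \notin \Dataset_\run^\atrace$, then \cref{claim:data-core-keep-agent-trajectories} at $i = 1$ and $i = m$ yields $\tilde{\config}_1(a) = \config_1(a)$ and $\tilde{\config}_m(a) = \config_m(a)$. If $d = \eta_{str}$ for some $str \in \shpair_\atrace$, then \cref{claim:data-core-copy-agent-trajectories} at $i = 1 \le f_{str}$ gives $\tilde{\config}_1(a) = \config_1(\fbij_{str}(a))$, and $\config_1(\fbij_{str}(a)) = \config_1(a)$ by construction of $\fbij_{str}$; symmetrically, at $i = m \ge \ell_{str}$ we get $\tilde{\config}_m(a) = \config_m(\lbij_{str}(a)) = \config_m(a)$ by construction of $\lbij_{str}$.

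From the first bullet together with \cref{claim:data-core-keep-agents}, it follows that $\trace{d}{\run'} = \trace{d}{\run}$ for every $d \in \Dataset_{\run'}$, since the trace of a datum depends only on which agents carry it and on their initial and final states. I would then read off the second bullet: for $d \in \Dataset_\run \setminus \Dataset_\run^\atrace$ take $d' \deff d$; for $d \in \Dataset_\run^\atrace$, note that $\Dataset_\run^\atrace \ne \emptyset$, hence $\shpair_\atrace \ne \emptyset$, and for any $str \in \shpair_\atrace$ we have $\eta_{str} \in \Dataset_{\run'}$ with $\trace{\eta_{str}}{\run'} = \trace{\eta_{str}}{\run} = \atrace = \trace{d}{\run}$, so $d' \deff \eta_{str}$ works. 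For the third bullet, the only data of $\Dataset_{\run'}$ that can have trace $\atrace$ are the $\eta_{str}$, because every surviving datum $d \in \Dataset_\run \setminus \Dataset_\run^\atrace$ keeps its trace, which differs from $\atrace$ by definition of $\Dataset_\run^\atrace$; hence there are at most $\size{\shpair_\atrace} \le M$ such data. I do not expect any genuine obstacle here: the delicate step — splicing, for each datum of trace $\atrace$, the two "copy" segments before $f_{str}$ and after $\ell_{str}$ with the idle segment in between into a legitimate run — has already been carried out in the preceding claim, so here one only has to keep straight which of $\fbij_{str}$, $\gbij_{str}$, $\lbij_{str}$ preserves the initial, resp.\ final, configuration.
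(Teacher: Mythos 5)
Your proposal is correct and follows essentially the same route as the paper, which likewise derives this claim as a direct consequence of the preceding inductive claim (using its items at $i=1$ and $i=m$ together with the defining properties of $\fbij_{str}$ and $\lbij_{str}$) and observes that only the $\eta_{str}$ can carry trace $\atrace$ in the new run. Your write-up is in fact more explicit than the paper's two-sentence justification, and the bookkeeping is all accurate.
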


  The statement of \cref{lem:data-core-lemma} follows by iteratively applying the claim
  for each "trace" $\atrace \colon Q^2 \to \nset{0}{K}$ with $\size{\Dataset_\run^\atrace} > M$;
  as argued above, once we have $\size{\Dataset_\run^\atrace} \leq M$ for all $\atrace \colon Q^2 \to \nset{0}{K}$,
  then \(\size{\Dataset_\run} \leq M \cdot (K+1)^{\size{Q}^2} = (K+1)^{\size{Q}^3+\size{Q}^2}\),
  so \(\rho' \deff \rho\) fulfils the requirements of the lemma.
\end{proof}
 	\section{From Expressions to Containers: Proofs for Section~\ref{sec:equivalence-relation}}
\label{app-GRE-to-boxes}

\LemmaContainerMonotonicity*

\begin{proof}
  First, we show that every $n_1$-"box" is a union of $n_2$-"boxes".
  Let $\config, \chi \in \configset$ and $d, d' \in \Dataset$ with $\cubeapprox{n_2}{\config}{d} = \cubeapprox{n_2}{\chi}{d'}$.
  For every state $q$, there are either at least $n_2$ agents with datum $d$ in $\config$ and at least $n_2$ agents with datum $d'$ in $\chi$,
  in which case there are at least $n_1$ agents in both, or the two numbers are the same.

  As a result, $\cubeapprox{n_1}{\config}{d} = \cubeapprox{n_1}{\chi}{d'}$.
  Hence, the partition of $\configset \times \Dataset$ induced by $n_2$-"boxes" is at least as fine as the one induced by $n_1$-"boxes".

  Now let $\config, \chi \in \configset$ such that $\metaapprox{n_2}{M_2}{\config} = \metaapprox{n_2}{M_2}{\chi}$.
  We show that $\metaapprox{n_1}{M_1}{\config} = \metaapprox{n_1}{M_1}{\chi}$.
  Let $\abox \in \Boxes{n_1}$. Then $\abox$ is a union of $n_2$-"boxes" $\abox_1, \ldots, \abox_k$.
  If one of the two "configurations" has less than $M_1$ data mapped to $\abox$, then, as $M_2\geq M_1$, it has less than $M_2$ data mapped to each $\abox_i$.
  As a consequence, the other "configuration" has the same number of data mapped to each $\abox_i$, and thus the same number of data mapped to $\abox$.

  This shows that $(n_2, M_2)$-"containers" form a partition of $\configset$ that is at least as fine as $(n_1, M_1)$-"containers", concluding our proof.
\end{proof}

\subsection{Proof of \cref{prop-equiv-containers-predicates} and Comparison Between Sizes of Representations}
\label{app-equiv-containers-predicates}

\PropEquivContPred*

We start with the translation from "interval predicates" to "containers".
First, in \cref{lem-pred-to-cont-simple}, we show that a "simple interval predicate" of "height@@pred" $n$ and "width@@pred" $M$ cannot distinguish configurations that are equivalent with respect to $\equiv_{n,M}$.
Then, in \cref{lem-pred-to-cont}, we use this fact to prove that an "interval predicate" of "height@@pred" $n$ and "width@@pred" $M$ can be translated into a union of $(n,M)$-"containers".

\begin{lemma}
  \label{lem-pred-to-cont-simple}
  Let $n,M \in \nats$, and let $\config, \chi$ be configurations such that $\config \equiv_{n,M} \chi$.
  Furthermore, let $\psi$ be a "simple interval predicate" of "height@@pred" at most $n$ and "width@@pred" at most $M$.
  Then $\config$ satisfies $\psi$ if and only if $\chi$ does.
\end{lemma}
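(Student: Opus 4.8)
The plan is to prove only the implication ``if $\config$ satisfies $\psi$ then $\chi$ does'', since $\metaequiv{n}{M}$ is symmetric and the hypothesis $\config \metaequiv{n}{M} \chi$ is exactly $\metaapprox{n}{M}{\config} = \metaapprox{n}{M}{\chi}$. Write $\psi = \dot\exists \datvar_1, \ldots, \datvar_m,\, \bigwedge_{q \in S} \bigwedge_{j=1}^m \#(q,\datvar_j) \in [A_{q,j},B_{q,j}]$, so that $m \le M$ by the "width@@pred" bound and every finite $A_{q,j}$ or $B_{q,j}$ is at most $n$ by the "height@@pred" bound. Assume $\config$ satisfies $\psi$, witnessed by pairwise distinct data $d_1, \ldots, d_m$ with $\counting{\config}{d_j}(q) \in [A_{q,j},B_{q,j}]$ for all $q \in S$ and all $j \in \nset{1}{m}$.

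First I would transfer these witnesses into $\chi$. For a box $\abox \in \Boxes{n}$, let $k_\abox \deff \size{\setc{j \in \nset{1}{m}}{\cubeapprox{n}{\config}{d_j} = \abox}}$. Since the $d_j$ are pairwise distinct, $\config$ has at least $k_\abox$ data whose $n$-"box" is $\abox$, and as $k_\abox \le m \le M$ this gives $\metaapprox{n}{M}{\config}(\abox) \ge k_\abox$. By $\config \metaequiv{n}{M} \chi$ we also get $\metaapprox{n}{M}{\chi}(\abox) \ge k_\abox$, so $\chi$ has at least $k_\abox$ data with $n$-"box" $\abox$ as well. Since the $k_\abox$ sum to $m$, one can choose, box by box (and hence automatically disjointly across distinct boxes), pairwise distinct data $d_1', \ldots, d_m'$ in $\chi$ with $\cubeapprox{n}{\chi}{d_j'} = \cubeapprox{n}{\config}{d_j}$ for every $j$.

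It then remains to verify that the $d_j'$ witness $\psi$ in $\chi$, i.e.\ that $\counting{\chi}{d_j'}(q) \in [A_{q,j},B_{q,j}]$ for all $q \in S$ and all $j$. The core observation is that membership of a count in $[A_{q,j},B_{q,j}]$ depends on the count only through its truncation at $n$, because the interval endpoints are bounded by the "height@@pred" $n$: for the lower bound, $\min\set{n,c} \ge A_{q,j}$ is equivalent to $c \ge A_{q,j}$ whenever $A_{q,j} \le n$; and for a finite upper bound, any $c$ with $c \le B_{q,j} \le n$ satisfies $\min\set{n,c} = c$, so its truncation already pins it down. Concretely, fix $q,j$ and set $c \deff \counting{\config}{d_j}(q)$ and $c' \deff \counting{\chi}{d_j'}(q)$; by construction $\min\set{n,c} = \cubeapprox{n}{\config}{d_j}(q) = \cubeapprox{n}{\chi}{d_j'}(q) = \min\set{n,c'}$, and feeding this equality through the two implications above carries $c \in [A_{q,j},B_{q,j}]$ over to $c' \in [A_{q,j},B_{q,j}]$. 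Hence $\chi$ satisfies $\psi$.

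I expect the only genuinely delicate point to be this last step: one has to thread the interaction between the truncation at $n$ and the interval endpoints carefully — in particular the boundary case of a finite endpoint equal to $n$, where a truncated value of $n$ does not by itself fix the exact count — so that $\metaequiv{n}{M}$-equivalence really does preserve every conjunct given only that $n$ is at least the height of $\psi$. Everything else is routine: the reduction to one direction, and the witness re-selection, which relies on $m \le M$ precisely so that the per-box data counts recorded by an $(n,M)$-"container" are not already saturated. With this lemma in hand, \cref{lem-pred-to-cont} follows by closing under Boolean combinations, and then \cref{prop-equiv-containers-predicates} by combining it with the reverse translation sketched there.
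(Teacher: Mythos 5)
Your proof follows the paper's argument essentially step for step: the reduction to one implication by symmetry, the per-box transfer of the witness data (which is exactly where the width bound $m \le M$ is used, so that the truncated per-box data counts recorded by the container are not yet saturated), and the per-conjunct verification using the height bound. The witness-transfer part is correct and matches the paper's proof.

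Where you go beyond the paper is in trying to spell out the last step, and there you have correctly isolated a case that your argument does not actually close. For a finite upper bound with $B_{q,j} = n$ and a witness count $c = \counting{\config}{d_j}(q) = n$, equality of the $n$-boxes only yields $\min\set{n,c'} = n$, i.e.\ $c' \ge n$; it does not yield $c' = c$, so $c' \le B_{q,j}$ may fail. This is not merely a presentational issue: the statement as literally written fails in this boundary case. Take $\psi = \dot\exists \datvar_1,\, \#(q,\datvar_1) \in [1,n]$ (with the remaining conjuncts trivially $[0,+\infty)$), which has height exactly $n$ and width $1$; let $\config$ contain a single datum with $n$ agents in $q$ and $\chi$ a single datum with $n+1$ agents in $q$. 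Then $\config \metaequiv{n}{M} \chi$ for every $M \ge 1$, and $\config$ satisfies $\psi$ while $\chi$ does not, since its only datum with a nonzero count has count $n+1 \notin [1,n]$ and every other datum has count $0$. So the sentence claiming that the equality of truncations ``carries $c \in [A_{q,j},B_{q,j}]$ over to $c'$'' cannot be justified as stated — and the corresponding sentence in the paper's own proof elides exactly the same point. The repair is a one-unit shift: require the finite bounds to be strictly below $n$, or equivalently use $(n{+}1)$-boxes for a predicate of height $n$; then $c \le B_{q,j} < n$ forces $\min\set{n,c} = c < n$, hence $c' = c$, and both of your implications go through. The lower-bound direction you give is already correct as is. This off-by-one is harmless downstream (\cref{lem-pred-to-cont}, \cref{prop-equiv-containers-predicates}, \cref{prop:bound-coefs-GRE} all tolerate a $+1$ in the box threshold, and the converse translation in \cref{prop-equiv-containers-predicates} already sidesteps the issue by using finite upper bounds only for $\abox(q) < n$), but your last step needs this adjustment to be correct.
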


\begin{proof}
  Let $\psi = \dot\exists \datvar_1, \ldots, \datvar_M, \, \bigwedge_{q \in Q} \bigwedge_{j=1}^M \#(q,\datvar_j) \in \nset{A_{q,j}}{B_{q,j}}$ be a "simple interval predicate" of "height@@pred" $n$ and "width@@pred" $M$.

  Suppose $\config$ satisfies $\psi$.
  Then there are pairwise distinct data $\adatum_1, \ldots, \adatum_M$ such that for all $q \in Q$ and $j \in \nset{1}{M}$, it holds that $\counting{\config}{\adatum_j}(q) \in \nset{A_{q,j}}{B_{q,j}}$.

  Let $\abox$ be an $n$-"box", and let $\adatum_{\abox,1}, \ldots, \adatum_{\abox,k_\abox}$ be the pairwise distinct data among $(\adatum_i)_{1\leq i\leq M}$ with $\cubeapprox{n}{\config}{\adatum_i} = \abox$.
  As $k_\abox \leq M$ and $\config \metaequiv{n}{M} \chi$, there are pairwise distinct data $d'_{\abox,1}, \ldots, d'_{\abox,k_\abox}$ such that $\cubeapprox{n}{\chi}{d'_{\abox,j}} = \abox$ for all $j \in \nset{1}{k_\abox}$.
  By doing this for every $n$-box, we obtain pairwise distinct data $\adatum'_{1}, \ldots, \adatum'_{M}$ such that $(\config, \adatum_i) \cubeequiv{n} (\chi, \adatum'_i)$ for all $i \in \nset{1}{M}$.
  Moreover, for all $i \in \nset{1}{M}$, since $A_{q,i}, B_{q,i} \in \nset{0}{n} \cup \set{+\infty}$,
  and since $\counting{\config}{\adatum_i}(q) \in \nset{A_{q,i}}{B_{q,i}}$ and $(\config, \adatum_i) \cubeequiv{n} (\chi, \adatum'_i)$,
  we have $\counting{\chi}{\adatum'_i}(q) \in \nset{A_{q,i}}{B_{q,i}}$.

  This shows that $\chi$ satisfies $\psi$.
  The other direction follows by symmetry.
\end{proof}

\begin{lemma}
  \label{lem-pred-to-cont}
  Let $\phi$ be an "interval predicate" of "height@@pred" at most $n$ and "width@@pred" at most $M$.
  Then $\semanticsEP{\phi}{\prot}$ is a union of $(n,M)$-"containers".
\end{lemma}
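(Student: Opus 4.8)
The plan is to deduce the statement from \cref{lem-pred-to-cont-simple} together with the Boolean structure of "interval predicates". By definition, $\phi$ is a Boolean combination of "simple interval predicates" $\psi_1, \ldots, \psi_k$, and a "configuration" $\config$ satisfies $\phi$ exactly when the collection of indices $i$ for which $\config$ satisfies $\psi_i$ makes the Boolean combination true; in particular, whether $\config$ satisfies $\phi$ depends only on which of the $\psi_i$ it satisfies. Moreover, since the "height@@pred" (resp.\ "width@@pred") of $\phi$ is the maximum of the "heights@@simple" (resp.\ "widths@@simple") of the $\psi_i$, each $\psi_i$ has "height@@simple" at most $n$ and "width@@simple" at most $M$, so \cref{lem-pred-to-cont-simple} applies to every $\psi_i$.

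First I would fix two "configurations" $\config, \chi$ with $\config \metaequiv{n}{M} \chi$. Applying \cref{lem-pred-to-cont-simple} to each $\psi_i$ gives that $\config$ satisfies $\psi_i$ if and only if $\chi$ does, for every $i \in \nset{1}{k}$. Hence $\config$ and $\chi$ satisfy exactly the same simple sub-predicates of $\phi$, and therefore, by the observation above, $\config$ satisfies $\phi$ if and only if $\chi$ does. This shows that $\semanticsEP{\phi}{\prot}$ is saturated for $\metaequiv{n}{M}$, i.e., a union of equivalence classes of $\metaequiv{n}{M}$. Since the equivalence classes of $\metaequiv{n}{M}$ are precisely the $(n,M)$-"containers", it follows that $\semanticsEP{\phi}{\prot}$ is a union of $(n,M)$-"containers", as claimed.

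The argument has essentially no obstacle beyond \cref{lem-pred-to-cont-simple}, which is where the actual work on "containers" takes place; the only point that needs a line of care is matching the parameters, namely that every simple sub-predicate of $\phi$ inherits the global height bound $n$ and width bound $M$ so that the lemma can be invoked uniformly. If one prefers, the same conclusion can be phrased via a trivial induction on the structure of the Boolean combination, using that the family of unions of $(n,M)$-"containers" is closed under union, intersection, and complement because the $(n,M)$-"containers" partition $\configset$.
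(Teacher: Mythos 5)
Your proof is correct and follows essentially the same route as the paper: apply \cref{lem-pred-to-cont-simple} to each simple sub-predicate of $\phi$ to see that $\metaequiv{n}{M}$-equivalent configurations satisfy the same simple predicates, hence agree on $\phi$, so $\semanticsEP{\phi}{\prot}$ is saturated under $\metaequiv{n}{M}$ and is therefore a union of $(n,M)$-containers. The remark about the sub-predicates inheriting the height and width bounds is the right (and only) point of care.
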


\begin{proof}
  Let $\config, \chi$ be "configurations" such that $\config \metaequiv{n}{M} \chi$.
  By definition, $\phi$ is a Boolean combination of "simple interval predicates" $\psi_1, \ldots, \psi_p$ for some $p \in \nats$.
  Furthermore, by \cref{lem-pred-to-cont-simple}, for every $i \in \nset{1}{p}$, the predicate $\psi_i$ is satisfied by $\config$ if and only if it is satisfied by $\chi$.
  Thus, $\phi$ is satisfied by $\config$ if and only if it is satisfied by $\chi$.

  As a result, we obtain that each equivalence class of $\metaequiv{n}{M}$ (\ie, each $(n,M)$-"container") is either fully contained in $\semanticsEP{\phi}{\prot}$ or disjoint from it.
  Since $(n, M)$-"containers" form a partition of the set of "configurations", this implies the statement of \cref{lem-pred-to-cont}.
\end{proof}

The next result takes care of the other direction of the proof of \cref{prop-equiv-containers-predicates}.
That is, we show that any finite union of $(n,M)$-"containers" can be expressed as an "interval predicate" of "height@@pred" $n$ and "width@@pred" $M$.
Combined with the first direction, this shows that the two formalisms are equally expressive.

\begin{lemma}
  Let $\acontainer_1, \ldots, \acontainer_k$ be sets of "configurations" such that for all $i$, $\acontainer_i$ is an $(n_i,M_i)$-"container".
  Then there is an "interval predicate" of "width@@pred" $\max_i n_i$ and "height@@pred" $\max_i M_i$ that defines the set $\bigcup_{i=1}^k \acontainer_i$.
\end{lemma}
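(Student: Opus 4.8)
The plan is to reduce to the single-container case and then give an explicit construction, reusing the "simple interval predicates" $\phi_{\abox,\geq m}$ from the proof sketch of \cref{prop-equiv-containers-predicates}. Since an "interval predicate" is by definition a Boolean combination of "simple interval predicates", the disjunction $\Phi_1 \vee \dots \vee \Phi_k$ of "interval predicates" is again an "interval predicate", and its "height@@pred" (resp.\ "width@@pred") is at most the maximum of the "heights@@pred" (resp.\ "widths@@pred") of the $\Phi_i$. Hence it suffices to produce, for a single $(n,M)$-"container" $\acontainer$, an "interval predicate" $\Phi_\acontainer$ with $\setof{\Phi_\acontainer} = \acontainer$, "height@@pred" at most $n$, and "width@@pred" at most $M$; disjoining the predicates obtained for $\acontainer_1, \ldots, \acontainer_k$ then yields the statement.

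For the single-container case, recall that for $\abox \in \Boxes{n}$ and $m \in \nset{1}{M}$ one sets $\phi_{\abox,\geq m} \deff \dot\exists \datvar_1,\dots,\datvar_m,\, \bigwedge_{q\in Q}\bigwedge_{j=1}^m \#(q,\datvar_j)\in\nset{A_q}{B_q}$, where $A_q \deff \abox(q)$, $B_q \deff \abox(q)$ if $\abox(q) < n$, and $B_q \deff +\infty$ if $\abox(q) = n$. The key point I would establish is that, for any "configuration" $\config$ and datum $d$, one has $\counting{\config}{d}(q)\in\nset{A_q}{B_q}$ for all $q\in Q$ if and only if $\cubeapprox{n}{\config}{d} = \abox$; this is precisely where the choice $B_q = +\infty \Leftrightarrow \abox(q) = n$ is needed, because $\cubeapprox{n}{\config}{d}$ caps counts at $n$. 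It follows that $\phi_{\abox,\geq m}$ is "satisfied@@simple" by $\config$ exactly when there are at least $m$ pairwise distinct data $d$ with $\cubeapprox{n}{\config}{d} = \abox$. Moreover $\phi_{\abox,\geq m}$ has "width@@simple" $m \leq M$ and "height@@simple" $\max_{q}\abox(q)\leq n$.

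Finally I would assemble $\Phi_\acontainer$ as a conjunction over the finitely many boxes in $\Boxes{n}$. Writing $\phi_{\abox,\geq 0} \deff \top$, for each $\abox$ set $\chi_\abox \deff \phi_{\abox,\geq \acontainer(\abox)} \wedge \neg\phi_{\abox,\geq \acontainer(\abox)+1}$ if $\acontainer(\abox) < M$, and $\chi_\abox \deff \phi_{\abox,\geq M}$ if $\acontainer(\abox) = M$; then $\config$ satisfies $\chi_\abox$ iff $\min\bigl(M,\size{\set{d \mid \cubeapprox{n}{\config}{d} = \abox}}\bigr) = \acontainer(\abox)$, \ie $\metaapprox{n}{M}{\config}(\abox) = \acontainer(\abox)$. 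Taking $\Phi_\acontainer \deff \bigwedge_{\abox\in\Boxes{n}} \chi_\abox$ gives an "interval predicate" (a finite Boolean combination of "simple interval predicates") with $\setof{\Phi_\acontainer} = \set{\config \mid \metaapprox{n}{M}{\config} = \acontainer} = \acontainer$, whose "height@@pred" and "width@@pred" remain bounded by $n$ and $M$. The only non-routine ingredient is the equivalence ``$\counting{\config}{d}(q)\in\nset{A_q}{B_q}$ for all $q$'' $\Leftrightarrow$ ``$\cubeapprox{n}{\config}{d} = \abox$'', together with the boundary cases $m=0$ and $\acontainer(\abox)=M$ in the counting predicates; the closure of "interval predicates" under Boolean combinations and the bookkeeping of "height@@pred"/"width@@pred" are immediate.
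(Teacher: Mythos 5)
Your proof is correct and follows essentially the same route as the paper: reduce to a single container via closure under disjunction, define $\phi_{\abox,\geq m}$ counting data matching a box, and combine these into the conjunction $\bigwedge_{\abox}\chi_\abox$ characterising the container. Note that the lemma's statement swaps ``width'' and ``height'' relative to their definitions (and to \cref{prop-equiv-containers-predicates}); your bounds --- "height@@pred" at most $n$ and "width@@pred" at most $M$ --- are the intended ones.
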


\begin{proof}
  As "interval predicates" are closed under disjunction, we only need to show that every $(n, M)$-"container" can be expressed as an "interval predicate" of "width@@pred" $n$ and "height@@pred" $M$.

  Let $\acontainer$ be an $(n,M)$-"container". We construct an "interval predicate" expressing the same set of "configurations".
  Let $\abox$ be an $n$-"box", and let $m \in \nats$.
  We set
  \[\phi_{\abox, \geq m} \deff \dot\exists \datvar_1, \ldots, \datvar_{m}, \, \bigwedge_{q \in Q} \bigwedge_{j=1}^{m} \#(q,\datvar_j) \in \nset{A_{q,j}}{B_{q,j}}\]
  where for all $q \in Q$ and $j \in \nset{1}{m}$, we set $A_{q,j} \deff \abox(q)$, and we set $B_{q,j} \deff \abox(q)$ if $\abox(q)<n$ and $B_{q,j} \deff +\infty$ otherwise.
  This "simple interval predicate" expresses that there are at least $m$ pairwise distinct data that match the $n$-"box" $\abox$.
Let
  \[\psi_{\acontainer} \deff \bigwedge_{\substack{\abox \in \Boxes{n}\\ \acontainer(\abox)<M}}\!\! \bigl( \phi_{\abox, \geq \acontainer(\abox)} \land \neg \phi_{\abox, \geq \acontainer(\abox)+1} \bigr)\ \land\ \bigwedge_{\substack{\abox \in \Boxes{n}\\ \acontainer(\abox)=M}}\!\! \phi_{\abox, \geq M}.\]

  This "interval predicate" expresses that for every $n$-"box" $\abox$, the number of data mapped to $\abox$ matches the corresponding number in $\acontainer$, or that the number of data mapped to $\abox$ is at least $M$ if the corresponding number in $\acontainer$ is $M$.
  Hence, the "interval predicate" $\psi_{\acontainer}$ is satisfied by a "configuration" $\config$ if and only if $\config \in \acontainer$.
\end{proof}

\ClaimSuccinctness*

\begin{proof}
  To see this, suppose we want to express that there are exactly $2^k-1$ data that all have one agent in $q_1$ and no agents in $q_0$.
  This can be expressed with a $(2,2^k)$-"container" (whose binary encoding uses $\bigO(k)$ bits for each number).
  Meanwhile, this cannot be expressed by a union of $(n,M)$-"containers" with $M<2^k$, as they cannot distinguish a "configuration" with $2^k-1$ such data from one with more than $2^k-1$.
  By the proposition above, this means that an "interval predicate" for this set requires "width@@pred" $2^k$,
  and thus its encoding must be of size $\Omega(2^k)$ (as we must have at least this many data variables).

  Conversely, consider the set of configurations containing at least one agent in state $q$.
  This is expressible by a trivial "interval predicate", but it corresponds to the union of all $(1,1)$-"containers" $\acontainer$ such that $\acontainer(\abox)>0$ for some "box" $\abox$ with $\abox(q)>0$.
  One can show that there are $\pow{\pow{\size{Q}}} - \pow{\pow{\size{Q}-1}}$ such "containers".
\end{proof}

\subsection{Proof of Lemma~\ref{lem:eq-classes-coreach}}
\label{app-eq-classes-coreach}

Recall that we chose $\functionf \colon \nats \to \nats$ and $\functiong \colon \nats^2 \to \nats$ to be the functions with
$\intro*\functionf(n) \deff (n+\size{\prot}^3) \cdot \size{\prot}$ and $\intro*\functiong(n,M) \deff \bigl(M+(\size{\prot}^3+1)^{\size{\prot}^3+\size{\prot}^2}\bigr)(n+1)^{\size{\prot}}$.
for all $n, M \in \nats$

\LemEqClassesCoReach*

\begin{proof}
  By \cref{cor:run-few-observed-data-and-agents}, we can assume that $\run$ has at most $\size{\prot}^3$ "observed" agents per datum and at most $(\size{\prot}^3+1)^{\size{\prot}^3+\size{\prot}^2}$ "externally observed" data.
Let $\Agentset_{\config}$ and $\Agentset_{\chi}$ be the sets of agents appearing in $\config_{start}$ and $\chi_{start}$, respectively.
  Let $\Dataset_{\config}$ and $\Dataset_{\chi}$ be their sets of data.

  We are going to define maps $\mu \colon \Dataset_{\chi} \to \Dataset_{\config}$ and $\nu \colon \Agentset_{\chi} \to \Agentset_{\config}$.
  Intuitively, $\mu$ (resp. $\nu$) will map each datum in $\Dataset_{\chi}$ (resp. agent in $\Agentset_{\chi}$) to a datum in $\config_{start}$ (resp. agent in $\Agentset_{\config}$) that it should mimic.
We will ensure that data and agents are mapped to a counterpart in $\config_{start}$ that is ``compatible'' in the sense that they can mimic that counterpart while keeping the same initial and final configurations (up to $n$-approximation for the data).
We will also ensure that $\mu$ maps a datum to each "externally observed" datum in $\run$, so that the former can fulfil the same roles in the run from $\chi_{start}$.
  For the same reason, for each datum $d \in \Dataset_{\chi}$, all "observed" agents in $\run$ with datum $\mu(d)$ must have an agent of $\chi_{start}$ with datum $d$ mapped to them.
  We will be able to satisfy these conditions as $\config_{start}$ and $\chi_{start}$ are equivalent up to sufficient bounds.

  The proof is split in three parts. We first prove two claims establishing the existence of mappings $\mu$ and $\nu$ with the desired properties.
  The rest of the proof is dedicated to the construction of the "run" $\pi \colon \chi_{start} \xrightarrow{*} \chi_{end}$.

  For all $\abox \in \Boxes{\functionf(n)}$ and $\abox' \in \Boxes{n}$,
  we set $\Dataset^{\abox\to}_{\config} \deff \setc{d \in \Dataset_{\config}}{\cubeapprox{\functionf(n)}{\config_{start}}{d} = \abox}$,
  $\Dataset^{\abox\to}_{\chi} \deff \setc{d \in \Dataset_{\chi}}{\cubeapprox{\functionf(n)}{\chi_{start}}{d} = \abox}$,
  $\Dataset^{\to \abox'}_{\config} \deff \setc{d \in \Dataset_{\config}}{\cubeapprox{n}{\config_{end}}{d} = \abox'}$,
  and $\Dataset^{\abox \to \abox'}_{\config} \deff \Dataset^{\abox\to}_{\config} \cap \Dataset^{\to \abox'}_{\config}$.

  \begin{claim}
    \label{claim:equiv-mu}
    There exists a mapping $\mu \colon \Dataset_{\chi} \to \Dataset_{\config}$ such that
    \begin{enumerate}
      \item\label{item:equiv-mu-preserves-equiv} for all $d \in \Dataset_{\chi}$, we have $(\chi_{start}, d) \equiv_{\functionf(n)} (\config_{start}, \mu(d))$,

      \item\label{item:equiv-mu-observed-have-antecedent} for all data $d' \in \Dataset_{\config}$ "externally observed" in $\run$, $\mu^{-1}(d') \neq \emptyset$, and

      \item\label{item:equiv-mu-number-classes} for all $\abox' \in \Boxes{n}$, we have $\size{\mu^{-1}(\Dataset^{\to\abox'}_{\config})} = \size{\Dataset^{\to\abox'}_{\config}}$, or $\size{\Dataset^{\to\abox'}_{\config}} \geq M$ and $\size{\mu^{-1}(\Dataset^{\to\abox'}_{\config})} \geq M$.
    \end{enumerate}
  \end{claim}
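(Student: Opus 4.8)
The plan is to build $\mu$ piecewise over the partition of $\Dataset_{\chi}$ into $\functionf(n)$-box classes, imitating the agent-level map $\nu_q$ from the proof sketch of \cref{lem:eq-classes-coreach}; only "externally observed" data need a preimage (property~\ref{item:equiv-mu-observed-have-antecedent}) because an equality-tested observation is of an agent of the \emph{same} datum, a situation later handled at the agent level. So fix a box $\abox \in \Boxes{\functionf(n)}$. Since $\config_{start} \metaequiv{\functionf(n)}{\functiong(n,M)} \chi_{start}$, either $\size{\Dataset^{\abox\to}_{\config}} = \size{\Dataset^{\abox\to}_{\chi}}$, or both are at least $\functiong(n,M)$. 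In the first case I let $\mu$ restricted to $\Dataset^{\abox\to}_{\chi}$ be an arbitrary bijection onto $\Dataset^{\abox\to}_{\config}$; this gives every datum of $\Dataset^{\abox\to}_{\config}$ a preimage, and contributes exactly $\size{\Dataset^{\abox\to\abox'}_{\config}}$ to $\size{\mu^{-1}(\Dataset^{\to\abox'}_{\config})}$ for every $n$-box $\abox'$.

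In the second case I first select, for each $n$-box $\abox'$, a set $B_{\abox\to\abox'} \subseteq \Dataset^{\abox\to\abox'}_{\config}$ of data to be reproduced: if $\size{\Dataset^{\abox\to\abox'}_{\config}} < M$, put $B_{\abox\to\abox'} \deff \Dataset^{\abox\to\abox'}_{\config}$; otherwise put into $B_{\abox\to\abox'}$ all data of $\Dataset^{\abox\to\abox'}_{\config}$ that are "externally observed" in $\run$ --- at most $(\size{\prot}^3+1)^{\size{\prot}^3+\size{\prot}^2}$ of them over all of $\run$, by \cref{cor:run-few-observed-data-and-agents} --- and then pad with arbitrary elements of $\Dataset^{\abox\to\abox'}_{\config}$ until $\size{B_{\abox\to\abox'}} \geq M$. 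In every case $\size{B_{\abox\to\abox'}} \leq M + (\size{\prot}^3+1)^{\size{\prot}^3+\size{\prot}^2}$, so, as there are $\size{\Boxes{n}} = (n+1)^{\size{\prot}}$ many $n$-boxes, $\bigsize{\bigcup_{\abox'} B_{\abox\to\abox'}} \leq \functiong(n,M) \leq \size{\Dataset^{\abox\to}_{\chi}}$. I then define $\mu$ on $\Dataset^{\abox\to}_{\chi}$ so that its image is exactly $\bigcup_{\abox'} B_{\abox\to\abox'}$: assign one distinct preimage in $\Dataset^{\abox\to}_{\chi}$ to each datum of that union, then map every remaining datum of $\Dataset^{\abox\to}_{\chi}$ to a single fixed element of some $B_{\abox\to\abox''}$ with $\size{B_{\abox\to\abox''}} \geq M$. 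Such an $\abox''$ exists because $\sum_{\abox'} \size{\Dataset^{\abox\to\abox'}_{\config}} = \size{\Dataset^{\abox\to}_{\config}} \geq \functiong(n,M) > M\,(n+1)^{\size{\prot}}$ forces some $n$-box to receive more than $M$ data, and for that box the padding branch yields $\size{B_{\abox\to\abox'}} \geq M$.

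Finally I would verify the three conditions. Property~\ref{item:equiv-mu-preserves-equiv} is immediate, as $\mu$ maps $\Dataset^{\abox\to}_{\chi}$ into $\Dataset^{\abox\to}_{\config}$ and both sets consist precisely of the data whose $\functionf(n)$-box is $\abox$. For~\ref{item:equiv-mu-observed-have-antecedent}, an externally observed datum $d'$ of $\run$ lies in some $\Dataset^{\abox\to\abox'}_{\config}$: in the bijection case for $\abox$ it automatically has a preimage, and otherwise $d' \in B_{\abox\to\abox'}$ by construction and hence has a preimage. For~\ref{item:equiv-mu-number-classes}, fix an $n$-box $\abox'$ and observe that $\mu^{-1}(\Dataset^{\to\abox'}_{\config})$ is the disjoint union over $\abox$ of $\mu^{-1}(\Dataset^{\abox\to\abox'}_{\config})$, each summand being either $\Dataset^{\abox\to\abox'}_{\config}$ itself (bijection case, or padding case with $\size{\Dataset^{\abox\to\abox'}_{\config}} < M$, which also forces $\abox' \neq \abox''$), or a set of size at least $\size{B_{\abox\to\abox'}} \geq M$ (padding case with $\size{\Dataset^{\abox\to\abox'}_{\config}} \geq M$). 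A case split on whether $\size{\Dataset^{\abox\to\abox'}_{\config}} \geq M$ holds for some $\abox$ then yields exactly the stated dichotomy. The construction is the data-level transcription of the agent argument, so the only delicate point I foresee is the counting above: fitting $\bigcup_{\abox'} B_{\abox\to\abox'}$ within $\functiong(n,M)$ while still leaving an $n$-box with more than $M$ selected data to absorb the surplus of $\Dataset^{\abox\to}_{\chi}$ --- this is precisely where the shape of $\functiong$ (its $(n+1)^{\size{\prot}}$ factor counting $n$-boxes and its $(\size{\prot}^3+1)^{\size{\prot}^3+\size{\prot}^2}$ summand coming from \cref{cor:run-few-observed-data-and-agents}) is used.
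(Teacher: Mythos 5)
Your construction is correct and is essentially the paper's own proof: the same case split on whether $\size{\Dataset^{\abox\to}_{\config}} = \size{\Dataset^{\abox\to}_{\chi}}$ or both exceed $\functiong(n,M)$, the same selection of at most $M + (\size{\prot}^3+1)^{\size{\prot}^3+\size{\prot}^2}$ data per $n$-box (all externally observed ones plus padding), the same counting against $\functiong(n,M) = \bigl(M+(\size{\prot}^3+1)^{\size{\prot}^3+\size{\prot}^2}\bigr)(n+1)^{\size{\prot}}$, and the same absorbing box $\abox''$ with $\size{B_{\abox\to\abox''}} \geq M$ for the surplus. The verification of the three properties, including the final dichotomy for~\ref{item:equiv-mu-number-classes}, matches the paper's argument step for step.
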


  \begin{claimproof}
    As $\config_{start} \metaequiv{\functionf(n)}{\functiong(n,M)} \chi_{start}$, we know that for all $\abox \in \Boxes{\functionf(n)}$,
    it holds that $\size{\Dataset^{\abox\to}_{\config}} = \size{\Dataset^{\abox\to}_{\chi}}$,
    or both $\size{\Dataset^{\abox\to}_{\config}}$ and $\size{\Dataset^{\abox\to}_{\chi}}$ are at least $\functiong(n,M)$.

    In the first case, we let $\mu$ map a datum of $\Dataset^{\abox\to}_{\chi}$ to each datum $\Dataset^{\abox\to}_{\config}$ to form a bijection.
    Otherwise, in the second case, for each $\abox' \in \Boxes{n}$, if $\size{\Dataset^{\abox \to \abox'}_{\config}} \leq M$, then we map a datum of $\Dataset^{\abox\to}_{\chi}$ to each datum of $\Dataset^{\abox \to \abox'}_{\config}$.
    If $\size{\Dataset^{\abox \to \abox'}_{\config}} > M$, then we select $M$ data from it, we add all those which are "externally observed" in $\run$, and we map a datum of $\Dataset^{\abox\to}_{\chi}$ to each one of them.
As there are at most $(\size{\prot}^3+1)^{\size{\prot}^3+\size{\prot}^2}$ "externally observed" data in $\run$, we have selected at most $M + ((\size{\prot}^3+1)^{\size{\prot}^3+\size{\prot}^2})$ data per $n$-"box" $\abox'$.
    Further, as $\size{\Boxes{n}} \leq (n+1)^{\size{\prot}}$, we have selected at most $\functiong(n,M)$ data in total.
    Hence, we can indeed map a datum of $\Dataset^{\abox\to}_{\chi}$ to each one of them.
As $\size{\Dataset^{\abox\to}_{\config}} \geq \functiong(n,M) \geq M \cdot \size{\Boxes{n}}$, there exists $\abox' \in \Boxes{n}$ such that $\size{\Dataset^{\abox \to \abox'}_{\config}} \geq M$.
    We pick any $d \in \Dataset^{\abox \to \abox'}_{\config}$ and map all remaining data of $\Dataset^{\abox\to}_{\chi}$ to $d$.

    This concludes the construction of $\mu$.
    It remains to prove that $\mu$ fulfils the requirements of the claim.
The first two items follow directly from the definition of $\mu$.
    Now let $\abox' \in \Boxes{n}$.
    \begin{itemize}
      \item If $\size{\Dataset^{\abox \to\abox'}_{\config}} \leq M$ for all $\abox \in \Boxes{\functionf(n)}$, then, by definition of $\mu$,
        $\size{\Dataset^{\abox \to\abox'}_{\config}} = \size{\mu^{-1}(\Dataset^{\abox \to\abox'}_{\config})}$ for all $\abox \in \Boxes{\functionf(n)}$.
        As $\Dataset^{\to\abox'}_{\config} = \bigsqcup_{\abox \in \Boxes{\functionf(n)}} \Dataset^{\abox \to\abox'}_{\config}$,
        we obtain $\size{\Dataset^{\to\abox'}_{\config}} = \size{\mu^{-1}(\Dataset^{\to\abox'}_{\config})}$.

      \item If $\size{\Dataset^{\abox \to\abox'}_{\config}} \geq M$ for some $\abox \in \Boxes{\functionf(n)}$, then $\size{\mu^{-1}(\Dataset^{\abox \to\abox'}_{\config})} \geq M$ by definition of $\mu$.
        As $\Dataset^{\abox \to\abox'}_{\config} \subseteq \Dataset^{ \to\abox'}_{\config}$, we obtain that both $\size{\Dataset^{ \to\abox'}_{\config}}$ and $\size{\mu^{-1}(\Dataset^{\abox \to\abox'}_{\config})}$ are at least $M$.
    \end{itemize}
    This concludes the proof of the claim.
  \end{claimproof}

  We now use similar arguments as above to define the mapping $\nu$.
  The proof is almost identical, but we provide it in full to avoid any confusion.
For every $d \in \Dataset_{\chi}$, we let $\Agentset_{\config, d}$ (resp. $\Agentset_{\chi, d}$) be the set of agents in $\Agentset_{\config}$ (resp. $\Agentset_{\chi}$) with datum $d$.
  Moreover, for all $q, q' \in Q$, we set $\Agentset^{q\to}_{\config, \mu(d)} \deff \setc{a \in \Agentset_{\config,\mu(d)}}{\config_{start}(a) = q}$,
  $\Agentset^{q\to}_{\chi,d} \deff \setc{a \in \Agentset_{\chi,d}}{\chi_{start}(a) = q}$,
  $\Agentset^{\to q'}_{\config,\mu(d)} \deff \setc{a \in \Agentset_{\config,\mu(d)}}{\config_{end}(a) = q'}$,
  and $\Agentset^{q \to q'}_{\config, \mu(d)} \deff \Agentset^{q\to}_{\config, \mu(d)} \cap \Agentset^{\to q'}_{\config, \mu(d)}$.

  \begin{claim}
    \label{claim:equiv-nu}
    Let $\mu$ be the mapping constructed in the previous claim.
    There exists a mapping $\nu \colon \Agentset_{\chi} \to \Agentset_{\config}$ such that, for all $d \in \Dataset_{\chi}$, it holds that
    \begin{enumerate}[label = (\Alph*), leftmargin = 0.5cm]
      \item\label{item:equiv-nu-preserves-state} for all $a \in \Agentset_{\chi,d}$, we have $\chi_{start}(a) = \config_{start}\bigl(\nu(a)\bigr)$,

      \item\label{item:equiv-nu-observed-antecedent} for all "observed" $a' \in \Agentset_{\config,\mu(d)}$, we have $\nu^{-1}(a')\neq \emptyset$, and

      \item\label{item:equiv-nu-number-states} for all $q' \in Q$, we have $\size{\nu^{-1}(\Agentset^{\to q'}_{\config,\mu(d)})} = \size{\Agentset^{\to q'}_{\config,\mu(d)}}$,
        or $\size{\nu^{-1}(\Agentset^{\to q'}_{\config,\mu(d)})} \geq n$ and $\size{\Agentset^{\to q'}_{\config,\mu(d)}}\geq n$.
    \end{enumerate}
  \end{claim}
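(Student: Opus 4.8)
The plan is to replay, one datum at a time and at the level of individual agents, the argument that produced $\mu$ in \cref{claim:equiv-mu}. Fix a datum $d \in \Dataset_{\chi}$ and abbreviate $d' \deff \mu(d)$. By \ref{item:equiv-mu-preserves-equiv} we have $(\chi_{start}, d) \equiv_{\functionf(n)} (\config_{start}, d')$, hence for every state $q \in Q$ either $\size{\Agentset^{q\to}_{\chi,d}} = \size{\Agentset^{q\to}_{\config,d'}}$ or both of these sets have size at least $\functionf(n)$. Recall moreover that, by the simplification performed at the very start of the proof of \cref{lem:eq-classes-coreach} via \cref{cor:run-few-observed-data-and-agents}, at most $\size{\prot}^3$ agents of datum $d'$ are "observed" in $\run$. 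I would define $\nu$ separately on each set $\Agentset^{q\to}_{\chi,d}$, always taking values in $\Agentset^{q\to}_{\config,d'}$, which already yields \ref{item:equiv-nu-preserves-state}.

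Fix $q \in Q$. If $\size{\Agentset^{q\to}_{\chi,d}} = \size{\Agentset^{q\to}_{\config,d'}}$, let $\nu$ restrict to an arbitrary bijection between these two sets. Otherwise both have size at least $\functionf(n)$, and I would first pick, for every $q' \in Q$, a set $A_{q\to q'} \subseteq \Agentset^{q\to q'}_{\config,d'}$ of agents that must receive an antecedent: set $A_{q\to q'} \deff \Agentset^{q\to q'}_{\config,d'}$ if $\size{\Agentset^{q\to q'}_{\config,d'}} \leq n$; otherwise put into $A_{q\to q'}$ all "observed" agents of $\Agentset^{q\to q'}_{\config,d'}$ and then throw in arbitrary further agents of $\Agentset^{q\to q'}_{\config,d'}$ until $\size{A_{q\to q'}} > n$. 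Since at most $\size{\prot}^3$ agents of datum $d'$ are observed, $\size{A_{q\to q'}} \leq \size{\prot}^3 + n$ in every case, whence $\bigsize{\bigcup_{q'\in Q} A_{q\to q'}} \leq \size{\prot}\cdot(\size{\prot}^3 + n) = \functionf(n) \leq \size{\Agentset^{q\to}_{\chi,d}}$. So I can give each agent of $\bigcup_{q'\in Q} A_{q\to q'}$ a distinct antecedent inside $\Agentset^{q\to}_{\chi,d}$; then, as $\size{\Agentset^{q\to}_{\config,d'}} \geq \functionf(n) > n\cdot\size{\prot}$, pigeonhole gives some $q''$ with $\size{\Agentset^{q\to q''}_{\config,d'}} > n$, so $\size{A_{q\to q''}} > n$ by construction, and I map every remaining agent of $\Agentset^{q\to}_{\chi,d}$ to a fixed agent of $A_{q\to q''}$. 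Doing this for all $q$ and all $d$ defines $\nu$.

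It then remains to verify \ref{item:equiv-nu-observed-antecedent} and \ref{item:equiv-nu-number-states}, which I expect to be routine counting rather than a real obstacle. For \ref{item:equiv-nu-observed-antecedent}: an observed agent $a' \in \Agentset_{\config,d'}$ with $\config_{start}(a') = q$ and $\config_{end}(a') = q'$ lies in $\Agentset^{q\to q'}_{\config,d'}$ and hence in $\Agentset^{q\to}_{\config,d'}$; in the bijection case it is hit because the bijection is onto $\Agentset^{q\to}_{\config,d'}$, and in the other case it belongs to $A_{q\to q'}$ (included explicitly when $\size{\Agentset^{q\to q'}_{\config,d'}} > n$, and $A_{q\to q'} = \Agentset^{q\to q'}_{\config,d'}$ otherwise), hence again has an antecedent. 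For \ref{item:equiv-nu-number-states}: fix $q'$ and use $\Agentset^{\to q'}_{\config,d'} = \bigsqcup_{q\in Q} \Agentset^{q\to q'}_{\config,d'}$ (here $\nu^{-1}$ is read within $\Agentset_{\chi,d}$, whose $\nu$-image lies in $\Agentset_{\config,d'}$); for each $q$ one has $\size{\nu^{-1}(\Agentset^{q\to q'}_{\config,d'})} = \size{\Agentset^{q\to q'}_{\config,d'}}$ in the bijection case and in the case $\size{\Agentset^{q\to q'}_{\config,d'}} \leq n$, while $\size{\nu^{-1}(\Agentset^{q\to q'}_{\config,d'})} \geq \size{A_{q\to q'}} > n$ when $\size{\Agentset^{q\to q'}_{\config,d'}} > n$. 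Summing over $q$: if $\size{\Agentset^{q\to q'}_{\config,d'}} < n$ for all $q$, the exact equalities add to $\size{\nu^{-1}(\Agentset^{\to q'}_{\config,d'})} = \size{\Agentset^{\to q'}_{\config,d'}}$; otherwise some $q$ has $\size{\Agentset^{q\to q'}_{\config,d'}} \geq n$, which forces both $\size{\nu^{-1}(\Agentset^{\to q'}_{\config,d'})} \geq n$ and $\size{\Agentset^{\to q'}_{\config,d'}} \geq n$. The one place that needs attention — and the reason $\functionf$ was set to $n \mapsto (n+\size{\prot}^3)\cdot\size{\prot}$ — is precisely that the per-target-state budget $n + \size{\prot}^3$ coming from \cref{cor:run-few-observed-data-and-agents}, summed over the at most $\size{\prot}$ possible target states, must stay within the size $\functionf(n)$ guaranteed for $\Agentset^{q\to}_{\chi,d}$ by the $\functionf(n)$-equivalence of $\config_{start}$ and $\chi_{start}$.
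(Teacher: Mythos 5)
Your construction is the same as the paper's: define $\nu$ datum by datum and, within each datum, state by state, using a bijection when the per-state counts agree and otherwise selecting per target state $q'$ a set $A_{q\to q'}$ containing all observed agents padded up to roughly $n$, assigning these distinct antecedents (which fits because $\size{\prot}\cdot(n+\size{\prot}^3)=\functionf(n)$), and dumping the remaining agents onto one agent of a large class found by pigeonhole; the verification of (B) and (C) is also the paper's. The proposal is correct and takes essentially the same route, differing only in immaterial details such as padding $A_{q\to q'}$ to size $>n$ rather than $\geq n$.
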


  \begin{claimproof}
    We proceed datum by datum.
    That is, for each $d \in \Dataset_{\chi}$, we define the map $\nu$ over agents in $\chi_{start}$ with datum $d$.

    Let $d \in \Dataset_{\chi}$.
    By \cref{item:equiv-mu-preserves-equiv} of \cref{claim:equiv-mu}, we have $(\chi_{start}, d) \equiv_{\functionf(n)} (\config_{start}, \mu(d))$.
    Thus, we know that for all $q \in Q$, it holds that $\size{\Agentset^{q\to}_{\config,\mu(d)}} = \size{\Agentset^{q\to}_{\chi,d}}$,
    or both $\size{\Agentset^{q\to}_{\config,\mu(d)}}$ and $\size{\Agentset^{q\to}_{\chi,d}}$ are at least $\functionf(n)$.
In the first case, we let $\nu$ form a bijection between $\Agentset^{q\to}_{\chi,d}$ and $\Agentset^{q\to}_{\config,\mu(d)}$.
Otherwise, in the second case, for each $q' \in Q$, if $\size{\Agentset^{q\to q'}_{\config,\mu(d)}} \leq n$, then we select $\size{\Agentset^{q\to q'}_{\config,\mu(d)}}$ agents from $\Agentset^{q\to}_{\chi,d}$ and let $\nu$ form a bijection between them and $\Agentset^{q\to q'}_{\config,\mu(d)}$.
    Otherwise, if $\size{\Agentset^{q\to q'}_{\config,\mu(d)}} > n$, then we select $n$ agents from it, add all those which are "observed" in $\run$,
    and we let $\nu$ be surjective on this set by picking an agent from $\Agentset^{q\to }_{\chi,d}$ for every one of them.

    As there are at most $\size{\prot}^3$ "observed" agents with datum $d$ in $\run$ (by \cref{cor:run-few-observed-data-and-agents}, as mentioned at the beginning of the proof), we have selected at most $n + \size{\prot}^3$ agents per state $q'$.
    Hence, we have selected at most $\functionf(n)$ agents in total.
    Thus, we can indeed map an agent of $\Agentset^{q\to}_{\chi,d}$ to each one of them.

    As $\size{\Agentset^{q\to}_{\config,\mu(d)}} \geq \functionf(n) \geq n \cdot \size{\prot}$, there exists $q' \in Q$ such that $\size{\Agentset^{q\to q'}_{\config,\mu(d)}} \geq n$. We pick any $a \in \Agentset^{q\to q'}_{\config,\mu(d)}$ and map all remaining agents of $\Agentset^{q\to}_{\chi,d}$ to $a$.

    This concludes the construction of $\nu$.
    It remains to prove that $\nu$ fulfils the requirements of the claim.
    The two first items are immediate from the definition.
    The third item is obtained by observing that
    \begin{itemize}
      \item if $\size{\Agentset^{q \to q'}_{\config,\mu(d)}} \leq n$ for all $q \in Q$, then $\size{\nu^{-1}(\Agentset^{q \to q'}_{\config,\mu(d)})} = \size{\Agentset^{q \to q'}_{\config,\mu(d)}}$ for all $q$,
        and thus $\size{\nu^{-1}(\Agentset^{\to q'}_{\config,\mu(d)})} = \size{\Agentset^{\to q'}_{\config,\mu(d)}}$, and
      \item if $\size{\Agentset^{q \to q'}_{\config,\mu(d)}} \geq n$ for some $q \in Q$, then $\size{\nu^{-1}(\Agentset^{q \to q'}_{\config,\mu(d)})} \geq n$, and thus both $\size{\nu^{-1}(\Agentset^{\to q'}_{\config,\mu(d)})}$ and $\size{\Agentset^{\to q'}_{\config,\mu(d)}}$ are at least $n$.
    \end{itemize}
    This concludes the proof of the claim.
  \end{claimproof}

  In the remainder of the proof, we construct a "run" $\pi \colon \chi_{start} \runto \chi_{end}$ with $\chi_{end} \metaequiv{n}{M} \config_{end}$ as required in the statement of the lemma.
  For that, we decompose $\run$ into "steps" $\run = \config_0 \to \config_1 \to \cdots \to \config_m$, with $\config_0 = \config_{start}$ and $\config_m = \config_{end}$.
  By induction, we construct a sequence of "configurations" $\chi_0, \ldots, \chi_m$ and runs $\pi_1, \ldots, \pi_m$ such that, for all $i \in \nset{0}{m}$,
  we have $\config_i\bigl(\nu(a)\bigr) = \chi_i(a)$ for all $a \in \Agentset_\chi$ and, for all $j \in \nset{1}{m}$, it holds that $\pi_j \colon \chi_{j-1} \runto \chi_j$.

  We start by setting $\chi_0 \deff \chi_{start}$.
  By \cref{item:equiv-nu-preserves-state} of \cref{claim:equiv-nu}, we have $\config_0\bigl(\nu(a)\bigr) = \chi_0(a)$ for all $a$.
  Now assume we constructed $\chi_i$ and $\pi_i$ for all $i \leq k$ for some $k<m$.
  We construct $\chi_{k+1}$ and $\pi_{k+1}$ as follows.
  Consider the "step" $\config_k \to \config_{k+1}$,
  and let $\delta = q \trans{q_o}{\sometest} q'$ be the used transition,
  let $a^\config$ be the "observing" agent, and let $a^\config_o$ be the "observed" agent in that step.
Let $\set{a^\chi_1, \ldots, a^\chi_p} \deff \nu^{-1}(a^\config)$. Note that $\mu(\dataof(a^\chi_j)) = \dataof(a^\config)$ for all those agents.
\begin{itemize}
    \item If $\sometest$ is $=$, then $\dataof(a^\config_o) = \dataof(a^\config)$.
      Let $\set{d_1, \ldots, d_r} \deff \mu^{-1}(\dataof(a^\config_o))$.
      By \cref{item:equiv-nu-observed-antecedent} of \cref{claim:equiv-nu}, for all $\ell \in \nset{1}{r}$, since $a^\config_o$ is observed in $\run$,
      there exists $a^\chi_{o,\ell}$ such that $\nu(a^\chi_{o,\ell}) = a^\config_o$ and $\dataof(a^\chi_{o,\ell}) = d_\ell$.
      By the induction hypothesis, for all $\ell$, we have $\chi_k(a^\chi_{o,\ell}) = \config_k(a^\config_o)$, and for all $j$, we have $\chi_k(a^\chi_j) = \config_k(a^\config)$.
      Thus, for every $\ell \in \nset{1}{r}$ and sequentially for every $j \in \nset{1}{d}$, we can execute a "step" $\step{\sometest}{a^\chi_j}{a^\chi_{o,\ell}}$ using transition $\delta$.
      We call this sequence of "steps" $\pi_{k+1}$, and we call the reached "configuration" $\chi_{k+1}$.
    \item If $\sometest$ is $\neq$, then $\dataof(a^\config_o)$ is "externally observed" in $\run$,
      and by \cref{item:equiv-mu-observed-have-antecedent} of \cref{claim:equiv-mu}, there exists some datum $d$ such that $\mu(d)=\dataof(a^\config_o)$.
      Furthermore, as $a^\config_o$ is "observed" in $\run$, by \cref{item:equiv-nu-observed-antecedent} of \cref{claim:equiv-nu},
      there exists $a^\chi_o$ such that $\nu(a^\chi_o) = a^\config_o$ and $\dataof(a^\chi_o) = \mu(\dataof(a^\config_o)) = d$.
      Moreover, for all $j$, it holds that $\mu(\dataof(a^\chi_j)) = \dataof(a^\config) \neq \dataof(a^\config_o) = \mu(\dataof(a^\chi_o))$.
      By the induction hypothesis, we have $\chi_k(a^\chi_o) = \config_k(a^\config_o)$ and,
      for all $j$, it holds that $\chi_k(a^\chi_j) = \config_k(a^\config)$.
      Thus, sequentially for every $j \in \nset{1}{p}$, we can execute a "step" $\step{\sometest}{a^\chi_j}{a^\chi_o}$ using transition $\delta$.
      We call this sequence of "steps" $\pi_{k+1}$, and we call the reached "configuration" $\chi_{k+1}$.
  \end{itemize}
It is easy to check that $\config_{k+1}(\nu(a)) = \chi_{k+1}(a)$ for all $a$ in both cases. This concludes our induction.
  As a result, we obtain a "configuration" $\chi_{end} = \chi_m$ and a "run" $\pi \colon \chi_{start} \runto \chi_{end}$ such that $\config_i(\nu(a)) = \chi_i(a)$ for all $a$ "appearing" in $\chi_{init}$ and $i \in \nset{1}{m}$.

  All that is left to do is establish that $\chi_{end} \metaequiv{n}{M} \config_{end}$.
  As a direct consequence of \cref{item:equiv-nu-number-states} of \cref{claim:equiv-nu}, it holds that for all $d$, we have $(\config_{end}, \mu(d)) \cubeequiv{n} (\chi_{end}, d)$.
  Using \cref{item:equiv-mu-number-classes} of \cref{claim:equiv-mu}, we obtain that $\config_{end} \metaequiv{n}{M} \chi_{end}$, proving the lemma.
\end{proof}

We obtain the following analogous statement for $\poststarsymb$.

\begin{corollary}
  \label{cor:eq-classes-reach}
  For all $n, M \in \nats$ and all "configurations" $\config_{start}, \config_{end}, \chi_{end} \in \configset$, if there is a run $\run \colon \config_{start} \runto \config_{end}$ and $\config_{end} \equiv_{\functionf(n),\functiong(n,M)} \chi_{end}$, then there is a "configuration" $\chi_{start} \in \configset$ with $\chi_{start} \equiv_{n,M} \config_{start}$ and a run $\pi \colon \chi_{start} \runto \chi_{end}$.
\end{corollary}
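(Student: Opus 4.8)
The plan is to obtain \cref{cor:eq-classes-reach} from \cref{lem:eq-classes-coreach} essentially for free, via a time-reversal argument. First I would introduce the \emph{reverse protocol} $\prot^{\mathrm{rev}}$: it has the same set of states $Q$ as $\prot$, and its transition set is obtained from $\Delta$ by replacing every transition $q_2 \trans{q_1}{\sometest} q_3$ with $q_3 \trans{q_1}{\sometest} q_2$. The key point is that $\prot^{\mathrm{rev}}$ is again an "IOPPUD": the observing agent (in state $q_1$) still does not move, and the data condition $\sometest$, being symmetric in the two agents, carries over unchanged. Moreover, since $\prot^{\mathrm{rev}}$ and $\prot$ have the same state set, the equivalence relations $\cubeequiv{n}$ and $\metaequiv{n}{M}$ --- which depend only on $Q$ --- are literally the same for both protocols, and so are the functions $\functionf$ and $\functiong$, which depend only on $\size{\prot} = \size{\prot^{\mathrm{rev}}}$.

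Next I would check the obvious correspondence between runs: a single step $\config \step{\sometest}{a}{a_o} \config'$ of $\prot$ using a transition $q_2 \trans{q_1}{\sometest} q_3$ is exactly a single step $\config' \step{\sometest}{a}{a_o} \config$ of $\prot^{\mathrm{rev}}$ using the transition $q_3 \trans{q_1}{\sometest} q_2$, since the observed agent $a_o$ and its state $q_1$ coincide in $\config$ and $\config'$. Iterating, $\config_{start} \runto \config_{end}$ holds in $\prot$ if and only if $\config_{end} \runto \config_{start}$ holds in $\prot^{\mathrm{rev}}$. With this in hand, the corollary is immediate: given $\run \colon \config_{start} \runto \config_{end}$ in $\prot$ and $\config_{end} \metaequiv{\functionf(n)}{\functiong(n,M)} \chi_{end}$, I would reverse $\run$ to a run $\config_{end} \runto \config_{start}$ in $\prot^{\mathrm{rev}}$ and apply \cref{lem:eq-classes-coreach} to the "IOPPUD" $\prot^{\mathrm{rev}}$, with $\config_{end}$ in the role of the start configuration, $\config_{start}$ in the role of the end configuration, and $\chi_{end}$ in the role of $\chi_{start}$. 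This produces a configuration $\chi_{start}$ with $\config_{start} \metaequiv{n}{M} \chi_{start}$ and a run $\chi_{end} \runto \chi_{start}$ in $\prot^{\mathrm{rev}}$; reversing once more yields the desired run $\pi \colon \chi_{start} \runto \chi_{end}$ in $\prot$.

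There is essentially no real obstacle here; the only point worth stating with care is that \cref{lem:eq-classes-coreach} is being invoked for a protocol \emph{different} from the one fixed in \cref{sec:equivalence-relation}. This is harmless, because that lemma holds verbatim for every "IOPPUD" and, as observed above, every parameter it refers to ($\functionf$, $\functiong$, $\cubeequiv{n}$, $\metaequiv{n}{M}$) coincides for $\prot$ and $\prot^{\mathrm{rev}}$. One could additionally remark that the implicit presence of idle self-loops $((q_1,q_2),\sometest,(q_1,q_2)) \in \Delta$ is preserved under reversal, although it plays no role here, since reachability and $\metaequiv{n}{M}$ do not depend on fairness.
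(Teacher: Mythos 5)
Your proposal is correct and follows exactly the paper's own argument: reverse every transition $q_2 \trans{q_1}{\sometest} q_3$ to $q_3 \trans{q_1}{\sometest} q_2$ and apply \cref{lem:eq-classes-coreach} to the reversed system. Your additional checks (that the reversed protocol is still an "IOPPUD", that steps reverse exactly, and that $\functionf$, $\functiong$, and the equivalence relations depend only on $Q$ and $\size{\prot}$ and hence coincide for both protocols) are the right details to verify and are implicit in the paper's one-line proof.
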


\begin{proof}
  This result is obtained by reversing every transition in the "IOPPUD", \ie, replacing every $q_2 \trans{q_1}{\sometest} q_3$ with $q_3 \trans{q_1}{\sometest} q_2$.
  We can then apply \cref{lem:eq-classes-coreach} on the reversed system.
\end{proof}

\subsection{Proof of Proposition~\ref{prop:bound-coefs-GRE}}
\label{app-bound-coefs-GRE}

\boundCoefsGRE*

\begin{proof}
  In order to improve the readability of this proof,
  instead of directly using the bounds in terms of $\functionf(n)$ and $\functiong(n,M)$,
  we use two polynomial functions $\poly_1, \poly_2$.
  Although the degree of the polynomials will be larger than necessary, the bounds will suffice to prove \exps\ membership.

  We let $\poly_1, \poly_2 \colon \nats \to \nats$ be polynomial functions such that, for all $n,M \in \nats$ with $n >0$,
  it holds that $\functionf(n) \leq n \cdot \poly_1(\size{\prot})$ and $\functiong(n,M)\leq M \cdot n^{\poly_2(\size{\prot})}$.
  Note that such functions exist by the definition of $\functionf$ and $\functiong$.
  Let $\poly \colon \nats \to \nats, n \mapsto \poly_1(n) \cdot \poly_2(n)$.
  Moreover, let $\alpha$ and $\beta$ be functions with $\alpha(\prot, E, F) \deff \norm{E} \cdot \poly_1(\size{\prot})^{\size{F}}$ and $\beta(\prot, E, F) \deff \norm{E}^{\poly(\size{\prot}) \cdot \size{F}^2}$
  for all protocols $\prot$ and all "GRE" $E, F$.

  For every sub-expression $F$ of $E$, we prove by induction on $\size{F}$ that $\semanticsEP{F}{\prot}$ can be described as a union of $(\alpha(\prot, E, F), \beta(\prot, E, F))$-"containers".
\begin{itemize}

    \item If $F$ is an "interval predicate" then, by \cref{lem-pred-to-cont}, $\semanticsEP{F}{\prot}$ is a union of $(\norm{F},\norm{F})$-"containers",
      and thus, by \cref{lem:metacube-monotonicity}, it is also a union of $(\alpha(\prot, E, F),\beta(\prot, E, F))$-"containers".

    \item Suppose $F = \setcomplement{G}$ for some "GRE" $G$.
      By the induction hypothesis, $\semanticsEP{G}{\prot}$ is a union of $(\alpha(\prot, E, G),\beta(\prot, E, G))$-"containers",
      and thus also of $(\alpha(\prot, E, F),\beta(\prot, E, F))$-"containers" by \cref{lem:metacube-monotonicity}.
      Since $(\alpha(\prot, E, F),\beta(\prot, E, F))$-"containers" form a partition of the set of configurations,
      this shows that $\semanticsEP{E}{\prot}$ is also a union of $(\alpha(\prot, E, F),\beta(\prot, E, F))$-"containers".

    \item If $F = G_1 \cup G_2$, then again, by the induction hypothesis and \cref{lem:metacube-monotonicity},
      both $\semanticsEP{G_1}{\prot}$ and $\semanticsEP{G_2}{\prot}$ are unions of $(\alpha(\prot, E, F),\beta(\prot, E, F))$-"containers",
      and so is $\semanticsEP{F}{\prot}$.

    \item Suppose $F = \prestar{G}$ for some "GRE" $G$.
      By the induction hypothesis, $\semanticsEP{G}{\prot}$ is a union of $(\alpha(\prot, E, G),\beta(\prot, E, G))$-"containers".
      We show that for all "configurations" $\config_1, \config_2$, if we have $\config_{1} \equiv_{\alpha(\prot, E, F),\beta(\prot, E, F)} \config_{2}$
      and $\config_{1}\in \semanticsEP{F}{\prot}$, then $\config_{2}\in \semanticsEP{F}{\prot}$.
By \cref{lem:eq-classes-coreach}, it suffices to show that $\config_1 \equiv_{\functionf(\alpha(\prot, E, G)),\functiong(\alpha(\prot, E, G),\beta(\prot, E, G))} \config_{2}$.
      In the following, we assume that the constant term of $\poly_1$ is positive and that $\poly_1$ does not have any negative coefficients.
      Hence, for all $\prot, E, G$, we have $\alpha(\prot, E, G)>0$.
      Then it holds that
      \begin{align*}
        \functionf(\alpha(\prot, E, G))
        &\leq \alpha(\prot, E, G) \cdot \poly_1(\size{\prot})\\
        &= \norm{E} \cdot \poly_1(\size{\prot})^{\size{G}} \cdot \poly_1(\size{\prot}) \\
        &= \norm{E} \cdot \poly_1(\size{\prot})^{\size{G} + 1}\\
        &= \norm{E} \cdot \poly_1(\size{\prot})^{\size{F}} \\
        &= \alpha(\prot, E, F)
      \end{align*}
      and
      \begin{align*}
        \functiong(\alpha(\prot, E, G),\beta(\prot, E, G))
        &\leq \beta(\prot, E, G) \cdot \alpha(\prot, E, G)^{\poly_2(\size{\prot})}\\
        &= \norm{E}^{\poly(\size{\prot}) \cdot \size{G}^2} \cdot (\norm{E} \, \poly_1(\size{\prot}))^{\size{G} \cdot \poly_2(\size{\prot})} \\
        &\leq \norm{E}^{\poly(\size{\prot}) \, \size{G}^2} \, \norm{E}^{\size{G} \, \poly(\size{\prot})} \\
        &\leq\norm{E}^{\poly(\size{\prot}) \, \size{F}^2}\\
        &= \beta(\prot, E, F).
      \end{align*}
      Using \cref{lem:metacube-monotonicity}, we conclude that
      $\config_1 \equiv_{\functionf(\alpha(\prot, E, G)),\functiong(\alpha(\prot, E, G),\beta(\prot, E, G))} \config_{2}$.

    \item If $F = \poststar{G}$ for some "GRE" $G$, we proceed the same way as in the previous case, using \cref{cor:eq-classes-reach} instead of \cref{lem:eq-classes-coreach}.
  \end{itemize}

  All in all, by induction, this proves that $\semanticsEP{E}{\prot}$ is a union of $(\alpha(\prot,E,E), \beta(\prot,E,E))$-"containers",
  which concludes the proof of \cref{prop:bound-coefs-GRE}.
\end{proof}
 	\section{Decidability and Upper Complexity Bounds: Proofs for Section~\ref{sec:complexity}}
\label{app-complexity-upper-bounds}

\ConfigInGREPspace*

We prove the following auxiliary result:
  \begin{lemma}
  \label{ppud-reachability-pspace}
  The following problem is decidable in \pspace:
  given a "PPUD" $\mathcal{P}$ and two "configurations" $\config_1, \config_2$ described "data-explicitly",
  decide if $\config_1 \runto \config_2$.
\end{lemma}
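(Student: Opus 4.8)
The plan is to reduce reachability in the agent-named semantics of a "PPUD" to reachability in a finite graph whose vertices and single-"step" edges admit a compact description, and then to invoke Savitch's theorem.

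First I would record two invariants of the "step" relation. In every "step" $\config \to \config'$, exactly two agents change their state, both lying in $Q$ before and after the "step", and both keeping their datum; hence the set of agents "appearing" in a "configuration", together with their data, is preserved along any "run". In particular, writing $m$ for the number of agents "appearing" in $\config_1$ and $m_d$ for the number of "$d$-agents" in $\config_1$, all these quantities are invariant, and $\config_1 \runto \config_2$ can hold only if $\config_1$ and $\config_2$ have the same "appearing" data and $\sum_{q \in Q} \counting{\config_1}{d}(q) = \sum_{q \in Q} \counting{\config_2}{d}(q)$ for each such datum $d$; the algorithm rejects immediately if this fails. Moreover, the "step" relation is invariant under every bijection of $\Agentset$ that preserves $\dataof$, since transitions only mention states and (dis)equality of data. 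Therefore whether $\config_1 \runto \config_2$ holds depends only on the families of counting vectors $(\counting{\config_1}{d})_d$ and $(\counting{\config_2}{d})_d$, so one may work with \emph{counting configurations}, which are exactly the "data-explicit" descriptions: a "step" then amounts to choosing a transition $((q_1,q_2),\sometest,(q_3,q_4)) \in \Delta$ together with one datum (if $\sometest$ is an equality) or two distinct data (if $\sometest$ is a disequality) having enough agents in the source states, and moving the two agents accordingly -- taking care of the cases in which some of $q_1,q_2,q_3,q_4$ coincide and of the side condition on the numbers of available agents per datum.

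Next I would bound the relevant sizes. If $k$ is the number of "appearing" data, the "data-explicit" encoding of $\config_1$ has size $\Theta(k \cdot \size{\prot} \cdot \logarithm{m})$. Every counting configuration reachable from $\config_1$ assigns to each datum $d$ a vector of $\nats^Q$ summing to $m_d \le m$, so there are at most $\prod_d (m_d + 1)^{\size{\prot}} \le (m+1)^{k \cdot \size{\prot}}$ of them; each of them is stored in $\bigO(k \cdot \size{\prot} \cdot \logarithm{m})$ space, which is polynomial in the input, and a counter ranging up to $(m+1)^{k \cdot \size{\prot}}$ uses only polynomially many bits. Finally, given a counting configuration together with a guessed transition and the datum (resp.\ pair of distinct data) it uses, deciding whether the "step" is enabled and computing the successor takes polynomial time.

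The algorithm then guesses a "run" one "step" at a time: it maintains a current counting configuration $\config$, initially $\config_1$, and a step counter $c$, initially $0$; in each round it accepts if $\config = \config_2$, rejects if $c$ exceeds the bound $(m+1)^{k \cdot \size{\prot}}$ on the number of counting configurations, and otherwise nondeterministically guesses a transition of $\prot$ and the datum (equality case) or unordered pair of distinct data (disequality case) it uses, checks that the resulting "step" is enabled in $\config$, replaces $\config$ by its successor, and increments $c$. This procedure uses polynomial space, so the problem lies in $\npspace$, hence in $\pspace$ by Savitch's theorem. The point requiring most care is the equivalence between agent-named reachability and the counting semantics, via the $\dataof$-preserving symmetry argument, together with the bookkeeping of coinciding transition states and of the equality/disequality side condition; this is routine but must be carried out explicitly.
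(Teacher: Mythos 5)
Your proposal is correct and follows essentially the same route as the paper's proof: both exploit the invariance of the numbers of agents and data along a run to reduce the question to path-finding in a finite graph of configurations, each storable in polynomial space, and conclude via nondeterministic search and Savitch's theorem. You simply make explicit the details the paper leaves implicit (the quotient to counting configurations via $\dataof$-preserving bijections, the bound on the number of configurations, and the step-counter bookkeeping).
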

\begin{proof}
  Let $k$ be the number of data appearing in $\config_{1}$, and $m$ the number of agents in $\config_1$.
  All "steps" preserve the number of data and the number of agents.
  As a result, the graph of configurations reachable from $\config_{1}$ contains only configurations with $k$ data and $m$ agents; we can look for a path between $\config_1$ and $\config_{2}$ in this graph in \pspace.
\end{proof}

We now prove \cref{prop:config-in-GRE-pspace}.
\begin{proof}[Proof of \cref{prop:config-in-GRE-pspace}]
  By \cref{ppud-reachability-pspace}, there exists a polynomial function $\poly_{reach}$ such that we can check reachability between two "configurations" represented "data-explicitly" with $k$ data and $m$ agents in deterministic space $\poly_{reach}(\size{\prot}, k, \log(m))$.
  We denote the size of the encoding of a "GRE" $E$ by $\intro*\encodingsize{E}$.

  Let $\psi = \dot\exists \datvar_1, \ldots, \datvar_M, \, \bigwedge_{q \in Q} \bigwedge_{j=1}^M \#(q,\datvar_j) \in [A_{q,j},B_{q,j}]$ be a "simple interval predicate" of "height@@pred" $n$ and "width@@pred" $M$,
  and let $\config$ a "configuration" described "data-explicitly" with $k$ data and $m$ agents.
  One can check whether $\config$ satisfies $\psi$ by enumerating all injective mappings from $\set{\datvar_i \mid 1\leq i\leq M}$ to the data appearing in $\config$, and checking whether one of them satisfies the interval conditions on the number of agents in each state.
  This requires polynomial space in $k+\log(m)+ \encodingsize{\psi}$.

  Given an "interval predicate" $\phi$ and a "configuration" $\config$, it is then straightforward to check which "simple interval predicates" appearing in $\phi$ are satisfied by $\config$ and infer the satisfaction of the "interval predicate" from this, all in polynomial space.
  Let $\poly_{sat}$ be a polynomial such that $\poly_{sat}(k, \log(m), \encodingsize{\psi})$ bounds the required space.
We define $\poly(x, y, z, t) \deff \poly_{sat}(y,z,t) + \poly_{reach}(x,y,z) + xyz$.

  We proceed by induction on $E$ to show that checking if a "configuration" $\config$ is in $\semanticsEP{E}{\prot}$ can be done in space $\bigO\bigl(\size{E} \cdot \poly(\size{\prot}, k, \log(m), \encodingsize{E})\bigr)$,
  where $k$ and $m$ are the numbers of data and agents appearing in $\config$, respectively.

  \begin{itemize}
    \item
    If $E$ is an "interval predicate", then we simply evaluate whether $\config$ satisfies it.
    It is straightforward to get the truth value of each "simple interval predicate" over a "configuration" in space $\poly_{sat}(k, \log(m), \encodingsize{E})$,
    and then infer the satisfaction of the "interval predicate" from it.

    \item If $E = \setcomplement{F}$, then it suffices to check if $\config$ is in $\semanticsEP{F}{\prot}$ and reverse the answer, which requires no extra space.
      Hence, by the induction hypothesis, we need space at most $\size{F}\cdot\poly(\size{\prot},k,\log(m), \encodingsize{E}) \leq \size{E}\cdot\poly(\size{\prot},k,\log(m), \encodingsize{E})$.

    \item If $E = F_1 \cup F_2$, then it suffices to check if $\config$ is in $\semanticsEP{F_1}{\prot}$, accept if it is the case,
      and erase the previous computation and check if $\config$ is in $\semanticsEP{F_2}{\prot}$ otherwise.
      By the induction hypothesis, the required space is then the maximum of the space required by the two sub-computations, which is at most $\size{E}\cdot \poly(\size{\prot},k, \log(m), \encodingsize{E})$.

    \item If $E = \poststar{F}$, then we use the fact that a "step" of a run never changes the number of data or the number of agents in a "configuration".
      Hence, $\config \in \semanticsEP{E}{\prot}$ if and only if there exists $\config'$ with $k$ data and $m$ agents such that $\config'\in \semanticsEP{F}{\prot}$ and $\config' \runto \config$.
      Thus, we can enumerate all configurations with $k$ data and $m$ agents, which requires space $k\cdot \size{\prot} \cdot \log(m)$,
      and check for each one of them whether it is in $\semanticsEP{F}{\prot}$, which requires space $\size{F} \cdot \poly(k, \log(m), \encodingsize{F})$ by the induction hypothesis.
      For every such configuration $\config'$, we check whether $\config$ is reachable from it, which requires space $\poly_{reach}(\size{\prot},k, \log(m))$.
      As a result, we can check whether $\config$ is in $\semanticsEP{E}{\prot}$ in space $\size{F} \cdot \poly(\size{\prot}, k, \log(m), \encodingsize{F}) + k \cdot \size{\prot} \cdot \log(m) + \poly_{reach}(\size{\prot}, k, \log(m)) \leq \size{E} \cdot \poly(\size{\prot}, k, \log(m), \encodingsize{E})$.

    \item The case $E = \prestar{F}$ is analogous to the previous one.
  \end{itemize}

  Our induction shows that we can check if $\config \in \semanticsEP{E}{\prot}$ in deterministic space at most $\size{E} \cdot \poly(k, \size{Q}, \log(m), \encodingsize{E})$.
  As a result, the problem at hand is in \pspace.
\end{proof}
 	\section{Details for Theorem~\ref{nexptime-hardness}}
\label{app:lowerbound}

We proceed by reduction from the problem of tiling an exponentially large grid. 
Recall the following definitions:
A ""tiling instance"" is a tuple $(2^n, \tcolors, \tiles)$ where $n \geq 1$, $\tcolors$ is a finite set of ""colours@@tile"" with special colour $\twhite$ and $\tiles  = \set{t_1, \dots, t_m} \subseteq \tcolors^4$ is a finite set of ""tiles"". 
A "tile" represents a square whose $4$ edges have a colour; we write $\atile \in \tiles$ as $\atile =: (\topcolor{\atile}, \bottomcolor{\atile}, \leftcolor{\atile}, \rightcolor{\atile})$. 
The \emph{size} of the "tiling problem" is $n + \size{\tcolors} + m$ (implicitly, $2^n$ is encoded in binary). 
 Informally, the "tiling problem" asks for the existence of a "tiling" of the $2^n \times 2^n$ grid, \ie, a mapping $\tiling: \nset{0}{2^n{-}1} \times \nset{0}{2^n{-}1} \rightarrow \tiles$ where the colours of neighbouring tiles match and the borders of the grid are white:
\begin{definition}[\cite{tiling_problems}]
\label{def:tiling-problem}
The following ""tiling problem"" is \nexpt-complete.
\\ {\bf Input}: A "tiling instance" $(2^n,\tcolors, \tiles)$,
\\ {\bf Question}: Does there exist a ""tiling"", \ie, a mapping $\tiling: \nset{0}{2^n{-}1} \times \nset{0}{2^n{-}1} \rightarrow \tiles$ such that:
    \begin{enumerate}
    \item \label{item:vertical_white} for all $i \in \nset{0}{2^n-1}$, $\leftcolor{\tiling(i,0)} = \rightcolor{\tiling(i,2^n-1)} = \twhite$,
    \item \label{item:vertical_match} for all $i \in \nset{0}{2^n-1},j \in \nset{0}{2^n-2}$, $\rightcolor{\tiling(i,j)} = \leftcolor{\tiling(i,j+1)}$,
    \item \label{item:horizontal_white} for all $j \in \nset{0}{2^n-1}$, $\topcolor{\tiling(0,j)} = \bottomcolor{\tiling(2^n-1,j)} = \twhite$,
    \item \label{item:horizontal_match} for all $i \in \nset{0}{2^n-2},j \in \nset{0}{2^n-1}$, $\bottomcolor{\tiling(i,j)} = \topcolor{\tiling(i+1,j)}$?
    \end{enumerate}
\end{definition}

This problem is \nexpt-complete \cite{tiling_problems}. We will therefore provide a polynomial-time reduction from this problem to the negation of the "emptiness problem for GRE".

We use the following notations for "interval predicates": 
\begin{itemize}
\item given a state $q$, $\intro*\presenceformula{q} \deff \dot\exists \datvar, \, \#(q,\datvar) > 0$ is the "simple interval predicate" indicating that $q$ is populated,
\item given a set of states $S \subseteq Q$, $\intro*\absenceformula{S}\deff\bigwedge_{q \in S} (\dot \forall \datvar, \, \#(q,\datvar) =0) = \bigwedge_{q \in S} \neg(\dot \exists \datvar, \, \#(q,\datvar) >0)$ is the "interval predicate" indicating that all states in $S$ are empty,
\item given a state $q$, $\intro*\mandatoryformula{q} \deff \bigwedge_{p \in Q}\dot\forall \datvar, \, (\#(p,\datvar) = 0 \lor \#(q,\datvar) >0)$ is the "interval predicate" indicating that all data appearing in the configuration has some agent on $q$. 
\end{itemize}

\subsection{Encoding the Tiling and Checking for Duplicates}
\label{subsec:tiling_encoding}

\cref{fig:tiling_reduction_1} displays how one encodes the tiling into a configuration and how one checks for duplicates. A \emph{duplicate} designates that two values encode the same tile coordinates. \cref{fig:encoding_tiling} corresponds to where $\tiling$ is encoded. 
We only want $\setof{\reductionGRE}$ to contain configurations where no data type has two agents in the tile type part. To do so, we make it so that, if this condition is violated, one will be able to cover $\cheatstate$. For example, from $h_1(0)$, there is a transition to $\cheatstate$ labelled ``$h_1(1), =$'', so that a configuration where a datum has agents on both $h_1(0)$ and $h_1(1)$ will be in
 $\prestar{\presenceformula{\cheatstate}}$. Similarly, we ensure that a datum does not have agents on states $\statet{i}$ and $\statet{j}$ for $i \ne j$. 

The subprotocol of \cref{fig:no_duplication} allows us to enforce that no two data encode the same tile. We will impose that every datum has at least one agent on $D_a$ and one on $D_b$. $\stateover{3}$ can be covered if at least three agents go to the blue part and not to the sink state; therefore, in order to reach a configuration covering $\dupstate$ but from which one cannot cover $\stateover{3}$, one need to send at most two agents to the blue part. Because the left and the right track must observe each other with disequality tests, this is in fact only possible with agents of different data, one in the left track and one in the right track. But then, the two agents can make it all the way to $\dupstate$ if and only if their data have an agent in common on every bit of horizontal and vertical coordinate which, assuming that they have only one agent per bit, means that the two data encode the same coordinates.
 Overall, the predicate that we add to $\reductionGRE$ for the tiling encoding is $\intro*\encodingGRE$, defined as follows:
\AP\begin{align*}
\encodingGRE \deff \setcomplement{\prestar{\presenceformula{\cheatstate}}} \cap \mandatoryformula{D_a} \cap \mandatoryformula{D_b}  \\ \cap \, \absenceformula{Q_{\mathsf{blue}} \cup Q_{\mathsf{red}}} \cap \setcomplement{\prestar{\presenceformula{\dupstate} \cap \setcomplement{\prestar{\presenceformula{\stateover{3}}}}}}
\end{align*}
where $Q_{\mathsf{blue}}$ and $Q_{\mathsf{red}}$ are the states in the blue and red parts, respectively.

\begin{figure}
\centering
\begin{subfigure}{.4\textwidth}
  \centering
  \begin{tikzpicture}[auto, xscale = 0.6, yscale = 0.8]
\tikzset{every node/.style = {font = {\small}}}
\tikzset{every state/.style = {font = {\scriptsize}, inner sep = 1pt, minimum size = 15pt}}
\begin{scope}[xshift = 6cm, yshift = -0.7cm]
\draw[rounded corners=2mm,dashed,fill=black!10] (-3,-1) -| (2.1,-6.8) -| (-3,-6.8) -- cycle;
\node[align = center, color = black!50] at (-5, -4) {Horizontal \\ coordinate};
\node[state] at (-0.5,-2) (b00h) {$h_1(0)$};
\node[state] at (1.2,-2) (b01h) {$h_1(1)$};
\node[state] at (-0.5,-3.5) (b10h) {$h_2(0)$};
\node[state] at (1.2,-3.5) (b11h) {$h_2(1)$};
\node at (0.35, -4.5) {\Large $\vdots$};
\node[state] at (-0.5,-6) (bn0h) {$h_n(0)$};
\node[state] at (1.2,-6) (bn1h) {$h_n(1)$};
\node at (-2.2, -2) {bit $1$};
\node at (-2.2, -3.5) {bit $2$};
\node at (-2.2, -6) {bit $n$};
\begin{scope}[yshift = -6cm]
\draw[rounded corners=2mm,dashed,fill=black!10] (-3,-1) -| (2.1,-6.3) -| (-3,-6.3) -- cycle;
\node[align = center, color = black!50] at (-5, -4) {Vertical \\ coordinate};
\node[state] at (-0.5,-2) (b00h) {$v_1(0)$};
\node[state] at (1.2,-2) (b01h) {$v_1(1)$};
\node[state] at (-0.5,-3.5) (b10h) {$v_2(0)$};
\node[state] at (1.2,-3.5) (b11h) {$v_2(1)$};
\node at (0.35, -4.25) {\Large $\vdots$};
\node[state] at (-0.5,-5.5) (bn0h) {$v_n(0)$};
\node[state] at (1.2,-5.5) (bn1h) {$v_n(1)$};
\node at (-2.2, -2) {bit $1$};
\node at (-2.2, -3.5) {bit $2$};
\node at (-2.2, -5.5) {bit $n$};
\end{scope}
\end{scope}

\begin{scope}[yshift = 3cm, xshift = -1cm]
\draw[rounded corners=2mm,dashed,fill=black!10] (4,-2.5) -| (9.1,-4.5) -| (4,-4.5) -- cycle;
\node[align = center, color = black!50] at (2,-3.5) {Tile type \\ where $\tiles =:$ \\ $\set{t_i \mid i \in \nset{1}{m}}$};
\tikzset{every state/.style = {minimum size = 25pt}}
\node[state] at (5,-3.5)  {$\statet{1}$};
\node at (6.5, -3.5) {$\dots$};
\node[state, inner sep = 0pt] at (8,-3.5) {$\statet{m}$};
\end{scope}

\end{tikzpicture}
   \caption{The part of $\prot$ encoding the value of $\tiling$. Not depicted are the following transition to $\cheatstate$. For every $i \ne j \in \nset{1}{m}$, $\statet{i}$ has a transition to $\cheatstate$ labelled $\statet{j}, =$. For every $i \in \nset{1}{n}$ and $b \in \set{0,1}$, $h_i(b)$ has a transition to $\cheatstate$ labelled $h_i(1-b),=$; and similarly for the vertical counter.}
  \label{fig:encoding_tiling}
\end{subfigure}
\hfill
\begin{subfigure}{.55\textwidth}
  \centering
  \begin{tikzpicture}[auto, xscale = 0.7, yscale = 0.5]
\draw[rounded corners=2mm,dashed,fill=blue!10] (10.5,10) -| (21.5,-7.5) -| (10.5,-7.5) -- cycle;
\tikzset{every state/.style = {minimum size = 15pt, inner sep = 1pt}}
\tikzset{every node/.style = {align = center}}
\node[state] at (13.5,-8.5) (entry1) {$D_a$};
\node[state] at (18.5, -8.5) (entry2) {$D_b$};
\node[state] at (16,-8.5) (sinkstate) {$\sinkstate$};
\node[state] at (13.5, -6.5) (duptrack1) {};
\node[state] at (18.5, -6.5) (duptrack2) {};
\node[state] at (12.5, -3.5) (bit1track1zero) {\scriptsize $H^a_1(0)$};
\node[state] at (14.5, -3.5) (bit1track1one) {\scriptsize $H^a_1(1)$};
\node[state] at (17.5, -4) (bit1track2zero) {\scriptsize $H^b_1(0)$};
\node[state] at (19.5, -4) (bit1track2one) {\scriptsize $H^b_1(1)$};
\node[state] at (13.5, -1.5) (bit1track1ok) {};
\node[state] at (18.5, -1.5) (bit1track2ok) {};
\begin{scope}[yshift = -1cm]
\node[state, draw = none] at (12.5, 1.5) (bit2track1zero) {};
\node[state, draw = none] at (14.5, 1.5) (bit2track1one) {};
\node[state, draw = none] at (17.5, 1.5) (bit2track2zero) {};
\node[state, draw = none] at (19.5, 1.5) (bit2track2one) {};
\node[align = center] at (16,2.3) (onsoon) {(repeat test for every bit \\ of horizontal and vertical coordinates)};
\end{scope}
\begin{scope}[yshift = -5cm]
\node[state] at (13.5, 8) (transtrack1) {};
\node[state] at (18.5, 8) (transtrack2) {};
\node[state] at (12.5, 11) (h1bnzero) {\scriptsize $V^a_n(0)$};
\node[state] at (14.5, 11) (h1bnone) {\scriptsize $V^a_n(1)$};
\node[state] at (17.5, 10.5) (h2bnzero) {\scriptsize $V^b_n(0)$};
\node[state] at (19.5, 10.5) (h2bnone) {\scriptsize $V^b_n(1)$};
\node[state] at (13.5, 13) (bitntrack1ok) {};
\node[state] at (18.5, 13) (bitntrack2ok) {};
\node[state] at (16, 14) (qdup) {$\dupstate$};

\path[-stealth] 
(duptrack1) edge node[left, xshift = 1pt, yshift = -2pt] {$h_1(0),=$ \\ $H_1^b(0), \ne$} (bit1track1zero)
(bit1track1zero) edge (bit1track1ok)
(bit1track1one) edge (bit1track1ok)
(duptrack1) edge node[right, xshift = -1pt, yshift = -2pt] {$h_1(1),=$ \\ $H_1^b(1), \ne$} (bit1track1one)
(duptrack2) edge node[left, xshift = 2pt, yshift = -2pt] {$h_1(0),=$} (bit1track2zero)
(duptrack2) edge node[right , xshift = -2pt, yshift = -2pt] {$h_1(1),=$} (bit1track2one)
(bit1track2zero) edge node[left, xshift = 2pt, yshift = 2pt] {$H^a_1(0), \ne$} (bit1track2ok)
(bit1track2one) edge node[right, xshift = -2pt, yshift = 2pt] {$H^a_1(1), \ne$} (bit1track2ok)
(bit1track1ok) edge[dashed] node[left, xshift = 2pt, yshift = -2pt] {} (bit2track1zero)
(bit1track1ok) edge[dashed] node[right, xshift = -2pt, yshift = -2pt] {} (bit2track1one)
(bit1track2ok) edge[dashed] node[left, xshift = 2pt, yshift = -2pt] {} (bit2track2zero)
(bit1track2ok) edge[dashed] node[right, xshift = -2pt, yshift = -2pt] {} (bit2track2one)
(transtrack1) edge node[left, xshift = 1pt, yshift = -2pt] {$v_n(0),=$ \\ $V_n^b(0), \ne$} (h1bnzero)
(transtrack1) edge node[right, xshift = -1pt, yshift = -2pt] {$v_n(1),=$ \\ $V_n^b(1), \ne$} (h1bnone)
(transtrack2) edge node[left, xshift = 2pt, yshift = -2pt] {$v_n(0),=$} (h2bnzero)
(transtrack2) edge node[right, xshift = -2pt, yshift = -2pt] {$v_n(1),=$} (h2bnone)
(h1bnzero) edge node[left, xshift = 2pt, yshift = 2pt] {} (bitntrack1ok)
(h1bnone) edge node[right, xshift = -2pt, yshift = 2pt] {} (bitntrack1ok)
(h2bnzero) edge node[left, xshift = 2pt, yshift = 2pt] {$V^a_n(0), \ne$} (bitntrack2ok)
(h2bnone) edge node[right, xshift = -2pt, yshift = 2pt] {$V^a_n(1), \ne$} (bitntrack2ok)
(bitntrack1ok) edge (qdup)
(bitntrack2ok) edge (qdup)
(entry1) edge (duptrack1)
(entry2) edge (duptrack2) 
(entry1) edge (sinkstate)
(entry2) edge (sinkstate);
\end{scope} 

\draw[rounded corners=2mm,dashed,fill=red!10] (10.5,17) -| (21.5,11) -| (10.5,11) -- cycle;
\node[state] at (16,16) (over3) {$\stateover{3}$};
\node[state] at (16,14) (over2) {$\stateover{2}$};
\node[state] at (16,12) (over1) {$\stateover{1}$};
\draw[-stealth, color = black] (13,10) -- (over1);
\draw[-stealth, color = black] (14,10) -- (over1);
\draw[-stealth, color = black] (15,10) -- (over1);
\draw[-stealth, color = black] (16,10) -- (over1);
\draw[-stealth, color = black] (17,10) -- (over1);
\draw[-stealth, color = black] (18,10) -- (over1);
\draw[-stealth, color = black] (19,10) -- (over1);
\path[-stealth] 
(over1) edge[bend left= 30] node[left] {$\stateover{1}, =$} (over2)
(over1) edge[bend right= 30] node[right] {$\stateover{1}, \ne$} (over2)
(over2) edge[bend left= 30] node[left] {$\stateover{2}, =$} (over3)
(over2) edge[bend right= 30] node[right] {$\stateover{2},\ne$} (over3);
\end{tikzpicture}   \caption{The test that no tile is encoded by two data. Transitions labelled by two observations are a shortcut for two chained transitions, one with each observation. If only two agents with distinct data come in the blue part (one from $D_a$ and one from $D_b$), they will be able to cover $\dupstate$ if and only if their data encode the same tile. However, if at least three agents are in the blue part, then $\stateover{3}$ can be covered. }
  \label{fig:no_duplication}
\end{subfigure} 
\caption{The part of $\prot$ encoding the tiling and checking for duplicates.}
\label{fig:tiling_reduction_1}
\end{figure}

For a configuration $\config$, let $\tilingofconfig{\config} : \nset{0}{2^n-1}^2 \to 2^{\tiles}$ be the function that, to $x, y \in \nset{0}{2^n-1}$, maps the set of every $t_i \in \tiles$ such that there is a datum $\datum$ in $\config$ with an agent on $\statet{i}$ and:
\begin{itemize}
\item if the $i$-th bit of $x$ is $b$ then $\datum$ has an agent in $h_i(b)$ in $\config$, 
\item if the $i$-th bit of $y$ is $b$ then $\datum$ has an agent in $v_i(b)$ in $\config$.
\end{itemize}
\begin{lemma}
\label{encodingGREworks}
For every $\config \in \setof{\encodingGRE}$, for every $i,j \in \nset{0}{2^n-1}$, $\size{\tilingofconfig{\config}(i,j)} \leq 1$.   
\end{lemma}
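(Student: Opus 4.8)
\begin{proofsketch}
The plan is to argue by contradiction. Suppose $\config \in \setof{\encodingGRE}$ but $\size{\tilingofconfig{\config}(i,j)} \geq 2$ for some coordinates $(i,j)$; I would then construct a run from $\config$ to a configuration that covers $\dupstate$ and from which $\stateover{3}$ can no longer be covered, contradicting the conjunct $\setcomplement{\prestar{\presenceformula{\dupstate} \cap \setcomplement{\prestar{\presenceformula{\stateover{3}}}}}}$ of $\encodingGRE$.

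First I would extract two ``clean'' data witnessing the collision. If $\size{\tilingofconfig{\config}(i,j)} \geq 2$, there are two distinct tiles $\statet{k} \ne \statet{k'}$ and data $d, d'$ with a $d$-agent on $\statet{k}$, a $d'$-agent on $\statet{k'}$, and such that each of $d$ and $d'$ \emph{encodes} $(i,j)$, \ie for every bit position $\ell$ it has an agent on the horizontal state $h_\ell(b)$ and on the vertical state $v_\ell(b)$ matching the $\ell$-th bits of $i$ and of $j$. Since $\config \notin \prestar{\presenceformula{\cheatstate}}$, no datum has agents on two distinct tile states (otherwise the transition to $\cheatstate$ labelled ``$\statet{k'},=$'' from $\statet{k}$ would be enabled), so $d \ne d'$; for the same reason no datum has agents on both $h_\ell(0)$ and $h_\ell(1)$, nor on both $v_\ell(0)$ and $v_\ell(1)$, so the coordinate encodings of $d$ and of $d'$ each use exactly one agent per bit. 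Moreover, as $d$ and $d'$ appear in $\config$, $\mandatoryformula{D_a}$ gives a $d$-agent on $D_a$ and $\mandatoryformula{D_b}$ a $d'$-agent on $D_b$, while $\absenceformula{Q_{\mathsf{blue}} \cup Q_{\mathsf{red}}}$ guarantees that $\config$ has no agent on any state of the duplication gadget except $D_a$, $D_b$ and $\sinkstate$; in particular $\dupstate$ and $\stateover{1}, \stateover{2}, \stateover{3}$ are empty.

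Next I would build the witnessing run $\run \colon \config \runto \config'$ in three stages. (1)~Using the transition from $D_a$ (resp.\ $D_b$) to $\sinkstate$, move to $\sinkstate$ every agent on $D_a$ except the chosen $d$-agent and every agent on $D_b$ except the chosen $d'$-agent. (2)~Move the $d$-agent from $D_a$ into the first state of the left track of \cref{fig:no_duplication} and the $d'$-agent from $D_b$ into the first state of the right track. (3)~Walk these two agents down the two tracks, bit by bit, through the $n$ horizontal bits and then the $n$ vertical bits: at bit $\ell$ the right agent first records its bit value $b$ by observing, with an equality test, a $d'$-agent on $h_\ell(b)$ (resp.\ $v_\ell(b)$); then the left agent records the same value $b$ by observing, with an equality test, a $d$-agent on $h_\ell(b)$ (resp.\ $v_\ell(b)$) and, with a disequality test, the right agent; then the right agent completes its bit step by observing, with a disequality test, the left agent. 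Each of these observations is available precisely because $d$ and $d'$ encode the \emph{same} $(i,j)$ --- hence agree on every bit --- and have distinct data --- hence can always observe one another with a disequality test; after the $2n$ bits both agents reach $\dupstate$. Thus $\config' \in \setof{\presenceformula{\dupstate}}$.

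Finally I would check that $\config' \notin \prestar{\presenceformula{\stateover{3}}}$. In $\config'$ the states $D_a$ and $D_b$ are empty and every state of $Q_{\mathsf{blue}} \cup Q_{\mathsf{red}}$ except $\dupstate$ is empty; since $D_a$ and $D_b$ carry no incoming transitions, $\dupstate$ carries no outgoing transition, and the only transitions whose target lies in $Q_{\mathsf{blue}} \cup Q_{\mathsf{red}}$ while their source does not are the one from $D_a$ into the left track and the one from $D_b$ into the right track, no run from $\config'$ ever places an agent on a state of $Q_{\mathsf{blue}} \cup Q_{\mathsf{red}}$, and in particular $\stateover{3}$ is never covered. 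Hence $\config' \in \setof{\presenceformula{\dupstate} \cap \setcomplement{\prestar{\presenceformula{\stateover{3}}}}}$, so $\config \in \prestar{\presenceformula{\dupstate} \cap \setcomplement{\prestar{\presenceformula{\stateover{3}}}}}$, contradicting $\config \in \setof{\encodingGRE}$; therefore $\size{\tilingofconfig{\config}(i,j)} \leq 1$ for all $(i,j)$. The step I expect to be the main obstacle is Stage~(3): one must fix the exact interleaving of the two agents' moves so that every equality and disequality observation along the two tracks is enabled at the right moment, which is where the detailed structure of \cref{fig:no_duplication} (including its unlabelled transitions) is used; the trap argument in the last paragraph is comparatively routine once the incoming transitions of the gadget are catalogued.
\end{proofsketch}
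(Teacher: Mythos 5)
Your proof is correct and follows essentially the same route as the paper's own (much terser) argument: rule out a shared witnessing datum via the $\cheatstate$ transitions between tile states, extract a $d$-agent on $D_a$ and a $d'$-agent on $D_b$ from the $\mandatoryformula{D_a}$ and $\mandatoryformula{D_b}$ conjuncts, send all other $D_a$/$D_b$ agents to $\sinkstate$, walk the two chosen agents down the two tracks of the duplication gadget to $\dupstate$ (possible exactly because both data encode $(i,j)$), and note that the resulting configuration covers $\dupstate$ while $\stateover{3}$ is unreachable. The one harmless quibble is your final claim that $\dupstate$ has no outgoing transition: the intended (and more robust) reason $\stateover{3}$ stays uncovered is that only two agents ever enter the blue part, while climbing $\stateover{1}\to\stateover{2}\to\stateover{3}$ requires three; either way the conclusion stands.
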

\begin{proof}
Let $\config \in \semanticsEP{\encodingGRE}{}$. 
Assume by contradiction that we have $i,j$ such that $\set{ti,t_j} \subseteq \tilingofconfig{\config}(i,j)$ with $i \ne j$. 
Let $\datum$ (resp. $\datum'$) the data witnessing that $t \in \tilingofconfig{\config}(i,j)$. 
If $\datum = \datum'$ then there is an agent in $\statet{i}$ and one in $\statet{j}$ with same datum, but then one can cover $\cheatstate$ 
from $\config$ which contradicts that $\config \in \semanticsEP{\setcomplement{\prestar{\presenceformula{\cheatstate}}}}{}$. If $\datum \ne \datum'$, because $\config \in \semanticsEP{\mandatoryformula{D_a} \cap \mandatoryformula{D_b}}{}$, 
$\datum$ has an agent $a$ in $D_a$ and $\datum'$ has an agent $a'$ in $D_b$. By making $a$ and $a'$ evolve in the blue part, taking the values of the bits of $x$ and then $y$ and observing each other, they can cover $\dupstate$; by sending all others agents in $D_a$ and $D_b$ to $\sinkstate$, 
we reach a configuration $\config' \in \semanticsEP{\presenceformula{\dupstate} \cap \setcomplement{\prestar{\presenceformula{\stateover{3}}}}}{}$, a contradiction. 
\end{proof}

\subsection{Verifying that the Encoded Tiling is Valid}

We here describe the horizontal verifier, \ie, the part of $\prot$ responsible for verifying left-right frontiers of the tiling. 
The data of agents in this part are irrelevant. Two variables, $x,y \in \nset{0}{2^n-1}$, are encoded in binary in similar fashion to \cref{fig:encoding_tiling}. 
These two variables will be private to the horizontal verifier. There will be exactly one agent encoding each bit; the novelty with respect to \cref{fig:encoding_tiling} is that these agents will be able to move between values $0$ and $1$, so that the values of $x$ and $y$ can be changed. 
The rest of the verifier will contain a single agent, called the ""main agent@@tiling"". Under the guarantee that there is only one agent per bit, the "main agent@@tiling" can directly read and modify the values of $x$ and $y$. 
For example, to change bit $i$ of $x$ from $0$ to $1$, the "main agent@@tiling" goes to state $\mathsf{x_i(0 \to 1)}$, the agent of the bit observes it and moves to value $1$, the "main agent@@tiling" observes this change and continues. We represent the part of the protocol encoding the behaviour of the "main agent@@tiling" in \cref{fig:horizontal_verifier} where, for simplicity, we abstract bit encoding of $x$ and $y$ into direct access and modification of the values of $x$ and $y$ by the "main agent@@tiling". 
The "main agent@@tiling" will also manipulate a variable $\colorvar \in \tcolors$, which can be directly encoded in the state space. 

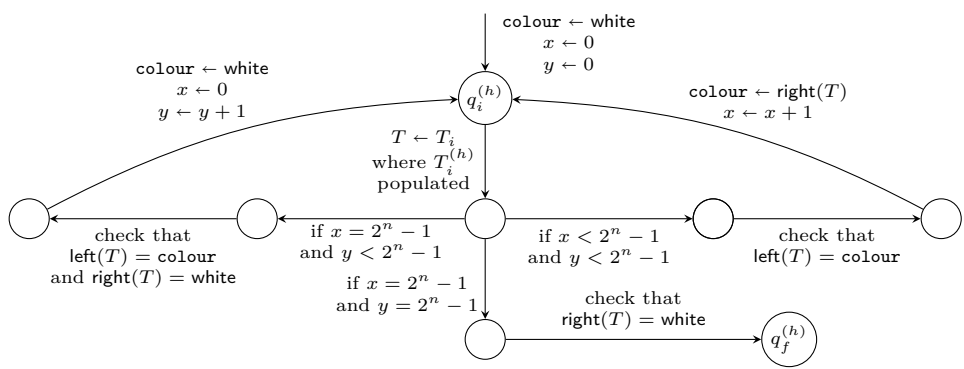
\begin{figure}
\centering
\begin{tikzpicture}[xscale = 1, yscale = 0.8]
\tikzset{every node/.style = {font = {\scriptsize}, align = center}}
\tikzset{every state/.style = {font = {\scriptsize}, minimum size = 15pt}}

\node[state, inner sep = 1pt] at (0,0) (start) {$q_i^{(h)}$};
\draw[-stealth] (0,1.4) -- (start);
\node at (1.1,0.9) {$\colorvar \assignalgo \twhite$ \\ $x \assignalgo 0$ \\ $y \assignalgo 0$};
\node[state] (getT) at (0,-2) {};
\node[state] at (3,-2) (hsmall) {};
\node[state] at (-3,-2) (hbigvsmall) {};
\node[state] at (-6,-2) (hbigvsmallok) {};
\node[state] at (3,-2) (hsmallvsmall) {};
\node[state] at (6,-2) (hsmallvsmallok) {};
\node[state] at (0,-4) (hbigvbig) {};
\node[state, inner sep = 1pt] at (4,-4) (qf) {$q_f^{(h)}$};

\path[-stealth]
(start) edge node[left] 
{$T \assignalgo T_i$ \\ where $T_i^{(h)}$ \\ populated} (getT)
(getT) edge node[below, yshift = 2pt] {if $x = 2^n-1$ \\ and $y < 2^n-1$} (hbigvsmall)
(getT) edge node[below] {if $x < 2^n-1$ \\ and $y < 2^n-1$} (hsmallvsmall)
(getT) edge node[left, yshift = -6pt, xshift = 1pt] {if $x = 2^n-1$ \\ and $y = 2^n-1$} (hbigvbig)
(hsmallvsmall) edge node[below] {check that \\ $\leftcolor{T} = \colorvar$} (hsmallvsmallok)
(hbigvsmall) edge node[below] {check that \\ $\leftcolor{T} = \colorvar$ \\and $\rightcolor{T} = \twhite$} (hbigvsmallok)
(hbigvbig) edge node[above] {check that \\$\rightcolor{T} = \twhite$} (qf)
(hbigvsmallok) edge[bend left = 15] node[above, xshift = -15pt] {$\colorvar \assignalgo \twhite$ \\ $x \assignalgo 0$ \\ $y \assignalgo y+1$} (start)
(hsmallvsmallok) edge[bend right = 15] node[above, xshift = 15pt] {$\colorvar \assignalgo \rightcolor{T}$ \\ $x \assignalgo x+1$} (start)
;

\end{tikzpicture} \caption{The part of the protocol encoding the horizontal verifier. The parts encoding the values of $x$ and $y$ are not represented. Variables $T$ and $\colorvar$ are encoded directly in the state space. The operation ``$T \assignalgo T_i$ where $T_i^{(h)}$ populated'' corresponds to $m = \size{\tiles}$ transitions in parallel, one for each $t_i \in \tiles$; every such transition can be taken if an agent on state $T_i^{(h)}$ is observed. Not depicted is a gadget that allows to go to $\cheatstate$ if at least two agents play the role of horizontal verifier or two agents play the same bit of $x$ or of $y$.}
\label{fig:horizontal_verifier}
\end{figure}

\subsection{Synchronisation with the Verifier}
\label{subsec:synchronization}

Finally,  in the gadget represented in \cref{fig:synchronization_gadget}, each datum will be represented by exactly one agent, called the \emph{reader}, which will initially be in $S^{(h)}$. The reader synchronises with the horizontal verifier in order to let the main agent know what tile type its datum encodes. The goal of synchronisation is to ensure, in relevant runs, that an agent in the middle part of \cref{fig:synchronization_gadget} has a datum value equal to the current value of $z \deff x + 2^n y$. The reader wants to test equality of its own number in $\nset{0}{2^{2n}-1}$ with $z$, this number being equal to its horizontal coordinate plus $2^n$ times the vertical coordinate. However, because the value of $z$ changes throughout the run, it would not suffice that the reader simply tests equality once per bit. However, as we will now see, it suffices that the first synchronisation test tests for equality all bits from most to least significant, and that the second synchronisation test tests for equality all bits from least to most significant.

\begin{lemma}
\label{sync-test}
Consider a run where an agent $a$ starts on $S^{(h)}$ and ends on $F^{(h)}$; assume that its datum faithfully encodes value $k \in \nset{0}{2^N-1}$. Then:
\begin{itemize} 
\item the value of $z$ after the first synchronisation test, which tests bits from most to least significant, is such that $z \geq k$,
\item if we have $z > k$ at the beginning of the second synchronisation test, which tests bits from least to most significant,
 then the agent may not cover $S_f^{(h)}$.
\end{itemize}
\end{lemma}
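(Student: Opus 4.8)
The plan is to reduce both claims to the same elementary fact about binary numbers. Throughout, write $N$ for the bit-length of $z$ (so $N = 2n$ and $z = x + 2^n y$), and index bits so that bit $i$ has weight $2^i$; thus $v = \sum_{i=0}^{N-1} v_i 2^i$, and $\lfloor v/2^i\rfloor$ is the number formed by the bits of $v$ of weight at least $2^i$. Two structural points need to be set up first: (i) along any run the verifier only ever increases its counters, and increments can be implemented so that every intermediate value of $z$ stays at least the value before the increment; hence the map sending a time point to the current value of $z$ is non-decreasing; and (ii) each synchronisation test is a fixed linear chain of transitions whose $\ell$-th transition makes the reader compare one prescribed bit of the \emph{current} value of $z$ with the corresponding bit of $k$ --- it reads the bit of $k$ by observing an agent of its own datum (this is where ``$k$ is faithfully encoded'' enters) and the bit of $z$ by observing the corresponding bit-agent of the verifier. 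In the first test the prescribed bits are scanned from most to least significant, in the second test from least to most significant. Combining (i) and (ii), if $v^{(i)}$ (resp.\ $u^{(j)}$) denotes the value of $z$ at the moment bit $i$ (resp.\ bit $j$) is compared in the first (resp.\ second) test, then passing the test forces $(v^{(i)})_i = k_i$ for all $i$ (resp.\ $(u^{(j)})_j = k_j$ for all $j$), and the values are ordered in scanning order: $v^{(N-1)} \le \dots \le v^{(0)}$ and $u^{(0)} \le \dots \le u^{(N-1)}$, with $u^{(0)}$ at least the value $w$ of $z$ at the start of the second test.

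For the first bullet I would prove, by downward induction on $i$ from $N-1$ to $0$, that $\lfloor v^{(i)}/2^i\rfloor \ge \lfloor k/2^i\rfloor$. The base case is the equality $(v^{(N-1)})_{N-1}=k_{N-1}$. For the step, $v^{(i)}\ge v^{(i+1)}$ gives $\lfloor v^{(i)}/2^{i+1}\rfloor \ge \lfloor v^{(i+1)}/2^{i+1}\rfloor \ge \lfloor k/2^{i+1}\rfloor$; then writing $\lfloor v^{(i)}/2^i\rfloor = 2\lfloor v^{(i)}/2^{i+1}\rfloor + (v^{(i)})_i$ and $\lfloor k/2^i\rfloor = 2\lfloor k/2^{i+1}\rfloor + k_i$ and using $(v^{(i)})_i = k_i$, the inequality follows by a one-line split on whether $\lfloor v^{(i)}/2^{i+1}\rfloor$ equals or strictly exceeds $\lfloor k/2^{i+1}\rfloor$. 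Taking $i=0$ gives $v^{(0)}\ge k$, and since $z$ does not decrease after the last comparison of the first test, $z\ge k$ at the end of the first test.

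For the second bullet, suppose $w>k$ and let $p$ be the highest bit position on which $w$ and $k$ differ, so $w_p=1$, $k_p=0$, and $w$ and $k$ agree above bit $p$. Assume for contradiction that the reader reaches $S_f^{(h)}$, so every comparison of the second test succeeds. I would prove by induction on $j = p, p+1, \dots, N-1$ that $u^{(j)} \ge 2^{j+1}\bigl(\lfloor k/2^{j+1}\rfloor + 1\bigr)$. The base case $j=p$ uses the elementary observation that any integer which is $\ge w$ yet has bit $p$ equal to $0$ (whereas $w_p=1$) must have its part above bit $p$ at least that of $w$ plus one; since that part of $w$ equals $\lfloor k/2^{p+1}\rfloor$, this yields $u^{(p)} \ge 2^{p+1}(\lfloor k/2^{p+1}\rfloor+1)$. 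For the step, from $u^{(j+1)} \ge u^{(j)} \ge 2^{j+1}(\lfloor k/2^{j+1}\rfloor+1) = 2^{j+2}\lfloor k/2^{j+2}\rfloor + 2^{j+1}(k_{j+1}+1)$ the conclusion is immediate when $k_{j+1}=1$; when $k_{j+1}=0$ the lower bound is $2^{j+2}\lfloor k/2^{j+2}\rfloor + 2^{j+1}$, which has bit $j+1$ equal to $1$, so since $(u^{(j+1)})_{j+1}=0$ the same ``forced low bit'' observation makes the part of $u^{(j+1)}$ above bit $j+1$ exceed $\lfloor k/2^{j+2}\rfloor$, giving again $u^{(j+1)} \ge 2^{j+2}(\lfloor k/2^{j+2}\rfloor+1)$. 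Instantiating $j=N-1$ yields $u^{(N-1)} \ge 2^N(\lfloor k/2^N\rfloor+1) = 2^N$, contradicting $u^{(N-1)}\le 2^N-1$; hence the reader cannot reach $S_f^{(h)}$.

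The arithmetic above is routine, so the real work --- and the step most in need of care --- is establishing points (i) and (ii), in particular that $z$ is non-decreasing over the whole run even though a single increment of $x$, or the wrap-around of $x$ together with an increment of $y$, flips several bits via a short sequence of transitions. I would handle this at the level of the construction: implement each increment by first setting the lowest $0$-bit of the counter to $1$ and only afterwards clearing the bits below it, and implement the wrap-around by incrementing $y$ before resetting $x$; then every intermediate configuration of the verifier already encodes a value at least the value before the operation, so ``the value of $z$ at the $\ell$-th comparison'' is well defined and the sequences $v^{(N-1)},\dots,v^{(0)}$ and $u^{(0)},\dots,u^{(N-1)}$ are monotone in the scanning order --- which is all the two inductions use.
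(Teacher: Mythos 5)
Your proof is correct and follows essentially the same route as the paper's: the same downward induction showing that after testing the $m$ most significant bits the high-order part of $z$ dominates that of $k$ (your $\lfloor v^{(i)}/2^i\rfloor \geq \lfloor k/2^i\rfloor$ is exactly the paper's $z^{(m)} \geq \overline{k_1\cdots k_m 0\cdots 0}$), and the same upward induction for the second test propagating a strict excess in the untested high-order bits until it forces $z \geq 2^N$, a contradiction. The only divergence is presentational (floor-division arithmetic versus binary-prefix invariants), plus your explicit flagging of the monotonicity of $z$ across comparison times, which the paper also relies on but leaves implicit in its claim that $z$ goes incrementally from $0$ to $2^{2n}-1$.
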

\begin{proof}
Let $N \deff 2n$.
In the following, a binary number $\ell \in \nset{0}{2^N-1}$ is denoted $\binarynumber{\ell_1 \ell_2 \dots \ell_N}$ where $\ell_i \in \set{0,1}$ is the $i$th most significant bit. 

We first prove the first claim. Let $z^{(m)}$ the value of $z$ after the $m$-th most significant bit is tested in the first synchronisation test. We prove by induction that $z^{(m)} \geq \binarynumber{k_1 \dots k_m 0 \dots 0}$. It is trivially true for $m=0$. Assume that $z^{(m)} \geq \binarynumber{k_1 \dots k_m 0 \dots 0}$: if we have $k_{m+1} = 0$ then we have the property for $m+1$. If $k_{m+1} = 1$, then the first value greater than $z^{(m)}$ whose $(m+1)$-th bit is at least $1$ is greater than $\binarynumber{k_1 \dots k_m k_{m+1} 0 \dots 0}$, concluding the induction. 

We now prove the second claim. Let $z^{(m)}$ now denote the value of $z$ after the test of bit $N-m+1$ in the second synchronisation test; we use convention that $z^{(0)}$ is the value right before the second synchronisation test, so that by hypothesis $z^{(0)} > k$. We prove by induction that there is a bit $j \leq N-m$ such that bit $j$ has value $1$ in $z^{(m)}$ and $0$ in $k$ whereas all bits $i < j$ are equal for $z^{(m)}$ and $k$. This is true for $m=0$ by hypothesis that $z^{(0)} > k$. Assume that it is true for $m$; if $j<N-m-1$ then it is true for $m+1$. Otherwise, we have $j = N-m$, but the test of bit $j$ needs $z^{(m)}$ to have value $0$ at bit $j$; this means that there is a carry between $z^{(m)}$ and $z^{(m+1)}$ that propagates to bit $j-1$ (and possibly further), proving the result. For $m=N$, this yields a contradiction.    
\end{proof}

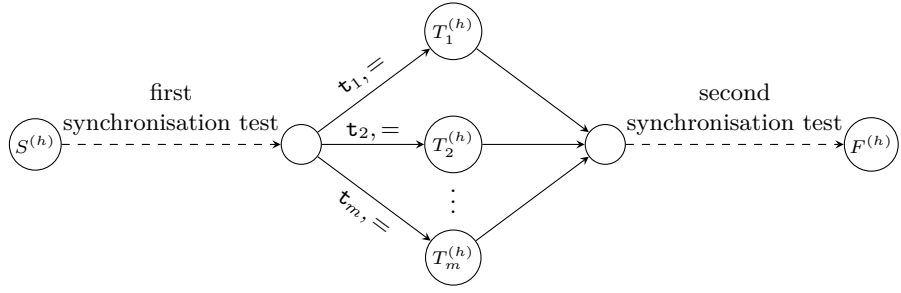
\begin{figure}
\centering
\begin{tikzpicture}[auto]
\tikzset{every node/.style = {font = {\small}}}
\tikzset{every state/.style = {font = {\scriptsize}, inner sep = 1pt, minimum size = 15pt}}

\node[state] (S) at (-1,0) {$S^{(h)}$};
\node[state] (S2) at (2.5,0) {};
\node[state, minimum size = 20pt] (T1) at (4.5,1.5) {$T^{(h)}_1$};
\node[state, minimum size = 20pt] (T2) at (4.5,0) {$T^{(h)}_2$};
\node (Tsusp) at (4.5, -0.65) {$\vdots$};
\node[state, minimum size = 20pt] (TT) at (4.5, -1.5) {$T^{(h)}_{m}$};
\node[state] (F2) at (6.5,0) {};
\node[state] (F) at (10,0) {$F^{(h)}$};

\path[-stealth]
(S) edge[dashed] node[above, align = center] {first \\ synchronisation test} (S2)
(S2) edge node[above, sloped] {$\statet{1}, =$} (T1)
(S2) edge node[above, sloped, yshift = -2pt] {$\statet{2}, =$} (T2)
(S2) edge node[below, sloped] {$\statet{m}, =$} (TT)
(T1) edge (F2)
(T2) edge (F2)
(TT) edge (F2)
(F2) edge[dashed] node[above, align = center] {second \\ synchronisation test} (F)
;

\end{tikzpicture}
 \caption{The synchronisation gadget.}
\label{fig:synchronization_gadget}
\end{figure}

\subsection{Summary of the Reduction}

The protocol built is a combination of all the parts built previously. In \cref{subsec:tiling_encoding}, we have defined a GRE $\encodingGRE$ that, due to \cref{encodingGREworks}, guarantees that, in considered configurations, a given coordinate will have at most one tile type. We now build a larger GRE that expresses that the horizontal and vertical verifiers are able to verifying the entire grid without anyone cheating. Overall, the constructed GRE $\intro*\reductionGRE$ is the following:
\AP
\begin{align*}
\reductionGRE \deff \encodingGRE \cap \setcomplement{\prestar{\presenceformula{\cheatstate}}} \cap \GREinith \cap \GREinitv \cap \prestar{\GREfinalh} \cap \prestar{\GREfinalv} 
\end{align*} 

where:
\begin{itemize}
\item $\intro*\GREinith$ expresses that all states in the horizontal synchronisation gadget except $S^{(h)}$ are empty and that all states in the horizontal verifier gadget are empty except $q_i^{(h)}$ and the states of bits of $x$ and $y$ of value $0$,
\item $\intro*\GREinitv$ is the counterpart of $\GREinith$ but for the vertical verifier and vertical synchronisation,
\item $\intro*\GREfinalh$ expresses that $q_f^{(h)}$ is not empty and that all states in the horizontal synchronisation gadget are empty except $F^{(h)}$, 
\item $\intro*\GREfinalv$ is the counterpart of $\GREfinalh$ but for the vertical verifier and vertical synchronisation.
\end{itemize}

\begin{lemma}
\label{reductionGREimpliestiling}
If $\semanticsEP{\reductionGRE}{}$ is not empty, then the instance of the "tiling problem" is positive.
\end{lemma}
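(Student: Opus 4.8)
The plan is to take an arbitrary $\config \in \setof{\reductionGRE}$ and read a valid tiling off it. First I would extract the two structural facts implied by the ``no-cheat'' conjuncts. Since $\config \in \setof{\setcomplement{\prestar{\presenceformula{\cheatstate}}}}$, no run from $\config$ ever covers $\cheatstate$; given how the transitions into $\cheatstate$ are set up (they fire whenever two agents play the rôle of the horizontal or vertical verifier, two agents encode the same bit of one of the verifiers' private counters $x,y$, or a single datum carries two agents in the same part of the tiling gadget, e.g.\ in both $h_i(0)$ and $h_i(1)$, or in two distinct $\statet{i}$'s), it follows that in \emph{every} run from $\config$ these rôles are implemented faithfully: $x$ and $y$ behave as genuine $n$-bit counters, each datum encodes at most one tile type and at most one pair of coordinates, and each verifier has a unique main agent. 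Second, $\config \in \setof{\encodingGRE}$ lets me invoke \cref{encodingGREworks} to get $\size{\tilingofconfig{\config}(i,j)} \le 1$ for every $(i,j) \in \nset{0}{2^n-1}^2$.

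Next I would use $\config \in \setof{\GREinith} \cap \setof{\prestar{\GREfinalh}}$: fix a run $\rho_h$ from $\config$ reaching a configuration in $\GREfinalh$. By $\GREinith$, the horizontal verifier starts in $q_i^{(h)}$ with $x=y=0$ and all agents of the horizontal synchronisation gadget in $S^{(h)}$; by $\GREfinalh$, the run ends with $q_f^{(h)}$ populated and the gadget emptied into $F^{(h)}$. Using faithfulness of $x,y$, the main agent traverses the loop of \cref{fig:horizontal_verifier} and visits each coordinate $(x,y)$ exactly once, in the order $(0,0),(1,0),\dots,(2^n-1,2^n-1)$. At the currently held coordinate, the step ``$T \assignalgo T_i$ where $T_i^{(h)}$ populated'' forces some agent into a state $T_i^{(h)}$; that agent is a reader that passed the first synchronisation test, entered $T_i^{(h)}$ by observing an agent of its own datum in $\statet{i}$, and --- because the gadget ends empty except for $F^{(h)}$ --- must later pass the second synchronisation test. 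Applying \cref{sync-test} together with the observation that $z \deff x + 2^n y$ is non-decreasing along $\rho_h$ (the verifier only increments), the reader's datum encodes exactly the value of $z$ held while the reader sits in $T_i^{(h)}$; equivalently, that datum has an agent in $\statet{i}$ and encodes the coordinates currently being processed. Hence every coordinate $(i,j)$ also satisfies $\size{\tilingofconfig{\config}(i,j)} \ge 1$, so it equals $1$, and I define $\tiling(i,j)$ to be the unique tile.

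Finally I would check the four conditions of \cref{def:tiling-problem}. Along $\rho_h$, when processing coordinate $(i,j)$ the main agent reads the unique tile $\tiling(i,j)$ and checks $\leftcolor{\tiling(i,j)} = \colorvar$; at the start of each row $\colorvar = \twhite$, giving $\leftcolor{\tiling(i,0)} = \twhite$; when $x = 2^n-1$ it also checks $\rightcolor{\tiling(i,2^n-1)} = \twhite$; and after a non-border column it sets $\colorvar \assignalgo \rightcolor{\tiling(i,j)}$ before incrementing $x$, yielding $\rightcolor{\tiling(i,j)} = \leftcolor{\tiling(i,j+1)}$. This gives the first two conditions of \cref{def:tiling-problem}. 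Running the symmetric argument with $\config \in \setof{\GREinitv} \cap \setof{\prestar{\GREfinalv}}$ and the vertical verifier gives the remaining two. Hence $\tiling$ is a valid tiling and the instance is positive. The step I expect to be the main obstacle is making the synchronisation argument airtight: one must argue that each time the main agent observes a reader in $T_i^{(h)}$, the reader's datum encodes precisely the current $(x,y)$, which requires coupling \cref{sync-test} (a lower bound $z \ge k$ after the first test and an obstruction whenever $z > k$ at the second test) with monotonicity of $z$ and with the cheat-detection argument that the reader's own coordinate bits and the verifier's bits of $x,y$ are each stored without duplication; keeping this coupling between the two gadgets precise over the entire run is the delicate part.
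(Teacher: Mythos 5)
Your proposal is correct and follows essentially the same route as the paper's proof: invoke \cref{encodingGREworks} for uniqueness of the tile per coordinate, use the runs witnessing $\prestar{\GREfinalh}$ and $\prestar{\GREfinalv}$ together with the cheat-state constraints to get a unique, faithfully implemented verifier that sweeps all coordinates, and couple \cref{sync-test} with the monotone increase of $z = x + 2^n y$ to conclude that each observed reader's datum encodes the current coordinate, whence totality and the colour-matching conditions follow. The one point you flag as delicate (the coupling of the two synchronisation tests with monotonicity of $z$) is exactly the argument the paper makes, so no gap remains.
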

\begin{proof}
Let $\config \in \semanticsEP{\reductionGRE}{}$. Due to \cref{encodingGREworks}, we have that $\size{\tilingofconfig{\config}(x,y)} \leq 1$ for every $x$ and $y$; let $\tiling: \nset{0}{2^n-1}^2 \to \tiles \cup \set{\undefsymb}$ be the function that to $(x,y)$ maps the only element in $\tilingofconfig{\config}(x,y)$ if it exists.

We will prove that $\tiling$ is never equal to $\undefsymb$ and that it is in fact a "tiling". We will prove that the horizontal colours match, \ie, that $\tiling$ satisfies \cref{item:horizontal_white} and \cref{item:horizontal_match} of \cref{def:tiling-problem}; the vertical case is very similar. 

Let $\config'$ such that $\config \runto \config'$ and $\config' \in \semanticsEP{\GREfinalh}{}$. 
By definition of $\GREfinalh$, there is in $\config'$ one agent $a$ in $q_f^{(h)}$. In fact, $a$ is the only agent in the main part of the horizontal verifier in both $\config$ and $\config'$, as otherwise $\config$ would be in $\semanticsEP{\setcomplement{\prestar{\presenceformula{\cheatstate}}}}{}$. We know that $a$ is in state $q_i^{(h)}$ in $\config$ as all other states are forbidden by $\GREinith$. There is at least one agent encoding each bit of $x$ and $y$, otherwise $a$ would not be able to go from $q_i^{(h)}$ to $q_f^{(h)}$; there is also at most one agent per bit, as otherwise one would be able to cover $\cheatstate$ from $\config$. Therefore, one can consider that $x$ and $y$ are encoded reliably; to enforce that, initially, $x = y = 0$, we can either encode it into $\GREinith$ or make the main agent perform an initial sequence of transitions setting the variables to the right values.

In the run from $\config$ to $\config'$, $z \deff x + 2^n y$ has to go incrementally from $0$ to $2^{2n}-1$. Moreover, whenever the transition updated the value of $T$ is taken, we claim that the value $T_i$ taken is such that $\tiling(x,y) = t_i$ (with $x$ and $y$ the values when the transition is taken). Indeed, let $x, y \in \nset{0}{2^n-1}$ and assume that some agent $\syncagent$ is observed on $T_i^{(h)}$ by $a$ with these values of $x$ and $y$.
Because of $\GREinith$ and $\GREfinalh$, we know that $\syncagent$ is in $S^{(h)}$ in $\config$ and in $F^{(h)}$ in $\config'$. 
Let $z \deff x + 2^n y$. Let $\datum$ denote the datum of $\syncagent$ in $\config$.  Because one cannot cover $\cheatstate$ from $\config$, $\datum$ has at most one agent in every bit of the horizontal and vertical coordinates. Moreover, since $\syncagent$ manages to pass the synchronisation tests, there is exactly agent with datum $\adatum$ one per bit. Let $x_\datum$ and $y_\datum$ the values encoded by agents of $\datum$; let $z_\datum \deff x_\datum + y_\datum$. 
Because $\syncagent$ passes the two synchronisation test, by \cref{sync-test} we must have $z = z_\datum$ hence $\tiling(x,y) = t_i$. 

We have proven that, when the verifier takes the transition observing state $T_i^{(h)}$, the value observed indeed corresponds to $\tiling(x,y) \in \tiles$. We conclude that, because the main agent of the horizontal verifier ends on $q_f^{(h)}$, $\tiling$ satisfies \cref{item:horizontal_white} and \cref{item:horizontal_match} of \cref{def:tiling-problem}. We can similarly prove that $\tiling$ also satisfies \cref{item:vertical_white} and \cref{item:vertical_match}, proving that $\tiling$ is a witness that our instance of the "tiling problem" is positive. 
\end{proof}

\begin{lemma}
\label{tilingimpliesnonemptiness}
If the instance of the "tiling problem" is positive, then $\semanticsEP{\reductionGRE}{} \ne \emptyset$. 
\end{lemma}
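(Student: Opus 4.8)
The plan is to turn a witness tiling into a single ``clean'' configuration $\config$ that encodes it -- one in which every datum plays each role at most once -- and then check that $\config$ lies in the set defined by each of the six conjuncts of $\reductionGRE$. Concretely, I would fix a witness $\tiling \colon \nset{0}{2^n-1}^2 \to \tiles$, take $2^{2n}$ pairwise distinct data $d_{i,j}$, one per cell of the grid, and let these be the only data appearing in $\config$. For $d_{i,j}$ I place: for each bit $\ell \in \nset{1}{n}$ one agent on $h_\ell(b)$ where $b$ is the $\ell$-th bit of $i$ and one on $v_\ell(b)$ where $b$ is the $\ell$-th bit of $j$; one agent on $\statet{k}$ where $\tiling(i,j) = t_k$; one agent on $D_a$ and one on $D_b$; and one ``reader'' agent on each of $S^{(h)}$ and $S^{(v)}$. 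The two verifiers are realised by a single main agent on $q_i^{(h)}$ (resp.\ $q_i^{(v)}$) together with one bit agent on the value-$0$ state of each bit of the verifier's private $x$ and $y$; so as not to introduce a datum that would break $\mandatoryformula{D_a}$ or $\mandatoryformula{D_b}$, all verifier agents are given the datum $d_{0,0}$.

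\textbf{The static, cheat and duplication conjuncts.} These I expect to be routine. Each datum has an agent on $D_a$ and on $D_b$, so $\mandatoryformula{D_a} \cap \mandatoryformula{D_b}$ holds; the blue and red gadgets of \cref{fig:no_duplication} are empty, so $\absenceformula{Q_{\mathsf{blue}} \cup Q_{\mathsf{red}}}$ holds; and the verifier and synchronisation parts are empty except at $q_i^{(h)}, q_i^{(v)}, S^{(h)}, S^{(v)}$ and the value-$0$ bit states, so $\GREinith$ and $\GREinitv$ hold. For $\setcomplement{\prestar{\presenceformula{\cheatstate}}}$, I would note that the states $h_\ell(b), v_\ell(b), \statet{k}$ have no incoming transitions, so their occupations never increase; since in $\config$ no datum has two agents on a pair $\set{h_\ell(0),h_\ell(1)}$ or $\set{v_\ell(0),v_\ell(1)}$, no datum has agents on two distinct tile states, and there is globally exactly one main verifier agent and one agent per verifier bit, these ``one per role'' properties are invariant along every run from $\config$, so no transition into $\cheatstate$ is ever enabled and $\config \notin \prestar{\presenceformula{\cheatstate}}$. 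For the last conjunct of $\encodingGRE$, I would use that $\tiling$ is single-valued, so the $d_{i,j}$ encode pairwise distinct coordinates: the disequality observations of the gadget force mutually observing agents to agree bitwise, so covering $\dupstate$ from $\config$ requires at least three agents in the blue part, after which, since blue-track agents cannot escape to $\sinkstate$, $\stateover{3}$ stays coverable; hence no configuration reachable from $\config$ lies in $\presenceformula{\dupstate} \cap \setcomplement{\prestar{\presenceformula{\stateover{3}}}}$. Together with the previous point this gives $\config \in \setof{\encodingGRE} \cap \setof{\setcomplement{\prestar{\presenceformula{\cheatstate}}}}$.

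\textbf{The verification runs.} The conjuncts $\prestar{\GREfinalh}$ and $\prestar{\GREfinalv}$ each demand a single run from $\config$; I would treat the horizontal one, the vertical being symmetric and independent of it (it only observes the tile states, non-destructively, and moves the $S^{(v)}$-readers). Following \cref{fig:horizontal_verifier}, the main agent starts with $\colorvar = \twhite$ and $x = y = 0$ and sweeps the cells in the order imposed by its control flow, maintaining throughout the invariant that $z \deff x + 2^n y$ equals the value encoded by $d_{x,y}$. At cell $(x,y)$ with $\tiling(x,y) = t_k$ I would: drive the reader of $d_{x,y}$ through its first synchronisation test, which by \cref{sync-test} it passes without altering $z$ because $z$ already equals its encoded value; let it observe the agent of its own datum on $\statet{k}$ and move to $T^{(h)}_k$; let the main agent read it there, setting $T := t_k$; then push the reader through its second synchronisation test, again with $z$ matching, into $F^{(h)}$; and finally let the main agent do its colour checks, which succeed precisely because $\tiling$ is a valid tiling ($\leftcolor{t_k} = \colorvar$ because $\colorvar$ holds $\rightcolor{\tiling(x-1,y)}$, or $\twhite$ when $x = 0$, and also $\rightcolor{t_k} = \twhite$ when $x = 2^n-1$), and then update $\colorvar, x, y$ via the bit agents, keeping $z$ in step. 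After the last cell the main agent sits on $q_f^{(h)}$, every reader on $F^{(h)}$, and the rest of the horizontal synchronisation gadget is empty, so the configuration reached satisfies $\GREfinalh$; as this run only moves agents cleanly it never covers $\cheatstate$. Combining the six conjuncts yields $\config \in \semanticsEP{\reductionGRE}{}$, so $\semanticsEP{\reductionGRE}{} \neq \emptyset$.

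\textbf{Main obstacle.} The hard part will be making the horizontal run go through: one must schedule it so that every reader performs both of its synchronisation tests while $z$ still equals the value encoded by its datum, and so that at the moment the main agent reads $T^{(h)}_k$ the only reader present in the middle section of \cref{fig:synchronization_gadget} is the one for the current cell. \cref{sync-test} is exactly the lemma that makes the first of these achievable; everything else is bookkeeping confirming that the verifiers' checks mirror the four conditions of \cref{def:tiling-problem}.
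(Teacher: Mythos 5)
Your proposal is correct and follows essentially the same route as the paper: you build the same witness configuration (one datum per grid cell carrying its coordinate bits, tile state, $D_a$/$D_b$ agents and readers, with the verifier agents piggybacking on $d_{0,0}$), argue the static and cheat/duplication conjuncts from the pairwise-distinctness of the encoded coordinates, and exhibit the sweeping run of each verifier using \cref{sync-test} for the conjuncts $\prestar{\GREfinalh}$ and $\prestar{\GREfinalv}$. The paper's proof is terser on the points you elaborate (it dismisses the cheat-state and initialisation conjuncts as "quite easy"), but the construction and the key arguments are the same.
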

\begin{proof}
Let $\tiling$ be a witness that the instance is positive. Let $\config$ be a configuration whose appearing data are in the set $\set{\datum_{i,j} \mid i,j \in \nset{0}{2^n-1}}$ and, for every $i,j \in \nset{0}{2^n-1}$:
\begin{itemize} 
\item in the states of \cref{fig:encoding_tiling}, $\datum_{i,j}$ has exactly ${2n}+1$ agents: $n$ agents encoding the value of $i$ in the ``horizontal coordinate'' part, $n$ agents encoding the value of $j$ in the ``vertical coordinate'' part and one agent in the ``tile type'' part, on the state corresponding to $\tiling(i,j)$;
\item in the states of \cref{fig:no_duplication}, $\datum_{i,j}$ has two agents: one on $D_a$ and one on $D_b$;
\item $\datum_{i,j}$ has one agent on $S^{(h)}$ and one agent on $S^{(v)}$, and none in the rest of the horizontal and vertical synchronisation gadgets;
\item $\datum_{i,j}$ has agents in the horizontal and vertical verifier if and only if $i=j = 0$; $\datum_{0,0}$ has one agent on $q_i^{(h)}$, one agent on $q_i^{(v)}$ and $4n$ agents on bits of value $0$ of the four variables of the verifiers (variables $x$ and $y$ of the horizontal verifier and variables $x$ and $y$ of the vertical verifier, which are distinct although, for ease of notation, we never distinguished them above). 
\end{itemize}
Because $\config$ does not have two data with the same coordinates, one cannot cover $\dupstate$ from $\config$ without putting at least three agents in the blue part, which would in turn allow reaching $\stateover{3}$. This proves that $\config \in
\setcomplement{\prestar{\presenceformula{\dupstate} \cap \setcomplement{\prestar{\presenceformula{\stateover{3}}}}}}$. It is quite easy to prove that $\config \in \encodingGRE$ and also that $\config \in \setcomplement{\prestar{\presenceformula{\cheatstate}}} \cap \GREinith \cap \GREinitv$. It finally remains to show that $\config \in \prestar{\GREfinalh \cap \GREfinalv}$. To do so, we consider the execution where the horizontal and vertical verifiers go through all tiles one by one, synchronising with the right agents every time; this execution exists because $\tiling$ satisfies all four conditions from \cref{def:tiling-problem}. Overall, we have $\config \in \semanticsEP{\reductionGRE}{}$ which proves $\semanticsEP{\reductionGRE}{} \ne \emptyset$. 
\end{proof}

With \cref{reductionGREimpliestiling,tilingimpliesnonemptiness}, we have built a polynomial-time reduction from the "tiling problem" to the negation of the emptiness problem for "GRE", which shows that the emptiness problem is $\conexpt$-hard. 

\subsection{Proof of \cref{well-spec-exp}}
\label{app:proof-well-spec-exp}
We formalise \cref{well-spec-exp} into the following proposition:
\begin{proposition}
\label{exp_data_lowerbound_wellspec}
For every $n \in \nats$, there is a protocol $\prot_n$ of size polynomial in $n$ that is not "well-specified" but in which all "fair runs" from configurations with less that $2^n$ data  "stabilise" to $0$.
\end{proposition}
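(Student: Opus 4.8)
The plan is to design, for each $n\geq 1$, an "IOPPUD" $\prot_n$ with $\mathrm{poly}(n)$ states in which a single \emph{scanner} agent attempts to count from $0$ up to $2^n-1$ and which is allowed to produce the output $\top$ only once this count is completed. The scanner keeps its current value in binary through $n$ auxiliary bit agents of its own, and each datum $d$ stores a value $\mathrm{val}(d)\in\nset{0}{2^n-1}$ via $n$ agents placed in states $b_i(0)$ or $b_i(1)$, with no transition modifying a data bit. We reuse the "violation detection" machinery of \cref{sec:undecidability}: an attracting sink $\cheatstate$ with output $\bot$, together with transitions into $\cheatstate$ that become \emph{permanently} enabled whenever one of our invariants is broken; the relevant sets of states (the scanner-control states together with an accepting state $q_{\mathsf{acc}}$, the scanner bit states, the data bit states of a fixed datum, and so on) are trap-like --- the only way out is to $\cheatstate$ --- so that, exactly as in \cref{sec:undecidability}, two agents in the same role, or two agents of different data sharing a role, enable a transition into $\cheatstate$ that remains enabled until taken. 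As in the undecidability proof we also add a fresh initial state $\iota$ that is the unique initial state, with transitions distributing agents to the various roles and a transition to a $\bot$-sink, so that every "initial configuration" admits a "fair run" "stabilising" to $\bot$. Finally, $q_{\mathsf{acc}}$ --- reached by the scanner when the count passes $2^n-1$ --- together with a second attracting sink $\mathsf{dead}_{\top}$ are the only states with output $\top$: when $q_{\mathsf{acc}}$ is covered and $\cheatstate$ is empty, every agent is fairly pulled into $\mathsf{dead}_{\top}$, a stable consensus for $\top$, whereas $\mathsf{dead}_{\top}$ has a transition to $\cheatstate$ whenever $\cheatstate$ is non-empty, so that a "run" ever reaching $\cheatstate$ still ends in the all-$\cheatstate$ configuration.

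Granting this, $\prot_n$ is \emph{not} "well-specified": starting from the "initial configuration" with $2^n+1$ data, one of which carries the scanner agents and each of the other $2^n$ of which carries $n$ agents destined to encode one of the values $0,1,\dots,2^n-1$, one "fair run" makes the agents take up their roles faithfully, lets the scanner recognise the $2^n$ data in increasing order of value, reach $q_{\mathsf{acc}}$, and then empties the configuration into $\mathsf{dead}_{\top}$, so this "run" "stabilises" to $\top$; another "fair run" sends every agent from $\iota$ directly to the $\bot$-sink and "stabilises" to $\bot$. Two distinct limit values from a single "initial configuration" contradict "well-specification".

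Conversely, let $\config_0$ be an "initial configuration" with $k<2^n$ data and let $\rho$ be a "fair run" from it; we show $\rho$ "stabilises" to $\bot$. The number of data is invariant along a "run", so $\rho$ always has $k<2^n$ data. If $\rho$ ever reaches a configuration covering $\cheatstate$, then $\cheatstate$ being an attracting sink forces $\rho$, by fairness, to reach and then remain in the all-$\cheatstate$ configuration, so $\rho$ "stabilises" to $\bot$. Otherwise $\rho$ is violation-free, and then all guarded invariants hold along $\rho$: if any of the guarding transitions were ever permanently enabled, the all-$\cheatstate$ configuration would be reachable from every later configuration of $\rho$, hence --- by fairness --- visited, a contradiction. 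In particular, along $\rho$ there is at most one scanner with a faithful binary counter, each datum has a fixed and well-defined value, and (by the synchronisation argument below) the scanner recognises a datum only at a moment where its counter equals that datum's value; hence reaching $q_{\mathsf{acc}}$ would require a datum of value $c$ for every $c\in\nset{0}{2^n-1}$, that is, $2^n$ pairwise distinct data --- impossible. So $q_{\mathsf{acc}}$ is never covered, hence $\mathsf{dead}_{\top}$ is never entered, hence no agent is ever in a state with output $\top$; thus every configuration of $\rho$ is a consensus for $\bot$, and $\rho$ "stabilises" to $\bot$.

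The main obstacle is the synchronisation step ``the scanner recognises a datum $d$ only when its counter currently equals $\mathrm{val}(d)$'': the counter changes between rounds, and in the immediate-observation model the scanner cannot pin down the identity of the datum it is reading, so a naive bitwise comparison could mix the bits of several data or straddle an increment. This is resolved exactly as in the synchronisation gadget of \cref{nexptime-hardness}: a reader agent of $d$ compares its $n$ bits against the scanner's bit agents in two passes, first from most significant to least significant and then from least significant to most significant, and \cref{sync-test} --- applied with the scanner's counter in the role of $z$ and $\mathrm{val}(d)$ in the role of $k$ --- guarantees that completing both passes certifies that the counter equals $\mathrm{val}(d)$ at that instant. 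The remaining ingredients --- faithful binary increment of the scanner's counter through its bit agents, the $\mathrm{poly}(n)$ bound on the number of states ($O(n)$ bit states on each side plus constantly many control, comparison, and sink states), and the routine verification that a broken invariant permanently enables a transition into $\cheatstate$ --- follow the patterns already used in \cref{sec:undecidability,nexptime-hardness}.
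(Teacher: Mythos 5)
Your proposal is correct and follows essentially the same route as the paper's own proof: a stripped-down version of the \conexpt-hardness construction in which a single verifier/scanner agent with an $n$-bit counter must witness one datum per value in $\nset{0}{2^n-1}$ before any $\top$-output state can be covered, with the attracting $\cheatstate$ sink enforcing role-uniqueness and the two-pass synchronisation test (\cref{sync-test}) enforcing that a datum is only recognised when the counter equals its encoded value. The only differences (an extra $\top$-sink $\mathsf{dead}_\top$ instead of making $q_f$ itself attracting, and an explicit fresh initial state $\iota$) are cosmetic.
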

\begin{proof}
For every $n$, we build a protocol $\prot_n$ with states $\cheatstate$ and $q_f$ such that:
\begin{itemize}
\item from every "initial configuration", $\prot_n$ has a run "stabilising" to $\bot$;
\item $Q \setminus \set{q_f}$ has output $\bot$, $q_f$ has output $\top$,
\item an agent on $\cheatstate$ attracts agents on any other state, 
\item an agent on $q_f$ attracts agents on any other state except $\cheatstate$.
\end{itemize}
Here, we say that an agent on state $q$ \emph{attracts} agents in state $q'$ when there is a transition from $q'$ to $q$ that observes the agent on $q$. This implies that, in a fair execution, if $q$ is populated infinitely often then eventually all agents move from $q'$ to $q$. 
The protocol $\prot_n$ is a simplified version of the protocol from \cref{nexptime-hardness}. There are $2n$ \emph{encoding states}, allowing to encode in binary a number in $\nset{0}{2^n-1}$. Each datum must have at most one agent per bit, otherwise by fairness $\cheatstate$ is eventually covered. In a separate gadget, a verifier is able to cover $q_f$ only after checking that, for every value in $\nset{0}{2^n-1}$, there is a datum encoding this value. This uses the same two ideas as in \cref{nexptime-hardness}. First, the verifier manipulates a variable $x \in \nset{0}{2^n-1}$ implemented using another gadget and $n$ agents. Second, there is a synchronisation gadget that makes sure that the verifier only interacts with a datum encoding the current value of $x$. For this second point, there is a non-guarded transition from $q$ to $\cheatstate$ for every state $q$ of the synchronisation gadget except its last state. This guarantees that, eventually, all agents in the synchronisation gadget have successfully passed both synchronisation tests. Again, if two agents play the same role, $\cheatstate$ is eventually covered. In order to stabilise to $\top$, one needs one datum for each value in $\nset{0}{2^n-1}$, hence $2^n$ data.
\end{proof}
 \end{document}